\documentclass[11pt]{scrartcl}
\usepackage[utf8]{inputenc}
\usepackage{hyperref}
\usepackage{amsmath,amssymb,amstext}
\usepackage{braket}
\usepackage{amsthm}
\usepackage{dsfont}
\usepackage{color}
\usepackage[title]{appendix}

\newtheorem{vor}{Assumption}[section]
\newtheorem{theorem}[vor]{Theorem}

\newtheorem{lem}[vor]{Lemma}
\newtheorem{cor}[vor]{Corollary}
  
\newtheorem{prop}[vor]{Proposition}

\theoremstyle{definition}
\newtheorem{defi}[vor]{Definition}

\numberwithin{equation}{section}

\begin{document}
\title{The Fr\"ohlich Polaron at Strong Coupling -- Part I: The Quantum Correction to the Classical Energy}
\author{\vspace{.3cm} \textsc{Morris Brooks and Robert Seiringer}\\ IST Austria, Am Campus 1, 3400 Klosterneuburg, Austria}
\date{October 20, 2022}
\maketitle

\begin{abstract} 
\textsc{Abstract}. We study the Fr\"ohlich polaron model in $\mathbb{R}^3$, and establish the subleading term in the strong coupling asymptotics of its ground state energy, corresponding to the quantum corrections to the classical energy determined by the Pekar approximation. 
\end{abstract}

\section{Introduction and Main Results}
This is the first part of a study of the asymptotic properties of the Fr\"ohlich polaron, which is a model describing the interaction between an electron and the optical modes of a polar crystal \cite{F37}. In the regime of strong coupling between the electron and the optical modes, also called phonons, it is a well known fact \cite{leschke,DV,LT} that the ground state energy of the Fr\"ohlich polaron is asymptotically given by the minimal Pekar energy \cite{P54}, which can be considered as the ground state energy of an electron interacting with a classical phonon field. This result is motivated by using appropriately scaled units, see e.g. \cite{S}, which demonstrates that the strong coupling regime is a semi-classical limit in the phonon field variables. In such units the Fr\"ohlich Hamiltonian, acting on the space $L^2\!\left(\mathbb{R}^3\right)\otimes \mathcal{F}\left(L^2\!\left(\mathbb{R}^3\right)\right)$, reads
\begin{align}
\label{Equation-Hamiltonian}
\mathbb{H}:=-\Delta_x-a\left(w_x\right)-a^\dagger\left(w_x\right)+\mathcal{N},
\end{align}
where the annihilation and creation operators satisfy the rescaled canonical commutation relations $\left[a(f),a^\dagger(g)\right]=\alpha^{-2}\braket{f|g}$ for $f,g\in L^2\!\left(\mathbb{R}^3\right)$ with $\alpha>0$ being the coupling strength, the interaction is given by $w_x(x'):=\pi^{-\frac{3}{2}}|x'-x|^{-2}$ and $\mathcal{N}$ is the corresponding (rescaled) particle number operator, i.e. $\mathcal{N}:=\sum_{n=1}^\infty a^\dagger(\varphi_n)a(\varphi_n)$ where $\{\varphi_n:n\in \mathbb{N}\}$ is an orthonormal basis of $L^2\!\left(\mathbb{R}^3\right)$. The definition of the Fr\"ohlich Hamiltonian in Eq.~(\ref{Equation-Hamiltonian}) has to be understood in the sense of quadratic forms, see for example \cite{S}, due to the ultraviolet singularity in the interaction $w_x$. By substituting the annihilation and creation operators $a$ and $a^\dagger$ in Eq.~(\ref{Equation-Hamiltonian}) with a (classical) phonon field $\varphi\in L^2\! \left(\mathbb{R}^3\right)$, i.e. replacing $a(f)$ with $\braket{f|\varphi}$ and $a^\dagger(f)$ with $\braket{\varphi|f}$, we arrive at the Pekar energy
\begin{align}
\label{Equation-Pekar energy}
\mathcal{E}\left(\psi,\varphi\right):&=\big\langle \psi\big|-\Delta_x-\braket{w_x|\varphi}-\braket{\varphi|w_x}+\|\varphi\|^2\, \big|\psi\big\rangle\\
\nonumber
&=\int  \left|\nabla \psi(x)\right|^2\mathrm{d}x-\iint w_x(x')\left(\varphi(x')+\overline{\varphi(x')}\right)\left|\psi(x)\right|^2\mathrm{d}x'\mathrm{d}x+\int |\varphi(x')|^2\mathrm{d}x',
\end{align}
where $\psi\in L^2\! \left(\mathbb{R}^3\right)$ is the wave-function of the electron. We further define the Pekar functional $\mathcal{F}^\mathrm{Pek}\! \left(\varphi\right):=\inf_{\|\psi\|=1}\mathcal{E}\left(\psi,\varphi\right)$ and the minimal Pekar energy $e^\mathrm{Pek}:=\inf_\varphi \mathcal{F}^\mathrm{Pek}\! \left(\varphi\right)$. It is known that the ground state energy $E_\alpha:=\inf \sigma\left(\mathbb{H}\right)$, as a function of the coupling strength $\alpha$, is asymptotically given by the minimal Pekar energy $e^\mathrm{Pek}$ in the limit $\alpha\rightarrow \infty$ \cite{leschke,DV}. More precisely, one has  $e^\mathrm{Pek} \geq E_\alpha=e^\mathrm{Pek}+O_{\alpha\rightarrow \infty}\left(\alpha^{-\frac{1}{5}}\right)$, as shown in \cite{LT}. In this work we are going to verify the prediction in the physics literature \cite{tja,A2,A} that the sub-leading term in this energy asymptotics is actually of order $\alpha^{-2}$ with a rather explicit pre-factor
\begin{align}
\label{Equation-Main Asymptotic}
E_\alpha=e^\mathrm{Pek}-\frac{1}{2\alpha^2}\mathrm{Tr}\left[1-\sqrt{H^\mathrm{Pek}}\, \right]+o_{\alpha\rightarrow \infty}\left(\alpha^{-2}\right),
\end{align}
 where $\varphi^\mathrm{Pek}$ is a minimizer of $\mathcal{F}^\mathrm{Pek}$ and $H^\mathrm{Pek}$ is the Hessian of $\mathcal{F}^\mathrm{Pek}$ at $\varphi^\mathrm{Pek}$ restricted to real-valued functions $\varphi\in L^2_\mathbb{R}\! \left(\mathbb{R}^3\right)$, i.e. $H^\mathrm{Pek}$ is an operator on $L^2\! \left(\mathbb{R}^3\right)$ defined by
\begin{align}
\label{Equation-Definition_Hessian}
\braket{\varphi|H^\mathrm{Pek}|\varphi}=\lim_{\epsilon\rightarrow 0}\frac{1}{\epsilon^2}\left(\mathcal{F}^\mathrm{Pek}\! \left(\varphi^\mathrm{Pek}+\epsilon \varphi\right)-e^\mathrm{Pek}\right)
\end{align}
for all $\varphi\in L^2_\mathbb{R}\! \left(\mathbb{R}^3\right)$. The prediction in Eq.~(\ref{Equation-Main Asymptotic}) has been verified previously for polaron models either confined to a bounded region of $\mathbb{R}^3$ \cite{FS} or to a three-dimensional torus \cite{FeS}. The methods presented there  exhibit substantial problems regarding their extension to the unconfined case, however.  
In this paper we present a new approach, which is partly based on techniques previously developed in the study of Bose--Einstein  condensation and the validity of Bogoliubov's approximation for Bose gases \cite{LNR,LNSS,BS} in the mean-field limit. We employ a localization method for the phonon field, which breaks the translation-invariance and effectively reduces the problem to the confined case, allowing for an application of some of the methods developed in \cite{FS,FeS}. Our main result is the following Theorem \ref{Theorem: Main} where we verify the lower bound on $E_\alpha$ in Eq.~(\ref{Equation-Main Asymptotic})
.

\begin{theorem}
\label{Theorem: Main}
Let $E_\alpha$ be the ground state energy of $\mathbb{H}$ in \eqref{Equation-Hamiltonian}. For any  $s<\frac{1}{29}$
\begin{align}
\label{Equation-Main}
E_\alpha\geq e^\mathrm{Pek}-\frac{1}{2\alpha^2}\mathrm{Tr}\left[1-\sqrt{H^\mathrm{Pek}}\, \right]-\alpha^{-(2+s)} 
\end{align}
for all $\alpha\geq \alpha(s)$, where $\alpha(s)>0$ is a suitable constant.
\end{theorem}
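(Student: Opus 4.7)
The plan is to establish the lower bound by breaking the translation invariance of $\mathbb{H}$ via a spatial localization of the phonon field, and then performing a Bogoliubov expansion around a fixed Pekar minimizer $\varphi^\mathrm{Pek}$, borrowing some techniques from the confined analyses of \cite{FS,FeS}. The proof naturally splits into three steps.

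First, I would implement a smooth partition of unity $\{J_y^2\}_{y\in \mathbb{R}^3}$ in position space, with length scale $\ell=\ell(\alpha)$ (eventually polynomially large in $\alpha$), and lift it to the phonon Fock space to obtain an IMS-type lower bound
\begin{align*}
\mathbb{H} \geq \int_{\mathbb{R}^3} J_y\, \mathbb{H}_y^\ell\, J_y\, \mathrm{d}y - \mathrm{err}_\mathrm{loc}(\alpha,\ell),
\end{align*}
where in $\mathbb{H}_y^\ell$ the phonon degrees of freedom are effectively confined to a ball of radius $\ell$ around $y$ and $\mathrm{err}_\mathrm{loc}$ is a controlled commutator error. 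Some care is needed to make this localization compatible with the ultraviolet renormalization of $\mathbb{H}$. Using the translation invariance of $\mathbb{H}$, this step reduces the problem to a lower bound on a single Hamiltonian $\mathbb{H}_0^\ell$ whose phonon field is localized near the origin, a setting comparable to that of the confined case.

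Second, on $\mathbb{H}_0^\ell$ I would apply a Weyl shift by the classical minimizer $\alpha \varphi^\mathrm{Pek}$ and expand, obtaining a decomposition of the form
\begin{align*}
\mathcal{W}^\dagger \mathbb{H}_0^\ell\,\mathcal{W} = e^\mathrm{Pek} + \frac{1}{\alpha}\mathbb{L} + \frac{1}{\alpha^2}\mathbb{Q} + \text{(cubic and higher)},
\end{align*}
where $\mathbb{L}$ is linear in creation/annihilation operators, $\mathbb{Q}$ is the Bogoliubov-quadratic form associated with the Hessian $H^\mathrm{Pek}$, and the higher-order terms are controlled in terms of moments of $\mathcal{N}$. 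A priori condensation estimates for low-energy states, of the type developed in \cite{LNR,LNSS,BS}, would allow me to discard $\mathbb{L}$ (whose action is confined to the kernel of $H^\mathrm{Pek}$) and the cubic/quartic remainders at a cost tolerable at the order $o(\alpha^{-2})$. The remaining problem is the diagonalization of $\mathbb{Q}$, which by standard Bogoliubov theory yields the ground state energy $-\frac{1}{2}\mathrm{Tr}[1-\sqrt{H^\mathrm{Pek}}\,]$.

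Third, I would collect all error terms as explicit powers of $\alpha$ and $\ell$ and optimize in $\ell$. The exponent $s<1/29$ should then emerge from balancing the localization cost $\mathrm{err}_\mathrm{loc}$ against the cubic/quartic remainders and the cost of handling the translational zero modes of $H^\mathrm{Pek}$. The main obstacle, I expect, is precisely the construction of the phonon localization in step one: the zero modes of $H^\mathrm{Pek}$ generated by translation symmetry would make the naive Bogoliubov formula singular, and simultaneously controlling the breaking of this symmetry together with the ultraviolet singularity of $w_x$ and the unboundedness of the electron's configuration space at the order $o(\alpha^{-2})$ is the delicate point that has prevented a direct treatment of the unconfined case in the prior literature.
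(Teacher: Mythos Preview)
Your outline names the right ingredients --- localization to break translation invariance, condensation, Bogoliubov diagonalization, zero modes --- but the mechanism you propose in Step~1 is not the one used in the paper and, as stated, would not close. The paper explicitly argues that after any purely spatial localization the confining region must have radius $\ell$ polynomial in $\alpha$ (to keep $\mathrm{err}_{\mathrm{loc}}$ of order $o(\alpha^{-2})$), and on such a large ball the manifold $\{\varphi^{\mathrm{Pek}}_x:|x|\lesssim\ell\}$ of near-minimizers is still present; there is no global quadratic lower bound on $\mathcal{F}^{\mathrm{Pek}}-e^{\mathrm{Pek}}$ that is simultaneously nonnegative and sharp enough to yield the correct trace. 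So your Step~2 Weyl shift by the \emph{fixed} $\varphi^{\mathrm{Pek}}$ is not justified: a low-energy state of $\mathbb{H}_0^\ell$ need not be close to $\varphi^{\mathrm{Pek}}_0$ rather than $\varphi^{\mathrm{Pek}}_x$ for some $|x|\lesssim\ell$, and the linear term $\mathbb{L}$ does not vanish in that case.

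What the paper does instead is fundamentally a state-level rather than operator-level argument. It first proves a de~Finetti--type result (Theorem~\ref{Theorem-Coherent States}) showing that the phonon part of any low-energy state is approximately a mixture $\int\ket{\Omega_{\varphi^{\mathrm{Pek}}_x}}\bra{\Omega_{\varphi^{\mathrm{Pek}}_x}}\mathrm{d}\mu(x)$. It then performs a sequence of localizations in \emph{nonlinear} functionals of the empirical phonon measure --- particle number, pair-dispersion, and crucially a regularized \emph{median} $m_q(\rho)$ --- to construct an approximate ground state for which $\mu\approx\delta_0$, i.e.\ complete condensation in a single $\varphi^{\mathrm{Pek}}$ (Theorem~\ref{Theorem-Existence of a strong condensate}). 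Only on that state is a local quadratic lower bound on $\mathcal{F}^{\mathrm{Pek}}$ applicable, and the zero modes are then removed by an explicit symplectic change of variables $\tau$ (Definition~\ref{Definition-tau}) that flattens the manifold of minimizers before diagonalizing. Your proposal is missing both the median-type localization that actually pins down the minimizer and the concrete device ($\tau$) that handles the zero modes; the final paragraph correctly flags this as the delicate point, but it is precisely the content of the proof, not a detail to be optimized away.
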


As an intermediate result, which might be of independent interest, we will establish the existence of a family of approximate ground states, by which we mean states whose energy is given by the right side of \eqref{Equation-Main Asymptotic}, exhibiting complete Bose--Einstein condensation with respect to a minimizer $\varphi^\mathrm{Pek}$ of the Pekar functional $\mathcal{F}^\mathrm{Pek}$. We refer to Theorem \ref{Theorem-Existence of a strong condensate} for a precise statement. 

In contrast to the lower bound, the proof of the upper bound on $E_\alpha$ in Eq.~(\ref{Equation-Main Asymptotic}) is essentially the same as for confined polarons \cite{FS,FeS} and can be obtained by the same methods. It is also contained as a special case in \cite{MMS}, where it has been verified that the ground state energy $E_\alpha(P)$  as a function of the (conserved) total momentum $P$ can be bounded from above by
\begin{align}
\label{Equation-Parabolic Upper Bound}
E_\alpha(P)\leq e^\mathrm{Pek}-\frac{1}{2\alpha^2}\mathrm{Tr}\left[1-\sqrt{H^\mathrm{Pek}}\, \right]+\frac{|P|^2}{2\alpha^4 m}+C_{\epsilon}\alpha^{-\frac{5}{2}+\epsilon},
\end{align} 
where $m:=\frac{2}{3}\|\nabla \varphi^\mathrm{Pek}\|^2$ and $\epsilon>0$, with $C_{\epsilon}$  a suitable constant. Since $E_\alpha = E_\alpha(0)$ \cite{gross,spohnD,JSM},  Theorem \ref{Theorem: Main} in combination with Eq.~(\ref{Equation-Parabolic Upper Bound}) for the specific case $P=0$ concludes the proof of Eq.~(\ref{Equation-Main Asymptotic}). Combining \eqref{Equation-Parabolic Upper Bound} with Theorem \ref{Theorem: Main}, one further obtains an upper bound on the increment $E_\alpha(P)-E_\alpha$, a quantity related to the effective mass of the polaron \cite{Lpekar,LSmass,Spohnmass,BPmass}. In the second part \cite{BS2} we will discuss, in analogy to Theorem \ref{Theorem: Main}, the corresponding lower bound on $E_\alpha(P)$.\\

The proof of Eq.~(\ref{Equation-Main Asymptotic}) for confined systems in \cite{FS,FeS} requires an asymptotically correct local quadratic lower bound on the Pekar functional $\mathcal{F}^\mathrm{Pek}\! \left(\varphi\right)$ for configurations close to a minimizer, as well as a sufficiently strong quadratic lower bound valid for all configurations. While our proof of Theorem \ref{Theorem: Main} makes use of a local quadratic lower bound as well, we believe that in the translation-invariant setting any globally valid quadratic lower bound cannot be sufficiently strong, and therefore new ideas are necessary. As we explain in the following, we circumvent this problem by constructing an approximate ground state $\Psi$, which is essentially supported close to a minimizer of the Pekar functional $\mathcal{F}^\mathrm{Pek}$, and consequently we only require a locally valid quadratic lower bound.\\

\textbf{Proof strategy of Theorem \ref{Theorem: Main}.}  Even though we want to verify a lower bound on $E_\alpha$, let us first discuss how test functions providing an asymptotically correct upper bound are expected to look like. In the following let $(\psi^\mathrm{Pek},\varphi^\mathrm{Pek})$ denote a minimizer of the Pekar energy $\mathcal{E}$ defined in Eq.~(\ref{Equation-Pekar energy}). It has been established in \cite{Li} that all other minimizers are given by translations $\varphi^\mathrm{Pek}_x(x'):=\varphi^\mathrm{Pek}(x'-x)$ and $\psi^\mathrm{Pek}_x(x'):=e^{i\theta}\psi^\mathrm{Pek}(x'-x)$ of $\varphi^\mathrm{Pek}$ and $e^{i\theta}\psi^\mathrm{Pek}$, where $\theta$ is an arbitrary phase. W.l.o.g. let us denote in the following by  $(\psi^\mathrm{Pek},\varphi^\mathrm{Pek})$   the unique minimizer of $\mathcal{E}$ such that $\varphi^\mathrm{Pek}$ is radial and $\psi^\mathrm{Pek}$ is non-negative. Then all the product states of the form $\psi^\mathrm{Pek}_x\otimes \Omega_{\varphi^\mathrm{Pek}_x}$ with $x\in \mathbb{R}^3$, where $\Omega_{\varphi^\mathrm{Pek}_x}$ is the coherent state corresponding to $\varphi^\mathrm{Pek}_x$ (defined by $a(w) \Omega_\varphi = \langle w|\varphi\rangle \Omega_\varphi$ for all $w\in L^2\! \left(\mathbb{R}^3\right)$), have the asymptotically correct leading term in the  energy $\braket{\psi^\mathrm{Pek}_x\otimes \Omega_{\varphi^\mathrm{Pek}_x}|\, \mathbb{H}\, |\psi^\mathrm{Pek}_x\otimes \Omega_{\varphi^\mathrm{Pek}_x}}=e^\mathrm{Pek}$. By taking convex combinations of these states on the level of density matrices, we can construct a large family of low energy states
\begin{align*}
\Gamma_\mu:=\int_{\mathbb{R}^3} \ket{\psi^\mathrm{Pek}_x\otimes \Omega_{\varphi^\mathrm{Pek}_x}}\bra{\psi^\mathrm{Pek}_x\otimes \Omega_{\varphi^\mathrm{Pek}_x}}\mathrm{d}\mu(x)
\end{align*}
for any given probability measure $\mu$ on $\mathbb{R}^3$.
Clearly, $\Gamma_\mu$ exhibits the correct leading energy $\braket{\, \mathbb{H}\, }_{\Gamma_\mu}=e^\mathrm{Pek}$. Our proof of the lower bound given in Eq.~(\ref{Equation-Main}) relies on the observation that asymptotically as $\alpha\rightarrow \infty$, any low energy state $\Psi$ is of the form $\Gamma_\mu$ with a suitable probability measure $\mu$ on $\mathbb{R}^3$. Since we only need this statement for the phonon part of $\Psi$, we will verify the weaker statement   
\begin{align*}
\mathrm{Tr}_\mathrm{electron}\big[\ket{\Psi}\bra{\Psi}\big]\approx \int_{\mathbb{R}^3} \ket{ \Omega_{\varphi^\mathrm{Pek}_x}}\bra{\Omega_{\varphi^\mathrm{Pek}_x}}\mathrm{d}\mu(x)
\end{align*}
instead, see Theorem \ref{Theorem-Coherent States} for a precise formulation. This statement is analogous to a version of the quantum de Finetti theorem used in \cite{LNR} in order to verify the Hartree approximation for Bose gases in a general setting. The main technical challenge of this paper will be the construction of approximate ground states $\Psi$ where the corresponding measure  is a delta measure, $\mu=\delta_{0}$, i.e. the construction of states where the phonon part is essentially given by a single coherent state $\Omega_{\varphi^\mathrm{Pek}}$. The method presented here is based on a grand-canonical version of the localization techniques previously developed for translation-invariant Bose gases in \cite{BS}, and in analogy to the concept of Bose--Einstein  condensation we say that such states satisfy (complete) condensation with respect to the Pekar minimizer $\varphi^\mathrm{Pek}$. Heuristically this means that only field configurations $\varphi$ close to the minimizer $\varphi^\mathrm{Pek}$ are relevant, hence the translational degree of freedom has been eliminated and the system is effectively confined.

Based on this observation we can adapt the strategy developed for confined polarons in \cite{FS,FeS}, which starts by introducing an ultraviolet regularization in the interaction $w_x$ with the aid of  a momentum cut-off $\Lambda$, leading  to the study of the truncated Hamiltonian $\mathbb{H}_\Lambda$. Using a lower bound on the excitation energy $\mathcal{F}^\mathrm{Pek}\! \left(\varphi\right)-e^\mathrm{Pek}$ that is, up to a symplectic transformation, quadratic in the field variables $\varphi$ and valid for all $\varphi$ close to the minimizer $\varphi^\mathrm{Pek}$, one can bound the truncated Hamiltonian from below by an operator that is, up to a unitary transformation, quadratic in the creation and annihilation operators. The lower bound  is only valid, however,  if tested against a state satisfying (complete) condensation in $\varphi^\mathrm{Pek}$. Finally an explicit diagonalization of this quadratic operator yields the desired lower bound in Eq.~(\ref{Equation-Main}). 

The symplectic transformation on the phase space $L^2\! \left(\mathbb{R}^3\right)$, respectively the corresponding unitary transformation on the Hilbert space $\mathcal{F}\left(L^2\! \left(\mathbb{R}^3\right)\right)$, is one of the key novel ingredients in our proof. It turns out to be necessary due to the presence of the translational symmetry, which makes it impossible to find a non-trivial positive semi-definite quadratic lower bound on $\mathcal{F}^\mathrm{Pek}\! \left(\varphi\right) - e^\mathrm{Pek}$. This issue has already been addressed in the study of a polaron on the three dimensional torus \cite{FeS}, where a  different coordinate transformation is used, however. The symplectic/unitary transformation presented in this paper is an adaptation of the one used in the study of translation-invariant Bose gases in \cite{BS}.\\

\textbf{Outline.} The paper is structured as follows. In Section \ref{Section-The cut-off Model} we will introduce an ultraviolet cut-off as well as a discretization in momentum space, and provide estimates on the energy cost associated with such approximations. Section \ref{Section-Construction of a Condensate} then contains our main technical result Theorem \ref{Theorem-Existence of a strong condensate}, in which we verify the existence of approximate ground states satisfying (complete) condensation with respect to a minimizer $\varphi^\mathrm{Pek}$ of the Pekar functional $\mathcal{F}^\mathrm{Pek}$. Subsequently we will discuss a large deviation estimate for such condensates in Section \ref{Section-Large Deviation Principle for strong Condensates}, quantifying the heuristic picture that only configurations close to the point of condensation matter. In Section \ref{Section-Properties of the Pekar Functional} we then discuss properties of the Pekar functional $\mathcal{F}^\mathrm{Pek}$. In particular, we will discuss quadratic approximations around the minimizer $\varphi^\mathrm{Pek}$ as well as lower bounds that are, up to a coordinate transformation, quadratic in $\varphi$. Together with  the error estimates from Section \ref{Section-The cut-off Model} and the large deviation estimate from Section \ref{Section-Large Deviation Principle for strong Condensates}, applied to the approximate ground state constructed in Section \ref{Section-Construction of a Condensate}, this will allow us to verify our main Theorem \ref{Theorem: Main} in Section \ref{Section-Proof of the lower Bound}. The subsequent Section \ref{Convergence to Coherent States}  contains the proof of Theorem \ref{Theorem-Coherent States}, which can be interpreted as a version of the quantum de Finetti theorem adapted to our setting. Finally,  Appendices \ref{Properties of the Pekar Minimizer} and \ref{Appendix-Estimates on Operator Norms} contain auxiliary results concerning the Pekar minimizer $\varphi^\mathrm{Pek}$ and the projections introduced in Section \ref{Section-The cut-off Model}, respectively.

\section{Models with Cut-off}
\label{Section-The cut-off Model}
In this section we will estimate the effect of the introduction of an  ultraviolet cut-off, as well as a discretization in momentum space, on the ground state energy, following similar ideas as in \cite{LT,FS,FeS}. We will eventually apply these results 
for two different levels of coarse graining, a rough scale used in the proof of Theorem \ref{Theorem-Coherent States} in Section \ref{Convergence to Coherent States}, which applies to low energy states with energy $e^\mathrm{Pek} + o_{\alpha\rightarrow \infty}(1)$, and a fine scale precise enough to yield the correct ground state energy up to errors of order $o_{\alpha\rightarrow \infty}\left(\alpha^{-2}\right)$, see the proof of Theorem \ref{Theorem: Main} in Section \ref{Section-Proof of the lower Bound}.
\begin{defi}
\label{Definition-Pi}
Given parameters $0< \ell<\Lambda$, let us define for $z\in 2\ell\, \mathbb{Z}^3\setminus \{0\}$ the cubes $C_z:=\left[z_1-\ell,z_1+\ell\right)\times \left[z_2-\ell,z_2+\ell\right)\times \left[z_3-\ell,z_3+\ell\right)$, and let $z^1,..,z^N$ be an enumeration of the set of all $z=(z_1,z_2,z_3)\in 2\ell\, \mathbb{Z}^3\setminus \{0\}$ such that $C_z\subset B_{\Lambda}(0)$, where $B_r(0)$ is the (open) ball of radius $r$ around the origin. Then we define the orthonormal system $e_n\in L^2\!\left(\mathbb{R}^3\right)$ as
\begin{align*}
e_n(x):=\frac{1}{\sqrt{(2\pi)^3 \int_{C_{z^n}}\frac{1}{|k|^2}\,\mathrm{d}k}}\int_{C_{z^n}}\frac{e^{i\, k\cdot x}}{|k|}\,\mathrm{d}k,
\end{align*}
as well as the translated system $e_{y,n}(x):=e_n(x-y)$ and the orthogonal projection $\Pi^y_{\Lambda,\ell}$ onto the space spanned by $\{e_{y,1},\dots,e_{y,N}\}$. Furthermore we denote with $\Pi_\Lambda$ the projection onto the spectral subspace of momenta $|k|\leq \Lambda$.
\end{defi}

\begin{lem}
\label{Lemma-Norm Estimate}
Let $w_x(x'):=\pi^{-\frac{3}{2}}|x'-x|^{-2}$. Then we obtain for $0<\ell<\Lambda$ and $x,y\in \mathbb{R}^3$ the following estimate on the $L^2$ norm
\begin{align*}
\left\|\Pi_\Lambda w_x-\Pi^y_{\Lambda,\ell} w_x\right\|\lesssim  |x-y|\ell \sqrt{\Lambda}+\sqrt{\ell}.
\end{align*}
\begin{proof}
With $\widehat{\cdot}$ denoting  Fourier transformation, we have
\begin{align*}
\sqrt{2\pi^2}\, \widehat{\Pi^y_{\Lambda,\ell} w_x}(k)=\sum_{n=1}^N \frac{1}{\int_{C_{z^n}}\frac{1}{|k'|^2}\mathrm{d}k'}\int_{C_{z^n}}\frac{e^{ik'\cdot (y-x)}}{|k'|^2}\mathrm{d}k'\, \frac{1}{|k|}\mathds{1}_{C_{z^n}}(k),
\end{align*}
where we have used that $\widehat{\Pi_\Lambda w_x}(k)=\frac{1}{\sqrt{2\pi^2}|k|}\mathds{1}_{B_\Lambda(0)}(k)$. Defining the function $\sigma_n(k,x,y):=\frac{1}{\int_{C_{z^n}}\frac{1}{|k'|^2}\mathrm{d}k'}\int_{C_{z^n}}\frac{e^{ik'\cdot (y-x)}-e^{ik\cdot (y-x)}}{|k'|^2}\mathrm{d}k'$, we further have
\begin{align*}
\sqrt{2\pi^2}\left(\widehat{\Pi^y_{\Lambda,\ell} w_x}(k)-\widehat{\Pi_\Lambda w_x}(k)\right)=\sum_{n=1}^N \sigma_n(k,x,y)\frac{1}{|k|}\mathds{1}_{C_{z^n}}(k)-\frac{1}{|k|}\mathds{1}_{A}(k)
\end{align*}
with $A:=B_\Lambda(0)\setminus \left(\bigcup_{n=1}^N C_{z^n}\right)$. Making use of the estimate $|\sigma_n(k,x,y)|^2\leq |y-x|^2\max_{k'\in C_{z^n}}|k'-k|^2\leq 12|x-y|^2\ell^2 $ for $k\in C_{z^n}$, we therefore obtain
\begin{align*}
\sum_{n=1}^N \int_{C_{z^n}} |\sigma_n(k,x,y)|^2\frac{1}{|k|^2}\mathrm{d}k\leq 12|x-y|^2\ell^2\int_{|k|\leq \Lambda}\frac{1}{|k|^2}\mathrm{d}k=48\pi |x-y|^2\ell^2 \Lambda.
\end{align*}
Since $A\subset B_{2\ell}\cup B_\Lambda\setminus B_{\Lambda-4\ell}$ we consequently have $\int_A \frac{1}{|k|^2}\mathrm{d}k\lesssim \ell$.
\end{proof}
\end{lem}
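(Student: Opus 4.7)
The natural setting is Fourier space, by Plancherel's theorem. Recall that $\widehat{w_x}(k) = \frac{1}{\sqrt{2\pi^2}} \frac{e^{-ik\cdot x}}{|k|}$, so $\widehat{\Pi_\Lambda w_x}(k) = \frac{1}{\sqrt{2\pi^2}} \frac{e^{-ik\cdot x}}{|k|} \mathds{1}_{B_\Lambda(0)}(k)$. For the discretized projection, I would expand $\Pi^y_{\Lambda,\ell} w_x = \sum_{n=1}^N \langle e_{y,n}|w_x\rangle\, e_{y,n}$ and compute the inner products explicitly using the defining formula for $e_{y,n}$. Since the Fourier transforms of the basis functions $e_{y,n}$ are supported on disjoint cubes $C_{z^n}$, the resulting expression for $\widehat{\Pi^y_{\Lambda,\ell} w_x}(k)$ should decompose into a sum over cubes, with each cube contributing a weighted average of $\frac{e^{ik'\cdot(y-x)}}{|k'|^2}$ (weighted by $\frac{1}{|k'|^2}$) times $\frac{1}{|k|}\mathds{1}_{C_{z^n}}(k)$.

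The difference $\widehat{\Pi_\Lambda w_x}(k) - \widehat{\Pi^y_{\Lambda,\ell} w_x}(k)$ then splits naturally into two contributions. The first, call it the \emph{interior error}, comes from summing over $k \in \bigcup_n C_{z^n}$: on each cube $C_{z^n}$, the true factor $e^{ik\cdot(y-x)}$ is replaced by the cube-averaged factor, and the discrepancy is controlled by the variation of $e^{ik\cdot(y-x)}$ across the cube, which is Lipschitz with constant $|x-y|$ times the cube diameter $O(\ell)$. The second, the \emph{boundary error}, accounts for the region $A := B_\Lambda(0) \setminus \bigcup_n C_{z^n}$, which consists of cubes near the origin (excluded by $z\neq 0$) and cubes straddling the sphere $|k|=\Lambda$; geometrically $A$ sits inside $B_{2\ell}(0) \cup (B_\Lambda \setminus B_{\Lambda - 4\ell})$.

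For the interior error, the pointwise bound $|\text{coefficient} - e^{ik\cdot(y-x)}|^2 \lesssim |x-y|^2 \ell^2$ combined with the integral $\int_{|k|\leq \Lambda} |k|^{-2}\, \mathrm{d}k \sim \Lambda$ yields a contribution of order $|x-y|^2 \ell^2 \Lambda$, which becomes $|x-y|\ell\sqrt{\Lambda}$ after taking the square root. For the boundary error, the estimate $\int_A |k|^{-2}\,\mathrm{d}k \lesssim \ell$ follows from the geometric description of $A$ (the inner ball contributes $\int_0^{2\ell} \mathrm{d}r \sim \ell$ and the outer shell contributes $\Lambda^{-2}\cdot \Lambda^2 \ell = \ell$), yielding $\sqrt{\ell}$ after the square root. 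Summing the two contributions via the triangle inequality gives the claimed bound.

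The main subtlety I expect is the geometric accounting of the boundary region $A$: one must verify that no cube $C_{z^n}$ intersects $B_\Lambda$ too deeply while failing the containment $C_{z^n} \subset B_\Lambda$, and that the excluded cube at the origin only costs a ball of radius $O(\ell)$. This is elementary but requires keeping track of the cube diameter (diagonal $2\sqrt{3}\,\ell$) relative to $\Lambda$. Beyond this, the estimates are mechanical applications of Cauchy--Schwarz and Plancherel.
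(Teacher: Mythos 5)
Your proposal reproduces the paper's proof almost verbatim: you pass to Fourier space, split the difference into an "interior" term (average over each cube $C_{z^n}$ minus the pointwise value, controlled by the Lipschitz variation of $e^{ik\cdot(y-x)}$ across a cube of diameter $O(\ell)$) and a "boundary" term supported on $A\subset B_{2\ell}\cup(B_\Lambda\setminus B_{\Lambda-4\ell})$, and then integrate $|k|^{-2}$ over the relevant regions exactly as the paper does. There is no meaningful divergence from the paper's argument.
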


\begin{defi}
\label{Definition-Cut off Hamiltonian general}
For $y\in \mathbb{R}^3$, $0<\ell<\Lambda$, let us define the cut-off Hamiltonians
\begin{align}
\label{Equation-Cut off Hamiltonian general}
\mathbb{H}^y_{\Lambda,\ell}&:=-\Delta_x-a\left(\Pi^y_{\Lambda,\ell}  w_x\right)-a^\dagger\left(\Pi^y_{\Lambda,\ell} w_x\right)+\mathcal{N},\\
\label{Equation-Momentum Cut off Hamiltonian}
\mathbb{H}_\Lambda&:=-\Delta_x-a\left(\Pi_{\Lambda}  w_x\right)-a^\dagger\left(\Pi_{\Lambda} w_x\right)+\mathcal{N}.
\end{align}
\end{defi}

These Hamiltonians can be interpreted as the restriction of $\mathbb{H}$ (in the quadratic form sense) to states where only the phonon modes in $\Pi^y_{\Lambda,\ell}L^2\! \left(\mathbb{R}^3\right)$, respectively $\Pi_{\Lambda}L^2\! \left(\mathbb{R}^3\right)$, are occupied. In particular, this implies that $\inf \sigma(\mathbb{H}^y_{\Lambda,\ell}) \geq E_\alpha$ as well as $\inf \sigma(\mathbb{H}_{\Lambda}) \geq E_\alpha$.  In the following we shall quantify the energy increase due to the introduction of the cut-offs.

Note that the $\alpha$-dependence of the Hamiltonians $\mathbb{H}$, $\mathbb{H}^y_{\Lambda,\ell}$ and $\mathbb{H}_\Lambda$ only enters through the rescaled canonical commutation relations $\left[a(f),a^\dagger(g)\right]=\alpha^{-2}\braket{f|g}$ satisfied by the creation and annihilation operators $a^\dagger$ and $a$, and we will usually suppress the $\alpha$ dependency in our notation for the sake of readability. In the rest of this paper, we will always assume that $\alpha$ is a parameter satisfying $\alpha\geq 1$ and, in case it is not stated otherwise, estimates hold uniformly in this parameter for $\alpha\rightarrow \infty$, i.e. we write $X\lesssim Y$ in case there exist constants $C,\alpha_0>0$ such that $X\leq C\, Y$ for all $\alpha\geq \alpha_0$. 

The proof of the subsequent Lemma \ref{Lemma-Lieb-Yamazaki bound} closely follows the arguments in \cite{LY,LT}, where it was shown that $\mathbb{H}$ is bounded from below and well approximated by an operator containing only finitely many phonon modes. For the sake of completeness we will illustrate the proof, which is based on the Lieb--Yamazaki commutator method, see \cite{LY}. In the following Lemma \ref{Lemma-Lieb-Yamazaki bound}, we will use the identification $L^2\!\left(\mathbb{R}^3\right)\otimes \mathcal{F}\left(L^2\!\left(\mathbb{R}^3\right)\right)\cong L^2\!\left(\mathbb{R}^3,\mathcal{F}\left(L^2\!\left(\mathbb{R}^3\right)\right)\right)$, in order to represent elements $\Psi\in L^2\!\left(\mathbb{R}^3\right)\otimes \mathcal{F}\left(L^2\!\left(\mathbb{R}^3\right)\right)$ as functions $x\mapsto \Psi(x)$ with values in $\mathcal{F}\left(L^2\!\left(\mathbb{R}^3\right)\right)$, allowing us to define the support $\mathrm{supp}\left(\Psi\right)$ as the closure of $\{x\in \mathbb{R}^3: \Psi(x)\neq 0\}$.

\begin{lem}
\label{Lemma-Lieb-Yamazaki bound}
We have for all $0<\ell<\Lambda\leq K$ and $L>0$, and states $\Psi$ with $\mathrm{supp}\left(\Psi\right)\subset B_L(y)$ the estimate
\begin{align}
\label{Equation-General estimate for cut-off's}
\left|\braket{\Psi|\mathbb{H}_K-\mathbb{H}^y_{\Lambda,\ell}|\Psi}\right|\lesssim  \! \left(L\ell \sqrt{\Lambda}+\sqrt{\ell}+\sqrt{\frac{1}{\Lambda}-\frac{1}{K}}\right)\! \braket{\Psi|-\Delta_x+\mathcal{N}+1|\Psi}.
\end{align}
Furthermore, there exists a constant $d>0$ such that 
\begin{align}
\label{Equation-Boundedness from below_first line}
\mathbb{H}_{K}&\geq -\frac{d}{t^2}-t\, \left(\mathcal{N}+\alpha^{-2}\right),\\
\label{Equation-Boundedness from below}
\mathbb{H}_{K}&\geq -d+\frac{1}{2}\left( -\Delta_x+\mathcal{N}\right)
\end{align}
for all $t>0$, $K\geq 0$ and $\alpha\geq 1$.
\end{lem}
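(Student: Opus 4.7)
The plan is to prove all three inequalities by the Lieb--Yamazaki commutator method, combined for \eqref{Equation-General estimate for cut-off's} with the $L^2$-bound from Lemma~\ref{Lemma-Norm Estimate}. The key algebraic observation is the Fourier-side identity
\begin{align*}
\widehat{w_x}(k)=\frac{e^{-ik\cdot x}}{\sqrt{2\pi^2}\,|k|}=\sum_{j=1}^3 \partial_{x_j}\!\left[\frac{i}{\sqrt{2\pi^2}}\frac{k_j}{|k|^3}e^{-ik\cdot x}\right],
\end{align*}
so that defining auxiliary functions $g^j_x$ with $\widehat{g^j_x}(k):=i(2\pi^2)^{-1/2}k_j|k|^{-3}e^{-ik\cdot x}\mathds{1}_{\Lambda<|k|<K}(k)$, a computation in spherical coordinates yields $\sum_j\|g^j_x\|^2=(2/\pi)(1/\Lambda-1/K)$ uniformly in $x$, together with the operator identity $a\bigl((\Pi_K-\Pi_\Lambda)w_x\bigr)=\sum_j[\partial_{x_j},a(g^j_x)]$ (and analogously for $a^\dagger$). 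Throughout I rely on the standard bounds $\|a(f)\Psi\|^2\leq \|f\|^2\braket{\Psi|\mathcal{N}|\Psi}$ and $\|a^\dagger(f)\Psi\|^2\leq\|f\|^2\braket{\Psi|\mathcal{N}+\alpha^{-2}|\Psi}$ that follow from the rescaled commutation relations.

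For \eqref{Equation-General estimate for cut-off's} I split
\begin{align*}
\mathbb{H}_K-\mathbb{H}^y_{\Lambda,\ell}=-a(f^{\mathrm{hi}}_x)-a^\dagger(f^{\mathrm{hi}}_x)-a(f^{\mathrm{lo}}_{x,y})-a^\dagger(f^{\mathrm{lo}}_{x,y})
\end{align*}
with $f^{\mathrm{hi}}_x:=(\Pi_K-\Pi_\Lambda)w_x$ and $f^{\mathrm{lo}}_{x,y}:=\Pi_\Lambda w_x-\Pi^y_{\Lambda,\ell}w_x$. The high-frequency piece is treated by inserting the commutator identity above, using the antisymmetry of $\partial_{x_j}$ to integrate by parts, and applying Cauchy--Schwarz, which yields $|\braket{\Psi|a(f^{\mathrm{hi}}_x)+a^\dagger(f^{\mathrm{hi}}_x)|\Psi}|\lesssim\|\nabla_x\Psi\|\sqrt{1/\Lambda-1/K}\,\|(\mathcal{N}+\alpha^{-2})^{1/2}\Psi\|$. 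The low-frequency piece uses Lemma~\ref{Lemma-Norm Estimate} together with the hypothesis $\mathrm{supp}(\Psi)\subset B_L(y)$, which allows me to replace $|x-y|$ by $L$ uniformly in $x\in\mathrm{supp}(\Psi)$, producing a contribution bounded by $(L\ell\sqrt{\Lambda}+\sqrt{\ell})\|\Psi\|\|(\mathcal{N}+\alpha^{-2})^{1/2}\Psi\|$. A final AM--GM inequality (using $\alpha\geq 1$) collects both contributions into the right-hand side of \eqref{Equation-General estimate for cut-off's}.

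For \eqref{Equation-Boundedness from below_first line} and \eqref{Equation-Boundedness from below} the same device is applied directly to $\Pi_K w_x$, split at an auxiliary scale $\mu\in(0,K)$. The low-frequency part $\Pi_\mu w_x$ satisfies $\|\Pi_\mu w_x\|^2=2\mu/\pi$ and contributes at most $\lambda_1(\mathcal{N}+\alpha^{-2})+C\mu/\lambda_1$, while the high-frequency part (handled by the commutator identity, with $\sqrt{1/\mu-1/K}\leq 1/\sqrt{\mu}$) contributes at most $\lambda_2(-\Delta_x)+(C/(\mu\lambda_2))(\mathcal{N}+\alpha^{-2})$. Subtracting these bounds from $-\Delta_x+\mathcal{N}$ gives
\begin{align*}
\mathbb{H}_K\geq (1-\lambda_2)(-\Delta_x)+\bigl(1-\lambda_1-C/(\mu\lambda_2)\bigr)\mathcal{N}-\bigl(\lambda_1+C/(\mu\lambda_2)\bigr)\alpha^{-2}-C\mu/\lambda_1.
\end{align*}
For \eqref{Equation-Boundedness from below_first line} I would choose $\lambda_2=1$, $\lambda_1=t/2$ and $\mu=2C/t$, noting that the residual $(1-t)\mathcal{N}$ can be traded against $-t\mathcal{N}$ at no cost since $\mathcal{N}\geq 0$. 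For \eqref{Equation-Boundedness from below} I would instead take $\lambda_1=\lambda_2=1/2$ and $\mu$ large enough that $C/(\mu\lambda_2)\leq 1/2$, leaving exactly $(1/2)(-\Delta_x+\mathcal{N})$ on the right and a bounded constant $-d$.

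The most delicate step, in my view, is the Lieb--Yamazaki commutator identity itself: one must verify that the sharp spectral cut-off $\mathds{1}_{\Lambda<|k|<K}$ commutes with the $x$-derivatives (it does, as an $x$-independent Fourier multiplier), and that the integration by parts against $\braket{\Psi|\cdot|\Psi}$ is valid as a form identity on the natural domain. The remaining work -- passing from fiber-wise Fock-space estimates $\|a(g^j_x)\Psi(x)\|_{\mathcal{F}}$ to global $L^2_x$-estimates by Fubini and Cauchy--Schwarz, and optimising the parameters $\lambda_1,\lambda_2,\mu$ -- is routine AM--GM bookkeeping.
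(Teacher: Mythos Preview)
Your approach is essentially identical to the paper's: the Lieb--Yamazaki commutator identity for the high-frequency part combined with Lemma~\ref{Lemma-Norm Estimate} for the low-frequency piece in \eqref{Equation-General estimate for cut-off's}, and the same splitting at an auxiliary scale $\mu$ for \eqref{Equation-Boundedness from below_first line}--\eqref{Equation-Boundedness from below}. There is one small bookkeeping slip in your parameter choice for \eqref{Equation-Boundedness from below}: with $\lambda_1=\lambda_2=1/2$ the coefficient of $\mathcal{N}$ in your master inequality is $1-\lambda_1-C/(\mu\lambda_2)=1/2-2C/\mu$, which is strictly less than $1/2$ for any finite $\mu$, so you do \emph{not} get ``exactly $(1/2)(-\Delta_x+\mathcal{N})$''. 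The fix is immediate: take e.g.\ $\lambda_1=1/4$ and $\mu\geq 8C$, so that $1-\lambda_1-C/(\mu\lambda_2)\geq 1/2$ while the constant term $-C\mu/\lambda_1-(\lambda_1+C/(\mu\lambda_2))\alpha^{-2}$ remains bounded.
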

\begin{proof}
Let us define the functions $u^n_x$ by $\widehat{u^n_x}(k):=\frac{1}{\sqrt{2\pi^2}}\mathds{1}_{B_K(0)\setminus B_\Lambda(0)}(k)\frac{k_n e^{ik\cdot x}}{|k|^3}$. 
We have  $a\left(\partial_{x_n}u^n_x\right)- a^\dagger\left(\partial_{x_n}u^n_x\right)  = \left[ \partial_{x_n} ,a\left(u^n_x\right)\! -\! a^\dagger\left(u^n_x\right)\right]$ and 
\begin{align*}
&\pm i \left[\partial_{x_n},a\left(u^n_x\right)\! -\! a^\dagger\left(u^n_x\right)\right]\leq - 2\epsilon \partial_{x_n}^2\! \! +\! \frac{1}{\epsilon}\left(a(u^n_x)^\dagger a(u^n_x) \!+\! a(u^n_x)a(u^n_x)^\dagger\right)\\
&\ \ \ \  \leq - 2 \epsilon\, \partial_{x_n}^2 + \frac{\|u^n_x\|^2}{\epsilon}\left(2\mathcal{N} + \alpha^{-2}\right)=2\|u^n_x\|\left( - \partial_{x_n}^2+\mathcal{N}+\frac{1}{2}\alpha^{-2}\right),
\end{align*}
where we have applied the Cauchy--Schwarz inequality in the first line and used the specific choice $\epsilon:=\|u^n_x\|$ in the last identity. Note that the $L^2$-norm $\|u^n_x\|$ is independent of $x$, and furthermore we can express $\pm\left(\mathbb{H}^y_{\Lambda,\ell} - \mathbb{H}_K\right)$ as
\begin{align*}
&\pm  a\left(\Pi_\Lambda w_x\!-\!\Pi^y_{\Lambda,\ell} w_x\right)\! \pm\! a^\dagger\left(\Pi_\Lambda w_x\!-\!\Pi^y_{\Lambda,\ell} w_x\right)\! \pm i\sum_{n=1}^3 \left(a\left(\partial_{x_n}u^n_x\right)\! -  a^\dagger\left(\partial_{x_n}u^n_x\right)\right)\! \\
&\ \ \ \ \leq 2\left\|\Pi_\Lambda w_x-\Pi^y_{\Lambda,\ell} w_x\right\|\left(1+\mathcal{N}\right)+2\max_{n\in \{1,2,3\}}\|u^n_x\|\left(-\Delta_x+3\mathcal{N}+\frac{3}{2}\alpha^{-2}\right).
\end{align*}
This concludes the proof of Eq.~(\ref{Equation-General estimate for cut-off's}), since we have $\left\|\Pi_\Lambda w_x-\Pi^y_{\Lambda,\ell} w_x\right\|\lesssim L\ell \sqrt{\Lambda}+\sqrt{\ell}$ for all $x\in \mathrm{supp}\left(\Psi\right)$ by Lemma \ref{Lemma-Norm Estimate} and $\|u^n_x\|^2\lesssim \frac{1}{\Lambda}-\frac{1}{K}$. The other statements in Eqs.~(\ref{Equation-Boundedness from below_first line}) and~(\ref{Equation-Boundedness from below}) can be verified similarly, using the decomposition $\Pi_{K} w_x=\Pi_{K'}w_x+\sum_{n=1}^3 \frac{1}{i}\partial_{x_n} g^n_x$ with $\widehat{g^n_x}(k):=\frac{1}{\sqrt{2\pi^2}}\mathds{1}_{B_K(0)\setminus B_{K'}(0)}(k)\frac{k_n e^{ik\cdot x}}{|k|^3}$ where $K'\leq K$ is large enough such that 
$\left\|g^n_x\right\|<\frac{1}{12}$.
\end{proof}

The subsequent Theorem \ref{Theorem-Strong cut-off} is a direct consequence of the results in \cite{FS} and \cite{M,FeS}, where multiple Lieb--Yamazaki bounds as well as a suitable Gross transformation are used in order to verify that the energy cost of introducing an ultraviolet cut-off $\Lambda=\alpha^{\frac{4}{5}(1+\sigma)}$ with $\sigma>0$ is only of order $o_{\alpha\rightarrow \infty}\left(\alpha^{-2}\right)$. 
Combined with an application of the IMS localization formula, as was also done in \cite{LT}, one obtains the following.

\begin{theorem}
\label{Theorem-Strong cut-off}
Given a constant $0<\sigma\leq \frac{1}{4}$, let us introduce the momentum cut-off $\Lambda:=\alpha^{\frac{4}{5}(1+\sigma)}$ as well as the space cut-off $L:=\alpha^{1+\sigma}$. Then there exists a sequence of states $\Psi^\diamond_\alpha$ satisfying $\braket{\Psi^\diamond_\alpha|\mathbb{H}_{\Lambda}|\Psi^\diamond_\alpha}-E_\alpha\lesssim \alpha^{-2(1+\sigma)}$ and $\mathrm{supp}\left(\Psi^\diamond_\alpha\right)\subset B_{L}(0)$, where $E_\alpha$ is the ground state energy of $\mathbb{H}$. 
\end{theorem}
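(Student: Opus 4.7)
The plan is to construct $\Psi^\diamond_\alpha$ in two successive stages, starting from an arbitrary sequence of genuine approximate ground states $\Psi_\alpha$ of $\mathbb{H}$ with $\braket{\Psi_\alpha|\mathbb{H}|\Psi_\alpha} \leq E_\alpha + \alpha^{-3(1+\sigma)}$, then introducing the ultraviolet cut-off, and finally localising the electron coordinate via an IMS partition. Throughout we shall use that by Eq.~(\ref{Equation-Boundedness from below}), together with an approximation of $\mathbb{H}$ by $\mathbb{H}_K$, the a priori bound $\braket{-\Delta_x+\mathcal{N}+1} \lesssim 1$ holds on such approximate ground states.

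For the ultraviolet step, note that a single application of the Lieb--Yamazaki estimate in Eq.~(\ref{Equation-General estimate for cut-off's}) (with $K \to \infty$ after a separate bound for the difference $\mathbb{H}-\mathbb{H}_K$, and with the discretization $\ell$ let to zero so that $\Pi^y_{\Lambda,\ell}$ is replaced by $\Pi_\Lambda$) only produces an error of order $\sqrt{1/\Lambda} = \alpha^{-2(1+\sigma)/5}$, which is far from the target. We therefore invoke the Gross transformation constructed in \cite{M,FeS}: there is a unitary operator $U_\Lambda$, built from an exponential of a linear combination of creation and annihilation operators on the high-momentum modes, such that $U_\Lambda^{\,*}\, \mathbb{H}\, U_\Lambda = \mathbb{H}_\Lambda + R_\Lambda$, where $R_\Lambda$ is quadratic-form bounded by a quantity of order $o(\alpha^{-2(1+\sigma)})\cdot (-\Delta_x+\mathcal{N}+1) + o(\alpha^{-2(1+\sigma)})$ for our choice $\Lambda = \alpha^{4(1+\sigma)/5}$. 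Applied to $U_\Lambda^{\,*}\Psi_\alpha$ and using the a priori bound above, this yields a state $\widetilde{\Psi}_\alpha$ with $\braket{\widetilde{\Psi}_\alpha|\mathbb{H}_\Lambda|\widetilde{\Psi}_\alpha} \leq E_\alpha + o(\alpha^{-2(1+\sigma)})$ and with $\braket{-\Delta_x+\mathcal{N}+1}$ still of order one, the latter following from $\mathbb{H}_\Lambda \geq -d + \tfrac12(-\Delta_x+\mathcal{N})$ by Eq.~(\ref{Equation-Boundedness from below}).

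For the spatial localisation, we fix a smooth partition of unity $\{\chi_y\}_{y\in 2L\,\mathbb{Z}^3}$ on $\mathbb{R}^3$ with $\sum_y \chi_y(x)^2 = 1$, $\mathrm{supp}(\chi_y)\subset B_L(y)$, and $\sum_y|\nabla \chi_y|^2 \lesssim L^{-2}$. Since each $\chi_y$ acts as a multiplication in the electron variable and commutes with $\mathcal{N}$ and $a^{(\dagger)}(\Pi_\Lambda w_x)$, the IMS formula gives
\begin{align*}
\mathbb{H}_\Lambda = \sum_y \chi_y\, \mathbb{H}_\Lambda\, \chi_y - \sum_y|\nabla\chi_y|^2.
\end{align*}
Setting $p_y := \|\chi_y\widetilde{\Psi}_\alpha\|^2$ and $\Phi_y := p_y^{-1/2}\chi_y\widetilde{\Psi}_\alpha$ on $\{p_y>0\}$, taking expectation in $\widetilde{\Psi}_\alpha$, and using $L^{-2} = \alpha^{-2(1+\sigma)}$ together with the trivial lower bound $\braket{\Phi_y|\mathbb{H}_\Lambda|\Phi_y} \geq \inf\sigma(\mathbb{H}_\Lambda) \geq E_\alpha$, one obtains
\begin{align*}
\sum_y p_y\bigl(\braket{\Phi_y|\mathbb{H}_\Lambda|\Phi_y} - E_\alpha\bigr) \leq o\!\left(\alpha^{-2(1+\sigma)}\right).
\end{align*}
Since all summands are non-negative and $\sum_y p_y = 1$, a pigeonhole argument yields some $y^*$ with $p_{y^*}>0$ and $\braket{\Phi_{y^*}|\mathbb{H}_\Lambda|\Phi_{y^*}} - E_\alpha \leq o(\alpha^{-2(1+\sigma)})$. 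Defining $\Psi^\diamond_\alpha$ as the translate of $\Phi_{y^*}$ that centres the electron support in $B_L(0)$, and invoking the translation invariance of $\mathbb{H}_\Lambda$, we obtain the required state.

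The main obstacle is the ultraviolet step: the elementary Lieb--Yamazaki bound of Lemma~\ref{Lemma-Lieb-Yamazaki bound} is insufficient, and one must either iterate the commutator method (producing higher powers of $-\Delta_x + \mathcal{N}$ that still need to be controlled) or, as we propose, import the Gross transformation and the associated operator estimates from \cite{M,FeS}. Once this is in place, the IMS step is elementary because the required localisation length $L = \alpha^{1+\sigma}$ produces a localisation error $L^{-2}$ exactly of the target order $\alpha^{-2(1+\sigma)}$.
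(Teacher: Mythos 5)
Your proposal follows essentially the same route as the paper: the ultraviolet step is carried out by invoking the Gross-transformation/iterated-Lieb--Yamazaki machinery of \cite{FS,FeS,M} (the paper states this as Eq.~(\ref{Equation-Ultraviolet cut-off}) and cites those references rather than unpacking $U_\Lambda$), and the spatial step is an IMS localisation followed by a pigeonhole argument. The only cosmetic difference is that the paper uses a continuous IMS partition $\Psi_y(x) = L^{-3/2}\chi(L^{-1}(x-y))\Psi(x)$ with an integral over $y\in\mathbb{R}^3$, whereas you use a discrete partition indexed by $2L\,\mathbb{Z}^3$; both give the localisation error $L^{-2}=\alpha^{-2(1+\sigma)}$, which is the right scale.
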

\begin{proof}
We start by arguing that 
\begin{align}
\label{Equation-Ultraviolet cut-off}
\inf \sigma\left(\mathbb{H}_{\Lambda}\right)-E_\alpha\lesssim \Lambda^{-\frac{5}{2}}+\alpha^{-1}\Lambda^{-\frac{3}{2}}+\alpha^{-2}\Lambda^{-\frac{1}{2}}
\end{align}
 for large $\alpha$. An analogous bound was shown in \cite[Prop.~7.1]{FS} in the confined case, where additional powers of  $\ln \Lambda$ appear due to complications coming from the boundary. In the translation-invariant setting on a torus, \eqref{Equation-Ultraviolet cut-off} is shown \cite[Prop.~4.5]{FeS}, 
and that proof applies verbatim also in the unconfined case considered here (as has been worked out also in \cite{M}). 

 By our choice of $\Lambda=\alpha^{\frac{4}{5}(1+\sigma)}$, we immediately obtain $\inf \sigma\left(\mathbb{H}_{\Lambda}\right)-E_\alpha\lesssim \alpha^{-2(1+\sigma)}$. Hence there exists a state $\Psi$ satisfying $\braket{\Psi|\mathbb{H}_{\Lambda}|\Psi}-E_\alpha\lesssim   \alpha^{-{2(1+\sigma)}}$. In order to construct a state which is furthermore supported on the ball $B_{L}(0)$, let $\chi$ be a non-negative $H^1\! \left(\mathbb{R}^3\right)$ function with $\int \chi(y)^2\mathrm{d}y=1$ and $\mathrm{supp}\left(\chi\right)\subset B_1(0)$. We define $\Psi_y(x):=L^{-\frac{3}{2}}\chi\left(L^{-1}(x-y)\right)\Psi(x)$ for $y\in \mathbb{R}^3$ and compute, using the IMS identity,
 \begin{align*}
 \int \braket{\Psi_y|\mathbb{H}_\Lambda|\Psi_y}\, \mathrm{d}y&=\braket{\Psi|\mathbb{H}_\Lambda|\Psi}+L^{-3}\iint \left|\nabla_x\chi\left(L^{-1}(x-y)\right)\right|^2 \mathrm{d}y\, \|\Psi(x)\|^2\mathrm{d}x\\
 &=\braket{\Psi|\mathbb{H}_\Lambda|\Psi}+L^{-2}\|\nabla\chi\|^2=E_\alpha+O_{\alpha\rightarrow \infty}\left(\alpha^{-2(1+\sigma)}\right),
 \end{align*}
 see also \cite{LT} where an explicit choice of $\chi$ is used. Since $\int \|\Psi_y\|^2\mathrm{d}y=1$, there clearly exists a $y\in \mathbb{R}^3$ such that the state $\Psi^\diamond_\alpha:=\|\Psi_y\|^{-1}\Psi_y$ satisfies $\braket{\Psi^\diamond_\alpha|\mathbb{H}_\Lambda|\Psi^\diamond_\alpha}-E_\alpha\lesssim \alpha^{-2(1+\sigma)}$. By the translation invariance of $\mathbb{H}_\Lambda$ we can assume that $y=0$.
\end{proof}

\section{Construction of a Condensate}
\label{Section-Construction of a Condensate}

The purpose of this section is to construct a sequence of approximate ground states $\Psi_\alpha$, i.e. states with $\braket{\Psi_\alpha|\mathbb{H}_\Lambda|\Psi_\alpha}=E_\alpha+o_{\alpha\rightarrow \infty}\left(\alpha^{-2}\right)$ and $\Lambda$ as in Theorem \ref{Theorem-Strong cut-off}, that additionally satisfy complete condensation with respect to a minimizer $\varphi^\mathrm{Pek}$ of the Pekar functional $\mathcal{F}^\mathrm{Pek}$, i.e. the phonon part of $\Psi_\alpha$ is in a suitable sense close to a coherent state $\Omega_{\varphi^\mathrm{Pek}}$ with $\Omega_{\varphi^\mathrm{Pek}}:=e^{ \alpha^2 a^\dagger\left(\varphi^\mathrm{Pek}\right)-\alpha^2 a\left(\varphi^\mathrm{Pek}\right)}\Omega$, where $\Omega$ is the vacuum in $\mathcal{F}\left(L^2\!\left(\mathbb{R}^3\right)\right)$, see Lemma \ref{Lemma-Existence of a condensate} and Theorem \ref{Theorem-Existence of a strong condensate}. The construction will be based on various localization procedures of the phonon field with respect to operators of the form $\widehat{F}$ defined in the subsequent Definition \ref{Definition-Hat operators}. Before we start with the localization procedures, we will discuss an asymptotic formula for the expectation value $\braket{\Psi_\alpha|\widehat{F}|\Psi_\alpha}$ in Theorem \ref{Theorem-Coherent States} as well as the energy cost of localizing with respect to such an operator $\widehat{F}$ in Lemma \ref{Lemma-IMS Formula}. 

\begin{defi}
\label{Definition-Hat operators}
Given a function $F:\mathcal{M}\left(\mathbb{R}^3\right)\longrightarrow \mathbb{R}$, where $\mathcal{M}\left(\mathbb{R}^3\right)$ is the set of finite (Borel) measures on $\mathbb{R}^3$, let us define the operator $\widehat{F}$ acting on the Fock space $\mathcal{F}\left(L^2\! \left(\mathbb{R}^3\right)\right)=\bigoplus\limits_{n=0}^\infty L^2_{\mathrm{sym}}\! (\mathbb{R}^{3\times n})$ as $\widehat{F}\bigoplus\limits_{n=0}^\infty \Psi_n:=\bigoplus\limits_{n=0}^\infty F^n\Psi_n$, where 
\begin{align*}
\left(F^n\Psi_n\right)(x^1,\dots ,x^n):=F\left(\alpha^{-2}\sum_{k=1}^n\delta_{x^k}\right)\Psi_n(x^1,\dots ,x^n)
\end{align*}
and $F_0\Psi_0=F(0)\Psi_0$, i.e. $\widehat{F}$ acts component-wise on $\bigoplus\limits_{n=0}^\infty L^2_{\mathrm{sym}}\! (\mathbb{R}^{3\times n})$ by multiplication with the real valued function $(x^1,\dots ,x^n)\mapsto F\left(\alpha^{-2}\sum_{k=1}^n\delta_{x^k}\right)$.
\end{defi}
In order to keep the notation simple, we will allow $F:\mathcal{M}\left(\mathbb{R}^3\right)\longrightarrow \mathbb{R}$ to act on non-negative $L^1\left(\mathbb{R}^3\right)$ functions $q:\mathbb{R}^3\longrightarrow  [0,\infty)$ as well by identifying them with the corresponding measure $\lambda\in \mathcal{M}\left(\mathbb{R}^3\right)$ defined as $\frac{\mathrm{d}\lambda}{\mathrm{d}x}=q(x)$.

Before we discuss the asymptotic formula for the expectation value $\braket{\Psi_\alpha|\widehat{F}|\Psi_\alpha}$, let us introduce a family of cut-off functions $\chi^\epsilon\left(a\leq f(x)\leq b\right)$ where $\epsilon\geq 0$ determines the sharpness of the cut-off. In the following let $\alpha,\beta:\mathbb{R}\longrightarrow [0,1]$ be $C^\infty$ functions such that $\alpha^2+\beta^2=1$, $\mathrm{supp}\left(\alpha\right)\subset(-\infty,1)$ and $\mathrm{supp}\left(\beta\right)\subset(-1,\infty)$. For a given function $f:X\longrightarrow \mathbb{R}$ and constants $-\infty\leq a<b\leq \infty$, let us define the function $\chi^\epsilon\left(a\leq f\leq b\right):X\longrightarrow [0,1]$ as
\begin{align}
\label{Equation-Epsilon cut-off}
\chi^\epsilon\left(a\leq f(x)\leq b\right):=
\begin{cases}
\alpha\left(\frac{f(x)-b}{\epsilon}\right)\beta\left(\frac{f(x)-a}{\epsilon}\right),\text{ for }\epsilon>0\\
\mathds{1}_{[a,b]}\left(f(x)\right),\text{ for }\epsilon=0.
\end{cases}
\end{align}
Note that $\sum_{j\in J}\chi^\epsilon\left(a_j\leq f(x)\leq b_j\right)^2=1$ in case the intervals $[a_j,b_j)$ are a disjoint partition of $\mathbb{R}$ with $-\infty\leq a_j<b_j\leq \infty$.

 Similarly, we define the operator $\chi^\epsilon\left(a\leq T\leq b\right):=\int \chi^\epsilon\left(a\leq t\leq b\right)\, \mathrm{d}E(t)$, where $T$ is a self-adjoint operator and $E$ is the spectral measure with respect to $T$. Furthermore we will write $\chi\left(a\leq f\leq b\right)$, respectively $\chi\left(a\leq T\leq b\right)$, in case $\epsilon=0$ as well as $\chi^{\epsilon}\left(a\leq \cdot \right)$ and $\chi^{\epsilon}\left(\cdot \leq b\right)$ in case $b=\infty$ or $a=-\infty$, respectively.

The proof of the following Theorem \ref{Theorem-Coherent States} will be carried out in Section \ref{Convergence to Coherent States}. It is reminiscent of the quantum de-Finetti Theorem, and establishes in addition that for low energy states phonon field configurations are necessarily close to the set of Pekar minimizers given by $\{\varphi^\mathrm{Pek}_x\}_{x\in \mathbb{R}^3}$. 

\begin{theorem}
\label{Theorem-Coherent States}
Given $m\in \mathbb{N},C>0$ and $g\in L^2\! \left(\mathbb{R}^3\right)$, we can find a constant $T>0$ such that for all $\alpha\geq 1$ and states $\Psi$ satisfying $\chi\left(\mathcal{N}\leq C\right)\Psi=\Psi$ and $\braket{\Psi|\mathbb{H}_K|\Psi}\leq e^\mathrm{Pek}+\delta e$ with $\delta e\geq 0$ and $K\geq \alpha^{\frac{8}{29}}$, there exists a probability measure $\mu$ on $\mathbb{R}^3$, with the property
\begin{align}
\label{Equation-Coherent state formula for F}
\left|\big\langle \Psi\big|\widehat{F}\, \big|\Psi \big\rangle -\int_{\mathbb{R}^3} F\left(|\varphi_x^\mathrm{Pek}|^2\right)\, \mathrm{d}\mu(x)\right|\leq  T\|f\|_\infty\max\Big\{\sqrt{\delta e},\alpha^{-\frac{2}{29}}\Big\}
\end{align}
for all $F:\mathcal{M}\left(\mathbb{R}^3\right)\longrightarrow \mathbb{R}$ of the form $F\left(\rho\right)=\int\dots \int f(x_1,\dots , x_m)\, \mathrm{d}\rho(x_1)\dots \mathrm{d}\rho(x_m)$ with bounded $f:\mathbb{R}^{3\times m}\longrightarrow \mathbb{R}$, and furthermore
\begin{align}
\label{Equation-Translated F version}
&\left|\Big\langle \Psi\Big|W_g^{-1}\mathcal{N}W_g\Big|\Psi\Big\rangle-\int_{\mathbb{R}^3} \left\|\varphi_x^\mathrm{Pek}-g\right\|^2\, \mathrm{d}\mu\left(x\right)\right|\leq T\max\Big\{\sqrt{\delta e},\alpha^{-\frac{2}{29}}\Big\},
\end{align}
where $W_g$ is the Weyl operator characterized by $W_g^{-1}a(h)W_g=a(h)-\braket{h|g}$.
\end{theorem}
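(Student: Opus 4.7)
The statement is a semiclassical de Finetti–type theorem: in the limit $\alpha\to\infty$ the phonon field becomes classical, and for low-energy states the induced classical distribution must concentrate on the Pekar minimizer orbit $\mathcal{M}:=\{\varphi^{\mathrm{Pek}}_x:x\in\mathbb{R}^3\}$. My plan is to implement this through a Husimi (anti-Wick) decomposition on a finite-dimensional truncated phase space, and then exploit the quadratic coercivity of $\mathcal{F}^{\mathrm{Pek}}$ transverse to $\mathcal{M}$.

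First I would pass to a finite-dimensional phase space. The assumption $\chi(\mathcal{N}\leq C)\Psi=\Psi$ together with the energy bound gives, via \eqref{Equation-Boundedness from below}, control of $\langle\Psi|-\Delta_x|\Psi\rangle$, and an IMS-type spatial localization (as in the proof of Theorem \ref{Theorem-Strong cut-off}) produces a center $y\in\mathbb{R}^3$ and a cost controllable by the energy gap so that it suffices to work with $\mathbb{H}^y_{\Lambda,\ell}$, for $\Lambda\leq K$ and $\ell$ chosen as small powers of $\alpha$. On the finite-dimensional space $V:=\Pi^y_{\Lambda,\ell}L^2(\mathbb{R}^3)$ of dimension $N\sim(\Lambda/\ell)^3$, the coherent states $\{\Omega_\varphi\}_{\varphi\in V}$ form a resolution of identity and the Husimi lower symbol defines a probability measure $\nu_\Psi$ on $V$ with $\int F(|\varphi|^2)\, d\nu_\Psi(\varphi) = \langle\Psi|\widehat{F}|\Psi\rangle + O(\|f\|_\infty N/\alpha^2)$, the error coming from the standard Wick/anti-Wick comparison iterated $m$ times on identities of the form $\langle\Omega_\varphi|a^\dagger(h)a(h')|\Omega_\varphi\rangle=\langle h|\varphi\rangle\overline{\langle h'|\varphi\rangle}+\alpha^{-2}\langle h|h'\rangle$. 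The same computation handles $W_g^{-1}\mathcal{N}W_g = \mathcal{N}+a(g)+a^\dagger(g)+\|g\|^2$.

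Second, I would combine the energy bound with the semiclassical lower bound $\langle\Omega_\varphi|\mathbb{H}^y_{\Lambda,\ell}|\Omega_\varphi\rangle \geq \mathcal{F}^{\mathrm{Pek}}(\varphi) - O(\alpha^{-2/29})$, for $\varphi\in V$, to obtain
\begin{align*}
\int \bigl(\mathcal{F}^{\mathrm{Pek}}(\varphi)-e^{\mathrm{Pek}}\bigr)\, d\nu_\Psi(\varphi) \lesssim \delta e + \alpha^{-2/29}.
\end{align*}
The quadratic non-degeneracy of $\mathcal{F}^{\mathrm{Pek}}$ transverse to $\mathcal{M}$ (from \cite{Li} and the Hessian analysis of Section \ref{Section-Properties of the Pekar Functional}) then forces $\nu_\Psi$ to concentrate on an $O(\max\{\sqrt{\delta e},\alpha^{-1/29}\})$-neighborhood of $\mathcal{M}$. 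Defining $\mu$ as the push-forward of $\nu_\Psi$ under the nearest-point projection $\pi:\varphi\mapsto x(\varphi)$ onto $\mathcal{M}$, and replacing $F(|\varphi|^2)$ by $F(|\varphi^{\mathrm{Pek}}_{\pi(\varphi)}|^2)$ via the $m$-linearity of $F$, yields \eqref{Equation-Coherent state formula for F}; the parallel argument applied to the quadratic observable $W_g^{-1}\mathcal{N}W_g$ gives \eqref{Equation-Translated F version}. The exponents $8/29$ and $2/29$ should arise by optimizing three competing errors: the cut-off error $L\ell\sqrt{\Lambda}+\sqrt{\ell}+\sqrt{1/\Lambda-1/K}$ from Lemma \ref{Lemma-Norm Estimate}, the semiclassical anti-Wick error $N/\alpha^2$, and the concentration width.

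The main obstacle I expect is the non-compactness of $\mathcal{M}$. Naively pushing $\nu_\Psi$ forward onto $\mathbb{R}^3$ could produce a measure of total mass less than one if weight is concentrated near translates $\varphi^{\mathrm{Pek}}_x$ with $|x|$ much larger than the localization scale around $y$. Ensuring that no such mass is lost — i.e., that the effective support of $\nu_\Psi$ in the translation direction remains bounded — requires a careful coupling between the electronic spatial localization and the phonon Husimi analysis, perhaps by decomposing $\Psi$ into pieces localized at different centers before applying the Husimi analysis to each piece and gluing the resulting measures back together. I expect this, rather than the finite-dimensional semiclassical bookkeeping, to be where the bulk of the technical work resides.
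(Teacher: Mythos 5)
Your approach matches the paper's proof (Section \ref{Convergence to Coherent States}): electron localization at a running center $y$, Husimi/lower-symbol decomposition on a finite-dimensional phonon truncation $\Pi^y_{\Lambda_*,\ell_*}L^2(\mathbb{R}^3)$ with anti-Wick reordering, coercivity of $\mathcal{F}^{\mathrm{Pek}}$ to concentrate the resulting measure near the Pekar orbit, and pushforward onto the translation parameter to define $\mu$. A few points to sharpen. The concentration step requires the \emph{global} quantitative coercivity $\inf_x\|\varphi-\varphi^{\mathrm{Pek}}_x\|^2\lesssim\mathcal{F}^{\mathrm{Pek}}(\varphi)-e^{\mathrm{Pek}}$ from \cite{FeRS}, quoted in \eqref{Equation-CoercivityPekarFunctional}; the local Hessian analysis of Section \ref{Section-Properties of the Pekar Functional} and the uniqueness result of \cite{Li} that you cite do not by themselves control $\varphi$ far from the orbit, and the Husimi measure has no a priori smallness to invoke. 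The mass-loss worry you raise is resolved exactly as you sketch, and in fact automatically: since $\iint d\mathbb{P}_y\,dy=\int\|\Psi_y\|^2\,dy=1$, pushing forward jointly in $(y,\xi)$ yields a probability measure $\mu$ with no separate argument needed; the genuinely delicate point that remains is measurability of the nearest-point map, which Lemma \ref{Lemma-Measurable Function} handles via a discretization into small cells $A_{\epsilon,n}$, with the cell scale optimized together with the Cauchy--Schwarz parameter in the moment bounds. Finally, your stated intermediate bound $\int(\mathcal{F}^{\mathrm{Pek}}-e^{\mathrm{Pek}})\,d\nu_\Psi\lesssim\delta e+\alpha^{-2/29}$ would give a final error of order $\alpha^{-1/29}$ after the square root, not $\alpha^{-2/29}$; the correct intermediate estimate (Lemma \ref{Lemma-Support property of the measure}) is $\lesssim\delta e+\alpha^{-s}+\alpha^{27s/2-2}$, which at the optimal $s=4/29$ is $\delta e+O(\alpha^{-4/29})$ and produces the claimed rate.
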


In the subsequent Lemma \ref{Lemma-IMS Formula} we introduce a generalized IMS-type estimate quantifying the energy cost of localizing with respect to an $\widehat{F}$-operator, similar to the generalized IMS results in \cite[Theorem A.1]{LS} and \cite[Proposition 6.1]{LNSS}. In order to formulate the result, let us define for a given subset $\Omega\subset \mathcal{M}\left(\mathbb{R}^3\right)$ and a (quadratic) partition of unity $\mathcal{P}=\{F_j:\mathcal{M}\left(\mathbb{R}^3\right)\longrightarrow \mathbb{R}:j\in J\}$, i.e. $0\leq F_j\leq 1$ and $\sum_{j\in J}F_j^2=1$, the variation of this partition on $\Omega$ as
\begin{align*}
V_\Omega\left(\mathcal{P}\right):=\alpha^{4}\underset{\rho\in \Omega,y\in \mathbb{R}^3}{\sup}\sum_{j\in J}\left|F_j\left(\rho+\alpha^{-2}\delta_y\right)-F_j\left(\rho\right)\right|^2.
\end{align*}

\begin{lem}
\label{Lemma-IMS Formula}
There exists a constant $c>0$, such that for any partition of unity $\mathcal{P}=\{F_j:\mathcal{M}\left(\mathbb{R}^3\right)\longrightarrow \mathbb{R}:j\in J\}$, $\Omega\subset \mathcal{M}\left(\mathbb{R}^3\right)$, $K>0$, $\alpha\geq 1$ and state $\Psi$ with $\widehat{\mathds{1}_{\Omega}}\Psi=\Psi$
\begin{align}
\label{Equation-hat F IMS}
\left|\sum_{j\in J }\big\langle \widehat{F_j}\Psi\big|\mathbb{H}_K\big|\widehat{F_j}\Psi\big\rangle-\braket{\Psi|\mathbb{H}_K|\Psi}\right|\leq c\sqrt{K} \alpha^{-4}V_\Omega\left(\mathcal{P}\right) \big\langle \Psi\big|\sqrt{\mathcal{N}+\alpha^{-2}}\, \big|\Psi  \big\rangle.
\end{align}
Furthermore given $M>0$, there exists a constant $c'>0$ such that we have for any $\varphi\in L^2\! \left(\mathbb{R}^3\right)$ satisfying $\|\varphi\|\leq M$, partition of unity $\{f_j:\mathbb{R}\longrightarrow \mathbb{R}:j\in J\}$, $K\geq 1$, $\alpha\geq 1$ and state $\Psi$
\begin{align*}
\left|\sum_{j\in J }\braket{\Psi_j|\mathbb{H}_K|\Psi_j}-\braket{\Psi|\mathbb{H}_K|\Psi}\right|\leq c'\sqrt{K}   \alpha^{-4} V_{\mathcal{M}\left(\mathbb{R}^3\right)}\left(\mathcal{P}'\right)\Big\langle  \Psi\Big| \sqrt{\mathcal{N}+1}\Big| \Psi \Big\rangle,
\end{align*}
where we define $\Psi_j:=f_j\left(W_\varphi^{-1}\mathcal{N}W_\varphi\right)\Psi$ with $W_\varphi$ being the corresponding Weyl operator and $\mathcal{P}':=\{F'_j:\mathcal{M}\left(\mathbb{R}^3\right)\longrightarrow \mathbb{R}:j\in J\}$ with $F'_j(\rho):=f_j(\int \mathrm{d}\rho)$.
\end{lem}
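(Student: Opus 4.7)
The plan is to apply the standard IMS identity, valid for any self-adjoint $A_j$ with $\sum_j A_j^2 = 1$, in the two equivalent forms
$$\sum_j A_j H A_j - H \,=\, \sum_j A_j[H, A_j] \,=\, -\tfrac{1}{2}\sum_j [A_j,[A_j, H]].$$
For the first estimate I take $A_j = \widehat{F_j}$ and $H = \mathbb{H}_K$. Because $\widehat{F_j}$ acts only on the Fock space as a position-space multiplication operator and preserves particle number, it commutes with both $-\Delta_x$ and $\mathcal{N}$, so only the linear field terms $-a(\Pi_K w_x) - a^\dagger(\Pi_K w_x)$ can give a non-trivial contribution. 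Writing $\rho_n := \alpha^{-2}\sum_{k=1}^n \delta_{x^k}$ and $D_j(\rho, y) := F_j(\rho) - F_j(\rho - \alpha^{-2}\delta_y)$, a direct sector-wise computation yields
$$\big([\widehat{F_j}, a^\dagger(f)]\Psi\big)_n(x^1, \ldots, x^n) \,=\, \alpha^{-1}\sqrt{n}\,\mathrm{Sym}\big(D_j(\rho_n, x^1)\, f(x^1)\,\Psi_{n-1}(x^2, \ldots, x^n)\big),$$
and an analogous expression for $[\widehat{F_j}, a(f)]$.

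The decisive algebraic step is the identity
$$\sum_j F_j(\rho)\, D_j(\rho, y) \,=\, 1 - \sum_j F_j(\rho)F_j(\rho - \alpha^{-2}\delta_y) \,=\, \tfrac{1}{2}\sum_j D_j(\rho, y)^2,$$
which follows from $\sum_j F_j^2 = 1$ applied both at $\rho$ and at $\rho - \alpha^{-2}\delta_y$. Summing the first form of the IMS identity over $j$ therefore replaces the first-order differences $D_j$ by their squares, and the uniform pointwise bound $\sum_j D_j(\rho, y)^2 \leq \alpha^{-4} V_\Omega(\mathcal{P})$ — valid on the support of $\Psi$ thanks to the hypothesis $\widehat{\mathds{1}_\Omega}\Psi = \Psi$ — can be pulled out of the expectation as a scalar. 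What is left is bounded fiberwise in the electron coordinate $x$ by the sharp one-sided estimate $|\braket{\Phi|a^\dagger(g)|\Phi}_{\mathcal F}| \leq \|g\|\,\braket{\Phi|\sqrt{\mathcal{N}+\alpha^{-2}}|\Phi}_{\mathcal F}$, which is obtained from the sector decomposition combined with $2\alpha^{-1}\sqrt{n}\|\Phi_n\|\|\Phi_{n-1}\| \leq \alpha^{-1}\sqrt{n}(\|\Phi_n\|^2+\|\Phi_{n-1}\|^2)$. Using the uniform bound $\|\Pi_K w_x\| \lesssim \sqrt{K}$ and integrating over $x$ then produces exactly $c\sqrt{K}\alpha^{-4}V_\Omega(\mathcal{P})\braket{\Psi|\sqrt{\mathcal{N}+\alpha^{-2}}|\Psi}$.

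For the second statement the plan is to absorb the Weyl transformation by conjugation. Setting $\widetilde\Psi := W_\varphi\Psi$ and $\widetilde{\mathbb{H}}_K := W_\varphi \mathbb{H}_K W_\varphi^{-1}$, the functional calculus gives $\Psi_j = W_\varphi^{-1} f_j(\mathcal{N})\widetilde\Psi$, so that
$$\sum_j \braket{\Psi_j|\mathbb{H}_K|\Psi_j} - \braket{\Psi|\mathbb{H}_K|\Psi} \,=\, \sum_j \braket{f_j(\mathcal{N})\widetilde\Psi|\widetilde{\mathbb{H}}_K|f_j(\mathcal{N})\widetilde\Psi} - \braket{\widetilde\Psi|\widetilde{\mathbb{H}}_K|\widetilde\Psi},$$
reducing the problem to the previous setting with $\widehat{F'_j} = f_j(\mathcal{N})$. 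The conjugated Hamiltonian $\widetilde{\mathbb{H}}_K$ differs from $\mathbb{H}_K$ only by constants (which commute with $f_j(\mathcal{N})$ and drop out) and by extra linear field terms $-a(\varphi) - a^\dagger(\varphi)$ coming from $W_\varphi\mathcal{N}W_\varphi^{-1}$; these generate an additional commutator contribution controlled by $\|\varphi\| \leq M$, while the original interaction still contributes $\|\Pi_K w_x\| \lesssim \sqrt{K}$. The same algebraic identity and fiberwise estimates as above then produce the claimed bound, with $\sqrt{\mathcal{N}+1}$ replacing $\sqrt{\mathcal{N}+\alpha^{-2}}$ to absorb the $M$-dependence into the constant $c'$. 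The main care needed throughout lies in the combinatorial bookkeeping of the symmetrization factors when passing between $(n-1)$- and $n$-phonon sectors, and in arranging the Cauchy–Schwarz so that one obtains the first power of $V_\Omega(\mathcal{P})$ (via the identity above) rather than its square root; no domain or regularization issues arise since $\widehat{F_j}$ is bounded by $0 \leq F_j \leq 1$.
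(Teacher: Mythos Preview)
Your proposal is correct and follows essentially the same approach as the paper: apply the IMS identity, observe that $\widehat{F_j}$ commutes with $-\Delta_x$ and $\mathcal{N}$ so only the linear field terms contribute, reduce to the squared differences $\sum_j D_j^2$ (you via the algebraic identity $\sum_j F_j D_j = \tfrac12\sum_j D_j^2$, the paper via the explicit double-commutator computation $[[\widehat{F},a(v)],\widehat{F}]$), bound these pointwise by $\alpha^{-4}V_\Omega(\mathcal{P})$ on the support of $\Psi$, and finish with Cauchy--Schwarz across sectors together with $\|\Pi_K w_x\|\lesssim\sqrt{K}$; for the second part both conjugate by $W_\varphi$, pick up the extra linear term $a(\varphi)+a^\dagger(\varphi)$ (your sign is off but immaterial), and pass from $\sqrt{W_\varphi^{-1}\mathcal N W_\varphi+\alpha^{-2}}$ to $\sqrt{\mathcal N+1}$ via $W_\varphi^{-1}\mathcal N W_\varphi\le 2(\mathcal N+\|\varphi\|^2)$ and operator monotonicity. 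The only imprecision is your phrasing ``bounded by the sharp one-sided estimate $|\langle\Phi|a^\dagger(g)|\Phi\rangle|\le\|g\|\langle\Phi|\sqrt{\mathcal N+\alpha^{-2}}|\Phi\rangle$'': what remains after replacing $\sum_j D_j^2$ by its supremum is not literally of this form (absolute values have been inserted), but the sector-wise Cauchy--Schwarz you describe gives exactly the needed bound.
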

\begin{proof}
By applying the IMS identity, we obtain
\begin{align*}
\sum_{j\in J }\widehat{F_j}\mathbb{H}_K\widehat{F_j}-\mathbb{H}_K=\frac{1}{2}\sum_{j\in J } \left[\left[\widehat{F_j},\mathbb{H}_K\right],\widehat{F_j}\right] =-\sum_{j\in J }\mathfrak{Re}\,  \left[\left[\widehat{F_j},a\left(\Pi_K w_x\right)\right],\widehat{F_j}\right] ,
\end{align*}
where we have used the fact that $F_j$ commutes with $-\Delta_x$ and $\mathcal{N}$ in the last identity. Since a state $\Psi$ is a function with values in $\mathcal{F}\left(L^2\! \left(\mathbb{R}^3\right)\right)=\bigoplus\limits_{n=0}^\infty L^2_{\mathrm{sym}}\! (\mathbb{R}^{3\times n})$, we can represent it as $\Psi=\bigoplus_{n=0}^\infty \Psi_n$ where $\Psi_n(y,x^1,\dots ,x^{n})$ is a function of the electron variable $y$ and the $n$ phonon coordinates $x^k\in \mathbb{R}^3$. In order to simplify the notation, we will suppress the dependence on the electron variable $y$ in our notation. By an explicit computation, we obtain $\left[\left[\widehat{F},a\left(v\right)\right],\widehat{F}\right]\bigoplus_{n=0}^\infty \Psi_n=-\bigoplus_{n=0}^\infty \sqrt{\frac{n+1}{\alpha^2}}\Psi'_n$ with
\begin{align*}
\Psi'_n(x^1, \dots  ,x^n)\! =\! \! \! \int\! \! \Bigg[\! F\bigg(\alpha^{-2}\sum_{k=1}^{n+1}\delta_{x^k}\bigg)\!-\!F\bigg(\alpha^{-2}\sum_{k=1}^{n}\delta_{x^k}\bigg)\!\Bigg]^2 \! \! \! v(x^{n\!+\!1})\Psi_{n\!+\!1}(x^1,\dots ,x^{n\!+\!1})\mathrm{d}x^{n\!+\!1},
\end{align*}
for $v\in L^2\! \left(\mathbb{R}^3\right)$ and $F:\mathcal{M}\left(\mathbb{R}^3\right)\longrightarrow \mathbb{R}$. By the definition of $V_\Omega\left(\mathcal{P}\right)$ we obtain that 
\begin{align*}
\sigma(x^1,\dots,x^{n+1}):=\sum_{j\in J}\bigg[ F_j\Big(\alpha^{-2}\sum_{k=1}^{n+1}\delta_{x^k}\Big)-F_j\Big(\alpha^{-2}\sum_{k=1}^{n}\delta_{x^k}\Big)\bigg]^2\leq \alpha^{-4}V_\Omega\left(\mathcal{P}\right)
\end{align*}
for all $x^{n+1}\in \mathbb{R}^3$ and every $(x^1,\dots,x^n)\in \mathbb{R}^{3n}$ with $\alpha^{-2}\sum_{k=1}^n \delta_{x^k}\in \Omega$. Hence we can estimate $\left|\Big\langle \Psi\Big|  \sum_{j\in J}\mathfrak{Re}\left[\left[\widehat{F}_j,a\left(v\right)\right],\widehat{F}_j\right]\Big| \Psi \Big\rangle\right|$, using the notation $X=(x^1,\dots,x^n)$, by
\begin{align*}
&\sum_{n=0}^\infty \sqrt{\frac{n\! +\! 1}{\alpha^2}} \int  |\Psi_n(X)| \int \sigma(X,x^{n\!+\! 1})|v(x_{n+1})\Psi_{n+1}(X,x_{n\!+\! 1})|\mathrm{d}x_{n + 1}\mathrm{d}X\\
&\ \ \ \ \leq \alpha^{-5} V_\Omega\left(\mathcal{P}\right)\sum_{n=0}^\infty \sqrt{n+1}\int \!|\Psi_n(X)|\int |v(x_{n+1})\Psi_{n+1}(X,x_{n+1})|\mathrm{d}x^{n+1}\mathrm{d}X\\
&\ \ \ \ \leq \alpha^{-5} V_\Omega\left(\mathcal{P}\right)\|v\|\sum_{n=0}^\infty \sqrt{n+1}\big\|\Psi_n\big\|\, \big\|\Psi_{n+1}\big\|\leq \alpha^{-4}V_\Omega\left(\mathcal{P}\right)\|v\|\Big\langle\Psi\Big| \sqrt{\mathcal{N}+\alpha^{-2}}\,\Big|\Psi \Big\rangle.
\end{align*}
This concludes the proof of Eq.~(\ref{Equation-hat F IMS}), using the concrete choice $v:=\Pi_K  w_x$, since $\left\|\Pi_K  w_x\right\|^2=\frac{1}{2\pi^2} \int_{|k|\leq K}\frac{1}{|k|^2}=\frac{2}{\pi} K$. 

In order to verify the second statement we apply the unitary transformation $W_\varphi$ to the operator $\mathbb{X}:=\sum_{j\in J }f_j\left(W_\varphi^{-1}\mathcal{N}W_\varphi\right)\mathbb{H}_Kf_j\left(W_\varphi^{-1}\mathcal{N}W_\varphi\right)-\mathbb{H}_K$ and compute
\begin{align*}
W_\varphi  \mathbb{X}\, W_\varphi^{-1} &=\frac{1}{2}\sum_{j\in J } \Big[\Big[f_j\left(\mathcal{N}\right),W_\varphi\mathbb{H}_K W_\varphi^{-1}\Big],f_j\left(\mathcal{N}\right)\Big]\\
& =\sum_{j\in J } \mathfrak{Re}\Big[\Big[f_j\left(\mathcal{N}\right),a\left(\varphi-\Pi_K w_x\right)\Big],f_j\left(\mathcal{N}\right)\Big]=\sum_{j\in J } \mathfrak{Re}\Big[\Big[\widehat{F}_j',a\left(v\right)\Big],\widehat{F}_j'\Big],
\end{align*}
where we defined $v:=\varphi-\Pi_K w_x$ and applied the definition $F_j'(\rho)=f_j\left(\int \mathrm{d}\rho\right)$. We know from the previous estimates that
\begin{align*}
\pm \sum_{j\in J } \mathfrak{Re}\Big[\Big[f_j\left(\mathcal{N}\right),a\left( v\right)\Big],f_j\left(\mathcal{N}\right)\Big]\leq \alpha^{-4}V_{\mathcal{M}\left(\mathbb{R}^3\right)}\left(\mathcal{P}'\right)\|v\|\sqrt{\mathcal{N}+\alpha^{-2}}.
\end{align*}
Clearly $\|v\|\leq \|\varphi\|+\|\Pi_K w_x\|\lesssim \sqrt{K}$ for $K\geq 1$, and consequently
\begin{align*}
&\left|\sum_{j\in J }\braket{\Psi_j|\mathbb{H}_K|\Psi_j}-\braket{\Psi|\mathbb{H}_K|\Psi}\right|\lesssim \sqrt{K}\alpha^{-4}V_{\mathcal{M}\left(\mathbb{R}^3\right)}\left(\mathcal{P}'\right)\Big\langle\Psi\Big| \sqrt{W_\varphi^{-1}\mathcal{N}W_\varphi+\alpha^{-2}}\,\Big|\Psi \Big\rangle\\
& \ \ \ \ \ \ \lesssim \sqrt{K}\alpha^{-4}V_{\mathcal{M}\left(\mathbb{R}^3\right)}\left(\mathcal{P}'\right)\Big\langle\Psi\Big| \sqrt{\mathcal{N}+1}\,\Big|\Psi \Big\rangle,
\end{align*}
where we have used that $W_\varphi^{-1}\mathcal{N}W_\varphi\leq 2\left(\mathcal{N}+\|\varphi\|^2\right)$ and the operator-monotonicity of the square root.
\end{proof}

In the following let $L:=\alpha^{1+\sigma}$ and $\Lambda:=\alpha^{\frac{4}{5}(1+\sigma)}$ with $0<\sigma\leq \frac{1}{4}$, and let $\Psi_\alpha^\bullet$ be a  sequence of states satisfying $\mathrm{supp}\left(\Psi_\alpha^\bullet\right)\subset B_L(0)$ and $\widetilde{E}_\alpha- E_\alpha\lesssim \alpha^{-\frac{4}{29}}$, where
\begin{align}
\label{Equation-Energy_of_given_sequence}
\widetilde{E}_\alpha:=\braket{\Psi_\alpha^\bullet|\mathbb{H}_\Lambda|\Psi_\alpha^\bullet}.
\end{align}
The exponent $\frac{4}{29}$ is chosen for convenience, as it allows  to simplify the right hand side of Eq.~(\ref{Equation-Coherent state formula for F}) to $\|f\|_\infty \alpha^{-\frac{2}{29}}$ (using that $E_\alpha \leq e^\mathrm{Pek}$). For the proof of Theorem~\ref{Theorem: Main} we shall use the specific choice $\Psi_\alpha^\diamond$ from Theorem~\ref{Theorem-Strong cut-off} for $\Psi_\alpha^\bullet$, but it will  be useful in the second part to have the first two localization procedures in Lemma \ref{Lemma-Admissible sequence} and \ref{Lemma-Control of variance} formulated for a more general sequence
 $\Psi_\alpha^\bullet$.

In the following Eq.~(\ref{Equation:First Localization}) and Eq.~(\ref{Equation:Second Localization}), we will apply localizations procedures to the given sequence $\Psi_\alpha^\bullet$ in order to construct states having additional useful properties, which we will use in Lemma \ref{Lemma-Existence of a condensate} in order to construct a sequence of approximate ground states satisfying complete condensation. Furthermore we will quantify the energy cost of these localizations by $\braket{\Psi_\alpha|\mathbb{H}_\Lambda|\Psi_\alpha}-\widetilde{E}_\alpha\lesssim \alpha^{-3}$ in the Lemmata \ref{Lemma-Admissible sequence} and \ref{Lemma-Control of variance}. In Theorem \ref{Theorem-Existence of a strong condensate} we will then apply a final localization procedure, in order to lift the (weak) condensation from Lemma \ref{Lemma-Existence of a condensate} to a strong one, following the argument in \cite{LNSS}.

Having Lemma \ref{Lemma-IMS Formula} at hand, we can verify our first localization result in Lemma \ref{Lemma-Admissible sequence}, which allows us to restrict our attention to states $\Psi'_\alpha$ having a (rescaled) particle number $\mathcal{N}$ between some fixed constants  $c_-$ and $c_+$. To be precise, for given $c_-,c_+$ and $\epsilon'$ we use the function $F_*(\rho):=\chi^{\epsilon'}\left(c_-+\epsilon'\leq \int \mathrm{d}\rho\leq c_+-\epsilon'\right)$ in order to define the states
\begin{align}
\label{Equation:First Localization}
\Psi_\alpha':=Z_{\alpha}^{-1}\widehat{F}_*\Psi^\bullet_\alpha,
\end{align}
with the corresponding normalization constants $Z_{\alpha}:=\|\widehat{F}_*\Psi^\bullet_\alpha\|$. By construction we  have $\chi\left(c_-\leq \mathcal{N}\leq c_+\right)\Psi'_\alpha=\Psi'_\alpha$  as well as $\mathrm{supp}\left(\Psi'_\alpha\right)\subset B_{L}(0)$. In the following Lemma \ref{Lemma-Admissible sequence} we derive an upper bound on the energy of $\Psi_\alpha'$, and in addition we will investigate the large $\alpha$ behavior of $Z_\alpha$, which will be useful in the second part.

\begin{lem}
\label{Lemma-Admissible sequence}
Let $\Psi_\alpha^\bullet$ be the sequence introduced above Eq.~(\ref{Equation-Energy_of_given_sequence}). Then there exist $\alpha$-independent constants $c_-,c_+,\epsilon'>0$ such that the corresponding states $\Psi_\alpha'$ defined in Eq.~(\ref{Equation:First Localization}) satisfy $\braket{\Psi'_\alpha|\mathbb{H}_\Lambda|\Psi'_\alpha}-\widetilde{E}_\alpha\lesssim \alpha^{-\frac{7}{2}}$. Furthermore, $Z_\alpha\underset{\alpha\rightarrow \infty}{\longrightarrow} 1$.
\end{lem}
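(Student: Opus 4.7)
The plan is to apply the IMS formula of Lemma~\ref{Lemma-IMS Formula} with a three-piece quadratic partition of unity of $\mathcal{M}(\mathbb{R}^3)$ depending only on the total mass. Concretely, fix $\epsilon'>0$ and choose smooth non-negative $f_-,f_*,f_+\colon\mathbb{R}\to[0,1]$ with $f_-^2+f_*^2+f_+^2\equiv 1$ and $f_*=\chi^{\epsilon'}(c_-+\epsilon'\leq \cdot \leq c_+-\epsilon')$, and set $F_j(\rho):=f_j(\int\mathrm{d}\rho)$. On the $n$-phonon sector $\int\mathrm{d}\rho=\alpha^{-2}n=\mathcal{N}$, so $\widehat{F}_*=\chi^{\epsilon'}(c_-+\epsilon'\leq\mathcal{N}\leq c_+-\epsilon')$ and $\chi(c_-\leq\mathcal{N}\leq c_+)\widehat{F}_*=\widehat{F}_*$ as required. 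Since $\int(\rho+\alpha^{-2}\delta_y)-\int\rho=\alpha^{-2}$ and each $f_j$ is Lipschitz with constant $\lesssim 1/\epsilon'$, the variation $V_{\mathcal{M}(\mathbb{R}^3)}(\mathcal{P})$ is $O(1)$. The IMS error of Lemma~\ref{Lemma-IMS Formula} is therefore bounded by $\sqrt{\Lambda}\,\alpha^{-4}\braket{\sqrt{\mathcal{N}+\alpha^{-2}}}_{\Psi^\bullet_\alpha}\lesssim\alpha^{-7/2}$, where we used $\Lambda\leq\alpha$ (from $\sigma\leq 1/4$) together with $\braket{\mathcal{N}}_{\Psi^\bullet_\alpha}\lesssim 1$, which follows from Eq.~(\ref{Equation-Boundedness from below}) and $\widetilde{E}_\alpha=O(1)$.

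The key observation is that $c_-,c_+,\epsilon'$ can be chosen (independently of $\alpha$) so that on the spectral subspaces $\{\mathcal{N}\leq c_-+2\epsilon'\}$ and $\{\mathcal{N}\geq c_+-2\epsilon'\}$ the Hamiltonian $\mathbb{H}_\Lambda$ is bounded below by $e^{\mathrm{Pek}}+\delta$ for some fixed $\delta>0$. For the upper tail Eq.~(\ref{Equation-Boundedness from below}) directly yields $\mathbb{H}_\Lambda\geq -d+(c_+-2\epsilon')/2$, which exceeds $e^{\mathrm{Pek}}+\delta$ once $c_+$ is large. For the lower tail, Eq.~(\ref{Equation-Boundedness from below_first line}) evaluated at the optimal parameter $t=(2d/(c_-+2\epsilon'+\alpha^{-2}))^{1/3}$ gives $\mathbb{H}_\Lambda\geq -3\cdot 2^{-2/3}d^{1/3}(c_-+2\epsilon'+\alpha^{-2})^{2/3}$ on the corresponding subspace, which lies strictly above $e^{\mathrm{Pek}}+\delta$ once $c_-,\epsilon'$ are small and $\alpha$ is large. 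Because $\widetilde{E}_\alpha\to e^{\mathrm{Pek}}$, for $\alpha\geq\alpha_0$ one then has $\mathbb{H}_\Lambda\geq\widetilde{E}_\alpha$ on the supports of $\widehat{F}_\pm$, and in particular $\braket{\widehat{F}_\pm\Psi^\bullet_\alpha|\mathbb{H}_\Lambda|\widehat{F}_\pm\Psi^\bullet_\alpha}\geq\widetilde{E}_\alpha\|\widehat{F}_\pm\Psi^\bullet_\alpha\|^2$.

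With this in hand both claims follow quickly. Writing $\eta_\pm=\|\widehat{F}_\pm\Psi^\bullet_\alpha\|^2$ and $Z_\alpha^2=1-\eta_--\eta_+$, the IMS identity combined with the previous inequality gives $\braket{\widehat{F}_*\Psi^\bullet_\alpha|\mathbb{H}_\Lambda|\widehat{F}_*\Psi^\bullet_\alpha}\leq\widetilde{E}_\alpha Z_\alpha^2+O(\alpha^{-7/2})$, and dividing by $Z_\alpha^2$ produces the energy bound $\braket{\Psi'_\alpha|\mathbb{H}_\Lambda|\Psi'_\alpha}-\widetilde{E}_\alpha\lesssim\alpha^{-7/2}$. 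Re-inserting instead the strict tail bound $\mathbb{H}_\Lambda\geq e^{\mathrm{Pek}}+\delta$ together with $\mathbb{H}_\Lambda\geq E_\alpha$ in the middle piece yields $(\eta_-+\eta_+)(e^{\mathrm{Pek}}+\delta-E_\alpha)\lesssim\widetilde{E}_\alpha-E_\alpha+\alpha^{-7/2}\lesssim\alpha^{-4/29}$, whence $\eta_-+\eta_+\to 0$ and $Z_\alpha\to 1$. The only delicate step is the simultaneous choice of the constants $c_-,c_+,\epsilon',\alpha_0$ securing $\mathbb{H}_\Lambda>\widetilde{E}_\alpha$ on both tail supports; all quantitative inputs are already provided by Lemmata~\ref{Lemma-Lieb-Yamazaki bound} and~\ref{Lemma-IMS Formula}.
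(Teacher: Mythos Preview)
Your proof is correct and follows essentially the same strategy as the paper: the three-piece partition of unity in the total mass, the IMS estimate of Lemma~\ref{Lemma-IMS Formula} with $V_{\mathcal{M}(\mathbb{R}^3)}(\mathcal{P})=O(1)$ and $\Lambda\leq\alpha$, and the tail bounds from Eq.~(\ref{Equation-Boundedness from below_first line})--(\ref{Equation-Boundedness from below}). The only cosmetic difference is that the paper phrases the extraction of the middle-piece bound as ``at least one of the three expectation values is $\leq \widetilde{E}_\alpha+C\alpha^{-7/2}$, and the two tails are ruled out'', which yields $\braket{\Psi'_\alpha|\mathbb{H}_\Lambda|\Psi'_\alpha}\leq\widetilde{E}_\alpha+C\alpha^{-7/2}$ directly without the factor $Z_\alpha^{-2}$; in your version you divide by $Z_\alpha^2$ and then appeal to $Z_\alpha\to 1$ to bound $Z_\alpha^{-2}$, which is equally valid since the convergence $Z_\alpha\to 1$ is established independently.
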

\begin{proof}
In the following let $F_*$ be the function defined above Eq.~(\ref{Equation:First Localization}) and let us complete it to a quadratic partition of unity $\mathcal{P}:=\{F_-,F_*,F_+\}$ with the aid of the functions $F_-(\rho):=\chi^{\epsilon'}\left(\int \mathrm{d}\rho\leq c_-+\epsilon'\right)$ and $F_+(\rho):=\chi^{\epsilon'}\left(c_+-\epsilon'\leq \int \mathrm{d}\rho\right)$. Making use of Lemma \ref{Lemma-IMS Formula} and $\Lambda=\alpha^{\frac{4}{5}(1+\sigma)}\leq \alpha$, we then obtain
\begin{align}
\nonumber
&Z_{\alpha,-}^2 \braket{\Psi_{\alpha,-}|\mathbb{H}_\Lambda|\Psi_{\alpha,-}}+ Z_{\alpha}^2 \braket{\Psi_{\alpha}'|\mathbb{H}_\Lambda|\Psi_{\alpha}'} +  Z_{\alpha,+}^2  \braket{\Psi_{\alpha,+}|\mathbb{H}_\Lambda|\Psi_{\alpha,+}}\\
\label{Equation-IMS for lower and upper bound}
& \ \ \ \ \ \  \leq   \braket{\Psi^\bullet_\alpha|\mathbb{H}_\Lambda|\Psi^\bullet_\alpha} + c\, \alpha^{-\frac{7}{2}} V_{\mathcal{M}\left(\mathbb{R}^3\right)} \left(\mathcal{P}\right) \big\langle \Psi^\bullet_\alpha\big|\sqrt{\mathcal{N} + \alpha^{-2} }\big|\Psi^\bullet_\alpha \big\rangle,
\end{align}
where $\Psi_{\alpha,\pm}:=Z_{\alpha,\pm}^{-1}\widehat{F}_{(\pm)}\Psi^\bullet_\alpha$, with corresponding normalization factors $Z_{\alpha,\pm}:=\|\widehat{F}_{(\pm)}\Psi^\bullet_\alpha\|$. By Eq.~(\ref{Equation-Boundedness from below}) there exists a constant $d$ s.t. $\big\langle \Psi^\bullet_\alpha\big|\mathcal{N}  \big|\Psi^\bullet_\alpha \big\rangle\leq \big\langle \Psi^\bullet_\alpha\big|2\mathbb{H}_\Lambda+d  \big|\Psi^\bullet_\alpha \big\rangle\lesssim d+\alpha^{-\frac{4}{29}}$, where we have used the assumption $\big\langle \Psi^\bullet_\alpha\big|\mathbb{H}_\Lambda\big|\Psi^\bullet_\alpha \big\rangle=\widetilde{E}_\alpha\leq \widetilde{E}_\alpha-E_\alpha\lesssim \alpha^{-\frac{4}{29}}$.  The first derivative of the functions $\chi^{\epsilon'}(\cdot \leq c_-+{\epsilon'}),\chi^{\epsilon'}(c_-+{\epsilon'}\leq \cdot \leq c_+-{\epsilon'})$ and $\chi^{\epsilon'}(\cdot \leq c_+-{\epsilon'})$ is uniformly bounded by some $\epsilon'$-dependent constant $D$, and consequently we have for all finite measures $\rho$ and $\rho':=\rho+\alpha^{-2}\delta_y$ with $y\in \mathbb{R}^3$, and $\diamond\in \{-,*,+\}$,
\begin{align*}
\left|F_\diamond\left(\rho'\right)-F_\diamond\left(\rho\right)\right|\leq D \left|\int \mathrm{d}\rho'-\int \mathrm{d}\rho\right|=D\alpha^{-2}.
\end{align*}
This implies that $V_{\mathcal{M}\left(\mathbb{R}^3\right)}\left(\mathcal{P}\right)\lesssim 1$, and therefore the right hand side of Eq.~(\ref{Equation-IMS for lower and upper bound}) is bounded by $\braket{\Psi^\bullet_\alpha|\mathbb{H}_\Lambda|\Psi^\bullet_\alpha}+C\alpha^{-\frac{7}{2}}$ for a suitable $C>0$. Since $Z_{\alpha,-}^2+Z_{\alpha}^2+Z_{\alpha,+}^2=1$, this means that at least one of the terms $\braket{\Psi_{\alpha,-}|\mathbb{H}_\Lambda|\Psi_{\alpha,-}},\braket{\Psi_{\alpha}'|\mathbb{H}_\Lambda|\Psi_{\alpha}'}$ or $\braket{\Psi_{\alpha,+}|\mathbb{H}_\Lambda|\Psi_{\alpha,+}}$ is bounded from above by $\braket{\Psi^\bullet_\alpha|\mathbb{H}_\Lambda|\Psi^\bullet_\alpha}+C\alpha^{-\frac{7}{2}}=\widetilde{E}_\alpha+C\alpha^{-\frac{7}{2}}$. We can however rule out that $\braket{\Psi_{\alpha,-}|\mathbb{H}_\Lambda|\Psi_{\alpha,-}}$, respectively $\braket{\Psi_{\alpha,+}|\mathbb{H}_\Lambda|\Psi_{\alpha,+}}$, satisfy this upper bound for all small $c_-,{\epsilon'}$ and large $\alpha,c_+$, since $\widetilde{E}_\alpha\leq E_\alpha+C'\alpha^{-\frac{4}{29}}\leq e^\mathrm{Pek}+C'\alpha^{-\frac{4}{29}}< \frac{e^\mathrm{Pek}}{2}<0$ for $\alpha$ large enough and a suitable $C'$, and since we have by Eqs.~(\ref{Equation-Boundedness from below_first line}) and~(\ref{Equation-Boundedness from below}) for all $t>0$
\begin{align}
\label{Equation-FirstContradiction}
 \braket{\Psi_{\alpha,-}|\mathbb{H}_\Lambda|\Psi_{\alpha,-}}&\! \geq\!  \braket{\Psi_{\alpha,-}|\! -\! \frac{d}{t^2}\! -\! t\!  \left(\mathcal{N}\! +\! \alpha^{-2}\right)\! |\Psi_{\alpha,-}}\! \geq \! -\frac{d}{t^2}\! -\! t\! \left(c_-\! +\! 2{\epsilon'}\! +\! \alpha^{-2}\right)\! \geq \! -\frac{e^\mathrm{Pek}}{2},\\
\label{Equation-SecondContradiction}
\braket{\Psi_{\alpha,+}|\mathbb{H}_\Lambda|\Psi_{\alpha,+}}&\geq \braket{\Psi_{\alpha,+}|-d+\frac{1}{2}\mathcal{N}|\Psi_{\alpha,+}}\geq -d+\frac{1}{2}(c_+-2{\epsilon'})\geq 0,
\end{align} 
where the last inequality in Eq.~(\ref{Equation-FirstContradiction}), respectively Eq.~(\ref{Equation-SecondContradiction}), holds for small $c_-,{\epsilon'}$ and large $\alpha,c_+$ with the concrete choice $t:=\left(\frac{d}{c_-+2{\epsilon'}+\alpha^{-2}}\right)^{\frac{1}{3}}$. Using again that the right hand side of Eq.~(\ref{Equation-IMS for lower and upper bound}) is bounded by $\braket{\Psi^\bullet_\alpha|\mathbb{H}_\Lambda|\Psi^\bullet_\alpha}+C\alpha^{-\frac{7}{2}}$ together with Eqs.~(\ref{Equation-FirstContradiction}) and~(\ref{Equation-SecondContradiction}), and the fact that $\mathbb{H}_\Lambda\geq E_\alpha$ and $E_\alpha\leq e^\mathrm{Pek}$, yields furthermore
\begin{align*}
(1-Z_\alpha^2)\left(E_\alpha-\frac{e^\mathrm{Pek}}{2}\right)+Z_\alpha^2 E_\alpha\leq (1-Z_\alpha^2)\frac{e^\mathrm{Pek}}{2}+Z_\alpha^2 E_\alpha\leq \widetilde{E}_\alpha+C\alpha^{-\frac{7}{2}},
\end{align*}
and therefore $-(1-Z_\alpha^2)\frac{e^\mathrm{Pek}}{2}\leq \widetilde{E}_\alpha-E_\alpha+C\alpha^{-\frac{7}{2}}\underset{\alpha\rightarrow \infty}{\longrightarrow}  0$. Since $e^\mathrm{Pek}<0$, this immediately implies $Z_\alpha \underset{\alpha\rightarrow \infty}{\longrightarrow}  1$.
\end{proof}

Regarding the next localization step in Lemma \ref{Lemma-Control of variance}, let us introduce for given $R$ and $\epsilon>0$ satisfying $R>2\epsilon$ the function $K_R\left(\rho\right):=\iint \chi^{\epsilon}\left(R-\epsilon\leq |x-y|\right)\mathrm{d}\rho(x)\mathrm{d}\rho(y)$, which measures how sharply the mass of the measure $\rho$ is concentrated. It will be convenient in the second part to have $K_R$ defined for arbitrary $\epsilon\geq 0$ even though we only need it for $\epsilon=0$ in the following. We also define the function $F_R\left(\rho\right):=\chi^{\frac{\delta}{3}}\Big(K_R\left(\rho\right)\leq \frac{2\delta}{3}\Big)$ for $R,\delta>0$, as well as the states
\begin{align}
\label{Equation:Second Localization}
\Psi''_\alpha:=Z_{R,\alpha}^{-1}\widehat{F}_R\Psi'_\alpha,
\end{align}
where $\Psi'_\alpha$ is as in Lemma \ref{Lemma-Admissible sequence} and $Z_{R,\alpha}:=\|\widehat{F}_R\Psi'_\alpha\|$. Since $\Psi'_\alpha$ satisfies $\mathrm{supp}\left(\Psi'_\alpha\right)\subset B_L(0)$, we have $\mathrm{supp}\left(\Psi''_\alpha\right)\subset B_L(0)$ as well. Furthermore $\chi\left(\widehat{K}_R\leq \delta\right)\Psi''_\alpha=\Psi''_\alpha$. Heuristically this means that we can restrict our attention to phonon configurations that concentrate in a ball of fixed radius $R$.

\begin{lem}
\label{Lemma-Control of variance}
Let $\Psi'_\alpha$ be the sequence from Lemma \ref{Lemma-Admissible sequence}, and let $\epsilon\geq 0$ and $\delta>0$ be given constants. Then there exists a $\alpha$ independent $R>0$, such that the states $\Psi''_\alpha$ defined in Eq.~(\ref{Equation:Second Localization}) satisfy $\braket{\Psi''_\alpha|\mathbb{H}_\Lambda|\Psi''_\alpha}-\widetilde{E}_\alpha\lesssim \alpha^{-\frac{7}{2}}$, where $\widetilde{E}_\alpha$ is defined in Eq.~(\ref{Equation-Energy_of_given_sequence}). Furthermore, $Z_{R,\alpha}\underset{\alpha \rightarrow \infty}{\longrightarrow}1$.
\end{lem}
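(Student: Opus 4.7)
The strategy closely mirrors that of Lemma \ref{Lemma-Admissible sequence}: I complete $F_R$ to a quadratic partition of unity $\mathcal{P}:=\{F_R,G_R\}$ on $\mathcal{M}(\mathbb{R}^3)$ by setting $G_R(\rho):=\chi^{\delta/3}\bigl(2\delta/3\leq K_R(\rho)\bigr)$, so that $F_R^2+G_R^2=1$, and I denote $\widetilde{\Psi}_\alpha:=\widetilde{Z}_{R,\alpha}^{-1}\widehat{G}_R\Psi'_\alpha$ with $\widetilde{Z}_{R,\alpha}:=\|\widehat{G}_R\Psi'_\alpha\|$. Since $\widehat{G}_R$ commutes with $\mathcal{N}$, the state $\widetilde{\Psi}_\alpha$ inherits the particle-number bound $\chi(\mathcal{N}\leq c_+)\widetilde{\Psi}_\alpha=\widetilde{\Psi}_\alpha$ from $\Psi'_\alpha$. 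A direct computation gives $|K_R(\rho+\alpha^{-2}\delta_y)-K_R(\rho)|\leq 2\alpha^{-2}\int \mathrm{d}\rho\lesssim \alpha^{-2}$ for $\rho\in\Omega:=\{\rho:\int \mathrm{d}\rho\leq c_+\}$ (the $\alpha^{-4}$ diagonal term vanishes because $R>2\epsilon$), and the $\delta$-dependent Lipschitz constant of $\chi^{\delta/3}$ then yields $V_\Omega(\mathcal{P})\lesssim 1$. Applying Lemma \ref{Lemma-IMS Formula} with $K=\Lambda\leq \alpha$ and the bound $\langle\Psi'_\alpha|\sqrt{\mathcal{N}+\alpha^{-2}}|\Psi'_\alpha\rangle\lesssim 1$, together with the energy bound from Lemma \ref{Lemma-Admissible sequence}, I obtain
\[
Z_{R,\alpha}^2\braket{\Psi''_\alpha|\mathbb{H}_\Lambda|\Psi''_\alpha}+\widetilde{Z}_{R,\alpha}^2\braket{\widetilde{\Psi}_\alpha|\mathbb{H}_\Lambda|\widetilde{\Psi}_\alpha}\leq \widetilde{E}_\alpha+C\alpha^{-7/2}.
\]

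The heart of the argument is to establish an $\alpha$-independent gap $\eta>0$ such that $\braket{\widetilde{\Psi}_\alpha|\mathbb{H}_\Lambda|\widetilde{\Psi}_\alpha}\geq e^\mathrm{Pek}+\eta$ for large $\alpha$, provided $R$ is chosen appropriately. Since $\varphi^\mathrm{Pek}$ is rapidly decaying (see Appendix \ref{Properties of the Pekar Minimizer}), the quantity
\[
C(R):=\iint \chi^\epsilon\bigl(R-\epsilon\leq |x-y|\bigr)\,|\varphi^\mathrm{Pek}(x)|^2|\varphi^\mathrm{Pek}(y)|^2\,\mathrm{d}x\,\mathrm{d}y
\]
tends to zero as $R\to\infty$, and I fix $R$ so that $C(R)\leq \delta/12$. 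By construction $G_R^2\leq \mathds{1}(K_R\geq \delta/3)\leq 3K_R/\delta$, hence $\braket{\widetilde{\Psi}_\alpha|\widehat{K}_R|\widetilde{\Psi}_\alpha}\geq \delta/3$. If one assumes $\braket{\widetilde{\Psi}_\alpha|\mathbb{H}_\Lambda|\widetilde{\Psi}_\alpha}\leq e^\mathrm{Pek}+\delta e$, then Theorem \ref{Theorem-Coherent States}, applied with the admissible choice $F=K_R$ (which has $m=2$ and $\|f\|_\infty\leq 1$), combined with the translation invariance $K_R(|\varphi^\mathrm{Pek}_z|^2)=C(R)$, delivers
\[
\left|\braket{\widetilde{\Psi}_\alpha|\widehat{K}_R|\widetilde{\Psi}_\alpha}-C(R)\right|\leq T\max\bigl\{\sqrt{\delta e},\,\alpha^{-2/29}\bigr\}.
\]
Comparing with $\braket{\widehat{K}_R}\geq \delta/3$ and $C(R)\leq \delta/12$ gives $\delta/4\leq T\max\{\sqrt{\delta e},\,\alpha^{-2/29}\}$, which for large $\alpha$ forces $\delta e\geq (\delta/(4T))^2=:\eta$, yielding the claimed gap.

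To conclude, since $\widetilde{E}_\alpha\leq e^\mathrm{Pek}+o_\alpha(1)\leq e^\mathrm{Pek}+\eta/2$ for large $\alpha$, the gap implies $\braket{\widetilde{\Psi}_\alpha|\mathbb{H}_\Lambda|\widetilde{\Psi}_\alpha}-\widetilde{E}_\alpha\geq \eta/2$. Rewriting the IMS bound as
\[
\widetilde{Z}_{R,\alpha}^2\bigl(\braket{\widetilde{\Psi}_\alpha|\mathbb{H}_\Lambda|\widetilde{\Psi}_\alpha}-\widetilde{E}_\alpha\bigr)+Z_{R,\alpha}^2\bigl(\braket{\Psi''_\alpha|\mathbb{H}_\Lambda|\Psi''_\alpha}-\widetilde{E}_\alpha\bigr)\leq C\alpha^{-7/2},
\]
and controlling the second summand from below via $\braket{\Psi''_\alpha|\mathbb{H}_\Lambda|\Psi''_\alpha}\geq E_\alpha$ and $\widetilde{E}_\alpha-E_\alpha\lesssim \alpha^{-4/29}$, I read off $\widetilde{Z}_{R,\alpha}^2\lesssim \alpha^{-4/29}/\eta\to 0$, so $Z_{R,\alpha}\to 1$. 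Then discarding the non-negative first summand and using $Z_{R,\alpha}^2\geq 1/2$ for large $\alpha$ gives $\braket{\Psi''_\alpha|\mathbb{H}_\Lambda|\Psi''_\alpha}-\widetilde{E}_\alpha\lesssim \alpha^{-7/2}$.

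The main obstacle is the strict-positivity gap $\eta$: a priori the translation invariance of $\mathcal{F}^\mathrm{Pek}$ makes the Pekar minimizers non-unique and could spread the measure $\mu$ of Theorem \ref{Theorem-Coherent States} across $\mathbb{R}^3$, but this is neutralized because $K_R$ is itself translation invariant, so $K_R(|\varphi^\mathrm{Pek}_z|^2)=C(R)$ regardless of which translate the state concentrates on; the decay of $\varphi^\mathrm{Pek}$ then closes the argument.
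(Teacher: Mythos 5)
Your proof is correct and follows essentially the same route as the paper: complete $F_R$ to the quadratic partition $\{F_R,G_R\}$, apply Lemma~\ref{Lemma-IMS Formula} to get the IMS split with error $O(\alpha^{-7/2})$, then rule out the $\widehat{G}_R$-branch by invoking Theorem~\ref{Theorem-Coherent States} with the test function $K_R$ and the fact that $K_R(|\varphi^\mathrm{Pek}_z|^2)=K_R(|\varphi^\mathrm{Pek}|^2)\to 0$ as $R\to\infty$. The one place you organize things a bit more cleanly than the paper is the gap: you extract an $\alpha$-independent lower bound $\eta=(\delta/(4T))^2$ on $\langle\widetilde\Psi_\alpha|\mathbb{H}_\Lambda|\widetilde\Psi_\alpha\rangle-e^\mathrm{Pek}$ directly, whereas the paper argues by contradiction with the moving gap $d\alpha^{-4/29}$ and then sends $d\to\infty$ to conclude $Z_{R,\alpha}\to 1$. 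Your version also yields the quantitative rate $1-Z_{R,\alpha}^2\lesssim \alpha^{-4/29}$, which the paper does not state explicitly. One minor overstatement: you invoke ``rapid decay'' of $\varphi^\mathrm{Pek}$ to get $C(R)\to 0$, but dominated convergence and $\varphi^\mathrm{Pek}\in L^2$ already suffice, which is what the paper implicitly uses.
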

\begin{proof}
Since $\mathcal{P}:=\{F_R,G_R\}$ with $G_R:=\sqrt{1-F_R^2}=\chi^{\frac{\delta}{3}}\left(\frac{2\delta}{3}\leq K_R\left(\rho\right)\right)$ is a partition of unity, we obtain by Lemma \ref{Lemma-IMS Formula}
\begin{align}
\label{Equation-IMS for F_L}
\braket{\widehat{F}_R\Psi'_\alpha|\mathbb{H}_\Lambda|\widehat{F}_R\Psi'_\alpha}\! +\! \braket{\widehat{G}_R\Psi'_\alpha|\mathbb{H}_\Lambda |\widehat{G}_R\Psi'_\alpha}\! \leq\!  \braket{\Psi'_\alpha|\mathbb{H}_\Lambda |\Psi'_\alpha} \! +\! c\, \alpha^{-\frac{7}{2}} V_\Omega\left(\mathcal{P}\right)\!   \big\langle\!  \Psi'_\alpha\big|\sqrt{c_+\! +\! \alpha^{-2} }\,\big|\Psi'_\alpha  \big\rangle
\end{align}
with $\Omega:=\{\rho:\int \mathrm{d}\rho\leq c_+\}$, where we have used $\chi\left(\mathcal{N}\leq c_+\right)\Psi'_\alpha=\Psi'_\alpha$ and $\Lambda\leq \alpha$. Since $\frac{\mathrm{d}}{\mathrm{d}x}\chi^{\frac{\delta}{3}}\left(\frac{2\delta}{3}\leq x \right)$ and $\frac{\mathrm{d}}{\mathrm{d}x}\chi^{\frac{\delta}{3}}\left(x \leq \frac{2\delta}{3}\right)$ are bounded by some $\delta$-dependent constant $D$, we have for all $\rho\in \Omega$ and $\rho':=\rho+\alpha^{-2}\delta_z$ with $z\in \mathbb{R}^3$, and $R>2\epsilon$, the estimate
\begin{align*}
\left|F_R\left(\rho'\right)-F_R\left(\rho\right)\right|&\leq D\left|K_R(\rho')-K_R(\rho)\right|=2D\alpha^{-2}\int \chi^{\epsilon}\left(R-\epsilon\leq |y-z|\right)\mathrm{d}\rho(y)\\
&\leq 2D\alpha^{-2} c_+,
\end{align*}
and the same result holds for $G_R$. Therefore we have by Eq.~(\ref{Equation-IMS for F_L}) and Lemma \ref{Lemma-Admissible sequence}
\begin{align}
\label{Equation-IMSapplication}
\braket{\widehat{F}_R\Psi'_\alpha|\mathbb{H}_\Lambda |\widehat{F}_R\Psi'_\alpha}+\braket{\widehat{G}_R\Psi'_\alpha|\mathbb{H}_\Lambda |\widehat{G}_R\Psi'_\alpha}\leq \braket{\Psi'_\alpha|\mathbb{H}_\Lambda |\Psi'_\alpha}+C_1\alpha^{-\frac{7}{2}}\leq \widetilde{E}_\alpha+C_2 \alpha^{-\frac{7}{2}}
\end{align}
for suitable constants $C_1,C_2>0$. Since $\|\widehat{F}_R\Psi'_\alpha\|^2+\|\widehat{G}_R\Psi'_\alpha\|^2=1$, this means that we either have $\braket{\Psi''_\alpha|\mathbb{H}_\Lambda|\Psi''_\alpha}\leq \widetilde{E}_\alpha+C_2\alpha^{-\frac{7}{2}}$ or $\braket{\widetilde{\Psi}_\alpha|\mathbb{H}_\Lambda|\widetilde{\Psi}_\alpha}\leq \widetilde{E}_\alpha+C_2\alpha^{-\frac{7}{2}}$, where $\widetilde{\Psi}_\alpha:=\|\widehat{G}_R\Psi'_\alpha\|^{-1}\widehat{G}_R\Psi'_\alpha$. In the following we are going to rule out the second case for $R$ and $\alpha$ large enough, to be precise we are going to verify $\braket{\widetilde{\Psi}_\alpha|\mathbb{H}_\Lambda|\widetilde{\Psi}_\alpha}> \widetilde{E}_\alpha+d\alpha^{-\frac{4}{29}}$ for any $d>0$ and large enough $R$ and $\alpha$ by contradiction. In order to do this, let us assume $\braket{\widetilde{\Psi}_\alpha|\mathbb{H}_\Lambda |\widetilde{\Psi}_\alpha}\leq \widetilde{E}_\alpha+d\alpha^{-\frac{4}{29}}$. Since $\widetilde{E}_\alpha\leq E_\alpha+C\alpha^{-\frac{4}{29}}\leq e^\mathrm{Pek}+C\alpha^{-\frac{4}{29}}$ by assumption for a suitable constant $C$, $\widetilde{\Psi}_\alpha$ satisfies the assumptions of Theorem \ref{Theorem-Coherent States} with $\delta e:=(d+C)\alpha^{-\frac{4}{29}}$. Hence there exists a measure $\mu$ such that Eq.~(\ref{Equation-Coherent state formula for F}) holds. By the support properties of $G_R$ we obtain
\begin{align}
\label{Equation-Contradiction}
\frac{\delta}{3}\leq \big\langle \widetilde{\Psi}_\alpha\big|\widehat{K}_R\big|\widetilde{\Psi}_\alpha\big\rangle &=\int K_R\left(\left|\varphi_x^\mathrm{Pek}\right|^2\right)\, \mathrm{d}\mu + O_{\alpha\rightarrow \infty}\left(\alpha^{-\frac{2}{29}}\right)\\
& = K_R\left(\left|\varphi^\mathrm{Pek}\right|^2\right) + O_{\alpha\rightarrow \infty}\left(\alpha^{-\frac{2}{29}}\right).
\end{align}
Since $\lim_{R\rightarrow \infty}K_R\left(\left|\varphi^\mathrm{Pek}\right|^2\right)=0$, Eq.~(\ref{Equation-Contradiction}) is a contradiction for large $R$ and $\alpha$, and consequently we have $\braket{\widetilde{\Psi}_\alpha|\mathbb{H}_\Lambda|\widetilde{\Psi}_\alpha}> \widetilde{E}_\alpha+d\alpha^{-\frac{4}{29}}$ for such $R$ and $\alpha$. In combination with Eq.~(\ref{Equation-IMSapplication}) this furthermore yields
\begin{align*}
Z_{R,\alpha}^2 E_\alpha\! +\! (1\! -\! Z_{R,\alpha}^2)\! \left(E_\alpha\! +\! d\alpha^{-\frac{4}{29}}\right) \leq Z_{R,\alpha}^2 E_\alpha\! +\! (1\! -\! Z_{R,\alpha}^2)\! \left(\widetilde{E}_\alpha\! +\! d\alpha^{-\frac{4}{29}}\right)\leq \widetilde{E}_\alpha\! +\! C_2 \alpha^{-\frac{7}{2}},
\end{align*}
and therefore $1-Z_{R,\alpha}^2\leq \frac{\alpha^{\frac{4}{29}}}{d}\left(\widetilde{E}_\alpha-E_\alpha+C_2\alpha^{-\frac{7}{2}}\right)\leq \frac{1}{d}+\frac{C_2}{d}\alpha^{\frac{4}{29}-\frac{7}{2}}$. Since this holds for any $d>0$ and $\alpha$ large enough, we conclude that $Z_{R,\alpha}\underset{\alpha \rightarrow \infty}{\longrightarrow}1$.
\end{proof}

The previous localizations in the Lemmas \ref{Lemma-Admissible sequence} and \ref{Lemma-Control of variance} will allow us to control the energy cost of the main localization in the proof of Lemma \ref{Lemma-Existence of a condensate}. Before we come to Lemma \ref{Lemma-Existence of a condensate} we need to define the regularized median $m_q$ in Definition \ref{Definition-Regularized Median} and verify Lemma \ref{Lemma-Variation of the regularized median}, which provides an upper bound on the variation $V_\Omega\left(\mathcal{P}\right)$ for partitions $\mathcal{P}=\{F_j:j\in J\}$ of the form $F_j(\rho)=f_j\left(m_q(\rho)\right)$. The following auxiliary Lemmas \ref{Lemma-Regular Omega} , \ref{Lemma-Symmetric difference} and \ref{Lemma-Variation of the quantiles} will be useful in proving Lemma \ref{Lemma-Variation of the regularized median}.

\begin{lem}
\label{Lemma-Regular Omega}
Let us define the set $\Omega_\mathrm{reg}$ as the set of all $\rho\in \mathcal{M}\left(\mathbb{R}^3\right)$ satisfying 
\begin{align*}
\int_{x_i=t}\mathrm{d}\rho(x)\leq \alpha^{-2}
\end{align*}
for all $t\in \mathbb{R}$ and $i\in \{1,2,3\}$. Then $\widehat{\mathds{1}_{\Omega_\mathrm{reg}}}\Psi=\Psi$ for all $\Psi\in \mathcal{F}\left(L^2\! \left(\mathbb{R}^3\right)\right)$.
\end{lem}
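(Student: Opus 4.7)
The plan is to unpack the definition of $\widehat{\mathds{1}_{\Omega_\mathrm{reg}}}$ using Definition \ref{Definition-Hat operators} and show that on each $n$-particle sector the indicator function equals $1$ almost everywhere, whence the corresponding multiplication operator is the identity in $L^2$.

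First, I would write $\Psi = \bigoplus_{n=0}^\infty \Psi_n$ with $\Psi_n \in L^2_\mathrm{sym}(\mathbb{R}^{3\times n})$. The $n=0$ component is trivial because $\rho = 0$ manifestly belongs to $\Omega_\mathrm{reg}$ (all slice integrals vanish). For $n \geq 1$ and a configuration $(x^1,\dots,x^n) \in \mathbb{R}^{3n}$, the associated measure $\rho = \alpha^{-2}\sum_{k=1}^n \delta_{x^k}$ satisfies
\begin{align*}
\int_{x_i = t} \mathrm{d}\rho(x) = \alpha^{-2}\, \bigl|\{k\in\{1,\dots,n\} : x^k_i = t\}\bigr|,
\end{align*}
so the defining condition $\int_{x_i=t}\mathrm{d}\rho \leq \alpha^{-2}$ is equivalent to $|\{k : x^k_i = t\}| \leq 1$ for every $t\in\mathbb{R}$ and $i\in\{1,2,3\}$. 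Equivalently, for each coordinate direction $i$ the numbers $x^1_i,\dots,x^n_i$ are pairwise distinct.

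Next, the set of configurations $(x^1,\dots,x^n)\in \mathbb{R}^{3n}$ that violate this condition is contained in the finite union of hyperplanes $\bigcup_{i=1}^3 \bigcup_{k\neq l}\{x^k_i = x^l_i\}$, which has $3n$-dimensional Lebesgue measure zero. Hence the multiplication function $(x^1,\dots,x^n) \mapsto \mathds{1}_{\Omega_\mathrm{reg}}\bigl(\alpha^{-2}\sum_{k=1}^n \delta_{x^k}\bigr)$ equals $1$ almost everywhere on $\mathbb{R}^{3n}$, so $\mathds{1}_{\Omega_\mathrm{reg}}^n \Psi_n = \Psi_n$ in $L^2_\mathrm{sym}(\mathbb{R}^{3n})$. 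Summing over $n$ gives $\widehat{\mathds{1}_{\Omega_\mathrm{reg}}}\Psi = \Psi$.

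There is no real obstacle here; the statement is essentially a measure-theoretic triviality hidden inside the compact notation of Definition \ref{Definition-Hat operators}. The only mild subtlety worth checking is that the $n=0$ sector is covered (handled above by noting $0 \in \Omega_\mathrm{reg}$) and that the exceptional set in each higher sector is indeed a finite union of coordinate hyperplanes, so Lebesgue-null.
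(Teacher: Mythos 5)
Your argument is correct and coincides with the paper's own proof: both identify the condition $\int_{x_i=t}\mathrm{d}\rho\leq \alpha^{-2}$ with pairwise distinctness of the $i$-th coordinates $x^1_i,\dots,x^n_i$, observe that the complementary set is a finite union of hyperplanes (hence Lebesgue-null), and conclude that the multiplication operator on each $n$-particle sector is the identity. Your explicit treatment of the $n=0$ sector is a small extra detail the paper leaves implicit.
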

\begin{proof}
For given $x=(x^1,\dots ,x^n)\in \mathbb{R}^{3\times n}$, let us define the measure $\rho_x:=\alpha^{-2}\sum_{k=1}^n \delta_{x^k}$. Note that $\rho_x\notin \Omega_\mathrm{reg}$ if and only if there exists an $i\in \{1,2,3\}$ such that $x_i^k=x_i^{k'}$ for indices $k\neq k'$. Clearly the set of all such $x\in \mathbb{R}^{3\times n}$ has Lebesgue measure zero. Hence the multiplication operator by the function $(x^1,\dots ,x^n)\mapsto \mathds{1}_{\Omega_\mathrm{reg}}\left(\rho_x\right)$ is equal to the identity on $L^2_{\mathrm{sym}}\! \left(\mathbb{R}^{3\times n}\right)$, which concludes the proof according to Definition \ref{Definition-Hat operators}.
\end{proof}

\begin{lem}
\label{Lemma-Symmetric difference}
Let $\nu,\nu'$ be finite measures on $\mathbb{R}$ such that $\nu\left(\{t\}\right)\leq \epsilon$ and $\nu'\left(\{t\}\right)\leq \epsilon$ for all $t\in \mathbb{R}$, and let $x^\kappa(\nu)$ be the $\kappa$-quantile of the measure $\nu$ with $0\leq \kappa\leq 1$, to be precise $x^{\kappa}(\nu)$ is the supremum over all numbers $t\in \mathbb{R}$ satisfying $\int_{-\infty}^t\mathrm{d}\nu\leq \kappa \int \mathrm{d}\nu$, where we use the convention that the boundaries are included in the domain of integration $\int_{a}^b f\mathrm{d}\nu:=\int_{[a,b]}f\mathrm{d}\nu$. Then
\begin{align*}
\Big|\int_{-\infty}^{x^\kappa\left(\nu'\right)} \mathrm{d}\nu-\int_{-\infty}^{x^\kappa\left(\nu\right)} \mathrm{d}\nu\Big|\leq 2\|\nu'-\nu\|_{\mathrm{TV}}+\epsilon,
\end{align*}
where $\|\nu'-\nu\|_{\mathrm{TV}}:=\underset{\|f\|_\infty=1}{\sup}\left|\int f\, \mathrm{d}\nu'-\int f\, \mathrm{d}\nu\right|$.
\end{lem}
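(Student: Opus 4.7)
The plan is to exploit the two-sided characterization of the $\kappa$-quantile that follows from right continuity of the cumulative distribution $t \mapsto \nu((-\infty,t])$. From the definition $x^\kappa(\nu) = \sup\{t : \nu((-\infty,t]) \leq \kappa \nu(\mathbb{R})\}$, I obtain
\begin{align*}
\nu((-\infty, x^\kappa(\nu))) \leq \kappa \nu(\mathbb{R}) \leq \nu((-\infty, x^\kappa(\nu)]),
\end{align*}
the left inequality by taking $t \nearrow x^\kappa(\nu)$ through points $t < x^\kappa(\nu)$ and using continuity from below, and the right by right continuity along $t \searrow x^\kappa(\nu)$ through $t > x^\kappa(\nu)$ (each of which satisfies $\nu((-\infty,t]) > \kappa \nu(\mathbb{R})$ by the defining supremum). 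Since the gap between the two sides equals $\nu(\{x^\kappa(\nu)\}) \leq \epsilon$, this gives $\bigl|\int_{-\infty}^{x^\kappa(\nu)}\mathrm{d}\nu - \kappa \nu(\mathbb{R})\bigr| \leq \epsilon$, and an analogous estimate holds for $\nu'$.

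Next I would assume without loss of generality that $a := x^\kappa(\nu) \leq b := x^\kappa(\nu')$, the opposite case following by exchanging the roles of $\nu$ and $\nu'$. Positivity of $\nu$ reduces the task to bounding the non-negative quantity $\int_{-\infty}^{b}\mathrm{d}\nu - \int_{-\infty}^{a}\mathrm{d}\nu$ from above. I would do this by chaining four elementary steps: testing the total variation norm against $f=\mathds{1}_{(-\infty,b]}$ yields $\int_{-\infty}^{b}\mathrm{d}\nu \leq \int_{-\infty}^{b}\mathrm{d}\nu' + \|\nu'-\nu\|_{\mathrm{TV}}$; the quantile relation applied to $\nu'$ gives $\int_{-\infty}^{b}\mathrm{d}\nu' \leq \kappa \nu'(\mathbb{R}) + \epsilon$; testing against the constant $f=1$ yields $\kappa \nu'(\mathbb{R}) \leq \kappa \nu(\mathbb{R}) + \|\nu'-\nu\|_{\mathrm{TV}}$ since $\kappa \leq 1$; and the quantile relation for $\nu$ gives $\int_{-\infty}^{a}\mathrm{d}\nu \geq \kappa \nu(\mathbb{R})$. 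Summing the four contributions produces exactly $2\|\nu'-\nu\|_{\mathrm{TV}} + \epsilon$, as required.

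There is no genuine obstacle in this lemma; it is a quantitative continuous-dependence statement for the quantile in TV norm. The only subtlety to respect is that the supremum convention for $x^\kappa$ places $\kappa \nu(\mathbb{R})$ sandwiched between the left and right cumulative values at $x^\kappa(\nu)$, with the discrepancy controlled precisely by one atom size; without the hypothesis $\nu(\{t\}),\nu'(\{t\}) \leq \epsilon$ the conclusion would fail, since an arbitrarily small TV perturbation could shift the quantile across a large atom. A one-line verification — by testing with $f=\mathds{1}_A$ — shows that the TV norm defined in the statement dominates $|\nu(A)-\nu'(A)|$ for every Borel set $A$, which is all that is used in the above chain.
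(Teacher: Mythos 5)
Your argument is essentially the same as the paper's: both proofs chain together exactly the same four ingredients — the two-sided quantile estimate $\kappa\int\mathrm{d}\nu\leq \int_{-\infty}^{x^\kappa(\nu)}\mathrm{d}\nu\leq \kappa\int\mathrm{d}\nu+\epsilon$ (and its $\nu'$-analogue), the total-variation bound tested against the indicator $\mathds{1}_{(-\infty,b]}$, and the total-variation bound tested against a constant — merely in a slightly different order. Your preliminary derivation of the sandwich $\nu\big((-\infty,x^\kappa(\nu))\big)\leq \kappa\nu(\mathbb{R})\leq \nu\big((-\infty,x^\kappa(\nu)]\big)$ from right-continuity is a bit more explicit than the paper, which records these facts as remarks elsewhere. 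One small caveat, shared equally by the paper's own wording: reducing to the case $x^\kappa(\nu)\leq x^\kappa(\nu')$ is not literally achieved by exchanging $\nu$ and $\nu'$, since the integrand in the claimed inequality is $\mathrm{d}\nu$ in both terms and does not get swapped. The correct reduction for the opposite case is to run the mirrored chain, invoking the \emph{upper} quantile inequality for $\nu$ and the \emph{lower} one for $\nu'$ (rather than the converse), all of which you have at your disposal; this costs no new ideas but deserves one more sentence than "exchange the roles."
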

\begin{proof}
We estimate
\begin{align*}
&\int_{-\infty}^{x^\kappa\left(\nu'\right)} \mathrm{d}\nu-\int_{-\infty}^{x^\kappa\left(\nu\right)} \mathrm{d}\nu\leq \int_{-\infty}^{x^\kappa\left(\nu'\right)}\mathrm{d}\nu-\kappa \int \mathrm{d}\nu \leq \int_{-\infty}^{x^\kappa\left(\nu'\right)}\mathrm{d}\nu'+\|\nu'-\nu\|_{\mathrm{TV}}-\kappa \int \mathrm{d}\nu\\
&\ \leq \kappa \int \mathrm{d}\nu'+\epsilon+\|\nu'-\nu\|_{\mathrm{TV}}-\kappa \int \mathrm{d}\nu\leq 2\|\nu'-\nu\|_{\mathrm{TV}}+\epsilon,
\end{align*}
where we have used $\int_{-\infty}^{x^\kappa\left(\nu\right)} \mathrm{d}\nu\geq \kappa \int \mathrm{d}\nu$ and $\int_{-\infty}^{x^\kappa\left(\nu'\right)}\mathrm{d}\nu'\leq \kappa \int \mathrm{d}\nu'+\epsilon$. The bound from below can be obtained by interchanging the role of $\nu$ and $\nu'$.
\end{proof}

\begin{defi}
\label{Definition-Regularized Median}
Let $x^\kappa\! \left(\nu\right)$ be the $\kappa$-quantile of a measure $\nu$ on $\mathbb{R}$ defined in Lemma \ref{Lemma-Symmetric difference} and let us denote $K_q(\nu):=[x^{\frac{1}{2}-q}(\nu),x^{\frac{1}{2}+q}(\nu)]$ for $0< q< \frac{1}{2}$. Then we define 
\begin{align}
\label{Equation-Definition_Marginal_Median}
m_q(\nu):=\frac{1}{\int_{K_q(\nu)} \mathrm{d}\nu}\int_{K_q(\nu)}t\, \mathrm{d}\nu(t)\in \mathbb{R}
\end{align}
for $\nu\neq 0$ and $m_q(0):=0$. Furthermore we define for a measure $\rho$ on $\mathbb{R}^3$ the regularized median as $m_q(\rho):=\Big(m_q(\rho_1),m_q(\rho_2),m_q(\rho_3)\Big)\in \mathbb{R}^3$, where $\rho_1,\rho_2$ and $\rho_3$ are the marginal measures of $\rho$ defined by $\rho_i\left(A\right):=\rho\left(\left[x_i\in A\right]\right)$.
\end{defi}
Note that $x^\kappa(\nu)$ is the largest value, such that both $\int_{-\infty}^{x^\kappa(\nu)}\mathrm{d}\nu\geq \kappa \int \mathrm{d}\nu$ and $\int_{x^\kappa(\nu)}^\infty \mathrm{d}\nu\geq (1-\kappa)\int \mathrm{d}\nu$ hold. As an immediate consequence, we obtain that the expression in Eq.~(\ref{Equation-Definition_Marginal_Median}) is well-defined for $\nu\neq 0$ and $0<q<\frac{1}{2}$, since 
\begin{align}
\label{Equation-Mass_Lower_bound}
\int_{K_q(\nu)} \mathrm{d}\nu=\int_{-\infty}^{x^{\frac{1}{2}+q}(\nu)}\mathrm{d}\nu+\int_{x^{\frac{1}{2}-q}(\nu)}^{\infty}\mathrm{d}\nu-\int \mathrm{d}\nu\geq 2q\int \mathrm{d}\nu>0.
\end{align}

\begin{lem}
\label{Lemma-Variation of the quantiles}
Given constants $R,c>0$ and $0<\delta<\frac{c^2}{2}$, let $\rho$ satisfy $c\leq\int \mathrm{d}\rho$ and $\underset{|x-y|\geq R}{\int \int} \mathrm{d}\rho(x)\mathrm{d}\rho(y)\leq \delta$ and let $q$ be a constant satisfying $0<q\leq \frac{1}{2}-\frac{\delta}{c^2}$. Then we have for all $i\in \{1,2,3\}$ that $x^{\frac{1}{2}}(\rho_i)-R\leq x^{\frac{1}{2}-q}(\rho_i)\leq x^{\frac{1}{2}+q}(\rho_i)\leq x^{\frac{1}{2}}(\rho_i)+R$.
\end{lem}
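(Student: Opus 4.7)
Fix $i\in\{1,2,3\}$ and write $m := x^{1/2}(\rho_i)$ and $F(t) := \rho_i((-\infty,t])$. The middle inequality $x^{1/2-q}(\rho_i) \leq x^{1/2+q}(\rho_i)$ is immediate from the monotonicity of $\kappa\mapsto x^\kappa(\rho_i)$, so the task reduces to the two outer bounds. From the supremum definition of $x^{1/2}$ one readily extracts the basic slab estimates $\rho(\{x_i\leq m\}) \geq \tfrac12 \int \mathrm{d}\rho$ and $\rho(\{x_i\geq m\}) \geq \tfrac12 \int \mathrm{d}\rho$. The main workhorse is the following elementary two-set bound: whenever $S,S'\subset\mathbb{R}^3$ are disjoint and every $(s,s')\in S\times S'$ satisfies $|s-s'|\geq R$, the variance hypothesis gives
\begin{equation*}
2\rho(S)\rho(S') \leq \iint_{|x-y|\geq R} \mathrm{d}\rho(x)\,\mathrm{d}\rho(y) \leq \delta,
\end{equation*}
because the disjoint copies $S\times S'$ and $S'\times S$ both lie in $\{|x-y|\geq R\}$.

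To establish the lower bound $x^{1/2-q}(\rho_i) \geq m - R$ it suffices to verify $F(m-R) \leq (1/2-q) \int \mathrm{d}\rho_i$, which places $m-R$ inside the set defining the supremum. I would apply the two-set bound to $A := \{x_i \leq m-R\}$ and $B := \{y_i \geq m\}$, noting that $|x-y|\geq y_i - x_i \geq R$ on $A\times B$, and insert $\rho(B) \geq \tfrac12\int \mathrm{d}\rho$ to get $\rho(A) \leq \delta/\int \mathrm{d}\rho$. Chaining in the standing hypotheses $\int \mathrm{d}\rho \geq c$ and $\tfrac12-q \geq \delta/c^2$ then yields $F(m-R) = \rho(A) \leq \delta/c \leq (1/2-q) c \leq (1/2-q)\int \mathrm{d}\rho_i$, completing this half.

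For the upper bound $x^{1/2+q}(\rho_i) \leq m + R$ I would argue by contradiction: if the supremum strictly exceeded $m+R$, some $t^* > m+R$ would satisfy $F(t^*) \leq (1/2+q)\int \mathrm{d}\rho_i$, hence $\rho(V) \geq (1/2-q)\int \mathrm{d}\rho$ for $V:=\{y_i > t^*\}$. Applying the two-set bound to $U:=\{x_i\leq m\}$ and $V$ (disjoint, with $v_i-u_i > t^*-m > R$ on $U\times V$), and substituting the lower bounds for $\rho(U)$ and $\rho(V)$, one obtains the chain
\begin{equation*}
\delta \;\geq\; 2\rho(U)\rho(V) \;\geq\; \Bigl(\tfrac12-q\Bigr)\Bigl(\int \mathrm{d}\rho\Bigr)^2 \;\geq\; \delta,
\end{equation*}
forcing every inequality to saturate. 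This already delivers a strict contradiction as soon as either $\int \mathrm{d}\rho > c$ or $q < \tfrac12 - \delta/c^2$. The delicate step — and what I expect to be the principal obstacle — is the degenerate joint boundary $\int \mathrm{d}\rho = c$, $q = \tfrac12-\delta/c^2$: saturation then rigidly imposes $\rho(U)=c/2$ and $\rho(V)=\delta/c$, and one must combine these constraints with the defining property of $m$ (namely that $\rho_i$ places positive mass in every right-neighborhood of $m$, as otherwise the supremum defining $x^{1/2}(\rho_i)$ would lie strictly above $m$) together with the fact that no non-$U\times V$ pair may contribute to the saturated variance, to exclude the possibility $t^* > m + R$ strictly. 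This closing structural analysis at the boundary is the only spot where I expect nontrivial care to be required.
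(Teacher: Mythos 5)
Your two outer-bound arguments follow the paper's strategy in substance: bound one tail of $\rho_i$ via the Cauchy--Schwarz/Fubini estimate from disjoint product slabs inside $\{|x-y|\geq R\}$, then read off the quantile bound from the definition. Your lower-bound argument is exactly the paper's (after their WLOG translation $m=0$). For the upper bound the paper again argues directly — it shows $\rho_i([R,\infty))\leq (1-\kappa)\int d\rho_i$ and appeals to the quantile characterization — whereas you package it as a contradiction/saturation dichotomy; both routes produce the same estimates.

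The genuine problem is the ``defining property of $m$'' you invoke at the saturated boundary. It is \emph{not} true that $\rho_i$ charges every right-neighborhood of $m=x^{1/2}(\rho_i)$: for $\rho_i=\delta_0+\delta_1$ one has $F(t)=1$ on $[0,1)$ and $F(t)=2$ on $[1,\infty)$, so $x^{1/2}(\rho_i)=1$ while $\rho_i((1,\infty))=0$. Your justification (``otherwise the supremum would lie strictly above $m$'') tacitly assumes $m\in\{t:F(t)\leq\tfrac12\int d\rho_i\}$, i.e.\ $F(m)\leq\tfrac12\int d\rho_i$, which fails in the example since $F(1)=2$. In the saturated scenario the needed inclusion \emph{does} hold — there $\rho(U)=c/2$ and $\int d\rho=c$ together give $F(m)=c/2=\tfrac12\int d\rho_i$ — and from that point the rest of your structural argument (a positive-mass rectangle inside $\{|x-y|\geq R\}$ not contained in $U\times V\cup V\times U$) does deliver the contradiction. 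So the step is repairable, but the right-accumulation at $m$ has to be \emph{derived} from the saturation constraints you already listed, not asserted as an unconditional feature of the quantile; as written, that claim is false.
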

\begin{proof}
Since $x^\kappa$ is translation covariant, i.e. $x^\kappa\left(\nu(\cdot-t)\right)=x^\kappa(\nu)+t$, we can assume w.l.o.g. that $x^{\frac{1}{2}}(\rho_i)=0$ for $i\in \{1,2,3\}$. Then
\begin{align*}
\delta\geq\!  \! \underset{|x-y|\geq R}{\int \int}\! \!  \mathrm{d}\rho(x)\mathrm{d}\rho(y)\geq 2\int_{x_i\geq 0}\! \! \mathrm{d}\rho(x)\int_{y_i\leq -R}\! \! \mathrm{d}\rho(y)\geq \int\! \mathrm{d}\rho\int_{y_i\leq -R}\! \! \mathrm{d}\rho(y)\geq c\int_{y_i\leq -R}\! \! \mathrm{d}\rho(y),
\end{align*}
where we have used that $x^{\frac{1}{2}}(\rho_i)=0$ and $\int\mathrm{d}\rho\geq c$ in the last two inequalities. Hence
\begin{align*}
\int_{y_i\leq -R}\mathrm{d}\rho(y)\leq \frac{\delta}{c}\leq \frac{\delta}{c^2}\int \mathrm{d}\rho\leq \kappa \int \mathrm{d}\rho
\end{align*}
for all $\kappa\geq  \frac{\delta}{c^2}$ and consequently we have $-R\leq x^\kappa(\rho_i)$ for all such $\kappa$ by the definition of $x^\kappa(\rho_i)$. Similarly we obtain $x^\kappa(\rho_i)\leq R$ for all $\kappa$ satisfying $\kappa\leq 1-\frac{\delta}{c^2}$. Therefore $|x^{\frac{1}{2}\pm q}(\rho_i)|\leq R$ for $q\leq \frac{1}{2}-\frac{\delta}{c^2}$.
\end{proof}

\begin{lem}
\label{Lemma-Variation of the regularized median}
Given constants $R,c>0$ and $0<\delta<\frac{c^2}{2}$, let $\Omega$ be the set of $\rho\in \Omega_\mathrm{reg}$ satisfying $c\leq\int \mathrm{d}\rho$ and $\underset{|x-y|\geq R}{\int \int} \mathrm{d}\rho(x)\mathrm{d}\rho(y)\leq \delta$. Then 
\begin{align*}
\left|m_q\left(\rho+\alpha^{-2}\delta_x\right)-m_q\left(\rho\right)\right|\lesssim \frac{R}{c\alpha^2 q}
\end{align*}
for all $\rho\in \Omega$, $x\in \mathbb{R}^3$ and $0<q<\frac{1}{2}-\frac{\delta}{c^2}$, where $m_q$ is defined in Definition \ref{Definition-Regularized Median}.
\end{lem}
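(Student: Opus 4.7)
The plan is to reduce the estimate to a one-dimensional statement about each marginal separately. By Definition \ref{Definition-Regularized Median} it suffices to show $|m_q(\nu') - m_q(\nu)| \lesssim R/(c\alpha^2 q)$ with $\nu := \rho_i$ and $\nu' := \nu + \alpha^{-2}\delta_{x_i}$ for each $i \in \{1,2,3\}$. The marginal $\nu = \rho_i$ inherits $\int d\nu \geq c$ and $\iint_{|s-t| \geq R} d\nu(s)\,d\nu(t) \leq \delta$ (since $|s-t| \leq |x-y|$ when $(s,t) = (x_i, y_i)$), together with the atom bound $\nu(\{t\}) \leq \alpha^{-2}$ coming from $\rho \in \Omega_{\mathrm{reg}}$; in particular $\nu'(\{t\}) \leq 2\alpha^{-2}$. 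After a translation I may assume $x^{1/2}(\nu) = 0$, so that Lemma \ref{Lemma-Variation of the quantiles} gives $a := x^{1/2 - q}(\nu), b := x^{1/2 + q}(\nu) \in [-R, R]$.

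To locate the corresponding quantiles $a' := x^{1/2 - q}(\nu')$ and $b' := x^{1/2 + q}(\nu')$, write $F, F'$ for the cumulative distribution functions and note $F \leq F' \leq F + \alpha^{-2}$ together with $\int d\nu' = \int d\nu + \alpha^{-2}$. A direct inspection of the definition $x^\kappa(\mu) = \sup\{t : F_\mu(t) \leq \kappa \int d\mu\}$ sandwiches each of $a', b'$ between two $\nu$-quantiles at levels $(1/2 \pm q) \pm O(\alpha^{-2}/c)$. Since the hypothesis $q < 1/2 - \delta/c^2$ is strict, for $\alpha$ sufficiently large all these levels still lie in $[\delta/c^2, 1 - \delta/c^2]$, and a second application of Lemma \ref{Lemma-Variation of the quantiles} then places $a', b' \in [-R, R]$ as well. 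With this in hand, Lemma \ref{Lemma-Symmetric difference} applied with $\epsilon = 2\alpha^{-2}$ and $\|\nu' - \nu\|_{\mathrm{TV}} = \alpha^{-2}$ yields $|\nu((-\infty, a']) - \nu((-\infty, a])| \leq 4\alpha^{-2}$ and the analogous bound for $b'$, whence (using the atom estimate again) $\nu(I \triangle I'), \nu'(I \triangle I') \lesssim \alpha^{-2}$ for $I := [a,b]$ and $I' := [a', b']$.

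Writing $N := \int_I t\, d\nu$, $D := \nu(I)$ and $N', D'$ analogously for $\nu'$ on $I'$, the mass bound \eqref{Equation-Mass_Lower_bound} gives $D, D' \geq 2qc$, and expanding $\nu' - \nu = \alpha^{-2}\delta_{x_i}$ produces
\begin{equation*}
D' - D = \nu(I' \setminus I) - \nu(I \setminus I') + \alpha^{-2}\mathds{1}_{x_i \in I'},\qquad N' - N = \int_{I'\setminus I}\! t\, d\nu - \int_{I\setminus I'}\! t\, d\nu + \alpha^{-2} x_i \mathds{1}_{x_i \in I'}.
\end{equation*}
Since $I, I' \subset [-R, R]$, the integrand obeys $|t| \leq R$ on each symmetric-difference piece and $|x_i| \leq R$ whenever $x_i \in I'$, so $|D' - D| \lesssim \alpha^{-2}$ and $|N' - N| \lesssim R\alpha^{-2}$. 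Combining with $|m_q(\nu)| \leq R$ and the identity $m_q(\nu') - m_q(\nu) = (N' - N)/D' - m_q(\nu)(D' - D)/D'$ gives the claim.

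The main obstacle is that when $\nu$ has flat regions, the quantiles $a'$ and $b'$ can be significantly displaced from $a$ and $b$ in position, so one cannot directly compare the defining integrals of $m_q(\nu)$ and $m_q(\nu')$ pointwise. This is handled by the two-step confinement above: Lemma \ref{Lemma-Variation of the quantiles} alone pins $a, b$ into $[-R, R]$, and then the monotonicity $F \leq F'$ together with the strict inequality $q < 1/2 - \delta/c^2$ is just enough to apply Lemma \ref{Lemma-Variation of the quantiles} a second time and pin $a', b'$ into the same interval, after which Lemma \ref{Lemma-Symmetric difference} controls the $\nu$- and $\nu'$-mass transported across the boundary.
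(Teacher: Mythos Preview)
Your proof is correct and follows essentially the same approach as the paper's: both reduce to one-dimensional marginals, use translation covariance and Lemma~\ref{Lemma-Variation of the quantiles} to confine $a,b,a',b'$ in $[-R,R]$ (with the sandwich of $\nu'$-quantiles between shifted $\nu$-quantiles for $\alpha$ large), invoke Lemma~\ref{Lemma-Symmetric difference} with $\epsilon=2\alpha^{-2}$ to control the mass of $I\triangle I'$, and then combine $D,D'\geq 2qc$ with $|t|\leq R$ on $I\cup I'$ to bound the numerator and denominator differences. Your algebraic identity $m_q(\nu')-m_q(\nu)=(N'-N)/D'-m_q(\nu)(D'-D)/D'$ is just a rearrangement of the paper's decomposition \eqref{Equation-DifferenceOfMedians}.
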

\begin{proof}
Since $m_q$ acts translation covariant on any $\rho\neq 0$, i.e. $m_q\left(\rho(\cdot-y)\right)=m_q(\rho)+y$, we can assume w.l.o.g. that $x^{\frac{1}{2}}(\rho_i)=0$ for $i\in \{1,2,3\}$. By Lemma \ref{Lemma-Variation of the quantiles} we therefore obtain $|x^{\frac{1}{2}\pm q}(\rho_i)|\leq R$ for $\rho\in \Omega$ and $0<q\leq \frac{1}{2}-\frac{\delta}{c^2}$. Note that the marginal measures $\rho_i$ and $\rho'_i$, where $\rho':=\rho+\alpha^{-2}\delta_x$, satisfy $\rho_i\left(\{y\}\right)\leq \alpha^{-2}$ and $\rho'_i\left(\{y\}\right)\leq 2\alpha^{-2}$ by our assumption $\rho\in \Omega_\mathrm{reg}$. Therefore $x^{\kappa_*}(\rho_i)\leq x^\kappa(\rho'_i)\leq x^{\kappa^*}(\rho_i)$ for $\rho\in \Omega$ and $\kappa>0$, with $\kappa_*:=\kappa-2\frac{1}{c}\alpha^{-2}$ and $\kappa^*:=\kappa+3\frac{1}{c}\alpha^{-2}$. In particular, this implies $|x^{\frac{1}{2}\pm q}(\rho'_i)|\leq R$ for $0<q<1/2-\delta/c^2$ and $\alpha$ large  enough. In the following it will be convenient to write the difference $m_q\left(\rho'_i\right)\!-\!m_q(\rho_i)$ as
\begin{align}
\label{Equation-DifferenceOfMedians}
\left(\frac{1}{\int_{K_q\left(\rho'_i\right)}\mathrm{d}\rho'_i}\! -\! \frac{1}{\int_{K_q\left(\rho_i\right)}\mathrm{d}\rho_i}\right)\int_{K_q(\rho'_i)} t\, \mathrm{d}\rho'_i(t)+\frac{1}{\int_{K_q\left(\rho_i\right)}\mathrm{d}\rho_i}\left(\int_{K_q(\rho'_i)}t\, \mathrm{d}\rho'_i(t)\!- \! \!\int_{K_q(\rho_i)} t\, \mathrm{d}\rho_i(t)\! \right).
\end{align}
Making use of $\int_{K_q\left(\rho_i\right)}\mathrm{d}\rho_i\geq 2qc$, see Eq.~(\ref{Equation-Mass_Lower_bound}), and $K_q(\rho'_i)\subset [-R,R]$ for all $\rho\in \Omega$, we can estimate the individual terms in Eq.~(\ref{Equation-DifferenceOfMedians}) by
\begin{align*}
&\left|\left(\frac{1}{\int_{K_q\left(\rho'_i\right)}\mathrm{d}\rho'_i}\! -\! \frac{1}{\int_{K_q\left(\rho_i\right)}\mathrm{d}\rho_i}\right)\int_{K_q(\rho'_i)} t\, \mathrm{d}\rho'_i(t)\right|\leq R\frac{\left|\int_{K_q\left(\rho'_i\right)}\mathrm{d}\rho'_i-\int_{K_q\left(\rho_i\right)}\mathrm{d}\rho_i\right|}{2qc},\\
&\left|\frac{1}{\int_{K_q\left(\rho_i\right)}\mathrm{d}\rho_i}\left(\int_{K_q(\rho'_i)}t\, \mathrm{d}\rho'_i(t)\!- \! \!\int_{K_q(\rho_i)} t\, \mathrm{d}\rho_i(t)\! \right)\right|\leq \frac{\left|\int_{K_q(\rho'_i)}t\, \mathrm{d}\rho'_i(t)\!- \! \!\int_{K_q(\rho_i)} t\, \mathrm{d}\rho_i(t)\right|}{2qc}.
\end{align*}
Note that $K_q(\rho_i)$ is contained in $[-R,R]$ as well and consequently $t$ is bounded by $R$ on the subset $K_q(\rho_i)\cup K_q(\rho'_i)$. In order to verify the statement of the Lemma, it is therefore sufficient to prove that $\left|\int_{K_q(\rho'_i)}f(t)\, \mathrm{d}\rho'_i(t)\!- \! \!\int_{K_q(\rho_i)} f(t)\, \mathrm{d}\rho_i(t)\right|\lesssim \alpha^{-2}\|f\|_\infty $ for an arbitrary measurable and bounded $f:\mathbb{R}\rightarrow \mathbb{R}$. We estimate
\begin{align*}
&\left|\int_{K_q(\rho'_i)}f(t)\, \mathrm{d}\rho'_i(t)\!- \! \!\int_{K_q(\rho_i)} f(t)\, \mathrm{d}\rho_i(t)\right|\leq \left|\int_{K_q(\rho'_i)}f(t)\, \mathrm{d}\rho'_i(t)\!- \! \!\int_{K_q(\rho'_i)} f(t)\, \mathrm{d}\rho_i(t)\right|\\
&\ \  +\left|\int_{K_q(\rho'_i)}f(t)\, \mathrm{d}\rho_i(t)\!- \! \!\int_{K_q(\rho_i)} f(t)\, \mathrm{d}\rho_i(t)\right|\leq \|f\|_\infty \left(\|\rho'_i-\rho_i\|_{\mathrm{TV}}+\int_{K_q(\rho'_i)\Delta K_q(\rho_i)}\mathrm{d}\rho_i\right),
\end{align*}
where $A\Delta B:=\left(A\cup B\right)\setminus \left(A\cap B\right)$ is the symmetric difference. Note that $\|\rho'_i-\rho_i\|_{\mathrm{TV}}=\alpha^{-2}$. Furthermore we can estimate the expression $\int_{K_q(\rho'_i)\Delta K_q(\rho_i)} \mathrm{d}\rho_i$ by
\begin{align*}
\left|\int_{-\infty}^{x^{\frac{1}{2}-q}(\rho'_i)}\mathrm{d}\rho_i-\int_{-\infty}^{x^{\frac{1}{2}-q}(\rho_i)}\mathrm{d}\rho_i\right|+\left|\int_{-\infty}^{x^{\frac{1}{2}+q}(\rho'_i)}\mathrm{d}\rho_i-\int_{-\infty}^{x^{\frac{1}{2}+q}(\rho_i)}\mathrm{d}\rho_i\right|.
\end{align*}
Since the distributions $\rho_i$ and $\rho_i'$ satisfy the assumptions of Lemma \ref{Lemma-Symmetric difference} with $\epsilon:=2\alpha^{-2}$, we conclude that every term in the sum above is bounded by $2\|\rho'-\rho\|_{\mathrm{TV}}+\epsilon=4\alpha^{-2}$.
\end{proof}

Before we state the central Lemma \ref{Lemma-Existence of a condensate}, let us verify in the subsequent Lemma \ref{Lemma-Condensation} that low energy states with a localized median necessarily satisfy (complete) condensation with respect to a minimizer of the Pekar functional.

\begin{lem}
\label{Lemma-Condensation}
Given a constant $C>0$, there exists a constant $T>0$, such that
\begin{align*}
 \Big\langle \Psi\Big|W_{\varphi^\mathrm{Pek}}^{-1}\, \mathcal{N}W_{\varphi^\mathrm{Pek}}\Big| \Psi\Big\rangle\leq  T\left(\alpha^{-\frac{2}{29}}+q+\epsilon\right)
\end{align*}
for all states $\Psi$ satisfying $\braket{\Psi|\mathbb{H}_K|\Psi}\leq e^\mathrm{Pek}+\alpha^{-\frac{4}{29}}$ with $K\geq \alpha^{\frac{8}{29}}$ and $\widehat{\mathds{1}_{\Omega^*}}\Psi=\Psi$, where $\Omega^*$ is the set of all $\rho$ satisfying $\int \mathrm{d}\rho\leq C$ and $|m_q(\rho)|\leq \epsilon$ with $q,\epsilon>0$.
\end{lem}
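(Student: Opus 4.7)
The natural strategy is to use Theorem \ref{Theorem-Coherent States} to associate to $\Psi$ a probability measure $\mu$ on $\mathbb{R}^3$ and then translate the bound on the regularized median into a concentration property of $\mu$ around the origin. Since $\int \mathrm{d}\rho\leq C$ on $\Omega^*$ implies $\chi(\mathcal{N}\leq C)\Psi=\Psi$, and the hypotheses give $\braket{\Psi|\mathbb{H}_K|\Psi}\leq e^\mathrm{Pek}+\alpha^{-4/29}$ with $K\geq \alpha^{8/29}$, we are in the setting of Theorem \ref{Theorem-Coherent States} with $\delta e=\alpha^{-4/29}$, so $\sqrt{\delta e}\sim \alpha^{-2/29}$. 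Applying Eq.~(\ref{Equation-Translated F version}) with $g=\varphi^\mathrm{Pek}$ yields
\begin{align*}
\big\langle \Psi\big| W_{\varphi^\mathrm{Pek}}^{-1}\mathcal{N}W_{\varphi^\mathrm{Pek}}\big|\Psi\big\rangle \leq \int_{\mathbb{R}^3}\big\|\varphi^\mathrm{Pek}_x-\varphi^\mathrm{Pek}\big\|^2\,\mathrm{d}\mu(x) + T\alpha^{-\frac{2}{29}},
\end{align*}
so the Lemma reduces to showing $\int \|\varphi^\mathrm{Pek}_x-\varphi^\mathrm{Pek}\|^2\,\mathrm{d}\mu(x)\lesssim \epsilon+q+\alpha^{-\frac{2}{29}}$.

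The key structural identity I would exploit is $m_q(|\varphi^\mathrm{Pek}_x|^2)=x$, valid for every $x\in\mathbb{R}^3$ and every $0<q<\tfrac{1}{2}$: radiality of $\varphi^\mathrm{Pek}$ makes each marginal of $|\varphi^\mathrm{Pek}_x|^2$ symmetric around $x_i$, hence its regularized median equals $x_i$. The assumption $\widehat{\mathds{1}_{\Omega^*}}\Psi=\Psi$ deterministically bounds the empirical regularized median $m_q(\rho)$ by $\epsilon$ on the support of $\Psi$, and pushing this constraint through the de~Finetti-type statement in Theorem \ref{Theorem-Coherent States} should force $|x|\lesssim \epsilon+q$ for $\mu$-almost every $x$, up to a de Finetti error of order $\alpha^{-\frac{2}{29}}$. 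Using the regularity bound $\|\varphi^\mathrm{Pek}_x-\varphi^\mathrm{Pek}\|^2\leq C\min(|x|^2,1)$, which follows from smoothness of $\varphi^\mathrm{Pek}$ (see Appendix \ref{Properties of the Pekar Minimizer}), such a concentration estimate translates into the target bound for the integral after an elementary split of the domain at a scale of order $\sqrt{\epsilon+q}$.

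The rigorous transfer proceeds by applying Eq.~(\ref{Equation-Coherent state formula for F}) to polynomial test functionals that approximate the regularized median. For each $i\in\{1,2,3\}$ and $t\in\mathbb{R}$ the functional $F_{t,i}(\rho):=\int \chi^{\epsilon'}(y_i\leq t)\,\mathrm{d}\rho(y)$ is of the admissible monomial form and computes a smoothed cumulative distribution function of the marginal $\rho_i$. Knowledge of $F_{t,i}$ for sufficiently many values of $t$ determines the smoothed quantiles and, via the formula in Definition \ref{Definition-Regularized Median}, the regularized median $m_q(\rho)$ itself. The stability Lemma \ref{Lemma-Symmetric difference} on quantiles in total variation together with Lemma \ref{Lemma-Variation of the regularized median} on the Lipschitz continuity of $m_q$ permit a polynomial approximation of the map $\rho\mapsto m_q(\rho)$ with error $O(q)$. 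Applying Theorem \ref{Theorem-Coherent States} to these approximating functionals and invoking $m_q(|\varphi^\mathrm{Pek}_x|^2)=x$ yields the required tail bound on $\mu$ and therefore the Lemma.

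The principal obstacle is precisely that $m_q$ lies outside the polynomial class of observables for which Theorem \ref{Theorem-Coherent States} is directly applicable, so one cannot evaluate $\langle\widehat{m_q}\rangle_\Psi$ as such. The smoothing parameter $q$ is essential here: it renders $m_q$ Lipschitz continuous in the total variation topology on the admissible mass regime, which is what makes the polynomial approximation viable in the first place. The main technical work consists in balancing the accuracy of this approximation (which must be of order $q$) against the $\|f\|_\infty$-dependence of the de Finetti error in Eq.~(\ref{Equation-Coherent state formula for F}), while keeping track of the radial geometry of $\varphi^\mathrm{Pek}$ throughout.
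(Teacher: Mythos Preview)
Your overall reduction matches the paper: apply Eq.~(\ref{Equation-Translated F version}) with $g=\varphi^{\mathrm{Pek}}$ to reduce the claim to $\int\|\varphi^{\mathrm{Pek}}_x-\varphi^{\mathrm{Pek}}\|^2\,\mathrm{d}\mu(x)\lesssim \epsilon+q+\alpha^{-2/29}$. The gap is in how you propose to transfer the median constraint to $\mu$. Lemmas~\ref{Lemma-Symmetric difference} and~\ref{Lemma-Variation of the regularized median} establish stability of quantiles and of $m_q$ under small perturbations of $\rho$, but Lipschitz continuity does not yield a polynomial approximation of $\rho\mapsto m_q(\rho)$: the median is obtained by inverting the CDF and is intrinsically non-multilinear in $\rho$, and neither lemma produces anything of the form~(\ref{Equation-F function}). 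Your smoothed CDF observables $F_{t,i}$ are linear in $\rho$, but their values on $\Omega^*$ are not small, so there is no direct bound on $\langle\widehat{F_{t,i}}\rangle_\Psi$ to feed into Eq.~(\ref{Equation-Coherent state formula for F}).

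The paper avoids approximating $m_q$ altogether. It introduces the explicit quadratic functional
\[
P_i^\epsilon(\rho):=\Big(\tfrac{1}{2}\!\int\!\mathrm{d}\rho\Big)^2-\int_{x_i\leq \epsilon}\!\mathrm{d}\rho(x)\int_{y_i\geq -\epsilon}\!\mathrm{d}\rho(y),
\]
which is of the form~(\ref{Equation-F function}) with $m=2$. Since $m_q(\rho_i)\in[x^{1/2-q}(\rho_i),x^{1/2+q}(\rho_i)]$, the bound $|m_q(\rho)|\leq\epsilon$ forces both half-space integrals to be at least $(\tfrac{1}{2}-q)\int\mathrm{d}\rho$, whence $P_i^\epsilon(\rho)\leq\big(\tfrac{1}{4}-(\tfrac{1}{2}-q)^2\big)(\int\mathrm{d}\rho)^2\lesssim q$ deterministically on $\Omega^*$. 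As $\widehat{P_i^\epsilon}$ is multiplication by $P_i^\epsilon$ of the empirical measure, $\langle\widehat{P_i^\epsilon}\rangle_\Psi\lesssim q$, and Eq.~(\ref{Equation-Coherent state formula for F}) then gives $\int P_i^\epsilon(|\varphi^{\mathrm{Pek}}_x|^2)\,\mathrm{d}\mu\lesssim q+\alpha^{-2/29}$. The final link is Lemma~\ref{Lemma-Results on the Pekar minimizer}, which shows $\|\varphi^{\mathrm{Pek}}_x-\varphi^{\mathrm{Pek}}\|^2\lesssim\sum_i P_i^\epsilon(|\varphi^{\mathrm{Pek}}_x|^2)+\epsilon$. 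So the missing idea is not to approximate the median by polynomials, but to write down a single quadratic \emph{consequence} of the median bound that already controls the distance to $\varphi^{\mathrm{Pek}}$.
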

\begin{proof}
Let us begin by defining the functions
\begin{align}
\label{Equation-Definition H_i}
P_i^\epsilon(\rho):=\left(\frac{1}{2}\int \mathrm{d}\rho\right)^2-\int_{x_i\leq \epsilon}\mathrm{d}\rho(x)\int_{y_i\geq -\epsilon}\mathrm{d}\rho(y).
\end{align}
Observe that $|m_{q}(\rho)|\leq \epsilon$ implies $-\epsilon\leq x^{\frac{1}{2}+q}(\rho_i)$ and $x^{\frac{1}{2}-q}(\rho_i)\leq \epsilon$ for all such $\rho$ which additionally satisfy $\rho\neq 0$, see Definition \ref{Definition-Regularized Median}. Therefore $P^\epsilon_i(\rho)\leq \left(\int \mathrm{d}\rho\right)^2 \big(\frac{1}{4}-\left(\frac{1}{2}-q\right)^2\big)\lesssim q$ for all $\rho\in {\Omega^*}$, and consequently the measure $\mu$ from Theorem \ref{Theorem-Coherent States} corresponding to the state $\Psi$ satisfies $\int P^\epsilon_i\left(\left|\varphi^\mathrm{Pek}_x\right|^2\right)\, \mathrm{d}\mu(x)\leq \big\langle \Psi\big|\widehat{P}^\epsilon_i\, \big|\Psi\big\rangle+D \alpha^{-\frac{2}{29}}\lesssim q+\alpha^{-\frac{2}{29}}$ for a suitable $D>0$, where we have used Eq.~(\ref{Equation-Coherent state formula for F}) in the first inequality. Furthermore we know that $\|\varphi^\mathrm{Pek}_x-\varphi^\mathrm{Pek}\|^2\lesssim  \sum_{i=1}^3 P^\epsilon_i\left(\left|\varphi^\mathrm{Pek}_x\right|^2\right)+\epsilon$ by Lemma \ref{Lemma-Results on the Pekar minimizer}, hence
\begin{align*}
\int \|\varphi^\mathrm{Pek}_x\! -\! \varphi^\mathrm{Pek}\|^2\, \mathrm{d}\mu(x)\lesssim \sum_{i=1}^3\int P^\epsilon_i\left(\left|\varphi^\mathrm{Pek}_x\right|^2\right)\, \mathrm{d}\mu(x)\! +\! \epsilon\lesssim q\! + \! \alpha^{-\frac{2}{29}}\! +\! \epsilon.
\end{align*}
Therefore Eq.~(\ref{Equation-Translated F version}) immediately concludes the proof of Eq.~(\ref{Equation-Weak Codensation}).
\end{proof}

\begin{lem}
\label{Lemma-Existence of a condensate}
Given $0<\sigma\leq \frac{1}{4}$, let $\Lambda$ and $L$ be as in Theorem \ref{Theorem-Strong cut-off}. Then there exist states $\Psi_\alpha'''$ satisfying $\braket{\Psi'''_\alpha|\mathbb{H}_\Lambda|\Psi'''_\alpha}-E_\alpha\lesssim \alpha^{-2(1+\sigma)}$, $\mathrm{supp}\left(\Psi_\alpha'''\right)\subset B_{4L}(0)$ and
\begin{align}
\label{Equation-Weak Codensation}
 \Big\langle \Psi'''_\alpha\Big|W_{\varphi^\mathrm{Pek}}^{-1}\, \mathcal{N}W_{\varphi^\mathrm{Pek}}\Big| \Psi'''_\alpha\Big\rangle\lesssim \alpha^{-\frac{2}{29}},
\end{align}
where $W_{\varphi^\mathrm{Pek}}$ is the Weyl operator corresponding to the Pekar minimizer $\varphi^\mathrm{Pek}$.
\end{lem}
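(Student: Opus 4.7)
The plan is to apply Lemmas \ref{Lemma-Admissible sequence} and \ref{Lemma-Control of variance} in succession (taking $\Psi_\alpha^\bullet:=\Psi_\alpha^\diamond$ from Theorem \ref{Theorem-Strong cut-off}, and fixing $\epsilon=0$ with some $0<\delta<c_-^2/2$ in Lemma \ref{Lemma-Control of variance}) to produce a state $\Psi_\alpha''$ with $\mathrm{supp}(\Psi_\alpha'')\subset B_L(0)$, $\chi(c_-\leq \mathcal{N}\leq c_+)\Psi_\alpha''=\Psi_\alpha''$, $\chi(\widehat K_R\leq \delta)\Psi_\alpha''=\Psi_\alpha''$ for some fixed $R>0$, and $\braket{\Psi_\alpha''|\mathbb{H}_\Lambda|\Psi_\alpha''}-E_\alpha\lesssim \alpha^{-2(1+\sigma)}$. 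The remaining step is a further localization of the phonon distribution around its regularized median $m_q(\rho)$ from Definition \ref{Definition-Regularized Median}, followed by a diagonal translation (acting simultaneously on the electron and the phonon coordinates, under which $\mathbb{H}_\Lambda$ is invariant) that moves the selected value of the median to the origin. The translated state then satisfies the hypotheses of Lemma \ref{Lemma-Condensation}, which delivers the desired condensation bound.

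\textbf{Median partition and IMS cost.}
Fix $q=\epsilon:=\alpha^{-2/29}$, so that $q<1/2-\delta/c_-^2$ for large $\alpha$. Tile $\mathbb{R}^3$ by cubes of side $\epsilon$ centred at $y_j\in \epsilon\mathbb{Z}^3$, and let $\{f_j\}$ be a smooth quadratic partition of unity subordinate to this tiling with $\|\nabla f_j\|_\infty\lesssim \epsilon^{-1}$. Set $F_j(\rho):=f_j(m_q(\rho))$, and let $\Omega\subset \Omega_\mathrm{reg}$ be the set of measures with $c_-\leq \int d\rho\leq c_+$ and $\iint_{|x-y|\geq R}d\rho(x)\,d\rho(y)\leq \delta$. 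By Lemma \ref{Lemma-Regular Omega} together with the properties of $\Psi_\alpha''$, $\widehat{\mathds{1}_\Omega}\Psi_\alpha''=\Psi_\alpha''$. Lemma \ref{Lemma-Variation of the regularized median}, combined with the local finiteness of $\{f_j\}$, gives $V_\Omega(\mathcal{P})\lesssim R^2/(c_-^2 q^2\epsilon^2)\lesssim \alpha^{8/29}$, so Lemma \ref{Lemma-IMS Formula} bounds the localization cost by
\[
\sqrt{\Lambda}\,\alpha^{-4}\,V_\Omega(\mathcal{P})\,\sqrt{c_++\alpha^{-2}}\lesssim \alpha^{2(1+\sigma)/5+8/29-4},
\]
which is $o(\alpha^{-2(1+\sigma)})$ for every $\sigma\leq 1/4$. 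Consequently $\sum_{j}\braket{\widehat F_j\Psi_\alpha''|\mathbb{H}_\Lambda|\widehat F_j\Psi_\alpha''}\leq E_\alpha+O(\alpha^{-2(1+\sigma)})$.

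\textbf{Selecting a component, translating, and invoking Lemma \ref{Lemma-Condensation}.}
Set $Z_j:=\|\widehat F_j\Psi_\alpha''\|$. Since $\mathbb{H}_\Lambda\geq E_\alpha$ and $\sum_j Z_j^2=1$, there exists an index $j^*$ with $Z_{j^*}>0$ and $Z_{j^*}^{-2}\braket{\widehat F_{j^*}\Psi_\alpha''|\mathbb{H}_\Lambda|\widehat F_{j^*}\Psi_\alpha''}\leq E_\alpha+O(\alpha^{-2(1+\sigma)})$; splitting $J$ further into $|y_j|\leq 3L$ and $|y_j|>3L$ and ruling out the latter by energy monotonicity (phonon mass concentrated far from the electron, confined in $B_L(0)$, cannot give an energy below $0>E_\alpha$), we may arrange $|y_{j^*}|\leq 3L$. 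Define $\Psi_\alpha'''$ to be the diagonal translation of $Z_{j^*}^{-1}\widehat F_{j^*}\Psi_\alpha''$ by $-y_{j^*}$. Translation invariance of $\mathbb{H}_\Lambda$ preserves the energy, the electron support shifts to a subset of $B_{L+3L}(0)=B_{4L}(0)$, the bound $\int d\rho\leq c_+$ is preserved since $\mathcal{N}$ is translation invariant, and by translation covariance of $m_q$ every phonon configuration in $\Psi_\alpha'''$ satisfies $|m_q(\rho)|\lesssim \epsilon$. All hypotheses of Lemma \ref{Lemma-Condensation} are met (noting $\Lambda\geq \alpha^{8/29}$ and $\alpha^{-2(1+\sigma)}\leq \alpha^{-4/29}$), so
\[
\braket{\Psi_\alpha'''|W_{\varphi^\mathrm{Pek}}^{-1}\mathcal{N}W_{\varphi^\mathrm{Pek}}|\Psi_\alpha'''}\lesssim \alpha^{-2/29}+q+\epsilon\lesssim \alpha^{-2/29}.
\]

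\textbf{Main difficulty.}
The scales $q$ and $\epsilon$ are essentially forced: any choice larger than $\alpha^{-2/29}$ makes the bound from Lemma \ref{Lemma-Condensation} too weak, and any choice smaller inflates the variation $V_\Omega(\mathcal{P})\sim (q\epsilon)^{-2}$ beyond what the IMS budget $\alpha^{-2(1+\sigma)}$ permits. The stability statement of Lemma \ref{Lemma-Variation of the regularized median} --- with its explicit linear growth in $1/q$ under a single $\alpha^{-2}$-mass perturbation --- is precisely what makes this trade-off feasible; the preceding localizations in Lemmas \ref{Lemma-Admissible sequence} and \ref{Lemma-Control of variance} are performed exactly so that the hypotheses of that stability estimate ($\int d\rho$ bounded below, and $\rho$ concentrated in a ball of radius $R$) hold on the configurations supporting $\Psi_\alpha''$.
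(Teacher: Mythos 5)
Your proposal follows the paper's strategy closely: Lemma \ref{Lemma-Admissible sequence} and Lemma \ref{Lemma-Control of variance} give $\Psi_\alpha''$, one then performs a median localization whose cost is controlled via Lemma \ref{Lemma-Variation of the regularized median} and Lemma \ref{Lemma-IMS Formula}, selects a low-energy component, translates, and applies Lemma \ref{Lemma-Condensation}. Your parameter choice $q=\epsilon=\alpha^{-2/29}$ is the boundary case $u=v=\frac{2}{29}$ of the paper's window $u,v\geq\frac{2}{29}$, $u+v\leq\frac14$, and your computation of $V_\Omega(\mathcal P)\lesssim\alpha^{8/29}$ and of the IMS cost is correct; the invocation of Lemma \ref{Lemma-Condensation} with $\Lambda\geq\alpha^{8/29}$ and $\alpha^{-2(1+\sigma)}\leq\alpha^{-4/29}$ is also fine.

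There is, however, a genuine gap in the step where you write that components with $|y_j|>3L$ can be ruled out ``by energy monotonicity (phonon mass concentrated far from the electron, confined in $B_L(0)$, cannot give an energy below $0>E_\alpha$)''. This is asserted, not proved, and it is not at all immediate: the interaction term in $\mathbb{H}_\Lambda$ couples the electron to all phonon modes up to momentum $\Lambda$, and the naive Cauchy--Schwarz bound on the interaction with the small residual phonon mass outside a ball around the median produces a factor $\|\Pi_\Lambda w_x\|\sim\sqrt{\Lambda}$, which diverges. One cannot conclude $\langle\widehat F_j\Psi_\alpha''|\mathbb{H}_\Lambda|\widehat F_j\Psi_\alpha''\rangle\geq 0$ for far-away $j$ without a finer estimate. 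The paper handles this in the reverse order: it selects $z_\alpha$ \emph{without} any a priori restriction on its size, translates by $-\alpha^{-u}z_\alpha$ (so the electron support may temporarily sit far from the origin), verifies the hypotheses of Lemma \ref{Lemma-Condensation} and obtains the weak-condensation bound \eqref{Equation-Weak Codensation}, and only \emph{then} uses the established condensation together with Corollary \ref{Corollary-Outside_Mass}: if $\mathrm{supp}(\Psi_\alpha''')\subset\mathbb{R}^3\setminus B_{2L}(0)$, then $\langle\Psi_\alpha'''|\mathcal N_{B_L(0)}|\Psi_\alpha'''\rangle$ is, by \eqref{Equation-Weak Codensation}, close to $\|\varphi^\mathrm{Pek}\|^2>0$, which combined with Corollary \ref{Corollary-Outside_Mass} forces $\langle\Psi_\alpha'''|\mathbb{H}_\Lambda|\Psi_\alpha'''\rangle-E_\alpha$ to be bounded away from zero — a contradiction. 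Your proof would be complete and correct if you replaced your pre-selection step by this a posteriori argument; as written, the support bound $\mathrm{supp}(\Psi_\alpha''')\subset B_{4L}(0)$ is not established.
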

\begin{proof}
It is clearly sufficient to consider only the case $\alpha \geq \alpha_0$ for a suitable (large) $\alpha_0$, 
since we can always re-define $\Psi'''_\alpha:=\Psi$ for $\alpha<\alpha_0$ where $\Psi$ is an arbitrary state satisfying $\mathrm{supp}\left(\Psi\right)\subset B_{4L}(0)$. In the following let us use the concrete choice $\Psi_\alpha^\bullet:=\Psi_\alpha^\diamond$ for the sequence in Eq.~(\ref{Equation-Energy_of_given_sequence}), where $\Psi_\alpha^\diamond$ is defined in in Theorem \ref{Theorem-Strong cut-off}, which is a valid choice since it satisfies the assumptions $\mathrm{supp}\left(\Psi_\alpha^\diamond\right)\subset B_L(0)$ and $\widetilde{E}_\alpha-E_\alpha\lesssim \alpha^{-2(1+\sigma)}\leq \alpha^{-\frac{4}{29}}$. Furthermore let $\{\chi_z:z\in \mathbb{Z}^3\}$ be a smooth (quadratic) partition of unity on $\mathbb{R}^3$, i.e. $0\leq \chi_z\leq 1$ and $\sum_{z\in \mathbb{Z}^3}\chi_z^2=1$, with $\chi_z(x)=\chi_0(x-z)$ and $\mathrm{supp} \left(\chi_0\right)\subset B_1(0)$. Then we define for $z\in \mathbb{Z}^3$ and $u,v\geq \frac{2}{29}$ with $u+v\leq \frac{1}{4}$ the function $F_z(\rho):=\chi_z\big(\alpha^{u}\, m_{\alpha^{-{v}}}(\rho)\big)$, as well as the states
\begin{align}
\label{Equation:Third Localization}
\Psi_{\alpha,z}:=Z_{\alpha,z}^{-1}\widehat{F}_z \Psi''_\alpha
\end{align}
with $Z_{\alpha,z}:=\|\widehat{F}_z \Psi''_\alpha\|$ and $\Psi''_\alpha$ as in Lemma \ref{Lemma-Control of variance} for $\epsilon=0$ and $0<\delta<\frac{c^2}{2}$, where $c:=c_-$ is as in Lemma \ref{Lemma-Admissible sequence}. Applying Lemma \ref{Lemma-IMS Formula} with respect to $\mathcal{P}:=\{F_z:z\in \mathbb{Z}^3\}$, where the functions $F_z$ are defined above Eq.~(\ref{Equation:Third Localization}) and $\Omega$ is defined as the set of all $\rho\in \Omega_\mathrm{reg}$ satisfying $c_-\leq\int \mathrm{d}\rho\leq c_+$ and $\underset{|x-y|\geq R}{\int \int} \mathrm{d}\rho(x)\mathrm{d}\rho(y)\leq \delta$, yields
\begin{align}
\label{Equation-Median Split}
\sum_{z\in \mathbb{Z}^3}Z^2_{\alpha,z}\braket{\Psi_{\alpha,z}|\mathbb{H}_\Lambda|\Psi_{\alpha,z}}\leq \braket{\Psi''_\alpha|\mathbb{H}_\Lambda|\Psi''_\alpha}+c\alpha^{-\frac{7}{2}}V_\Omega\left(\mathcal{P}\right)\sqrt{c_+ +\alpha^{-2}},
\end{align}
where we used Lemma \ref{Lemma-Regular Omega}, $\Lambda\leq \alpha$ and $\widehat{\mathds{1}_\Omega}\Psi''_\alpha=\Psi''_\alpha$ by the definition of $\Psi''_\alpha$ in Eq.~(\ref{Equation:Second Localization}). Since the support of $\chi_z$ only overlaps with the support of finitely many other $\chi_{z'}$, we obtain for $v>0$ and $\alpha$ large enough
\begin{align*}
V_\Omega\left(\mathcal{P}\right)&\lesssim \alpha^{4} \mathrm{sup}_{\rho\in \Omega,y\in \mathbb{R}^3}\, \mathrm{sup}_{z\in \mathbb{Z}^3}\left|\chi_z\big(\alpha^{u} m_{\alpha^{-{v}}}(\rho+\alpha^{-2}\delta_y)\big)-\chi_z\big(\alpha^{u} m_{\alpha^{-{v}}}(\rho)\big)\right|^2\\
&\lesssim \alpha^{2u+4}\mathrm{sup}_{\rho\in \Omega,y\in \mathbb{R}^3}\left|m_{\alpha^{-{v}}}(\rho+\alpha^{-2}\delta_y) -m_{\alpha^{-{v}}}(\rho)\right|^2\lesssim \alpha^{2(u+v)},
\end{align*}
where we have used $\mathrm{sup}_{z\in \mathbb{Z}^3}\left|\chi_z\left(y\right)-\chi_z\left(x\right)\right|\leq \big\|\nabla \chi_0\big\|_\infty |y-x|$ in the first inequality and Lemma \ref{Lemma-Variation of the regularized median} in the second one. Combining this with Eq.~(\ref{Equation-Median Split}) and the fact that $u+v\leq \frac{1}{4 }$ yields
\begin{align}
\label{Equation-Median Split Combined}
\sum_{z\in \mathbb{Z}^3}Z^2_{\alpha,z}\braket{\Psi_{\alpha,z}|\mathbb{H}_\Lambda|\Psi_{\alpha,z}}-\braket{\Psi''_\alpha|\mathbb{H}_\Lambda|\Psi''_\alpha}\lesssim \alpha^{-3}.
\end{align}
Since $\sum_{z\in \mathbb{Z}^3}Z_{\alpha,z}^2=1$, this in particular means that there exists a $z_\alpha\in \mathbb{Z}^3$ such that $\braket{\Psi_{\alpha,z_\alpha}|\mathbb{H}_\Lambda|\Psi_{\alpha,z_\alpha}}- E_\alpha\lesssim \alpha^{-2(1+\sigma)}$, and by the translation invariance of $\mathbb{H}_\Lambda$ we obtain $\braket{\Psi'''_{\alpha}|\mathbb{H}_\Lambda|\Psi'''_{\alpha}}- E_\alpha\lesssim \alpha^{-2(1+\sigma)}$ where $\Psi_\alpha'''=\mathcal{T}_{-\alpha^{-u}z_\alpha}\Psi_{\alpha,z_\alpha}$. Using the fact that $\mathds{1}_{\Omega^*}\Psi_\alpha'''=\Psi_\alpha'''$, where ${\Omega^*}$ is the set of all $\rho$ satisfying $\int \mathrm{d}\rho\leq c_+$ and $|m_{\alpha^{-v}}(\rho)|\leq \alpha^{-u}$, together with Lemma \ref{Lemma-Condensation}, immediately concludes the proof of Eq.~(\ref{Equation-Weak Codensation}).

Finally let us verify that $\mathrm{supp}\left(\Psi'''_\alpha\right)\subset B_{4L}(0)$. By the definition of $\Psi_\alpha'''=\mathcal{T}_{-\alpha^{-u}z_\alpha}\Psi_{\alpha,z_\alpha}$, and the fact that $\mathrm{supp}\left(\Psi_{\alpha,z_\alpha}\right)\subset B_{L}(0)$, it is clear that $\mathrm{supp}\left(\Psi'''_\alpha\right)\subset B_{L}(-w_\alpha)$ with $w_\alpha:=\alpha^{-u}z_\alpha$. In the following we show that $|w_\alpha|\leq 3L$ by contradiction for $\alpha$ large enough, and therefore $\mathrm{supp}\left(\Psi'''_\alpha\right)\subset B_{L+|w_\alpha|}(0)\subset B_{4L}(0)$. Assuming $|w_\alpha|>3L$, we obtain $\mathrm{supp}\left(\Psi'''_\alpha\right)\subset \mathbb{R}^3\setminus B_{2L}(0)$ and Corollary \ref{Corollary-Outside_Mass} consequently yields $\braket{\Psi_\alpha'''|\mathbb{H}_\Lambda|\Psi_\alpha'''}\geq E_\alpha+\braket{\Psi_\alpha'''|\mathcal{N}_{B_L(0)}|\Psi_\alpha'''}-\sqrt{\frac{D}{L}}$, where $\mathcal{N}_{B_L(0)}$ denotes the number operator in the ball $B_L(0)$ (as defined in Cor.~\ref{Corollary-Outside_Mass}). Defining $\varphi_L(x):=\chi\left(|x|\leq L\right)\varphi^\mathrm{Pek}(x)$, we further have
\begin{align*}
 \braket{\Psi'''_\alpha|\mathcal{N}_{B_L(0)}|\Psi_\alpha'''}&=\Big\langle \Psi_\alpha'''\Big|W_{\varphi^\mathrm{Pek}}^{-1}\left(\mathcal{N}_{B_L(0)}+a(\varphi_L)+a^\dagger(\varphi_L)+\|\varphi_L\|^2\right)W_{\varphi^\mathrm{Pek}}\Big| \Psi'''_\alpha\Big\rangle\\
 &\geq -\Big\langle \Psi'''_\alpha\Big|W_{\varphi^\mathrm{Pek}}^{-1}\mathcal{N} W_{\varphi^\mathrm{Pek}}\Big| \Psi'''_\alpha\Big\rangle+\frac{1}{2}\|\varphi_L\|^2\geq -D'\alpha^{-\frac{2}{29}}+\frac{1}{2}\|\varphi_L\|^2
\end{align*}
for a suitable constant $D'$, where we have used the operator inequality $\mathcal{N}_{B_L(0)}+a(\varphi_L)+a^\dagger(\varphi_L)+\|\varphi_L\|^2\geq -\mathcal{N}+\frac{1}{2}\|\varphi_L\|^2$ as well as Eq.~(\ref{Equation-Weak Codensation}). Therefore we obtain
\begin{align*}
\braket{\Psi_\alpha'''|\mathbb{H}_\Lambda|\Psi_\alpha'''}-E_\alpha\geq \frac{1}{2}\|\varphi_L\|^2-D'\alpha^{-\frac{2}{29}}-\sqrt{\frac{D}{L}}\underset{\alpha\rightarrow \infty}{\longrightarrow}\frac{1}{2}\|\varphi^\mathrm{Pek}\|^2>0,
\end{align*}
where we have used that $L=\alpha^{1+\sigma}\underset{\alpha\rightarrow \infty}{\longrightarrow}\infty$. This, however, is a contradiction to $\braket{\Psi'''_\alpha|\mathbb{H}_\Lambda|\Psi'''_\alpha}-E_\alpha\lesssim \alpha^{-2(1+\sigma)}$.
\end{proof}

Following the method in \cite{LNSS}, we are going to lift the weak condensation derived in Lemma \ref{Lemma-Existence of a condensate} to a strong one in the subsequent Theorem \ref{Theorem-Existence of a strong condensate}, which represents the main result of this section.

\begin{theorem}
\label{Theorem-Existence of a strong condensate}
Given $0<\sigma\leq \frac{1}{4}$ and $h<\frac{2}{29}$, let $\Lambda$ and $L$ be as in Theorem \ref{Theorem-Strong cut-off}. Then there exist states $\Psi_\alpha$ with $\braket{\Psi_\alpha|\mathbb{H}_\Lambda|\Psi_\alpha}- E_\alpha\lesssim \alpha^{-2(1+\sigma)}$ and $\mathrm{supp}\left(\Psi_\alpha\right)\subset B_{4L}(0)$, satisfying
\begin{align}
\label{Equation-StrongCondensation}
\chi\left( W_{\varphi^\mathrm{Pek}}^{-1}\, \mathcal{N}W_{\varphi^\mathrm{Pek}}\leq \alpha^{-h}\right)\Psi_\alpha=\Psi_\alpha
\end{align}
for large enough $\alpha$. 
\end{theorem}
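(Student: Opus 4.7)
The plan is to lift the weak condensation from Lemma~\ref{Lemma-Existence of a condensate} to the strong bound~\eqref{Equation-StrongCondensation} by one more IMS-type localization, this time in the spectrum of the shifted number operator $\mathcal{N}_{\varphi^\mathrm{Pek}}:=W_{\varphi^\mathrm{Pek}}^{-1}\mathcal{N}W_{\varphi^\mathrm{Pek}}$, in the spirit of \cite{LNSS}. Concretely, I would fix smooth $f_0,f_1:\mathbb{R}\to[0,1]$ with $f_0^2+f_1^2=1$, $f_0\equiv 1$ on $(-\infty,\alpha^{-h}/2]$, $f_0\equiv 0$ on $[\alpha^{-h},\infty)$, and $\|f_j'\|_\infty\lesssim \alpha^{h}$. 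Setting $\Psi_{\alpha,j}:=f_j(\mathcal{N}_{\varphi^\mathrm{Pek}})\Psi_\alpha'''$ with $\Psi_\alpha'''$ the state from Lemma~\ref{Lemma-Existence of a condensate}, the candidate is $\Psi_\alpha:=\|\Psi_{\alpha,0}\|^{-1}\Psi_{\alpha,0}$. Property~\eqref{Equation-StrongCondensation} is automatic from $\mathrm{supp}\,f_0\subset(-\infty,\alpha^{-h}]$, and the support property $\mathrm{supp}\,\Psi_\alpha\subset B_{4L}(0)$ is preserved since $\mathcal{N}_{\varphi^\mathrm{Pek}}$ acts only on the Fock fibers of $L^2(\mathbb{R}^3,\mathcal{F}(L^2(\mathbb{R}^3)))$.

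Markov's inequality combined with~\eqref{Equation-Weak Codensation} gives
\[
\|\Psi_{\alpha,1}\|^{2}\,\leq\,\frac{2}{\alpha^{-h}}\,\langle\Psi_\alpha'''|\mathcal{N}_{\varphi^\mathrm{Pek}}|\Psi_\alpha'''\rangle\,\lesssim\,\alpha^{h-\frac{2}{29}}\longrightarrow 0,
\]
so $p_0:=\|\Psi_{\alpha,0}\|^{2}=1-o(1)$. For the energy I would invoke the second part of Lemma~\ref{Lemma-IMS Formula} with $\varphi=\varphi^\mathrm{Pek}$, $K=\Lambda$ and the partition $\{f_0,f_1\}$; the associated variation is
\[
V_{\mathcal{M}(\mathbb{R}^3)}(\mathcal{P}')\,\lesssim\,\alpha^{4}\bigl(\|f_j'\|_\infty\alpha^{-2}\bigr)^{2}\,\lesssim\,\alpha^{2h},
\]
yielding a localization error $\lesssim \sqrt{\Lambda}\,\alpha^{-4}\alpha^{2h}\lesssim \alpha^{2h-\frac{7}{2}}$, where I use $\Lambda\leq \alpha$ and $\mathcal{N}\leq c_+$ on $\Psi_\alpha'''$ inherited from Lemma~\ref{Lemma-Admissible sequence}. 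Writing $p_1:=\|\Psi_{\alpha,1}\|^{2}$ and the normalized energies $E_j:=\langle\Psi_{\alpha,j}/\|\Psi_{\alpha,j}\||\mathbb{H}_\Lambda|\Psi_{\alpha,j}/\|\Psi_{\alpha,j}\|\rangle$, Lemma~\ref{Lemma-IMS Formula} gives $p_0E_0+p_1E_1\leq \langle\Psi_\alpha'''|\mathbb{H}_\Lambda|\Psi_\alpha'''\rangle+O(\alpha^{2h-7/2})$. Subtracting $(p_0+p_1)E_\alpha=E_\alpha$ and using $E_1\geq E_\alpha$ on the left yields the key cancellation, producing
\[
E_0-E_\alpha\,\leq\, p_0^{-1}\bigl(\langle\Psi_\alpha'''|\mathbb{H}_\Lambda|\Psi_\alpha'''\rangle-E_\alpha+O(\alpha^{2h-\frac{7}{2}})\bigr)\,\lesssim\, \alpha^{-2(1+\sigma)}+\alpha^{2h-\frac{7}{2}},
\]
which lies below $\alpha^{-2(1+\sigma)}$ as required because $h<\frac{2}{29}$ and $\sigma\leq\frac{1}{4}$ force $2h-\frac{7}{2}<-2(1+\sigma)$.

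The main obstacle is the usual balancing act in such IMS localizations: the cutoff must be sharp (i.e.\ $\alpha^{-h}$ small) so that Markov, fed by the $\alpha^{-2/29}$ from weak condensation, drives $p_1\to 0$, yet smooth enough on the same scale so that the commutator cost $\sqrt{\Lambda}\alpha^{-4}V_{\mathcal{M}(\mathbb{R}^3)}(\mathcal{P}')$ remains well below the energy window $\alpha^{-2(1+\sigma)}$. What truly rescues the argument is algebraic rather than analytic: in the IMS splitting, the a priori $O(1)$ contribution $|E_\alpha|\,p_1$ cancels exactly against the $p_1 E_\alpha$ coming from the identity $\langle\Psi_\alpha'''|\mathbb{H}_\Lambda|\Psi_\alpha'''\rangle-E_\alpha=p_0(E_0-E_\alpha)+p_1(E_1-E_\alpha)+\mathrm{IMS}$, so that only the non-negative excess $p_1(E_1-E_\alpha)$ and the genuine commutator error remain to constrain $E_0-E_\alpha$.
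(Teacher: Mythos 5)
Your proposal is correct and follows essentially the same route as the paper: you localize $\Psi_\alpha'''$ from Lemma~\ref{Lemma-Existence of a condensate} with a smooth cutoff in the spectrum of $W_{\varphi^\mathrm{Pek}}^{-1}\mathcal{N}W_{\varphi^\mathrm{Pek}}$ at scale $\alpha^{-h}$, invoke the second statement of Lemma~\ref{Lemma-IMS Formula} to bound the commutator cost by $\sqrt{\Lambda}\,\alpha^{-4}\alpha^{2h}\lesssim\alpha^{2h-\frac{7}{2}}$, exploit the cancellation with $E_\alpha\leq\braket{\widetilde\Psi_\alpha|\mathbb{H}_\Lambda|\widetilde\Psi_\alpha}$, and use Markov together with \eqref{Equation-Weak Codensation} to show the normalization constant tends to one. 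The only difference from the paper's proof is the explicit parametrization of the cutoff function; the mechanism and all the exponents are identical.
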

\begin{proof}
Using the states $\Psi_\alpha'''$ from Lemma \ref{Lemma-Existence of a condensate}, we define for $0<\epsilon<\frac{1}{2}$
\begin{align*}
\Psi_\alpha:=Z_\alpha^{-1}\chi^{\epsilon}\left(\alpha^{h}W_{\varphi^\mathrm{Pek}}^{-1}\mathcal{N}W_{\varphi^\mathrm{Pek}} \leq \frac{1}{2}\right)\Psi'''_\alpha
\end{align*}
 where $Z_\alpha$ is a normalizing constant. Clearly the states $\Psi_\alpha$ satisfy the strong condensation property $\chi\left(W_{\varphi^\mathrm{Pek}}^{-1}\mathcal{N}W_{\varphi^\mathrm{Pek}} \leq \alpha^{-h}\right)\Psi_\alpha=\Psi_\alpha$. In order to control the energy cost of the localization with respect to the operator $W_{\varphi^\mathrm{Pek}}^{-1}\mathcal{N}W_{\varphi^\mathrm{Pek}}$, note that the partition $\mathcal{P}':=\{F',G'\}$ with $F'(\rho):=\chi^{\epsilon}\left(\alpha^{h}\int \mathrm{d}\rho \leq \frac{1}{2}\right)$ and $G'(\rho):=\chi^{\epsilon}\left(\frac{1}{2}\leq \alpha^{h}\int \mathrm{d}\rho\right)$ satisfies
\begin{align*}
\kappa:=V_{\mathcal{M}\left(\mathbb{R}^3\right)}\left(\mathcal{P}'\right)\lesssim \alpha^{4}  \mathrm{sup}_{\rho,x\in \mathbb{R}^3}\left|\alpha^h\int\mathrm{d}\! \left(\rho+\alpha^{-2}\delta_x\right)-\alpha^h\int \mathrm{d}\rho\right|^2=\alpha^{2h},
\end{align*} 
where we used $\left|\chi^{\epsilon}\left(y \leq \frac{1}{2}\right)-\chi^{\epsilon}\left(x \leq \frac{1}{2}\right)\right|\leq \big\|\frac{\mathrm{d}}{\mathrm{d}x} \chi^{\epsilon}\Big(\cdot \leq \frac{1}{2}\Big)\big\|_\infty |y-x|$ and the corresponding estimate for $\chi^{\epsilon}\left(\frac{1}{2}\leq \cdot \right)$. Therefore we obtain by Lemma \ref{Lemma-IMS Formula}, using $\Lambda\leq \alpha$,
\begin{align}
\nonumber
&Z_\alpha^{2}\! \braket{\Psi_\alpha|\mathbb{H}_\Lambda|\Psi_\alpha}\! +\! (1\! -\! Z_\alpha^2)\! \braket{\widetilde{\Psi}_\alpha|\mathbb{H}_\Lambda|\widetilde{\Psi}_\alpha}\! \leq \! \braket{\Psi'''_\alpha|\mathbb{H}_\Lambda|\Psi'''_\alpha}\! +\! c'\, \alpha^{-\frac{7}{2}}   \kappa\Big\langle  \Psi''' \Big| \sqrt{\mathcal{N}\! +\! 1}\Big| \Psi''' \Big\rangle\\
\label{Equation-Proof_Strong_Condensation}
&\ \ \ \ \leq E_\alpha+O_{\alpha\rightarrow \infty}\left(\alpha^{-2(1+\sigma)}\right)+O_{\alpha\rightarrow \infty}\left(\alpha^{2h-\frac{7}{2}}\right)=E_\alpha+O_{\alpha\rightarrow \infty}\left(\alpha^{-2(1+\sigma)}\right),
\end{align}
with $\widetilde{\Psi}_\alpha:=\sqrt{1-Z_\alpha^2}^{-1}\chi^{\epsilon}\left( \frac{1}{2}\leq \alpha^{h}W_{\varphi^\mathrm{Pek}}^{-1}\mathcal{N}W_{\varphi^\mathrm{Pek}} \right)\Psi'''_\alpha$. Making use of the trivial lower bound $E_\alpha\leq \braket{\widetilde{\Psi}_\alpha|\mathbb{H}_\Lambda|\widetilde{\Psi}_\alpha}$, Eq.~(\ref{Equation-Proof_Strong_Condensation}) implies $\braket{\Psi_\alpha|\mathbb{H}_\Lambda|\Psi_\alpha}\leq E_\alpha+Z_\alpha^{-2}O_{\alpha\rightarrow \infty}\left(\alpha^{-2(1+\sigma)}\right)$, which concludes the proof since
\begin{align*}
1-Z_\alpha^2&=\Big\langle \Psi'''_\alpha\Big|\chi^{\epsilon}\left(\frac{1}{2}\leq  \alpha^{h}W_{\varphi^\mathrm{Pek}}^{-1}\mathcal{N}W_{\varphi^\mathrm{Pek}} \right)^2\Big| \Psi'''_\alpha\Big\rangle\\
&\leq \frac{1}{\frac{1}{2}-\epsilon}\alpha^{h}\Big\langle \Psi'''_\alpha\Big| W_{\varphi^\mathrm{Pek}}^{-1}\mathcal{N}W_{\varphi^\mathrm{Pek}} \Big|\Psi'''_\alpha \Big\rangle\lesssim \frac{1}{\frac{1}{2}-\epsilon}\alpha^{h}\alpha^{-\frac{2}{29}}\underset{\alpha\rightarrow \infty}{\longrightarrow}0.
\end{align*}
\end{proof}

\section{Large Deviation Estimates for Strong Condensates}
\label{Section-Large Deviation Principle for strong Condensates}

In this Section we will derive a large deviation principle for states with suitably small particle number (compared to $\alpha^2$), which can be interpreted as  complete condensation with respect to the vacuum. We will show that such states are, up to an error which is exponentially small in $\alpha^2$, contained in the spectral subspace $\left|a(f)+a^\dagger(f)\right|\leq \epsilon$, see Eq.~(\ref{Equation-Large Deviations t}). Note that taking the point of condensation to be the vacuum is not a real restriction, since this is the case after applying a suitable Weyl transformation. Before we can formulate the main result of this section in Proposition \ref{Proposition-Large Deviations}, we need to introduce some notation.\\

For $0<\sigma<\frac{1}{4}$ let us define $\Lambda:=\alpha^{\frac{4}{5}(1+\sigma)}$, $\ell:=\alpha^{-4(1+\sigma)}$ and
\begin{align}
\label{Equation-Definition of strong projection}
\Pi :=\Pi^0_{\Lambda,\ell},
\end{align}
see Definition \ref{Definition-Pi}, and let us identify $\mathcal{F}\left(\Pi  L^2\! \left(\mathbb{R}^3\right)\right)$ with $L^2\! \left(\mathbb{R}^{N}\right)$ using the representation of real functions $\varphi=\sum_{n=1}^{N} \lambda_n \varphi_n \in \Pi  L^2\! \left(\mathbb{R}^3\right)$ by points $\lambda=(\lambda_1,\dots ,\lambda_{{N}})\in \mathbb{R}^{N}$, where $N:=\mathrm{dim}\Pi  L^2\! \left(\mathbb{R}^3\right)$ and $\{\varphi_1,\dots ,\varphi_{N}\}$ is a real orthonormal basis of $\Pi  L^2\! \left(\mathbb{R}^3\right)$. We choose this identification such that the annihilation operators $a_n:=a\left(\varphi_n\right)$ read
\begin{align}
\label{Equation-CreationAndAnnihilation}
a_n=\lambda_n+\frac{1}{2\alpha^2}\partial_{\lambda_n},
\end{align}
where $\lambda_n$ is the multiplication operator by the function $\lambda\mapsto \lambda_n$ on $L^2\! \left(\mathbb{R}^{N}\right)$. From the construction one readily checks that $N \lesssim (\Lambda / \ell)^3 \leq \alpha^p$ for suitable $p>0$.

In the following we will verify a large deviation principle for the density function $\rho(\lambda):=\gamma(\lambda,\lambda)$ corresponding to a density matrix $\gamma$ on $\mathcal{F}\left(\Pi  L^2\! \left(\mathbb{R}^3\right)\right)$ that satisfies the strong condensation condition
\begin{align}
\label{Equation-Strong Condensation}
\chi\left(\sum_{n=1}^{N} a_n^\dagger a_n\leq \alpha^{-h}\right)\gamma=\gamma
\end{align}
for some $h>0$. This result is comparable to \cite[Lemma C.2]{BS}. For this purpose,
we define a convenient norm $|\cdot |_\diamond$ on $\mathbb{R}^{N}$ in the subsequent Definition.

\begin{defi}
Let $|\lambda|:=\sqrt{\sum_{n=1}^{{N}} \lambda_n^2}$ denote the standard norm on $\mathbb{R}^{N}$ and let us define the norm $|\cdot|_\diamond$ on $\mathbb{R}^{N}$, using the identification $\varphi=\sum_{n=1}^{{N}} \lambda_n \varphi_n$, as
\begin{align}
\label{Equation-Definition_Diamond_Norm_in_Coordinates}
|\lambda|_\diamond:=2\sup_{x\in \mathbb{R}^3}\sqrt{\int_{B_1(x)}\left|\left(\left(-\Delta\right)^{-\frac{1}{2}}\varphi\right)(y)\right|^2\mathrm{d}y}.
\end{align}
\end{defi}
The norm $|\cdot|_\diamond$ will again appear naturally in Section \ref{Section-Properties of the Pekar Functional} where we investigate properties of the Pekar functional $\mathcal{F}^\mathrm{Pek}$ (see Eq.~(\ref{Equation-L2 diamond}) and the subsequent comment).

\begin{prop}
\label{Proposition-Large Deviations}
Let $0<s<\min\big\{\frac{h}{2},\frac{1}{5}(1-4\sigma)\big\}$ and $D>0$. Then there exist constants $\beta,\alpha_0>0$, such that we have for all $\alpha\geq \alpha_0$, $\epsilon\geq D\alpha^{-s}$ and $\gamma$ satisfying Eq.~(\ref{Equation-Strong Condensation})
\begin{align}
\label{Equation-Large Deviations}
\int_{|\lambda|_\diamond\geq \epsilon}\left(1+|\lambda|^2\right) \rho(\lambda)\mathrm{d}\lambda\leq e^{-\beta \epsilon^2\alpha^2},
\end{align}
where $\rho(\lambda):=\gamma(\lambda,\lambda)$ is the density function corresponding to the state $\gamma$. Furthermore for all $\zeta\in \mathbb{R}^{N}$ and $\beta<\frac{1}{|\zeta|^2}$, there exists a constant $\alpha(\beta,|\xi|)$ such that 
\begin{align}
\label{Equation-Large Deviations t}
\int_{|\braket{\zeta|\lambda}|\geq \epsilon}\left(1+|\lambda|^2\right) \rho(\lambda)\mathrm{d}\lambda\leq e^{-\beta \epsilon^2\alpha^2}
\end{align}
for all $\alpha\geq \alpha(\beta,|\xi|)$ and $\epsilon\geq D\alpha^{-s}$.
\end{prop}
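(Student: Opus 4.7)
The approach is exponential Chebyshev with a Gaussian moment generating function bound, made possible by the strong condensation hypothesis (\ref{Equation-Strong Condensation}). I first prove (\ref{Equation-Large Deviations t}), writing $X := \langle \zeta,\lambda\rangle = \tfrac{1}{2}(a(\varphi_\zeta) + a^\dagger(\varphi_\zeta))$ with $\varphi_\zeta := \sum_n \zeta_n \varphi_n$. Since $[a(\varphi_\zeta),a^\dagger(\varphi_\zeta)] = \alpha^{-2}|\zeta|^2$ is a c-number, Baker-Campbell-Hausdorff yields
\begin{equation*}
e^{tX}= e^{\frac{t}{2}a^\dagger(\varphi_\zeta)}\, e^{\frac{t}{2}a(\varphi_\zeta)}\, e^{t^2|\zeta|^2/(8\alpha^2)},
\end{equation*}
so that $\langle\psi, e^{tX}\psi\rangle = e^{t^2|\zeta|^2/(8\alpha^2)}\|e^{\frac{t}{2}a(\varphi_\zeta)}\psi\|^2$. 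Because $\mathcal{N}_\Pi$ commutes with $a^\dagger(\varphi_\zeta)a(\varphi_\zeta)$, the single-mode identity $(b^\dagger)^k b^k \leq (b^\dagger b)^k$ combined with $a^\dagger(\varphi_\zeta)a(\varphi_\zeta) \leq |\zeta|^2 \mathcal{N}_\Pi$ gives $\|a(\varphi_\zeta)^k\psi\|^2 \leq |\zeta|^{2k}\alpha^{-hk}\|\psi\|^2$ for every $\psi \in \mathrm{Ran}(P)$, $P := \chi(\mathcal{N}_\Pi \leq \alpha^{-h})$. Summing the Taylor series of $e^{\frac{t}{2}a(\varphi_\zeta)}$, then using $P\gamma P = \gamma$ and cyclicity of the trace, yields the MGF bound
\begin{equation*}
\mathrm{Tr}[e^{tX}\gamma] \leq \exp\!\bigl(t^2|\zeta|^2/(8\alpha^2) + |t||\zeta|\alpha^{-h/2}\bigr).
\end{equation*}

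Exponential Chebyshev optimised at $t_* = 4\alpha^2(\epsilon - |\zeta|\alpha^{-h/2})/|\zeta|^2$ (and its negation) then produces $\mathrm{Pr}_\gamma[|X|\geq \epsilon] \leq 2\,e^{-2\alpha^2(\epsilon - |\zeta|\alpha^{-h/2})^2/|\zeta|^2}$. The hypothesis $s < h/2$ guarantees $\epsilon \geq D\alpha^{-s} \gg |\zeta|\alpha^{-h/2}$ for large $\alpha$, so the right-hand side is bounded by $e^{-\beta\alpha^2\epsilon^2}$ for any fixed $\beta < 2/|\zeta|^2$, in particular for $\beta < 1/|\zeta|^2$. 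The weight $1+|\lambda|^2$ is absorbed by Cauchy-Schwarz,
\begin{equation*}
\int_{|X|\geq\epsilon}(1+|\lambda|^2)\rho\,d\lambda \leq \sqrt{\langle(1+|\lambda|^2)^2\rangle_\gamma}\sqrt{\mathrm{Pr}_\gamma[|X|\geq\epsilon]},
\end{equation*}
where the operator identity $|\lambda|^2 + |p|^2/(4\alpha^4) = \mathcal{N}_\Pi + N/(2\alpha^2)$ (with $p_n = -i\partial_{\lambda_n}$), restricted to $\mathrm{Ran}(P)$, delivers polynomial-in-$\alpha$ bounds on $\langle|\lambda|^{2k}\rangle_\gamma$ for $k\in\{1,2\}$, and the polynomial prefactor is absorbed by slightly reducing $\beta$.

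For (\ref{Equation-Large Deviations}) the $|\cdot|_\diamond$ tail is reduced to (\ref{Equation-Large Deviations t}) via a covering argument. Writing $|\lambda|_\diamond^2/4 = \sup_{x\in\mathbb{R}^3}\langle\lambda, M_x\lambda\rangle$ with $M_x$ the positive semidefinite quadratic form on $\mathbb{R}^N$ induced by $\chi_{B_1(x)}(-\Delta)^{-1/2}$ (of operator norm $O(1)$ uniformly in $x$ by Hardy--Littlewood--Sobolev), the momentum cut-off $|k|\leq\Lambda$ built into $\Pi$ makes $x\mapsto M_x$ Lipschitz on scale $\Lambda^{-1}$, while the relevant $x$ lie in a ball of radius polynomial in $\alpha$; hence the supremum reduces to one over polynomially many $x_j$. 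Diagonalising each $M_{x_j}$ expresses the quadratic form as $\sum_k \mu_k^{(j)}|\langle\zeta_k^{(j)},\lambda\rangle|^2$ with $|\zeta_k^{(j)}| = O(1)$ and $\sum_k\mu_k^{(j)} = O(1)$; a union bound over the polynomially many $(j,k)$, combined with (\ref{Equation-Large Deviations t}) applied to each linear form $\langle\zeta_k^{(j)},\lambda\rangle$, completes the argument. The second constraint $s < (1-4\sigma)/5$ in the hypothesis is exactly what ensures the Gaussian exponent $e^{-\beta\epsilon^2\alpha^2}$ dominates the union-bound cardinality, and this covering/discretisation step is the main technical obstacle of the proof.
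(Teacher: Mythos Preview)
Your argument for (\ref{Equation-Large Deviations t}) via Baker--Campbell--Hausdorff and exponential Chebyshev is correct and is a genuinely different route from the paper's. The paper instead bounds $\gamma \le e^{t(\alpha^{-h}-\mathcal N_\Pi)}$ and uses Mehler's kernel to obtain the \emph{pointwise} Gaussian estimate
\[
\rho(\lambda)\ \lesssim\ e^{\alpha^{2-h'}}\Bigl(\tfrac{\alpha^2 w_\alpha}{\pi}\Bigr)^{N/2}e^{-\alpha^2 w_\alpha|\lambda|^2},
\]
after which tails and moments are read off by Gaussian integration. Your operator-theoretic approach is cleaner for the linear functional and even yields the sharper range $\beta<2/|\zeta|^2$. (One detail: since $|\lambda|^2$ and $\mathcal N_\Pi$ do not commute, the operator inequality $|\lambda|^2\le\mathcal N_\Pi+N/(2\alpha^2)$ does not by itself give $\langle|\lambda|^4\rangle_\gamma\le\langle(\mathcal N_\Pi+N/(2\alpha^2))^2\rangle_\gamma$; but your own MGF bound applied to each coordinate $\lambda_n$ supplies the missing polynomial moment control.)

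For (\ref{Equation-Large Deviations}), however, your reduction to linear forms has a genuine gap. The claim $\sum_k\mu_k^{(j)}=O(1)$ is false: $\sum_k\mu_k^{(j)}=\mathrm{Tr}\,M_{x_j}=\tfrac14\|A_{x_j}\|_{\mathrm{HS}}^2$, and this grows with $\alpha$ --- for the sharp cutoff $\Pi_\Lambda$ one computes $\mathrm{Tr}\,M_x=|B_1|(2\pi)^{-3}\!\int_{|k|\le\Lambda}|k|^{-2}\,dk=\tfrac{2\Lambda}{3\pi}$, and the discretised $\Pi$ behaves the same way (the paper records only the upper bound $\|A_x\|_{\mathrm HS}\lesssim\Lambda$). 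Any union bound of the shape $\{\sum_k\mu_kc_k^2\ge\epsilon^2/4\}\subset\bigcup_k\{c_k^2\ge\delta_k^2\}$ is constrained by $\sum_k\mu_k\delta_k^2\le\epsilon^2/4$; feeding (\ref{Equation-Large Deviations t}) back in then gives at best $\exp\bigl(-c\,\epsilon^2\alpha^2/\mathrm{Tr}\,M_x\bigr)$, not $e^{-\beta\epsilon^2\alpha^2}$ with $\alpha$-independent $\beta$. The paper circumvents this by keeping the \emph{quadratic} constraint intact: it inserts $1\le e^{\alpha^2\beta'(|A_x\lambda|^2-\epsilon_*^2)}$ under the Mehler bound and evaluates the resulting Gaussian integral, producing the correction $\det\bigl(1-\tfrac{\beta'}{w_\alpha}A_x^2\bigr)^{-1/2}\le e^{\mu\|A_x\|_{\mathrm HS}^2}$. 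Here the Hilbert--Schmidt norm enters only \emph{additively} in the exponent, and since $\|A_x\|_{\mathrm HS}^2\lesssim\Lambda^2=\alpha^{8(1+\sigma)/5}\ll\epsilon^2\alpha^2$ under the hypothesis $s<(1-4\sigma)/5$, it is absorbed by slightly lowering $\beta$. The covering of the supremum over $x$ is also simpler than your Lipschitz scheme (which, as written, would need a priori control on $|\lambda|$): the paper uses the pointwise inequality $|\lambda|_\diamond\le 65\max\bigl\{\sup_{z\in\mathbb Z^3,\,|z|\le r+1}|A_z\lambda|,\ |A_{\ge r}\lambda|\bigr\}$, obtained just from the fact that every unit ball is covered by at most $64$ lattice-centred unit balls plus the exterior of $B_r$.
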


The restriction to the finite dimensional space $\Pi L^2\! \left(\mathbb{R}^3\right)$ will be essential in the proof of Proposition \ref{Proposition-Large Deviations}, to be precise we will make use of the fact that $N\lesssim \alpha^p$ for a suitable $p>0$, which in particular implies that $N\lesssim e^{\alpha^t}$, uniformly in $\alpha$, for any $t>0$. 
Before we prove Proposition \ref{Proposition-Large Deviations}, we first need  auxiliary results concerning the $|\cdot|_\diamond$ norm.

\begin{defi}
\label{Definition-A_n}
For $x\in \mathbb{R}^3$ and $r>0$, let us define $T_x\lambda:=-2\chi\left(|\cdot -x|\leq 1\right)\left(-\Delta\right)^{-\frac{1}{2}}\varphi$ and $T_{\geq r} \lambda:=-2\chi\left(|\cdot |\geq r\right)\left(-\Delta\right)^{-\frac{1}{2}}\varphi$ with the above identification $\varphi=\sum_{n=1}^{{N}} \lambda_n \varphi_n$. Furthermore let us define the operators $A_x:=\sqrt{T_x^\dagger T_x}$ and $A_{\geq r}:=\sqrt{T_{\geq r}^\dagger T_{\geq r}}$, as well as the constant $\beta_0:=\inf_{x\in \mathbb{R}^3}\|A_x\|^{-2}$.
\end{defi}
Using the operators $A_x$ we can write $|\lambda|_{\diamond}=\sup_{x\in \mathbb{R}^3}|A_x \lambda|$, which is bounded by
\begin{align}
\label{Equation-Diamond estimate}
|\lambda|_{\diamond}\leq 65\max\bigg\{\sup_{z\in \mathbb{Z}^3:|z|\leq r+1}|A_z \lambda|,\, |A_{\geq r} \lambda|\bigg\}
\end{align}
for any $r>0$. In order to see this, note that for any $y\in \mathbb{R}^3$ there exists a $z\in \mathbb{Z}^3$ with $|y-z|< 1$. In case $y\in B_r(0)\cap B_1(x)$, where $x\in \mathbb{R}^3$, we  see that $z$ satisfies $|z|\leq r+1$ and $|x-z|<2$. Denoting the set of such $z$ by $M(x,r)\subset \mathbb{Z}^3$, we obtain $B_1(x)\subset \bigcup_{z\in M(x,r)}B_1(z)\cup \left(\mathbb{R}^3\setminus B_r(0)\right)$. Consequently
\begin{align*}
|\lambda|_{\diamond}\leq \sup_x \sum_{z\in M(x,r)}|A_z \lambda|+|A_{\geq r} \lambda|\leq \sup_x\Big( |M(x,r)|+1\Big)\max_x\Big\{\sup_{z\in M(x,r)}|A_z \lambda|,\, |A_{\geq r} \lambda|\Big\}.
\end{align*}
This concludes the proof of Eq.~(\ref{Equation-Diamond estimate}), since there are at most $64$ elements $z\in \mathbb{Z}^3$ satisfying $|x-z|<2$.

\begin{lem}
\label{Lemma-Trace estimate}
The constant $\beta_0$ from Definition \ref{Definition-A_n} is positive, uniformly in $\alpha$, and $\|A_x\|_{\mathrm{HS}}\lesssim \Lambda$ uniformly in $x\in \mathbb{R}^3$, where $\Lambda$ is defined above Eq.~(\ref{Equation-Definition of strong projection}). Furthermore there exists a constant $v>0$ such that $\|A_{\geq r}\|_{\mathrm{HS}}\lesssim \frac{\alpha^v}{\sqrt{r}}$ for all $\alpha\geq 1$ and $r>0$.
\end{lem}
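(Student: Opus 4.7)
The plan is to verify the three claims separately. The first two are essentially one-line calculations once the correct inequalities are identified; the main work lies in the decay estimate for $A_{\geq r}$.

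For $\beta_0>0$ uniformly in $\alpha$, I would observe that $\|A_x\|^2=\|T_x\|_{\mathrm{op}}^2$ and combine H\"older's inequality on the ball $B_1(x)$ with the Sobolev embedding $(-\Delta)^{-1/2}:L^2(\mathbb{R}^3)\to L^6(\mathbb{R}^3)$ to get
\begin{align*}
|T_x\lambda|^2=4\int_{B_1(x)}\! |(-\Delta)^{-1/2}\varphi|^2\,dy\leq 4|B_1|^{2/3}\|(-\Delta)^{-1/2}\varphi\|_{L^6}^2\lesssim \|\varphi\|_{L^2}^2=|\lambda|^2
\end{align*}
with constants independent of $\alpha$ and $x$, so $\beta_0\geq\mathrm{const}>0$. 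For the Hilbert--Schmidt bound on $A_x$, I would use that $\mathrm{range}(\Pi)\subset\mathrm{range}(\Pi_\Lambda)$, hence $\Pi\leq\Pi_\Lambda$, to conclude
\begin{align*}
\|A_x\|_{\mathrm{HS}}^2=4\,\mathrm{Tr}\!\left(\Pi(-\Delta)^{-1/2}\mathds{1}_{B_1(x)}(-\Delta)^{-1/2}\Pi\right)\leq 4|B_1|\int_{|k|\leq\Lambda}\frac{dk}{(2\pi)^3|k|^2}\lesssim \Lambda,
\end{align*}
which even yields the stronger bound $\|A_x\|_{\mathrm{HS}}\lesssim\sqrt{\Lambda}$ uniformly in $x$.

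The decay bound on $\|A_{\geq r}\|_{\mathrm{HS}}$ is the main obstacle and requires exploiting oscillatory cancellations. Writing $\|A_{\geq r}\|_{\mathrm{HS}}^2=4\sum_n\int_{|y|\geq r}|g_n(y)|^2\,dy$ with $g_n:=(-\Delta)^{-1/2}\varphi_n$ whose Fourier transform is $\hat g_n=\mathds{1}_{C_{z^n}}/(\sqrt{I_n}|k|^2)$ and $I_n:=\int_{C_{z^n}}|k|^{-2}dk$, note that a naive weighted estimate $\int|y|^2|g_n|^2\,dy$ is infinite because $\hat g_n$ has a jump discontinuity across $\partial C_{z^n}$. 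The strategy will be to use the factorization $g_n(y)=e^{iz^n\cdot y}h_n(y)$ with $\hat h_n$ supported in $[-\ell,\ell]^3$, and then Taylor-expand $1/|k+z^n|^2$ at $k=0$. The leading constant term produces a contribution proportional to
\begin{align*}
g_n^{(0)}(y)\propto \frac{e^{iz^n\cdot y}}{|z^n|^2\sqrt{I_n}}\prod_{j=1}^{3}\frac{2\sin(\ell y_j)}{y_j},
\end{align*}
and on $\{|y|\geq r\}$ at least one coordinate must satisfy $|y_j|\geq r/\sqrt{3}$; the one-dimensional tail $\int_{|y_j|\geq r/\sqrt{3}}\sin^2(\ell y_j)/y_j^2\,dy_j\lesssim 1/r$ combined with $\int\sin^2(\ell y_j)/y_j^2\,dy_j\lesssim\ell$ for the remaining two coordinates yields $\int_{|y|\geq r}|g_n^{(0)}|^2\,dy\lesssim 1/(\ell|z^n|^2 r)$ after using $I_n\sim\ell^3/|z^n|^2$ for $|z^n|\gtrsim\ell$. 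The higher-order terms in the Taylor expansion have Fourier transforms vanishing at $k=0$; one integration by parts in $k$ turns each into boundary and interior contributions carrying an extra factor $\ell/|z^n|$ but still inheriting the same $1/r$ decay from the derivatives of the Dirichlet kernel $\partial_{y_i}\prod_j 2\sin(\ell y_j)/y_j$, hence they are subleading.

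Summing the resulting bound over $n$ via the Riemann-sum estimate $\sum_n|z^n|^{-2}\lesssim \Lambda/\ell^3$ produces $\|A_{\geq r}\|_{\mathrm{HS}}^2\lesssim \Lambda/(\ell^4 r)$, which with $\Lambda=\alpha^{4(1+\sigma)/5}$ and $\ell=\alpha^{-4(1+\sigma)}$ is a fixed polynomial in $\alpha$ divided by $r$, and so is of the desired form $\alpha^{2v}/r$. The technical heart of the argument is the combination of the Taylor expansion around the center of each cube with the explicit Dirichlet-kernel identity $\int_{[-\ell,\ell]^3}e^{ik\cdot y}dk=\prod_j 2\sin(\ell y_j)/y_j$: without making this oscillation explicit one cannot extract any decay in $r$, since the jump of $\hat g_n$ forbids moment-based arguments.
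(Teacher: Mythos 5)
Your first two claims are correct and in fact sharper than the paper's argument. For $\beta_0>0$ the H\"older plus Sobolev embedding $(-\Delta)^{-1/2}\colon L^2\to L^6$ is a valid and cleaner route than the paper's, which bounds the bilinear form of $\chi(|\cdot|\le 1)(-\Delta)^{-1/2}$ restricted to low momenta by an explicit Fourier-side Cauchy--Schwarz computation. For $\|A_x\|_{\mathrm{HS}}$, using $\Pi\le\Pi_\Lambda$ together with the constant diagonal kernel of $\Pi_\Lambda(-\Delta)^{-1}$ is legitimate (since $(-\Delta)^{-1/2}\mathds{1}_{B_1(x)}(-\Delta)^{-1/2}\ge 0$, monotonicity of the trace applies) and yields the improved bound $\lesssim\sqrt{\Lambda}$; the paper instead uses $\Pi\le(1+\Lambda^2)(1-\Delta)^{-1}$ together with \eqref{Equation-Second HS norm} and obtains only $\lesssim\Lambda$, which is all the lemma requires.

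The tail estimate for $\|A_{\ge r}\|_{\mathrm{HS}}$ is where the argument has a genuine gap. You correctly identify the mechanism --- the decay in $r$ must come from the compactly supported Fourier transforms of the basis modes --- and the constant Taylor term is handled in full: you factor out the phase $e^{iz^n\cdot y}$, isolate the Dirichlet kernel $\prod_j 2\sin(\ell y_j)/y_j$, and the case split over which $|y_j|\ge r/\sqrt{3}$ gives the claimed $1/r$. But the remainder is dispatched in a single sentence. There is an infinite Taylor series to sum, the tail integrals $\int_{|y_j|\ge r/\sqrt{3}}\big|\partial^\alpha\prod_k\sin(\ell y_k)/y_k\big|^2\,\mathrm{d}y_j$ for general $\alpha$ are not written out, and the geometric convergence of the series (which needs $\sqrt{3}\,\ell/|z^n|<1$, true here but unstated) is taken for granted. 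As written this is a plan for the remainder, not a proof of it.

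Moreover the refinement you are attempting is unnecessary for the stated lemma. Since only some exponent $v>0$ is required, one may estimate $\|\chi(|\cdot|\ge r)(-\Delta)^{-1/2}e_n\|$ by a uniform-in-$n$ bound and pay a crude factor $\sqrt{N}\lesssim\alpha^{p/2}$ for the sum, absorbing all over-counting into $\alpha^v$. This is precisely what the paper does: the uniform single-mode estimate is Corollary~\ref{Corollary-BasisElements}, which rests on the non-stationary-phase bound $|\widehat{\mathds{1}_K f}(y)|\lesssim\|f\|_{C^3(K)}\prod_j(1+|y_j|)^{-1}$ of Lemma~\ref{Lemma-BasisElements} --- one pass of three integrations by parts in each $k_j$ that handles the jump at $\partial C_{z^n}$ once and for all. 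With that estimate in hand, the Taylor expansion, the $|z^n|$-dependent bookkeeping, and the Riemann-sum bound $\sum_n|z^n|^{-2}\lesssim\Lambda/\ell^3$ all become superfluous.
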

\begin{proof}
Note that the space $\Pi  L^2\! \left(\mathbb{R}^3\right)$ is contained in the spectral subspace $-\Delta\leq \Lambda^2$, hence $\Pi \leq \left(1+ \Lambda^2\right)\left(1-\Delta\right)^{-1}$, and therefore
\begin{align*}
\|A_x\|^2_{\mathrm{HS}}&\! =\! 4\! \left\|\chi\left(|\cdot \! -x|\leq 1\right)\left(\! -\! \Delta\right)^{-\frac{1}{2}}\Pi \right\|^2_{\mathrm{HS}}\! \leq\!  4\! \left(\! 1\! +\!  \Lambda^2\right)\! \left\|\chi\left(|\cdot \!  -x|\leq 1\right)\left(-\Delta\right)^{-\frac{1}{2}}\left(1-\Delta\right)^{-\frac{1}{2}}\right\|^2_{\mathrm{HS}}\\
&=4\left(1+ \Lambda^2\right)\left\|\chi\left(|\cdot|\leq 1\right)\left(-\Delta\right)^{-\frac{1}{2}}\left(1-\Delta\right)^{-\frac{1}{2}}\right\|^2_{\mathrm{HS}}.
\end{align*}
Applying Eq.~(\ref{Equation-Second HS norm}) with $\psi=\chi\left(|.|\leq 1\right)$ yields that $\chi\left(|\cdot|\leq 1\right)\left(-\Delta\right)^{-\frac{1}{2}}\left(1-\Delta\right)^{-\frac{1}{2}}$ is Hilbert-Schmidt, hence $\|A_x\|_{\mathrm{HS}}\lesssim \Lambda$. In order to prove the uniform lower bound $\beta_0>0$, it is enough to verify the boundedness of $\chi\left(|.|\leq 1\right)f(-\Delta)$, where $f(t):=\frac{\chi(|t|\leq 1)}{\sqrt{t}}$. An explicit computation in Fourier space yields for $\varphi\in L^2 \!  \left(\mathbb{R}^3\right)$ 
\begin{align*}
&\braket{\varphi|f(-\Delta)\chi\left(|.|\leq 1\right)f(-\Delta)|\varphi}=\int_{|k|\leq 1}\int_{|k|\leq 1}\frac{\widehat{\chi\left(|\cdot|\leq 1\right)}(k-k')\overline{\widehat{\varphi}(k)}\widehat{\varphi}(k')}{|k|\, |k'|}\, \mathrm{dk} \mathrm{dk}\\
&\ \ \ \ \ \ \leq \left\|\widehat{\chi\left(|\cdot|\leq 1\right)}\right\|_\infty\left|\int_{|k|\leq 1}\frac{|\widehat{\varphi}(k)|}{|k|}\, \mathrm{dk}\right|^2\lesssim \|\varphi\|^2.
\end{align*}
Finally we are going to verify $\|A_{\geq r}\|_{\mathrm{HS}}\lesssim \frac{\alpha^v}{\sqrt{r}}$, using that 
\begin{align*}
\|A_{\geq r}\|_{\mathrm{HS}}= 2 \sqrt{\sum_{n=1}^{{N}} \left\|\chi\left(|\cdot |\geq r\right)\left(-\Delta\right)^{-\frac{1}{2}}\varphi_n\right\|^2}\lesssim \sqrt{{N}}\frac{\alpha^v}{\sqrt{r}}
\end{align*}
for a suitable constant $v>0$ by Corollary \ref{Corollary-BasisElements}, where $N$ is the dimension of $\Pi  L^2\! \left(\mathbb{R}^3\right)$. This concludes the proof, since ${N}\lesssim \alpha^p$ for some $p>0$.
\end{proof}

\begin{proof}[Proof of Proposition \ref{Proposition-Large Deviations}]
Making use of Eq.~(\ref{Equation-Diamond estimate}) and defining $\epsilon_*:=\frac{\epsilon}{65}$, we obtain
\begin{align*}
\int_{|\lambda|_\diamond\geq \epsilon}\! \! \left(1\! +\! |\lambda|^2\right)\!  \rho(\lambda)\mathrm{d}\lambda\!  \leq \! \!\! \!  \sum_{|z|\leq r+1}\! \int_{|A_z\lambda|\geq \epsilon_*}\! \! \! \left(1\! +\! |\lambda|^2\right)\!  \rho(\lambda)\mathrm{d}\lambda\,+\!\int_{|A_{\geq r}\lambda|\geq \epsilon_*}\! \! \! \! \left(1\! +\! |\lambda|^2\right) \! \rho(\lambda)\mathrm{d}\lambda,
\end{align*}
where the sum runs over $z\in \mathbb{Z}^3$ with $|z|\leq r+1$. In the following we are going to verify that every contribution of the form $\int_{|A_x\lambda|\geq \epsilon_*}\! \! \! \left(1\! +\! |\lambda|^2\right)\!  \rho(\lambda)\mathrm{d}\lambda$ is exponentially small uniformly in $x\in \mathbb{R}^3$. As a consequence of Eq.~(\ref{Equation-Strong Condensation}), we have for $t\geq 0$ the estimate
\begin{align*}
\gamma\leq \chi\left(\sum_{n=1}^{{N}} a_n^\dagger a_n\leq \alpha^{-h}\right)\leq e^{t\left(\alpha^{-h}-\sum_{n=1}^{{N}} a_n^\dagger a_n\right)}.
\end{align*}
By our assumption on $s$, there exists a $h'$ such that $2s<h'<h$. Consequently we obtain for $t:=\alpha^{2+(h-h')}$, using Mehler's kernel,
\begin{align}
\label{Equation-Estimate of the density}
\rho(\lambda)\! =\! \gamma(\lambda,\lambda)\! \leq \! e^{\alpha^{2-h'}}e^{-t\sum_{n=1}^{{N}} a_n^\dagger a_n}(\lambda,\lambda)\! = \! e^{\alpha^{2-h'}}\! \left(\frac{1}{1-e^{-\alpha^{h-h'}}}\right)^{N}\! \! \left(\frac{ \alpha^2 w_\alpha}{\pi}\right)^{\frac{{N}}{2}}\! \! \! \!   e^{-\alpha^2 w_\alpha |\lambda|^2},
\end{align}
with $w_\alpha:=\mathrm{coth}\left(\alpha^{h-h'}\right)-\mathrm{cosech}\left(\alpha^{h-h'}\right)$. Since $N e^{-\alpha^{h-h'}}\underset{\alpha\rightarrow \infty}\longrightarrow 0$, it is clear that $\left(\frac{1}{1-e^{-\alpha^{h-h'}}}\right)^{N}$ is bounded uniformly in $\alpha$. 
Since $w_\alpha\geq 0$ is strictly increasing in $\alpha$, we can choose $0<\beta'<\beta_0\inf_{\alpha\geq 1}w_\alpha$, where $\beta_0$ is the constant from Definition \ref{Definition-A_n}. Consequently $\|\frac{\beta'}{w_\alpha}|A_x|^2\|<1$ uniformly in $x\in \mathbb{R}^3$ and $\alpha\geq 1$,   and in particular  $\left(1-\frac{\beta'}{w_\alpha}|A_x|^2\right)^{-1}$ is a bounded operator. 
Hence we obtain for $x\in \mathbb{R}^3$ 
\begin{align*}
&\int_{|A_x\lambda|\geq \epsilon_*}\left(1+|\lambda|^2\right) \rho(\lambda)\mathrm{d}\lambda\lesssim e^{\alpha^{2-h'}}\left(\frac{ \alpha^2 w_\alpha}{\pi}\right)^{\frac{{N}}{2}}\int_{|A_x\lambda|\geq \epsilon_*}\left(1+|\lambda|^2\right)e^{-\alpha^2 w_\alpha |\lambda|^2}\mathrm{d}\lambda\\
&\ \ \ \leq e^{\alpha^{2-h'}}\left(\frac{ \alpha^2 w_\alpha}{\pi}\right)^{\frac{{N}}{2}}\int_{\mathbb{R}^{N}} \left(1+|\lambda|^2\right)e^{-\alpha^2 \left(w_\alpha|\lambda|^2+\beta'\epsilon_*^2 -\beta' |A_x \lambda|^2\right)}\mathrm{d}\lambda\\
&\ \ \  =e^{\alpha^{2-h'}}\frac{w_\alpha+\alpha^{-2}\mathrm{Tr}\left(1-\frac{\beta'}{w_\alpha}|A_x|^2\right)^{-1}}{w_\alpha\det \sqrt{1-\frac{\beta'}{w_\alpha}|A_x|^2}}\ e^{-\beta'\epsilon_*^2 \alpha^2}.
\end{align*}
Furthermore, for a suitable, $x$-independent, constant $\mu$
\begin{align}
\nonumber &e^{\alpha^{2-h'}}\frac{w_\alpha\! +\! \alpha^{-2}\mathrm{Tr}\left(1\! -\! \frac{\beta'}{w_\alpha}|A_x|^2\right)^{-1}}{w_\alpha\det \sqrt{1\! -\! \frac{\beta'}{w_\alpha}|A_x|^2}}\! \lesssim \! e^{\alpha^{2-h'}}\! \! \frac{\alpha^p}{\det \! \sqrt{1\! -\! \frac{\beta'}{w_\alpha}|A_x|^2}}\! \\
\label{Equation-Low Exponent}
& = e^{\alpha^{2-h'}+p\ln {\alpha}-\frac{1}{2}\mathrm{Tr}\ln\left(1-\frac{\beta'}{w_\alpha}|A_x|^2\right)}\leq e^{\alpha^{2-h'}+p\ln {\alpha}+\mu\|A_x\|_{\mathrm{HS}}^2}\leq e^{\alpha^{2-h'}+p\ln \alpha+\mu C\Lambda^2},
\end{align}
where we have used the rough estimate $w_\alpha + \alpha^{-2}\mathrm{Tr}\left(1 - \frac{\beta'}{w_\alpha}|A_x|^2\right)^{-1}\lesssim 1+\alpha^{-2}N\lesssim \alpha^p$ for a suitable exponent $p>0$ in the first inequality and Lemma \ref{Lemma-Trace estimate} in the last inequality. Note that the exponent in Eq.~(\ref{Equation-Low Exponent}) is of order $\alpha^{\mathrm{max}\big\{2-h',\frac{8}{5}(1+\sigma)\big\}}\ll \epsilon_*^2 \alpha^2$ since $\Lambda^2=\alpha^{\frac{8}{5}(1+\sigma)}$ and $\epsilon\geq D\alpha^{-s}$ with $s<\mathrm{min}\big\{\frac{h'}{2},\frac{1}{5}(1-4\sigma)\big\}$.

 Defining $r:=\alpha^{2q}$ with $q>v$, where $v$ is the constant from Lemma \ref{Lemma-Trace estimate} and making use of the fact that the number of $z\in \mathbb{Z}^3$ with $|z|\leq r+1$ is of order $r^3=\alpha^{6q}$, we obtain
\begin{align*}
\sum_{|z|\leq r+1}\int_{|A_{z}\lambda|\geq \epsilon_*}\left(1+|\lambda|^2\right) \rho(\lambda)\mathrm{d}\lambda\, \mathrm{d}x\lesssim \alpha^{6q}e^{\alpha^{2-h'}+p\ln \alpha+\mu C\Lambda^2-\beta'\epsilon_*^2\alpha^2}\leq e^{-\beta\epsilon_*^2\alpha^2}
\end{align*}
for $\beta<\beta'$ and $\alpha$ large enough. We have $\|A_{\geq r}\|_\mathrm{HS} \underset{\alpha \rightarrow \infty}{\longrightarrow}0$ by Lemma \ref{Lemma-Trace estimate} and our choice $r=\alpha^{2q}$ with $q>v$. Using Eq.~(\ref{Equation-Estimate of the density}), and an argument similar to the one in Eq.~(\ref{Equation-Low Exponent}), we can therefore estimate $\int_{|A_{\geq r}\lambda|\geq \epsilon_*}\left(1+|\lambda|^2\right) \rho(\lambda)\mathrm{d}\lambda$ by
\begin{align*}
\int_{|A_{\geq r}\lambda|\geq \epsilon_*}&\left(1+|\lambda|^2\right) \rho(\lambda)\mathrm{d}\lambda\lesssim e^{\alpha^{2-h'}}\left(\frac{ \alpha^2 w_\alpha}{\pi}\right)^{\frac{{N}}{2}}\int_{|A_{\geq r}\lambda|\geq \epsilon_*}\left(1+|\lambda|^2\right)e^{-\alpha^2 w_\alpha |\lambda|^2}\mathrm{d}\lambda\\
&\lesssim e^{\alpha^{2-h'}}\frac{\alpha^p}{\det \sqrt{1-\frac{\beta'}{w_\alpha}|A_{\geq r}|^2}}\ e^{-\beta'\epsilon_*^2 \alpha^2}\lesssim e^{\alpha^{2-h'}+p\ln \alpha+\mu \|A_{\geq r}\|_{\mathrm{HS}}^2-\beta'\epsilon_*^2 \alpha^2}.
\end{align*}
Again we observe that the exponent $\alpha^{2-h'}+p\ln \alpha+\mu \|A_{\geq r}\|_{\mathrm{HS}}^2$ is small compared to $\epsilon_*^2 \alpha^2$, which concludes the proof of Eq.~(\ref{Equation-Large Deviations}). 

The proof of Eq.~(\ref{Equation-Large Deviations t}) can be carried out analogously with the help of the operator $A_\zeta\lambda:=\braket{\zeta|\lambda}\frac{\zeta}{\|\zeta\|}$ using the fact that $\|A_\zeta\|_{\mathrm{HS}}=\|A_\zeta\|=|\zeta|$ and the assumption $\beta<\frac{1}{|\xi|^2}$. More precisely we obtain for $\beta<\beta'<\frac{1}{|\xi|^2}$ 
\begin{align*}
\int_{|\braket{\zeta|\lambda}|\geq \epsilon}&\left(1+|\lambda|^2\right) \rho(\lambda)\mathrm{d}\lambda\lesssim e^{\alpha^{2-h'}}\left(\frac{ \alpha^2 w_\alpha}{\pi}\right)^{\frac{{N}}{2}}\int_{|A_\zeta\lambda|\geq \epsilon}\left(1+|\lambda|^2\right)e^{-\alpha^2 w_\alpha |\lambda|^2}\mathrm{d}\lambda\\
&\lesssim e^{\alpha^{2-h'}}\frac{\alpha^p}{\det \sqrt{1-\frac{\beta'}{w_\alpha}|A_\zeta|^2}}\ e^{-\beta'\epsilon^2 \alpha^2}\lesssim e^{\alpha^{2-h'}+p\ln \alpha+\mu \|A_\zeta\|_{\mathrm{HS}}^2-\beta'\epsilon^2 \alpha^2}\leq e^{-\beta \epsilon^2\alpha^2}.
\end{align*}
\end{proof}

\section{Properties of the Pekar Functional}
\label{Section-Properties of the Pekar Functional}
In this section we are going to discuss essential properties of the Pekar functional $\mathcal{F}^\mathrm{Pek}$, and we are going to verify an asymptomatically sharp quadratic approximation for $\mathcal{F}^\mathrm{Pek}\! \left(\varphi\right)$, which is valid for all field configurations $\varphi$ close to a minimizer $\varphi^\mathrm{Pek}$. It has been proven in \cite{FS} that a suitable quadratic approximation of $\mathcal{F}^\mathrm{Pek}$ holds for all configurations $\varphi$ satisfying $\|V_{\varphi-\varphi^\mathrm{Pek}}\|\ll 1$, where
\begin{align}
\label{Equation-Potential}
V_\varphi:=-2\left(-\Delta\right)^{-\frac{1}{2}}\mathfrak{Re}\, \varphi.
\end{align}
In the following we are showing that this result is still valid, in case we substitute the $L^2$-norm with the weaker $\|\cdot \|_\diamond$ norm, which is a hybrid between the $L^2$ and the $L^\infty$ norm defined as
\begin{align}
\label{Equation-L2 diamond}
\|V \|_\diamond:=\sup_{x\in \mathbb{R}^3}\sqrt{\int_{B_{1}(x)}|V(y)|^2\ \mathrm{d}y},
\end{align}
where $B_{1}(x)$ is the unit ball centered at $x\in \mathbb{R}^3$. This will be the content of Lemma \ref{Lemma-Basic Estimate} and Theorem \ref{Theorem-Lower Bound on the Pekar Functional}, respectively. We  have $\|V_\varphi\|_\diamond=|\lambda|_\diamond$ for $\varphi=\sum_{n=1}^N \lambda_n \varphi_n$, where $|\cdot |_\diamond$ is the norm defined in Eq.~(\ref{Equation-Definition_Diamond_Norm_in_Coordinates}). Before we come to the proof of Lemma \ref{Lemma-Basic Estimate}, we first need the subsequent auxiliary Lemma \ref{Lemma-Relative Operator Bound}.

\begin{lem}
\label{Lemma-Relative Operator Bound}
There exists a constant $C>0$ such that the operator inequality 
\begin{align}
\label{Equation-Relative Operator Bound}
V^2\leq C\|V\|_\diamond^2 \left(1-\Delta\right)^2
\end{align}
holds for all (measurable) $V:\mathbb{R}^3\longrightarrow \mathbb{R}$, where $V^2$ is interpreted as a multiplication operator.
\end{lem}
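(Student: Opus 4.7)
\medskip

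\noindent\textbf{Proof proposal.} The plan is to reduce the operator inequality to a pointwise quadratic form bound $\int V^2|\psi|^2\lesssim \|V\|_\diamond^2 \,\|(1-\Delta)\psi\|^2$, which I will establish via a smooth partition of unity combined with the three–dimensional Sobolev embedding $H^2(\mathbb{R}^3)\hookrightarrow L^\infty(\mathbb{R}^3)$.

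First I would fix a smooth (quadratic) partition of unity $\{\eta_z\}_{z\in\mathbb{Z}^3}$ on $\mathbb{R}^3$ with $\sum_z \eta_z^2=1$, $\eta_z(x)=\eta_0(x-z)$, $\mathrm{supp}\,\eta_0\subset B_r(0)$ for some fixed $r>0$, and $\|\eta_0\|_\infty,\|\nabla \eta_0\|_\infty,\|\Delta \eta_0\|_\infty<\infty$ (so in particular the supports of the $\eta_z$ have locally finite overlap). For $\psi\in H^2(\mathbb{R}^3)$ I would split
\begin{align*}
\int_{\mathbb{R}^3} V(x)^2 |\psi(x)|^2\,\mathrm{d}x=\sum_{z\in\mathbb{Z}^3}\int_{B_r(z)} V(x)^2\,\eta_z(x)^2 |\psi(x)|^2\,\mathrm{d}x\leq \sum_{z\in\mathbb{Z}^3}\|V\,\mathds{1}_{B_r(z)}\|_2^2\,\|\eta_z\psi\|_\infty^2,
\end{align*}
and observe that $\|V\,\mathds{1}_{B_r(z)}\|_2^2\lesssim \|V\|_\diamond^2$, uniformly in $z$, since $B_r(z)$ can be covered by a $z$-independent number of unit balls.

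Next I would apply the 3D Sobolev embedding to each localized piece: $\|\eta_z\psi\|_\infty\lesssim \|\eta_z\psi\|_{H^2}\lesssim \|(1-\Delta)(\eta_z\psi)\|_2$, with a constant that is uniform in $z$ by the translation invariance of the embedding. Expanding by the Leibniz rule,
\begin{align*}
(1-\Delta)(\eta_z\psi)=\eta_z(1-\Delta)\psi - 2\nabla\eta_z\cdot\nabla\psi - (\Delta\eta_z)\psi,
\end{align*}
and summing over $z$ with $(a+b+c)^2\leq 3(a^2+b^2+c^2)$, the locally finite overlap gives
\begin{align*}
\sum_{z\in\mathbb{Z}^3}\|(1-\Delta)(\eta_z\psi)\|_2^2\lesssim \|(1-\Delta)\psi\|_2^2 + \|\nabla\psi\|_2^2+\|\psi\|_2^2\lesssim \|(1-\Delta)\psi\|_2^2,
\end{align*}
where in the first term I used $\sum_z \eta_z^2=1$, and in the remaining two the uniform boundedness of $\sum_z |\nabla\eta_z|^2$ and $\sum_z |\Delta\eta_z|^2$ together with $\|\nabla\psi\|^2+\|\psi\|^2\leq \|(1-\Delta)\psi\|^2$ (which is immediate from spectral calculus). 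Combining the three displays yields the claim.

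I do not anticipate a real obstacle: the construction of the partition of unity and the bookkeeping of cross terms in the Leibniz expansion are routine, and the Sobolev embedding $H^2\hookrightarrow L^\infty$ in three dimensions is sharp enough to convert the $L^2$–control of $V$ on unit balls into a pointwise bound on the test function. The only point that needs a little care is ensuring that the constants coming from Sobolev embedding on $B_r(z)$ and from the overlap count of $\{B_r(z)\}$ are independent of $z$, which follows from translation invariance and the fixed radius~$r$.
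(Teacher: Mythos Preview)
Your proof is correct. The paper's argument follows the same overall strategy---localize with a partition of unity, use that $(1-\Delta)^{-1}$ gains enough regularity in three dimensions, and control the commutator terms via the Leibniz rule---but with two cosmetic differences worth noting. First, the paper uses a \emph{continuous} partition $\int \chi_y^2\,\mathrm{d}y=1$ with $\mathrm{supp}\,\chi\subset B_1(0)$ rather than your discrete one over $\mathbb{Z}^3$; this makes the covering of $B_r(z)$ by unit balls unnecessary since the support already has radius~$1$. Second, instead of invoking the Sobolev embedding $H^2(\mathbb{R}^3)\hookrightarrow L^\infty$ directly, the paper first proves an $L^2$ version of the bound, namely $V^2\leq C'\|V\|_2^2(1-\Delta)^2$, via the Hilbert--Schmidt estimate $\|V(1-\Delta)^{-1}\|_{\mathrm{HS}}^2=\|V\|_2^2\int K^2$, and then applies it to $\mathds{1}_{B_1(y)}V$ inside the $y$-integral. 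The two inputs are of course equivalent (the kernel of $(1-\Delta)^{-1}$ lies in $L^2(\mathbb{R}^3)$ precisely because $H^2\hookrightarrow L^\infty$), so neither approach buys anything the other does not; yours is arguably more direct.
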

\begin{proof}
As a first step, we are going to verify that Eq.~(\ref{Equation-Relative Operator Bound}) holds in case we use the $L^2$ norm $\|V\|$ instead of $\|V\|_\diamond$. This follows from $V^2\leq \left\|V(1-\Delta)^{-1}\right\|_{\mathrm{HS}}^2 \left(1-\Delta\right)^2$, where $\|\cdot \|_{\mathrm{HS}}$ is the Hilbert-Schmidt norm, and
\begin{align*}
\left\|V(1-\Delta)^{-1}\right\|_{\mathrm{HS}}^2=\int \int V(x)^2 K(y-x)^2\, \mathrm{d}x\mathrm{d}y=\int K(y)^2\mathrm{d}y\, \|V\|^2
\end{align*}
with $K(y-x)$ being the kernel of the operator $(1-\Delta)^{-1}$. Note that $C':=\int K(y)^2\mathrm{d}y$ is finite, which concludes the first step. In order to obtain the analogue statement for $\|V\|_\diamond$, let $\chi$ be a smooth, non-negative, function with $\mathrm{supp} \left(\chi\right)\subset B_1(0)$ and $\int_{\mathbb{R}^3} \chi(y)^2\, \mathrm{d}y=1$. Defining $\chi_y(x):=\chi(x-y)$ for $y\in \mathbb{R}^3$ and using the previously derived inequality $V^2\leq C'\|V\|^2 \left(1-\Delta\right)^2$,
which holds for any $V\in L^2\! \left(\mathbb{R}^3\right)$, we obtain
\begin{align*}
V^2&=\! \int\! \chi_y V^2\chi_y\, \mathrm{d}y=\! \int\! \chi_y \left(\mathds{1}_{B_1(y)}V\right)^2\chi_y\, \mathrm{d}y\leq  C'\!\int \! \|\mathds{1}_{B_1(y)}V\|^2\, \chi_y \left(1-\Delta\right)^2\chi_y\, \mathrm{d}y\\
&\leq C'\|V\|^2_\diamond \int\chi_y \left(1-\Delta\right)^2\chi_y\, \mathrm{d}y=C'\|V\|^2_\diamond \int\big|(1-\Delta)\chi_y\big|^2\, \mathrm{d}y,
\end{align*}
where $\left|A\right|^2=A^\dagger A$. Furthermore $(1-\Delta)\chi_y=\chi_y(1-\Delta)-2\left(\nabla \chi_y\right)\nabla-\left(\Delta \chi_y\right)$, which yields together with a Cauchy--Schwarz inequality the estimate
\begin{align*}
&\int\big|(1-\Delta)\chi_y\big|^2 \mathrm{d}y\leq 3\int \Big((1-\Delta)\chi_y^2(1-\Delta)-4\nabla \left|\nabla \chi_y\right|^2
\nabla+|\Delta \chi_y|^2\Big)\mathrm{d}y\\
&\ \ \ \ = 3(1-\Delta)^2-12\, \nabla \! \left(\int \left|\nabla \chi_y^2\right|
 \mathrm{d}y\right)\! \nabla +3\int |\Delta \chi_y|^2\mathrm{d}y\lesssim \left(1-\Delta\right)^2,
\end{align*}
where we have used that $ \int \left|\nabla \chi(y)\right|^2\, \mathrm{d}y$ and $\int \left|\Delta\chi(y)\right|^2\, \mathrm{d}y$ are finite.
\end{proof}

In the following we are going to use that we can write the Pekar energy as 
\begin{equation}\label{def:FP}
\mathcal{F}^\mathrm{Pek}\! \left(\varphi\right)=\|\varphi\|^2+\inf \sigma\left(-\Delta+V_\varphi\right),
\end{equation}
where $V_\varphi$ is defined in Eq.~(\ref{Equation-Potential}). As an immediate consequence of Eq.~(\ref{Equation-Relative Operator Bound}) we have $\pm V\leq \sqrt{C}\|V\|_\diamond \left(1-\Delta\right)$ and consequently there exists a $\delta_0>0$ and a contour $\mathcal{C}\subset \mathbb{C}$, such that $\mathcal{C}$ separates the ground state energy $\inf \sigma\left(-\Delta+V\right)$ from the excitation spectrum of $H_V:=-\Delta+V$ for all $V$ with $\|V-V_{\varphi^\mathrm{Pek}}\|_\diamond<\delta_0$, see also \cite{FS}. This allows us to further identify $\mathcal{F}^\mathrm{Pek}\! \left(\varphi\right)$ as
\begin{align}
\label{Equation-Trace formula}
\mathcal{F}^\mathrm{Pek}\! \left(\varphi\right)=\|\varphi\|^2+\mathrm{Tr}\int_\mathcal{C} \frac{z}{z-H_{V_\varphi}}\frac{\mathrm{d}z}{2\pi i}
\end{align}
for all $\varphi$ satisfying $\|V_{\varphi-\varphi^\mathrm{Pek}}\|_\diamond<\delta_0$. Following the strategy in \cite{FS}, we will use Eq.~(\ref{Equation-Trace formula}) to compare $\mathcal{F}^\mathrm{Pek}\! \left(\varphi\right)$ with $e^\mathrm{Pek}=\mathcal{F}^\mathrm{Pek}\! \left(\varphi^\mathrm{Pek}\right)$. Before we do this let us introduce the operators
\begin{align}
\label{Equation-K}
K^\mathrm{Pek}&:=1-H^\mathrm{Pek}=4\left(-\Delta\right)^{-\frac{1}{2}}\psi^\mathrm{Pek}\frac{1-\ket{\psi^\mathrm{Pek}}\bra{\psi^\mathrm{Pek}}}{H_{V^\mathrm{Pek}}-\mu^\mathrm{Pek}}\psi^\mathrm{Pek}\left(-\Delta\right)^{-\frac{1}{2}},\\
\label{Equation-L}
L^\mathrm{Pek}&:=4\left(-\Delta\right)^{-\frac{1}{2}}\psi^\mathrm{Pek}\left(1-\Delta\right)^{-1}\psi^\mathrm{Pek}\left(-\Delta\right)^{-\frac{1}{2}},
\end{align}
where $H^\mathrm{Pek}$ is defined in Eq.~(\ref{Equation-Definition_Hessian}), $\mu^\mathrm{Pek}:=e^\mathrm{Pek}-\|\varphi^\mathrm{Pek}\|^2$ and $\psi^\mathrm{Pek}$ is the, non-negative, ground state of the operator $H_{V^\mathrm{Pek}}$ with $V^\mathrm{Pek}:=V_{\varphi^\mathrm{Pek}}$, which we interpret as a multiplication operator in Eqs.~(\ref{Equation-K}) and ~(\ref{Equation-L}). The following Lemma \ref{Lemma-Basic Estimate} can be proved in the same way as \cite[Proposition 3.3]{FS}, using Lemma~\ref{Lemma-Relative Operator Bound}.

\begin{lem}
\label{Lemma-Basic Estimate}
There exist  constants $c,\delta_0>0$ such that for all $\varphi$ with $\|V_{\varphi-\varphi^\mathrm{Pek}}\|_\diamond<\delta_0$
\begin{align}
\label{Equation-Basic Estimate}
\left|\mathcal{F}^\mathrm{Pek}\! \left(\varphi\right)\! -\! e^\mathrm{Pek}-\braket{\varphi\! -\! \varphi^\mathrm{Pek}|1\! -\! K^\mathrm{Pek}|\varphi-\varphi^\mathrm{Pek}}\right|\! \leq\!  c\|V_{\varphi-\varphi^\mathrm{Pek}}\|_\diamond\braket{\varphi\! -\! \varphi^\mathrm{Pek}|L^\mathrm{Pek}|\varphi\! -\! \varphi^\mathrm{Pek}}.
\end{align}
\end{lem}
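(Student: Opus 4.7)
The proof closely follows the strategy of \cite[Proposition 3.3]{FS}, with Lemma~\ref{Lemma-Relative Operator Bound} replacing the $L^2$-based operator bound used there. Set $\eta := \varphi - \varphi^\mathrm{Pek}$ and $U := V_\eta = V_\varphi - V^\mathrm{Pek}$. The first step is to observe that for $\|V_\eta\|_\diamond < \delta_0$ sufficiently small, the operator inequality $\pm U \leq \sqrt{C}\|V_\eta\|_\diamond(1-\Delta)$ supplied by Lemma~\ref{Lemma-Relative Operator Bound} is enough to guarantee that the ground state of $H_{V_\varphi} = H_{V^\mathrm{Pek}} + U$ remains simple and uniformly separated from the remainder of the spectrum, so that a single contour $\mathcal{C}$ encircling only $\mu^\mathrm{Pek}$ serves for all such $\varphi$ in Eq.~(\ref{Equation-Trace formula}).

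Given such a contour, the plan is to apply the Neumann identity
\[
\frac{1}{z - H_{V_\varphi}} \;=\; R_z \;+\; R_z U R_z \;+\; R_z U R_z U R_z \;+\; R_z U R_z U R_z U \,\frac{1}{z - H_{V_\varphi}},
\]
with $R_z := (z-H_{V^\mathrm{Pek}})^{-1}$, insert it into Eq.~(\ref{Equation-Trace formula}), and evaluate the contour integrals term by term using the residue theorem at the simple eigenvalue $\mu^\mathrm{Pek}$. The zeroth-order integral yields $\mu^\mathrm{Pek} = e^\mathrm{Pek} - \|\varphi^\mathrm{Pek}\|^2$; combining this with the algebraic identity $\|\varphi\|^2 = \|\varphi^\mathrm{Pek}\|^2 + 2\mathfrak{Re}\braket{\varphi^\mathrm{Pek}|\eta} + \|\eta\|^2$ and the first-order integral $\braket{\psi^\mathrm{Pek}|U|\psi^\mathrm{Pek}}$, all linear-in-$\eta$ contributions cancel by virtue of the Euler--Lagrange equation $\varphi^\mathrm{Pek}(x') = \int w_x(x')|\psi^\mathrm{Pek}(x)|^2\,\mathrm{d}x$ for the Pekar minimizer. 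The second-order integral evaluates, again by the residue calculus, to $\braket{\psi^\mathrm{Pek}|U \frac{P^\perp}{\mu^\mathrm{Pek}-H_{V^\mathrm{Pek}}} U|\psi^\mathrm{Pek}} = -\braket{\eta|K^\mathrm{Pek}|\eta}$ by the definition of $K^\mathrm{Pek}$ in Eq.~(\ref{Equation-K}). This accounts for the main term $\braket{\eta|1-K^\mathrm{Pek}|\eta}$ in Eq.~(\ref{Equation-Basic Estimate}), and leaves the cubic remainder
\[
\mathcal{R}(\eta) := \mathrm{Tr}\int_\mathcal{C}\! \frac{z}{2\pi i}\, R_z U R_z U R_z U\, \frac{1}{z - H_{V_\varphi}}\,\mathrm{d}z
\]
to be estimated.

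The main technical obstacle is the bound $|\mathcal{R}(\eta)| \leq c\|V_\eta\|_\diamond\braket{\eta|L^\mathrm{Pek}|\eta}$, which is sharp in the following sense: two of the three factors of $U$ must together reconstruct (up to a bounded operator) the quadratic form $\braket{\eta|L^\mathrm{Pek}|\eta} = \braket{\psi^\mathrm{Pek}|U(1-\Delta)^{-1}U|\psi^\mathrm{Pek}}$ coming from the definition of $L^\mathrm{Pek}$ in Eq.~(\ref{Equation-L}), while the third must supply the small factor $\|V_\eta\|_\diamond$. The strategy is to apply Cauchy--Schwarz inside the trace to isolate one copy of $U$, estimate it by $\sqrt{C}\|V_\eta\|_\diamond(1-\Delta)$ via Lemma~\ref{Lemma-Relative Operator Bound}, and absorb the extra $(1-\Delta)$ into a bounded operator using the fact that $(1-\Delta)R_z$ and $R_z(1-\Delta)$ are uniformly bounded on the contour $\mathcal{C}$. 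Performing the contour integration then picks up the residue at $\mu^\mathrm{Pek}$, producing the projection $|\psi^\mathrm{Pek}\rangle\langle \psi^\mathrm{Pek}|$, so that the remaining pair of $U$'s yields precisely an expression of the form $\braket{\psi^\mathrm{Pek}|U(1-\Delta)^{-1}U|\psi^\mathrm{Pek}}$, identifiable with $\tfrac{1}{4}\braket{\eta|L^\mathrm{Pek}|\eta}$ up to a constant. The key point that distinguishes this argument from \cite{FS} is that the small parameter extracted is $\|V_\eta\|_\diamond$ rather than $\|V_\eta\|_{L^2}$, and Lemma~\ref{Lemma-Relative Operator Bound} is the ingredient which permits this strengthening.
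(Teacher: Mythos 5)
Your argument follows the same route as the paper's proof: it reduces Eq.~(\ref{Equation-Basic Estimate}) to the resolvent-expansion machinery of \cite[Proposition 3.3]{FS} and upgrades the $L^2$-based estimates used there to $\|\cdot\|_\diamond$-based ones via Lemma~\ref{Lemma-Relative Operator Bound}. The only difference is one of presentation: the paper simply cites the explicit $\epsilon$-criterion from FS (a bound in terms of $\|A/(1-A)\|_\mathrm{op}$, $\|B/(1-B)\|_\mathrm{op}$, and a sandwiched resolvent, with $A=V_{\varphi-\varphi^\mathrm{Pek}}(z-H_{V^\mathrm{Pek}})^{-1}$) and shows each term is $\lesssim\|V_{\varphi-\varphi^\mathrm{Pek}}\|_\diamond$, whereas you re-derive the Neumann expansion underlying FS and sketch the cubic-remainder estimate---correct in strategy, though the precise Cauchy--Schwarz factorization and the residue calculus at the higher-order poles are left implicit.
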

\begin{proof}
By taking $\delta_0$ small enough, we can assume for all $V$ with $\|V_{\varphi-\varphi^\mathrm{Pek}}\|_\diamond<\delta_0$  that
\begin{align}
\label{Equation-Well defined inverse}
\sup_{z\in \mathcal{C}}\left\|V_{\varphi-\varphi^\mathrm{Pek}}\frac{1}{z-H_{V^\mathrm{Pek}}}\right\|_{\mathrm{op}}<1,
\end{align}
where $\|\cdot \|_\mathrm{op}$ denotes the operator norm. This immediately follows from 
\begin{align*}
\left\|V_{\varphi-\varphi^\mathrm{Pek}}\frac{1}{H_{V^\mathrm{Pek}}-z}\right\|_\mathrm{op}^2\lesssim \left\|\left(V_{\varphi}-V^\mathrm{Pek}\right)\left(1-\Delta\right)^{-1}\right\|_\mathrm{op}^2\leq C\|V_{\varphi-\varphi^\mathrm{Pek}}\|^2_\diamond,
\end{align*}
where we used Eq.~(\ref{Equation-Relative Operator Bound}) and the fact that the spectrum of $H_{V^\mathrm{Pek}}$ has a positive distance to the contour $\mathcal{C}$, allowing us to bound the operator norm  of $\left(1-\Delta\right)\frac{1}{H_{V^\mathrm{Pek}}-z}$ uniformly in $z\in \mathcal{C}$. Given Eq.~(\ref{Equation-Well defined inverse}), it has been verified in the proof of \cite[Proposition 3.3]{FS} that
\begin{align*}
&\left|\|\varphi\|^2+\mathrm{Tr}\int_\mathcal{C} \frac{z}{z-H_{V_\varphi}}\frac{\mathrm{d}z}{2\pi i}-e^\mathrm{Pek}-\braket{\varphi-\varphi^\mathrm{Pek}|1-K^\mathrm{Pek}|\varphi-\varphi^\mathrm{Pek}}\right|\\
&\ \ \ \ \ \ \ \ \ \ \ \ \ \ \ \ \ \ \ \ \lesssim \epsilon\braket{\varphi-\varphi^\mathrm{Pek}|L^\mathrm{Pek}|\varphi-\varphi^\mathrm{Pek}}
\end{align*}
for $\epsilon:=\sup_{z\in \mathcal{C}}\Big\{\left\|\frac{A}{1-A}\right\|_\mathrm{op}+\left\|\frac{B}{1-B}\right\|_\mathrm{op}+\left\|\left(1-\Delta\right)^{\frac{1}{2}}\frac{1}{z-H_{V^\mathrm{Pek}}}\frac{A}{1-A}\left(1-\Delta\right)^{\frac{1}{2}}\right\|_\mathrm{op}\Big\}$, where we denote $A:=\left(V_{\varphi-\varphi^\mathrm{Pek}}\right)\frac{1}{z-H_{V^\mathrm{Pek}}}$ and $B:=\left(1-\ket{\psi^\mathrm{Pek}}\bra{\psi^\mathrm{Pek}}\right)A^\dagger$. In the following we want to verify that $\epsilon\lesssim \|V_{\varphi-\varphi^\mathrm{Pek}}\|_\diamond$, which concludes the proof by Eq.~(\ref{Equation-Trace formula}). Since $(1-\Delta)\frac{1}{z-H_{V^\mathrm{Pek}}}$ is uniformly bounded in $z$, $\left\|\frac{A}{1-A}\right\|_\mathrm{op}\leq  \frac{\|A\|_\mathrm{op}}{1-\|A\|_\mathrm{op}}\lesssim \|\left(V_{\varphi-\varphi^\mathrm{Pek}}\right)\left(1-\Delta\right)^{-1}\|_\mathrm{op}\lesssim \|V_{\varphi-\varphi^\mathrm{Pek}}\|_\diamond$ by Eq.~(\ref{Equation-Relative Operator Bound}). Similarly $\left\|\frac{B}{1-B}\right\|_\mathrm{op}\lesssim \|V_{\varphi-\varphi^\mathrm{Pek}}\|_\diamond$. Regarding the final term in the definition of $\epsilon$, note that $(1-\Delta)^{\frac{1}{2}}\frac{1}{z-H_{V^\mathrm{Pek}}}(1-\Delta)^{\frac{1}{2}}$ is uniformly bounded in $z$, and therefore
\begin{align*}
&\left\|\left(1\! -\! \Delta\right)^{\frac{1}{2}}\frac{1}{z\! -\! H_{V^\mathrm{Pek}}}\frac{A}{1\! -\! A}\left(1\! -\! \Delta\right)^{\frac{1}{2}}\right\|_\mathrm{op}\! \lesssim \! \left\|\left(1\! -\! \Delta\right)^{-\frac{1}{2}}\frac{A}{1\! -\! A}\left(1\! -\! \Delta\right)^{\frac{1}{2}}\right\|_\mathrm{op}\! =\! \left\|\frac{A'}{1\! -\! A'}\right\|_\mathrm{op}\! ,
\end{align*}
with $A'\! :=\! \left(1-\Delta\right)^{-\frac{1}{2}}A\left(1-\Delta\right)^{\frac{1}{2}}$. Furthermore $\left\|\frac{A'}{1\! -\! A'}\right\|_\mathrm{op}\! \! \! \leq \! \frac{\|A'\|_\mathrm{op}}{1\! -\! \|A'\|_\mathrm{op}}$ and
\begin{align*}
\|A'\|_\mathrm{op}\lesssim \left\|\left(1-\Delta\right)^{-\frac{1}{2}}\left(V_{\varphi-\varphi^\mathrm{Pek}}\right)\left(1-\Delta\right)^{-\frac{1}{2}}\right\|_\mathrm{op}\leq \|\left(V_{\varphi-\varphi^\mathrm{Pek}}\right)\left(1-\Delta\right)^{-1}\|_\mathrm{op}\lesssim \|V_{\varphi-\varphi^\mathrm{Pek}}\|_\diamond.
\end{align*}
\end{proof}

Lemma \ref{Lemma-Basic Estimate} gives a lower bound on $\mathcal{F}^\mathrm{Pek}\! \left(\varphi^\mathrm{Pek}+\xi\right)-e^\mathrm{Pek}$ in terms of a quadratic function $\xi\mapsto \braket{\xi|1-\left(K^\mathrm{Pek}+\epsilon L^\mathrm{Pek}\right)|\xi}$ for $\xi$ satisfying $\|V_\xi\|_\diamond< \min\{\frac{\epsilon}{c},\delta_0\}$. Due to the translation invariance of $\mathcal{F}^\mathrm{Pek}$, this lower bound is however insufficient, since we have for all $\xi\in \mathrm{span}\{\partial_{y_1}\varphi^\mathrm{Pek},\partial_{y_2}\varphi^\mathrm{Pek},\partial_{y_3}\varphi^\mathrm{Pek}\}\setminus \{0\}$ 
\begin{align}
\label{Equation-NegativityOfQuadraticLowerBound}
\braket{\xi|1-\left(K^\mathrm{Pek}+\epsilon L^\mathrm{Pek}\right)|\xi}=\mathrm{Hess}|_{\varphi^\mathrm{Pek}}\mathcal{F}^\mathrm{Pek}[\xi]-\epsilon \braket{\xi|L^\mathrm{Pek}|\xi}=-\epsilon \braket{\xi|L^\mathrm{Pek}|\xi}<0,
\end{align}
i.e. the quadratic lower bound is not even non-negative. In order to improve this lower bound, we will introduce a suitable coordinate transformation $\tau$ in Definition \ref{Definition-tau}. Before we can formulate Definition \ref{Definition-tau} we need some auxiliary preparations.

In the following let $\Pi$ be the projection defined in Eq.~(\ref{Equation-Definition of strong projection}) and let us define the real orthonormal system
\begin{align}
\label{Equation-Orthonormal Basis}
\varphi_n:=\frac{\Pi\partial_{y_n}\varphi^\mathrm{Pek}}{\|\Pi\partial_{y_n}\varphi^\mathrm{Pek}\|}
\end{align}
for $n\in \{1,2,3\}$, which we complete to a real orthonormal basis $\{\varphi_1,..,\varphi_{N}\}$ of $\Pi L^2\! \left(\mathbb{R}^3\right)$. Furthermore let us write $\varphi^\mathrm{Pek}_{x}(y):=\varphi^\mathrm{Pek}(y-x)$ for the translations of $\varphi^\mathrm{Pek}$ and let us define the map $\omega:\mathbb{R}^3\longrightarrow \mathbb{R}^3$ as
\begin{align}
\label{Equation-lambda}
\omega\left(x\right):=\Big(\braket{\varphi_n|\varphi^\mathrm{Pek}_{x}}\Big)_{n=1}^3\in \mathbb{R}^3.
\end{align}
Since $\varphi^\mathrm{Pek} \in H^1\! \left(\mathbb{R}^3\right)$, $\omega$ is differentiable. Moreover, since $\varphi^\mathrm{Pek}$ is invariant under the action of $O\left(3\right)$ and since the operator $\Pi$ commutes with the reflections $y_i\rightarrow -y_i$ and permutations $y_i\leftrightarrow y_j$, it is clear that $\omega(0)=0$. By the same argument we see that $D|_0 \omega$ has full rank and therefore there exists a local inverse $t\mapsto x_t$ for $|t|< \delta_*$ and a suitable constant $\delta_*>0$.

\begin{defi}
\label{Definition-tau}
We define the map $\tau:\Pi L^2\! \left(\mathbb{R}^3\right)\longrightarrow \Pi L^2\! \left(\mathbb{R}^3\right)$ as
\begin{align*}
\tau\left(\varphi\right):=\varphi-f(t^\varphi),
\end{align*}
where $t^\varphi:=\big(\braket{\varphi_1|\varphi},\braket{\varphi_2|\varphi},\braket{\varphi_3|\varphi}\big)\in \mathbb{R}^3$ and $f(t)$ is defined as
\begin{align*}
f(t):=\chi\big(|t|< \delta_*\big)\left(\Pi \varphi^\mathrm{Pek}_{x_t}-\sum_{n=1}^3 t_n \varphi_n\right).
\end{align*}
\end{defi}
The map $\tau$ is constructed in a way such that it \lq\lq flattens\rq\rq\ the manifold of Pekar minimizers $\{\varphi^\mathrm{Pek}_x:x\in \mathbb{R}^3\}$. More precisely, we have that $\tau\left(\Pi\varphi^\mathrm{Pek}_x\right)$ is for all small enough $x\in \mathbb{R}^3$ an element of the linear space spanned by $\{\varphi_1,\varphi_2,\varphi_3\}$. 
A similar construction appears in \cite{BS} and, in a somewhat different way, in  \cite{FeS}.\\

Recall the operators $K^\mathrm{Pek}$ and $L^\mathrm{Pek}$ from Eqs.~(\ref{Equation-K}) and~(\ref{Equation-L}), and let $T_x$ be the translation operator defined by $(T_x \varphi)(y):=\varphi(y-x)$. Then we define the operators $K^\mathrm{Pek}_x:=T_x\, K^\mathrm{Pek}\, T_{-x}$ and $L^\mathrm{Pek}_x:=T_x\, L^\mathrm{Pek}\, T_{-x}$, as well as for $|t|<\epsilon$ with $\epsilon<\delta_*$
\begin{align}
\label{Equation-Definition of J}
J_{t,\epsilon}:=\pi \Big(1-(1+\epsilon)\left(K^\mathrm{Pek}_{x_t}+\epsilon L^\mathrm{Pek}_{x_t}\right)\Big)\pi,
\end{align}
where $\pi:L^2\! \left(\mathbb{R}^3\right)\longrightarrow L^2\! \left(\mathbb{R}^3\right)$ is the orthogonal projection onto the subspace spanned by $\{\varphi_4,\dots, \varphi_N\}$. Furthermore we define $J_{t,\epsilon}:=\pi$ for $|t|\geq \epsilon$. In contrast to the operator $1-\left(K^\mathrm{Pek}+\epsilon L^\mathrm{Pek}\right)$ from Eq.~(\ref{Equation-NegativityOfQuadraticLowerBound}), the operator $J_{t,\epsilon}$ is non-negative for $\epsilon$ small enough, as will be shown in  Lemma \ref{Lemma-Convergence of the trace}. With the operator $J_{t,\epsilon}$ and the transformation $\tau$ at hand we can formulate a strong lower bound for $\mathcal{F}^\mathrm{Pek}\! \left(\varphi\right)-e^\mathrm{Pek}$ in the subsequent Theorem \ref{Theorem-Lower Bound on the Pekar Functional}, where we use the shorthand notation $J_{t,\epsilon} \big[ \varphi \big]:=\langle \varphi|J_{t,\epsilon}|\varphi\rangle$. 

\begin{theorem}
\label{Theorem-Lower Bound on the Pekar Functional}
There exist constants $C>0$, $0<\epsilon_0\leq \delta_*$ and $0<D\leq 1$ such that
\begin{align}
\label{Equation-Pekar Functional}
\mathcal{F}^\mathrm{Pek}\! \left(\varphi\right)\geq e^\mathrm{Pek}+J_{t^\varphi,\epsilon} \big[ \tau\! \left(\varphi \right)\big]-\frac{C}{\epsilon}\left\|\left(1-\Pi\right)\varphi^\mathrm{Pek}_{x_{t^\varphi}}\right\|^2
\end{align}
for all $0<\epsilon< \epsilon_0$ and $\varphi\in \Pi  L^2\! \left(\mathbb{R}^3\right)$ satisfying $\left\|V_{\varphi-\varphi^\mathrm{Pek}}\right\|_\diamond<\epsilon D$ and $|t^\varphi|<\epsilon D$, where $J_{t,\epsilon}$ is defined in Eq.~(\ref{Equation-Definition of J}).
\end{theorem}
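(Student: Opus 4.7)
My plan is to invoke Lemma~\ref{Lemma-Basic Estimate} translated around $\varphi^\mathrm{Pek}_{x_{t^\varphi}}$ rather than around $\varphi^\mathrm{Pek}$, exploiting the translation invariance of $\mathcal{F}^\mathrm{Pek}$ and the fact that every $\varphi^\mathrm{Pek}_y$ is a minimizer. This yields
\begin{align*}
\mathcal{F}^\mathrm{Pek}(\varphi)-e^\mathrm{Pek}\ge \langle\varphi-\varphi^\mathrm{Pek}_{x_{t^\varphi}}|1-K^\mathrm{Pek}_{x_{t^\varphi}}-c\eta L^\mathrm{Pek}_{x_{t^\varphi}}|\varphi-\varphi^\mathrm{Pek}_{x_{t^\varphi}}\rangle,
\end{align*}
where $\eta:=\|V_{\varphi-\varphi^\mathrm{Pek}_{x_{t^\varphi}}}\|_\diamond$, after verifying the applicability hypothesis $\eta<\delta_0$ via $\eta\le\|V_{\varphi-\varphi^\mathrm{Pek}}\|_\diamond+\|V_{\varphi^\mathrm{Pek}-\varphi^\mathrm{Pek}_{x_{t^\varphi}}}\|_\diamond\lesssim \epsilon D$, using the Lipschitz estimate $\|V_{\varphi^\mathrm{Pek}-\varphi^\mathrm{Pek}_y}\|_\diamond\lesssim|y|$ and the bound $|x_t|\lesssim|t|$ near the origin.

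Next I decompose $\varphi-\varphi^\mathrm{Pek}_{x_{t^\varphi}}=a-b$ into two mutually orthogonal pieces, $a:=\varphi-\Pi\varphi^\mathrm{Pek}_{x_{t^\varphi}}\in\Pi L^2(\mathbb{R}^3)$ and $b:=(1-\Pi)\varphi^\mathrm{Pek}_{x_{t^\varphi}}\in(1-\Pi)L^2(\mathbb{R}^3)$. The defining property $\omega(x_{t^\varphi})=t^\varphi$ of $x_t$ kills the components of $a$ along $\varphi_1,\varphi_2,\varphi_3$, so in fact $a\in\pi L^2(\mathbb{R}^3)$; unwinding Definition~\ref{Definition-tau} identifies $a=\pi\tau(\varphi)$, and since $|t^\varphi|<\epsilon D\le\epsilon$ keeps us in the non-trivial branch of the definition of $J_{t^\varphi,\epsilon}$, one has $J_{t^\varphi,\epsilon}[\tau(\varphi)]=\langle a|1-(1+\epsilon)(K^\mathrm{Pek}_{x_{t^\varphi}}+\epsilon L^\mathrm{Pek}_{x_{t^\varphi}})|a\rangle$.

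The heart of the argument is then to expand $\langle a-b|1-K^\mathrm{Pek}_{x_{t^\varphi}}-c\eta L^\mathrm{Pek}_{x_{t^\varphi}}|a-b\rangle$ using the orthogonality $\langle a|b\rangle=0$ and handle the cross terms by Cauchy--Schwarz with weight $\epsilon$. Since both $K^\mathrm{Pek}$ and $L^\mathrm{Pek}$ are manifestly non-negative from their representations~(\ref{Equation-K})--(\ref{Equation-L}), one has $2|\mathrm{Re}\langle a|K^\mathrm{Pek}_{x_{t^\varphi}}|b\rangle|\le\epsilon\langle a|K^\mathrm{Pek}_{x_{t^\varphi}}|a\rangle+\epsilon^{-1}\langle b|K^\mathrm{Pek}_{x_{t^\varphi}}|b\rangle$ and analogously for $L^\mathrm{Pek}$. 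This step is designed to produce exactly the coefficients $(1+\epsilon)$ and $\epsilon(1+\epsilon)$ appearing in $J_{t^\varphi,\epsilon}$, provided $c\eta\le\epsilon$ so that $(1+\epsilon)c\eta\le\epsilon(1+\epsilon)$; this smallness is again enforced by shrinking $D$. The residual $b$-terms combine into $-(1+\epsilon^{-1})\big(\langle b|K^\mathrm{Pek}|b\rangle+c\eta\langle b|L^\mathrm{Pek}|b\rangle\big)+\|b\|^2$, which is absorbed into $-C\epsilon^{-1}\|b\|^2=-C\epsilon^{-1}\|(1-\Pi)\varphi^\mathrm{Pek}_{x_{t^\varphi}}\|^2$ thanks to the boundedness of $K^\mathrm{Pek}$ and $L^\mathrm{Pek}$ on $L^2(\mathbb{R}^3)$.

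The main obstacle will be the simultaneous bookkeeping of several smallness parameters: the condition $\eta<\delta_0$ of Lemma~\ref{Lemma-Basic Estimate}, the condition $c\eta\le\epsilon$ needed in the Cauchy--Schwarz step to keep the $\langle a|L^\mathrm{Pek}_{x_{t^\varphi}}|a\rangle$-coefficient from turning negative, and the condition $|t^\varphi|<\delta_*$ ensuring that Definition~\ref{Definition-tau} applies in its non-trivial branch. All three are enforced uniformly by taking $\epsilon_0\le\delta_*$ and $D\in(0,1]$ small enough, with the key quantitative inputs being the Lipschitz dependence of $y\mapsto\varphi^\mathrm{Pek}_y$ in the diamond norm and the implicit-function bound $|x_t|\lesssim|t|$ near $t=0$.
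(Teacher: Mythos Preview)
Your proposal is correct and follows essentially the same route as the paper's proof: translate, apply Lemma~\ref{Lemma-Basic Estimate} at $\varphi^\mathrm{Pek}_{x_{t^\varphi}}$, split $\varphi-\varphi^\mathrm{Pek}_{x_{t^\varphi}}$ into the orthogonal pieces $a=\varphi-\Pi\varphi^\mathrm{Pek}_{x_{t^\varphi}}=\pi\tau(\varphi)$ and $b=(1-\Pi)\varphi^\mathrm{Pek}_{x_{t^\varphi}}$, and use Cauchy--Schwarz with weight $\epsilon$ on the cross terms to produce the factors $(1+\epsilon)$ and $(1+\epsilon^{-1})$. The only cosmetic difference is that the paper immediately replaces $c\eta$ by $\epsilon$ (since $c\eta\le\epsilon$ and $L^\mathrm{Pek}\ge 0$) and then applies Cauchy--Schwarz to the single positive operator $K^\mathrm{Pek}_{x_t}+\epsilon L^\mathrm{Pek}_{x_t}$, whereas you keep $c\eta$ and treat $K$ and $L$ separately; both lead to the same bound.
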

\begin{proof}
In the following we use the abbreviation $t:=t^\varphi$. Since $\big\|V_{\varphi^\mathrm{Pek}-\varphi^\mathrm{Pek}_{x}}\big\|_\diamond\lesssim |x|$ and $|x_t|\lesssim |t|$ for $|t|\leq \frac{\delta_*}{2}$, we have for all $\varphi$ satisfying $\left\|V_{\varphi-\varphi^\mathrm{Pek}}\right\|_\diamond<D\epsilon$ and $|t|<\min\{D\epsilon,\frac{\delta_*}{2}\}$
\begin{align*}
\big\|V_{T_{-x_t}\varphi-\varphi^\mathrm{Pek}}\big\|_\diamond= \big\|V_{\varphi-\varphi^\mathrm{Pek}_{x_t}}\big\|_\diamond\leq \big\|V_{\varphi-\varphi^\mathrm{Pek}}\big\|_\diamond+\big\|V_{\varphi^\mathrm{Pek}-\varphi^\mathrm{Pek}_{x_t}}\big\|_\diamond\lesssim  \big\|V_{\varphi-\varphi^\mathrm{Pek}}\big\|_\diamond+|t|\lesssim D\epsilon.
\end{align*}
By taking $D$ small enough we obtain $\big\|V_{T_{-x_t}\varphi-\varphi^\mathrm{Pek}}\big\|_\diamond\leq \frac{\epsilon}{c}$ where $c$ is the constant from Lemma \ref{Lemma-Basic Estimate}.  Let us define $\epsilon_0:=\min\left\{c \delta_0,\frac{\delta_*}{2D},\delta_*\right\}$. Using the translation-invariance of $\mathcal{F}^\mathrm{Pek}$ and applying Lemma \ref{Lemma-Basic Estimate} yields 
\begin{align}
\nonumber
\mathcal{F}^\mathrm{Pek}\! \left(\varphi\right)\! -\! e^\mathrm{Pek}\! &=\! \mathcal{F}^\mathrm{Pek}\! \left(T_{-x_t}\varphi\right)\! -\! e^\mathrm{Pek}\! \geq\!  \braket{T_{-x_t}\varphi\! -\! \varphi^\mathrm{Pek}|1\! -\! \left(\! K^\mathrm{Pek}\! +\! \epsilon L^\mathrm{Pek}\! \right)\! |T_{-x_t}\varphi\! -\! \varphi^\mathrm{Pek}}\\
\nonumber
& =\braket{\varphi-\varphi^\mathrm{Pek}_{x_t}|1 - \left(K^\mathrm{Pek}_{x_t}+\epsilon L^\mathrm{Pek}_{x_t}\right)|\varphi-\varphi^\mathrm{Pek}_{x_t}}\\
\nonumber
& \geq \|\varphi-\Pi\varphi^\mathrm{Pek}_{x_t}\|^2 - \braket{\varphi-\varphi^\mathrm{Pek}_{x_t}|K^\mathrm{Pek}_{x_t} + \epsilon L^\mathrm{Pek}_{x_t}|\varphi-\varphi^\mathrm{Pek}_{x_t}}\\
\nonumber
&  \geq \|\varphi-\Pi\varphi^\mathrm{Pek}_{x_t}\|^2-\left(1+\epsilon\right)\braket{\varphi-\Pi\varphi^\mathrm{Pek}_{x_t}|K^\mathrm{Pek}_{x_t}+\epsilon  L^\mathrm{Pek}_{x_t}|\varphi-\Pi\varphi^\mathrm{Pek}_{x_t}}\\
\label{Equation-Quadratic lower bound}
&\ \ \ \ \ \ -\left(1+\epsilon^{-1}\right)\braket{ \left(1-\Pi\right)\varphi^\mathrm{Pek}_{x_t}|K^\mathrm{Pek}_{x_t}+\epsilon  L^\mathrm{Pek}_{x_t}|\left(1-\Pi\right)\varphi^\mathrm{Pek}_{x_t}},
\end{align}
where we have used the positivity of $K^\mathrm{Pek}_x$ and $L^\mathrm{Pek}_x$, and the Cauchy--Schwarz inequality in the last estimate. Note that by construction of $x_t$ as the local inverse of the function $\omega$ from Eq.~(\ref{Equation-lambda}), we have $\braket{\varphi_n|\varphi-\Pi\varphi^\mathrm{Pek}_{x_t}}=0$ for $n\in \{1,2,3\}$ and therefore
\begin{align*}
\varphi-\Pi\varphi^\mathrm{Pek}_{x_t}=\pi\left(\varphi-\Pi\varphi^\mathrm{Pek}_{x_t}\right)=\pi\left(\varphi-f(t)\right)=\pi\left(\tau\left(\varphi\right)\right)
\end{align*}
with $\pi$ being defined below Eq.~(\ref{Equation-Definition of J}), where we used $|t|<\delta_*$. This concludes the proof with $C:=\left(1+\epsilon_0\right)\left(\|K\|_\mathrm{op}+\epsilon_0 \|L\|_\mathrm{op}\right)$.
\end{proof}

\section{Proof of Theorem \ref{Theorem: Main}}
\label{Section-Proof of the lower Bound}
In the following we will combine the results of the previous sections in order to prove the lower bound on the ground state energy $E_\alpha$ in Theorem \ref{Theorem: Main}. We start by verifying the subsequent Lemma \ref{Lemma-CombiningResults}, which provides a lower bound on $E_\alpha$ in terms of an operator that is, up to a coordinate transformation $\tau$ and a non-negative term, a harmonic oscillator. 

 Let us again use the identification $\mathcal{F}\left(\Pi  L^2\! \left(\mathbb{R}^3\right)\right)\cong L^2\! \left(\mathbb{R}^{N}\right)$ utilizing the representation of real functions $\varphi=\sum_{n=1}^{N} \lambda_n \varphi_n\in \Pi  L^2\! \left(\mathbb{R}^3\right)$ by points $\lambda=(\lambda_1,\dots ,\lambda_{{N}})\in \mathbb{R}^{N}$, such that the annihilation operators $a_n:=a\left(\varphi_n\right)$ are given by $a_n=\lambda_n+\frac{1}{2\alpha^2}\partial_{\lambda_n}$, where $\lambda_n$ is the multiplication operator by the function $\lambda\mapsto \lambda_n$ on $L^2\! \left(\mathbb{R}^{N}\right)$, see also Eq.~(\ref{Equation-CreationAndAnnihilation}), where $\Pi $ is the projection from Eq.~(\ref{Equation-Definition of strong projection}) and $\{\varphi_1,\dots ,\varphi_{N}\}$ is the orthonormal basis of $\Pi  L^2\! \left(\mathbb{R}^3\right)$ constructed around Eq.~(\ref{Equation-Orthonormal Basis}). Let us also use for functions $\varphi\mapsto g(\varphi)$ depending on elements $\varphi\in \Pi  L^2\! \left(\mathbb{R}^3\right)$ the convenient notation $g(\lambda):=g\left(\sum_{n=1}^{N} \lambda_n \varphi_n\right)$, where $\lambda\in \mathbb{R}^N$.
 
\begin{lem}
\label{Lemma-CombiningResults}
Let $C>0$ and $0<\sigma\leq \frac{1}{4}$, and assume $s,h$ and $\sigma$ satisfy $2s<h$ and $\sigma<\frac{1-5s}{4}$. Furthermore let us define $\Lambda:=\alpha^{\frac{4}{5}(1+\sigma)}$ and $L:=\alpha^{1+\sigma}$. Then we obtain for any state $\Psi$ satisfying $\braket{\Psi|\mathbb{H}_\Lambda|\Psi}\leq C$, $\mathrm{supp}\left(\Psi\right)\subset B_{4L}(0)$ and 
\begin{align}
\label{Equation-AssumptionCondensation}
\chi\left(W^{-1}_{\varphi^\mathrm{Pek}}\mathcal{N}W_{\varphi^\mathrm{Pek}}\leq \alpha^{-h}\right)\Psi=\Psi,
\end{align}
that
\begin{align}
\nonumber
\braket{\Psi|\mathbb{H}_\Lambda|\Psi}\geq e^\mathrm{Pek}&+\Big\langle\Psi\Big| -\frac{1}{4\alpha^4}\sum_{n=1}^{N}\partial_{\lambda_n}^2+J_{t^\lambda,\alpha^{-s}}\! \big[ \tau\! \left(\lambda \right)\big]+\mathcal{N}-\sum_{n=1}^N a_n^\dagger a_n\Big|\Psi\Big\rangle \! -\frac{{N}}{2\alpha^2}\\
\label{Equation-QuasiQuadraticLowerBOund}
&\ \ \ \ \ \ +O\left(\alpha^{s-\frac{12}{5}(1+\sigma)}+\alpha^{-2(1+\sigma)}\right),
\end{align}
where $t^\varphi$ and $\tau(\varphi)$ are defined in Lemma \ref{Definition-tau} and $J_{t,\epsilon}$ is defined in Eq.~(\ref{Equation-Definition of J}). Furthermore, there exists a $\beta>0$, such that $\braket{\Psi|1-\mathbb{B}|\Psi}\leq e^{-\beta\alpha^{2(1-s)}}$, where $\mathbb{B}$ is the multiplication operator by the function $\lambda\mapsto \chi\left(|t^\lambda|<\alpha^{-s}\right)$.
\end{lem}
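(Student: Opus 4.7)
\textbf{Proof plan for Lemma~\ref{Lemma-CombiningResults}.} The strategy is to reduce $\mathbb{H}_\Lambda$ first to the discretized Hamiltonian $\mathbb{H}^0_{\Lambda,\ell}$ with $\ell=\alpha^{-4(1+\sigma)}$, then to the classical Pekar functional $\mathcal{F}^\mathrm{Pek}$ in the $\lambda$-variables, and finally to apply the quadratic lower bound of Theorem~\ref{Theorem-Lower Bound on the Pekar Functional} on a ``good event'' while treating its complement via Proposition~\ref{Proposition-Large Deviations}. Concretely, Lemma~\ref{Lemma-Lieb-Yamazaki bound} applied with $K=\Lambda$, $y=0$ and support radius $4L$ yields $|\braket{\Psi|\mathbb{H}_\Lambda-\mathbb{H}^0_{\Lambda,\ell}|\Psi}|\lesssim(4L\ell\sqrt{\Lambda}+\sqrt{\ell})\braket{\Psi|-\Delta_x+\mathcal{N}+1|\Psi}$; with the stated choices the prefactor is $\lesssim\alpha^{-2(1+\sigma)}$ (dominated by $\sqrt{\ell}$) and the second factor is $O(1)$ by Eq.~(\ref{Equation-Boundedness from below}) together with the hypothesis $\braket{\mathbb{H}_\Lambda}\leq C$.

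Next I would rewrite $\mathbb{H}^0_{\Lambda,\ell}$ in the coordinates $\lambda\in\mathbb{R}^{N}$. From $a_n=\lambda_n+\tfrac{1}{2\alpha^2}\partial_{\lambda_n}$ one gets $a_n+a_n^\dagger=2\lambda_n$ and $\sum_{n=1}^N a_n^\dagger a_n=|\lambda|^2-\tfrac{N}{2\alpha^2}-\tfrac{1}{4\alpha^4}\sum_n\partial^2_{\lambda_n}$. Since any $\varphi=\sum\lambda_n\varphi_n\in\Pi L^2$ satisfies $\braket{\Pi w_x|\varphi}=\braket{w_x|\varphi}$, a short calculation yields the identity
\begin{align*}
\mathbb{H}^0_{\Lambda,\ell}=\bigl(-\Delta_x-2\braket{w_x|\lambda}+|\lambda|^2\bigr)+\Bigl(\mathcal{N}-\sum_{n=1}^N a_n^\dagger a_n\Bigr)-\frac{N}{2\alpha^2}-\frac{1}{4\alpha^4}\sum_{n=1}^N\partial^2_{\lambda_n}.
\end{align*}
For each fixed $\lambda$ the first bracket is bounded below on $L^2_x$ by $\inf_\psi\mathcal{E}(\psi,\lambda)=\mathcal{F}^\mathrm{Pek}(\lambda)$, see \eqref{def:FP}. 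Applying Theorem~\ref{Theorem-Lower Bound on the Pekar Functional} with $\epsilon=\alpha^{-s}$ (which lies below $\epsilon_0$ for $\alpha$ large) gives the pointwise inequality $\mathcal{F}^\mathrm{Pek}(\lambda)\geq e^\mathrm{Pek}+J_{t^\lambda,\alpha^{-s}}\bigl[\tau(\lambda)\bigr]-C\alpha^s\|(1-\Pi)\varphi^\mathrm{Pek}_{x_{t^\lambda}}\|^2$ on the good set $\mathcal{G}:=\{\lambda:|t^\lambda|<\alpha^{-s}D,\ \|V_{\lambda-\varphi^\mathrm{Pek}}\|_\diamond<\alpha^{-s}D\}$, and the trivial $\mathcal{F}^\mathrm{Pek}(\lambda)\geq e^\mathrm{Pek}$ elsewhere. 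Using the regularity and decay of $\varphi^\mathrm{Pek}$ (Appendix~\ref{Properties of the Pekar Minimizer}) together with the approximation bounds for $\Pi$ (Appendix~\ref{Appendix-Estimates on Operator Norms}), the projection error satisfies $\|(1-\Pi)\varphi^\mathrm{Pek}_{x_{t^\lambda}}\|^2\lesssim\alpha^{-\frac{12}{5}(1+\sigma)}$ uniformly for bounded $|t^\lambda|$, so on $\mathcal{G}$ the correction is $O(\alpha^{s-\frac{12}{5}(1+\sigma)})$.

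To convert the pointwise estimate into the operator inequality in Eq.~(\ref{Equation-QuasiQuadraticLowerBOund}) I would decompose $\chi_\mathcal{G} J_{t^\lambda,\alpha^{-s}}[\tau(\lambda)]=J_{t^\lambda,\alpha^{-s}}[\tau(\lambda)]-\chi_{\mathcal{G}^c}J_{t^\lambda,\alpha^{-s}}[\tau(\lambda)]$ and bound the bad-set contribution using the uniform estimate $|J_{t,\epsilon}[\tau(\lambda)]|\lesssim 1+|\lambda|^2$. After the Weyl shift $\tilde\Psi:=W_{\varphi^\mathrm{Pek}}\Psi$, which acts as the translation $\lambda\mapsto\lambda+\mu$ with $\mu_n=\braket{\varphi_n|\varphi^\mathrm{Pek}}$ (and $\mu_1=\mu_2=\mu_3=0$ by radial symmetry of $\varphi^\mathrm{Pek}$ combined with the odd parity of $\varphi_{1,2,3}$), the hypothesis~(\ref{Equation-AssumptionCondensation}) together with $\sum a_n^\dagger a_n=\mathcal{N}_\Pi\leq\mathcal{N}$ places the reduced density matrix $\tilde\gamma$ of $\tilde\Psi$ on $\mathcal{F}(\Pi L^2)$ inside the spectral subspace of Eq.~(\ref{Equation-Strong Condensation}). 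Proposition~\ref{Proposition-Large Deviations} — whose hypotheses are exactly $2s<h$ and $\sigma<\tfrac{1-5s}{4}$ — then bounds both tails $\{|\lambda-\mu|_\diamond\geq\alpha^{-s}D/2\}$ and $\{|\braket{e_n|\lambda-\mu}|\geq\alpha^{-s}D\}$ of $(1+|\lambda|^2)\tilde\rho(\lambda)$ by $e^{-\beta\alpha^{2(1-s)}}$, which dominates the bad-set integral through the triangle inequality $\|V_{\lambda-\varphi^\mathrm{Pek}}\|_\diamond\leq|\lambda-\mu|_\diamond+\|V_{(1-\Pi)\varphi^\mathrm{Pek}}\|_\diamond$ (the second summand being negligible by the smoothness of $\varphi^\mathrm{Pek}$). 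Applying the same bound with $\zeta=e_n$ for $n=1,2,3$ yields the final claim $\braket{\Psi|1-\mathbb{B}|\Psi}\leq e^{-\beta\alpha^{2(1-s)}}$ directly, since $t^\lambda_n=\lambda_n$ and $\mu_n=0$ for these indices.

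The main obstacle will be the careful bookkeeping in this last step — ensuring that every implicit constant and every translation between the shifted/unshifted, projected/unprojected and good/bad coordinates is uniform in $\alpha$ — so that all error terms fit within the advertised $O(\alpha^{s-\frac{12}{5}(1+\sigma)}+\alpha^{-2(1+\sigma)})$ budget. All other pieces are either direct applications of the cited results or straightforward algebra in the $\lambda$-coordinates.
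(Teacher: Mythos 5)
Your proposal reproduces the paper's argument essentially step by step: reduction from $\mathbb{H}_\Lambda$ to $\mathbb{H}^0_{\Lambda,\ell}$ via Lemma~\ref{Lemma-Lieb-Yamazaki bound}, rewriting in $\lambda$-coordinates and bounding the electron part by $\mathcal{F}^\mathrm{Pek}(\lambda)$, a good/bad-set split using Theorem~\ref{Theorem-Lower Bound on the Pekar Functional} on the good set and Proposition~\ref{Proposition-Large Deviations} (after the Weyl shift and the monotonicity $\sum_n a_n^\dagger a_n\leq\mathcal{N}$) on the bad set, and the same large-deviation estimate on $|t^\lambda|$ for the final claim about $\mathbb{B}$. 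This matches the paper's proof in both structure and in the choice of every auxiliary result, so it is correct and not a genuinely different route.
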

\begin{proof}
%
 Applying Eq.~(\ref{Equation-General estimate for cut-off's}) with $\Lambda$ and $\ell$ as in the definition of $\Pi $, see Eq.~(\ref{Equation-Definition of strong projection}), and $K:=\Lambda$, and utilizing Eq.~(\ref{Equation-Boundedness from below}), we obtain for a suitable $C'$
\begin{align}
\label{Equation-Introduction of Discretization}
\braket{\Psi|\mathbb{H}_\Lambda|\Psi}\geq \braket{\Psi|\mathbb{H}^0_{\Lambda,\ell}|\Psi}-C' \alpha^{-2(1+\sigma)}.
\end{align}
Making use of $\sum_{n=1}^{N} a_n^\dagger a_n=\sum_{n=1}^{N}\left(-\frac{1}{4\alpha^4}\partial_{\lambda_n}^2+\lambda_n^2\right)-\frac{{N}}{2\alpha^2}$ and $a_n+a_n^\dagger=2\lambda_n$, we further have the identity
\begin{align*}
\mathbb{H}^0_{\Lambda,\ell}& =-\Delta_x-2\sum_{n=1}^N \braket{\varphi_n|w_x}\lambda_n+\sum_{n=1}^{N}\left(-\frac{1}{4\alpha^4}\partial_{\lambda_n}^2+\lambda_n^2\right)-\frac{{N}}{2\alpha^2}+\mathcal{N}-\sum_{n=1}^N a_n^\dagger a_n\\
& = -\Delta_x+V_\lambda(x)+\sum_{n=1}^{N}\left(-\frac{1}{4\alpha^4}\partial_{\lambda_n}^2+\lambda_n^2\right)-\frac{{N}}{2\alpha^2}+\mathcal{N}-\sum_{n=1}^N a_n^\dagger a_n,
\end{align*}
with $V_\varphi$ defined in Eq.~(\ref{Equation-Potential}). 
Clearly $-\Delta_x+V_\lambda\geq \inf \sigma\left(-\Delta_x+V_\lambda\right)=\mathcal{F}^\mathrm{Pek}\! \left(\lambda\right)-\sum_{n=1}^{N} \lambda_n^2$, which yields the inequality $\mathbb{H}^0_{\Lambda,\ell}\geq \mathbb{K}+\mathcal{N}-\sum_{n=1}^N a_n^\dagger a_n$ with
\begin{align}
\label{Equation-Hamiltonian without electron part}
\mathbb{K}:=-\frac{1}{4\alpha^4}\sum_{n=1}^{N}\partial_{\lambda_n}^2+\mathcal{F}^\mathrm{Pek}\! \left(\lambda\right)-\frac{{N}}{2\alpha^2}.
\end{align}
Combining Eqs.~(\ref{Equation-Introduction of Discretization}) and~(\ref{Equation-Hamiltonian without electron part}), we obtain
\begin{align}
\label{Equation-K estimate}
\Big\langle \Psi\Big|\mathbb{H}_\Lambda-\mathcal{N}+\sum_{n=1}^N a_n^\dagger a_n\Big|\Psi\Big\rangle+C'\alpha^{-2(1+\sigma)}\geq \braket{\Psi|\mathbb{K}|\Psi}=\braket{\mathbb{K}}_{\gamma},
\end{align}
where $\gamma$ is the reduced density matrix on the Hilbert space $\mathcal{F}\left(\Pi  L^2\! \left(\mathbb{R}^3\right)\right)\cong L^2\! \left(\mathbb{R}^{N}\right)$ corresponding to the state $\Psi$, i.e. we trace out the electron component as well as all the modes in the orthogonal complement of $\Pi  L^2\! \left(\mathbb{R}^3\right)$, 
\begin{align*}
\gamma:=\mathrm{Tr}_{L^2\! \left(\mathbb{R}^3\right)\otimes \mathcal{F}\left(L^2\! \left(\mathbb{R}^3\right)\right)\rightarrow \mathcal{F}\left(\Pi  L^2\! \left(\mathbb{R}^3\right)\right)}\left[\, \ket{\Psi}\bra{\Psi}\, \right].
\end{align*}
Note that we have the inequality $W_{\Pi\varphi^\mathrm{Pek}}^{-1}\left(\sum_{n=1}^{N} a_n^\dagger a_n\right)W_{\Pi\varphi^\mathrm{Pek}}\leq W_{\varphi^\mathrm{Pek}}^{-1}\, \mathcal{N}W_{\varphi^\mathrm{Pek}}$. The operators on the left and right hand side commute, and consequently (\ref{Equation-AssumptionCondensation}) implies that  $\chi\left( W_{\Pi\varphi^\mathrm{Pek}}^{-1}\left(\sum_{n=1}^{N} a_n^\dagger a_n\right)W_{\Pi\varphi^\mathrm{Pek}}\leq \alpha^{-h}\right)\Psi=\Psi$. This in particular means that the transformed reduced density matrix $\widetilde{\gamma}:=W_{\Pi\varphi^\mathrm{Pek}} \gamma W_{\Pi\varphi^\mathrm{Pek}}^{-1}$ satisfies
\begin{align}
\label{Equation-Strong condensation of a concrete sequence_coordinate version}
\chi\left( \sum_{n=1}^{N} a_n^\dagger a_n\leq \alpha^{-h}\right)\widetilde{\gamma}=\widetilde{\gamma}.
\end{align}
Using the identification $\varphi=\sum_{n=1}^{N} \lambda_n \varphi_n$ as before, the Weyl operator $W_{\Pi \varphi^\mathrm{Pek}}$ acts as $\left(W_{\Pi \varphi^\mathrm{Pek}}\Psi\right)\left(\lambda\right)=\Psi\left(\lambda+\lambda^\mathrm{Pek}\right)$ with $\lambda^\mathrm{Pek}:=\big(\braket{\varphi_1|\varphi^\mathrm{Pek}},\dots ,\braket{\varphi_{N}|\varphi^\mathrm{Pek}}\big)$. Due to Eq.~(\ref{Equation-Strong condensation of a concrete sequence_coordinate version}), and the fact that $2s<h$ and $\sigma < \frac{1-5s}{4}$, the assumptions of Proposition \ref{Proposition-Large Deviations} are satisfied, and therefore we obtain for any $D>0$ the existence of a constant $\beta>0$ such that for $\alpha$ large enough
\begin{align}
\label{Equation-Small probability 1}
&\int_{|\lambda-\lambda^\mathrm{Pek}|_\diamond\geq \alpha^{-s}D}\left(1+|\lambda-\lambda^\mathrm{Pek}|^2\right) \rho(\lambda)\mathrm{d}\lambda=\int_{|\lambda|_\diamond\geq \alpha^{-s}D}\left(1+|\lambda|^2\right) \widetilde{\rho}(\lambda)\mathrm{d}\lambda\leq e^{-\beta \alpha^{2(1-s)}},\\
& \nonumber \int_{\left|t^\lambda\right|\geq \alpha^{-s}D}\left(1+|\lambda-\lambda^\mathrm{Pek}|^2\right) \rho(\lambda)\mathrm{d}\lambda \leq \sum_{n=1}^3\int_{|\lambda_n|\geq \frac{\alpha^{-s}}{\sqrt{3}}D}\left(1+|\lambda-\lambda^\mathrm{Pek}|^2\right) \rho(\lambda)\mathrm{d}\lambda\\
\label{Equation-Small probability 2}
&\ \ \ \ \ = \sum_{n=1}^3\int_{|\lambda_n|\geq \frac{\alpha^{-s}}{\sqrt{3}}D}\left(1+|\lambda|^2\right) \widetilde{\rho}(\lambda)\mathrm{d}\lambda\leq e^{-\beta \alpha^{2(1-s)}},
\end{align}
where $\rho$ and $\widetilde{\rho}$ are the density functions corresponding to $\gamma$ and $\widetilde{\gamma}$, respectively, and where we have used $t^\lambda=(\lambda_1,\lambda_2,\lambda_3)\in \mathbb{R}^3$. For the concrete choice $D:=1$, Eq.~(\ref{Equation-Small probability 2}) immediately yields the claim $\braket{\Psi|1-\mathbb{B}|\Psi}=\int_{\left|t^\lambda\right|\geq \alpha^{-s}} \rho(\lambda)\mathrm{d}\lambda\leq e^{-\beta\alpha^{2(1-s)}}$.

In order to verify Eq.~(\ref{Equation-QuasiQuadraticLowerBOund}), we need to find a sufficient lower bound for the expectation value $\braket{\mathbb{K}}_{\gamma}$, where $\mathbb{K}$ is the operator from Eq.~(\ref{Equation-Hamiltonian without electron part}). Recall the definition of the transformation $\tau:\Pi L^2\! \left(\mathbb{R}^3\right)\longrightarrow \Pi L^2\! \left(\mathbb{R}^3\right)$ from Definition \ref{Definition-tau} and the operator $J_{t,\epsilon}$ from Eq.~(\ref{Equation-Definition of J}). As a first step we will provide a lower bound on $\braket{\mathcal{F}^\mathrm{Pek}\!\left(\lambda\right)}_{\gamma}$, using Eq.~(\ref{Equation-Pekar Functional}) and the fact that $\sup_{|t|\leq t_0}\left\|\left(1-\Pi \right)\varphi^\mathrm{Pek}_{x_{t}}\right\|^2\lesssim \alpha^{-\frac{12}{5}(1+\sigma)}$ for $t_0$ small enough, which follows from Lemma \ref{Lemma-Cut-off residuum estimate} together with $x_t\underset{t\rightarrow 0}{\longrightarrow} 0$. We define the operator $\mathbb{A}:=\chi\left(|\lambda-\lambda^\mathrm{Pek}|_\diamond<\alpha^{-s}D\right)\chi\left(\left|t^\lambda\right|<\alpha^{-s}D\right)$, where $D$ is as in Theorem \ref{Theorem-Lower Bound on the Pekar Functional}, and estimate
\begin{align}
\nonumber 
&\braket{\mathcal{F}^\mathrm{Pek}\! \left(\lambda\right)}_{\gamma}=\braket{\mathcal{F}^\mathrm{Pek}\! \left(\lambda\right)\mathbb{A}}_{\gamma}+\braket{\mathcal{F}^\mathrm{Pek}\! \left(\lambda\right)\left(1-\mathbb{A}\right)}_{\gamma}\\
\nonumber 
&\ \geq \Big\langle\big(e^\mathrm{Pek}+J_{t^\lambda,\alpha^{-s}}\! \big[ \tau\! \left(\lambda \right)\big]\big)\mathbb{A}\Big\rangle_{\gamma}+\braket{\mathcal{F}^\mathrm{Pek}\! \left(\lambda\right)\left(1-\mathbb{A}\right)}_{\gamma}+O\left(\alpha^{s-\frac{12}{5}(1+\sigma)}\right)\\
\label{Equation-Potential lower bound}
&\ =\Big\langle e^\mathrm{Pek}\!+\!J_{t^\lambda,\alpha^{-s}}\! \big[ \tau\! \left(\lambda \right)\big]\Big\rangle_{\gamma}\! \! \!+\!\Big\langle X \Big\rangle_{\gamma}+\!O\left(\! \alpha^{s-\frac{12}{5}(1+\sigma)}\! \right)
\end{align}
with $X:=\left(\mathcal{F}^\mathrm{Pek}\! \left(\lambda\right)-e^\mathrm{Pek}-J_{t^\lambda,\alpha^{-s}}\! \big[ \tau\! \left(\lambda \right)\big]\right)\left(1-\mathbb{A}\right)$. Using Eqs.~(\ref{Equation-Small probability 1}) and~(\ref{Equation-Small probability 2}) as well as the fact that $1-\mathbb{A}\leq \chi\left(|\lambda-\lambda^\mathrm{Pek}|_\diamond\geq D\alpha^{-s}\right)+\chi\left(\left|t^\lambda\right|\geq D\alpha^{-s}\right)$, we obtain $\Big\langle X \Big\rangle_{\gamma}\lesssim e^{-\beta\alpha^{2(1-s)}}$, where we have used that $\mathcal{F}^\mathrm{Pek}\! \left(\lambda\right)$ and $J_{t^\lambda,\alpha^{-s}}\! \big[ \tau\! \left(\lambda \right)\big]$ are bounded by $C(1+|\lambda|^2)$ for suitable $C>0$. By Eq.~(\ref{Equation-Potential lower bound}) we therefore have the estimate $\braket{\mathcal{F}^\mathrm{Pek}\! \left(\lambda\right)}_{\gamma}\geq \Big\langle e^\mathrm{Pek}\!+\!J_{t^\lambda,\alpha^{-s}}\! \big[ \tau\! \left(\lambda\right)\big]\Big\rangle_{\gamma}\! \! \!+\!O\left(\! \alpha^{s-\frac{12}{5}(1+\sigma)}\! \right)$, and consequently
\begin{align}
\label{Equation-H_*}
\braket{\mathbb{K}}_{\gamma}\geq e^\mathrm{Pek}\! \! +\Big\langle \!  -\frac{1}{4\alpha^4}\sum_{n=1}^{N}\partial_{\lambda_n}^2+J_{t^\lambda,\alpha^{-s}}\! \big[ \tau\! \left(\lambda \right)\big]\Big\rangle_{\gamma}\! \! \! \! -\frac{{N}}{2\alpha^2}+O\left(\alpha^{s-\frac{12}{5}(1+\sigma)}\right).
\end{align}
Since $\Big\langle \!  -\frac{1}{4\alpha^4}\sum_{n=4}^{N}\partial_{\lambda_n}^2+J_{t^\lambda,\alpha^{-s}}\! \big[ \tau\! \left(\lambda \right)\big]\Big\rangle_{\gamma}=\Big\langle \Psi\Big|\!  -\frac{1}{4\alpha^4}\sum_{n=4}^{N}\partial_{\lambda_n}^2+J_{t^\lambda,\alpha^{-s}}\! \big[ \tau\! \left(\lambda \right)\big]\Big|\Psi \Big\rangle$, this concludes the proof together with Eq.~(\ref{Equation-K estimate}).
\end{proof}

In the following, let $\Psi_\alpha$ be the sequence of states constructed in Theorem \ref{Theorem-Existence of a strong condensate}, satisfying $\braket{\Psi_\alpha|\mathbb{H}_\Lambda|\Psi_\alpha}-E_\alpha\lesssim \alpha^{-2(1+\sigma)}$, $\mathrm{supp}\left(\Psi_\alpha\right)\subset B_{4L}(0)$ with $L=\alpha^{1+\sigma}$ and strong condensation with respect to $\varphi^\mathrm{Pek}$, i.e. $\chi\left( W_{\varphi^\mathrm{Pek}}^{-1}\, \mathcal{N}W_{\varphi^\mathrm{Pek}}\leq \alpha^{-h}\right)\Psi_\alpha=\Psi_\alpha$, and furthermore let $s<\frac{1}{29}$ be a given constant and let us choose $\sigma$ and $h$ such that $2s<h<\frac{2}{29}$ and $\frac{s}{2}\leq \sigma <\frac{1-5s}{4}$. Note that $h<\frac{2}{29}$ makes sure that the assumption of Theorem \ref{Theorem-Existence of a strong condensate} is satisfied, while $2s<h$ and $\sigma <\frac{1-5s}{4}$ are necessary in order to apply Lemma \ref{Lemma-CombiningResults}. The final assumption $\frac{s}{2}\leq \sigma$ will be useful later in Eq.~(\ref{Equation-FinalEstimate}) in order to make sure that $\alpha^{-2(1+\sigma)}\leq \alpha^{-(2+s)}$. Making use of $-\frac{1}{4\alpha^4}\sum_{n=1}^3 \partial_{\lambda_n}^2\geq 0$ and $\mathcal{N}\geq \sum_{n=1}^N a_n^\dagger a_n$, we obtain by Lemma \ref{Lemma-CombiningResults} that
\begin{align}
\label{Equation-EssentialEnergyEstimate}
E_\alpha \geq  e^\mathrm{Pek} \!+ \!\Big\langle\Psi_\alpha\Big| \! -\frac{1}{4\alpha^4}\sum_{n=4}^{N}\partial_{\lambda_n}^2 \!+ \!J_{t^\lambda,\alpha^{-s}}\! \big[ \tau\! \left(\lambda \right)\big]\Big|\Psi_\alpha\Big\rangle \! - \!\frac{{N}}{2\alpha^2} \!+ \!O\left(\alpha^{-2(1+\sigma)}\right)
\end{align}
for a suitable $C'$, where we have used $\alpha^{s-\frac{12}{5}(1+\sigma)}\leq \alpha^{-2(1+\sigma)}$ and $E_\alpha-\braket{\Psi_\alpha|\mathbb{H}_\Lambda|\Psi_\alpha}\gtrsim -\alpha^{-2(1+\sigma)}$. In order to further estimate the expectation value in Eq.~(\ref{Equation-EssentialEnergyEstimate}), let us define the unitary transformation $\left(\mathcal{U}\Psi\right)(\lambda):=\Psi\left(\tau'\left(\lambda\right)\right)$ with $\tau'\left(\lambda\right):=\Big(\big\langle \varphi_n|\tau\left(\lambda\right)\big\rangle \Big)_{n=1}^{N}\in \mathbb{R}^N$. Since $\tau'$ acts as a shift operator on each of the planes $X_t:=\{\lambda:(\lambda_1,\lambda_2,\lambda_3)=t\}$ for $t\in \mathbb{R}^3$, it is clear that $\mathrm{det}\, \mathrm{D}|_\lambda \tau'=1$, which in particular means that the operator $\mathcal{U}$ is indeed unitary, and we have $\partial_{\lambda_n}=\mathcal{U}^{-1}\partial_{\lambda_n}\mathcal{U}$ for $n\geq 4$. Furthermore we define the operator 
\begin{align*}
\mathbb{Q}_{t,\epsilon}:= -\frac{1}{4\alpha^4}\sum_{n=4}^{N}\partial_{\lambda_n}^2+\sum_{n,m=1}^{N}\left(J_{t,\epsilon}\right)_{n,m}\, \lambda_n\lambda_m
\end{align*}
with $\left(J_{t,\epsilon}\right)_{n,m}:=\braket{\varphi_n|J_{t,\epsilon}|\varphi_m}$. Note that $\left(J_{t,\epsilon}\right)_{n,m}=\left(J_{t,\epsilon}\right)_{m,n}=0$ in case $n\in \{1,2,3\}$, i.e. the operator $\mathbb{Q}_{t,\epsilon}$ depends only on the variables $\lambda_n$ for $n\geq 4$ and not on $t^\lambda=(\lambda_1,\lambda_2,\lambda_3)$, hence it acts on the Fock space $\mathcal{F}\big(\mathrm{span}\{\varphi_4,\dots ,\varphi_{N}\}\big)\cong L^2\! \left(\mathbb{R}^{N-3}\right)$ only. Utilizing the fact that $\mathcal{U}^{-1}J_{t^\lambda,\alpha^{-s}}\! \big[ \tau\! \left(\lambda \right)\big]\, \mathcal{U}=J_{t^\lambda,\alpha^{-s}}\! \big[ \lambda\big]=\sum_{n,m=1}^{N}\left(J_{t^\lambda,\alpha^{-s}}\right)_{n,m}\, \lambda_n\lambda_m$, where we used that $\mathcal{U}^{-1}t^\lambda\mathcal{U}=t^\lambda$, we obtain
\begin{align*}
&\mathcal{U}^{-1}\! \! \left( \! -\frac{1}{4\alpha^4}\sum_{n=4}^{N}\partial_{\lambda_n}^2\!\! +\!J_{t^\lambda,\alpha^{-s}}\! \big[ \tau\! \left(\lambda \right)\big] \right) \!  \mathcal{U}=\mathbb{Q}_{t^\lambda,\alpha^{-s}}\geq \mathbb{Q}_{t^\lambda,\alpha^{-s}}\mathbb{B}\geq \inf_{|t|<\alpha^{-s}}\inf \sigma\left(\mathbb{Q}_{t,\alpha^{-s}}\right)\mathbb{B},
\end{align*}
where $\mathbb{B}$ is as in Lemma \ref{Lemma-CombiningResults}. Here we have used $\mathbb{Q}_{t^\lambda,\alpha^{-s}}\geq 0$, which follows from Lemma \ref{Lemma-Convergence of the trace}, as well as the fact that $1-\mathbb{B}$ is non-negative and commutes with $\mathbb{Q}_{t^\lambda,\alpha^{-s}}$. Applying this inequality with respect to the state $\widetilde{\Psi}_\alpha:=\mathcal{U}^{-1}\Psi_\alpha$ yields
\begin{align}
\nonumber
&\Big\langle \Psi_\alpha\Big|  -\frac{1}{4\alpha^4}\sum_{n=4}^{N}\partial_{\lambda_n}^2\! +\! J_{t^\lambda,\alpha^{-s}}\! \big[ \tau\! \left(\lambda \right)\big]\Big|\Psi_\alpha \Big\rangle \geq \inf_{|t|<\alpha^{-s}}\inf \sigma\left(\mathbb{Q}_{t,\alpha^{-s}}\right)\big\langle \widetilde{\Psi}_\alpha\big| \mathbb{B} \big| \widetilde{\Psi}_\alpha \big\rangle \\
\label{Equation-Applying U}
&\ \ \ \geq \inf_{\left|t\right|<\alpha^{-s}}\! \! \! \inf \! \sigma\left(\mathbb{Q}_{t,\alpha^{-s}}\right)\! -\! \frac{{N}}{2\alpha^2}\big\langle \widetilde{\Psi}_\alpha\big| 1-\mathbb{B}\big| \widetilde{\Psi}_\alpha \big\rangle
\end{align}
where we have used $J_{t,\epsilon}\leq 1$, and therefore $\inf \sigma\left(\mathbb{Q}_{t,\epsilon}\right)\leq \frac{{N}}{2\alpha^2}$.  By Lemma \ref{Lemma-CombiningResults}, we know that $\big\langle \widetilde{\Psi}_\alpha\big| 1-\mathbb{B}\big| \widetilde{\Psi}_\alpha \big\rangle=\big\langle \Psi_\alpha\big| 1-\mathbb{B}\big| \Psi_\alpha \big\rangle\leq e^{-\beta \alpha^{2-2s}}$. Combining Eqs.~(\ref{Equation-EssentialEnergyEstimate}) and~(\ref{Equation-Applying U}), and making use of the fact that ${N}\lesssim \alpha^p$ for some $p>0$, yields
\begin{align}
\label{Equation-Pre final}
E_\alpha\geq e^\mathrm{Pek}+\inf_{\left|t\right|<\alpha^{-s}}\inf \sigma\left(\mathbb{Q}_{t,\alpha^{-s}}\right)-\frac{{N}}{2\alpha^2}+O\left(\alpha^{-2(1+\sigma)}\right).
\end{align}
Since the operator $\mathbb{Q}_{t,\alpha^{-s}}$ is quadratic in $\partial_{\lambda_n}$ and $\lambda_n$, we have an explicit formula for its ground state energy, given by
\begin{equation}
\inf \sigma\left(\mathbb{Q}_{t,\alpha^{-s}}\right)-\frac{N}{2\alpha^2}=-\frac{\mathrm{Tr}_{\Pi L^2\! \left(\mathbb{R}^3\right)}\! \! \left[1 -\sqrt{  J_{t,\alpha^{-s}} }\right]}{2\alpha^2},
\end{equation}
where we used the fact that $J_{t,\alpha^{-s}}\geq 0$  for $\alpha$ large enough, as shown in Lemma \ref{Lemma-Convergence of the trace}. Using Eq.~(\ref{Equation-Convergence of the trace}), we can approximate this quantity by
\begin{align*}
\sup_{|t|<\alpha^{-s}}\left|\mathrm{Tr}_{\Pi L^2\!  \left(\mathbb{R}^3\right)}\! \!\left[1 -\sqrt{  J_{t,\alpha^{-s}} }\, \right]-\mathrm{Tr}\left[1-\sqrt{H^\mathrm{Pek}}\, \right]\right|\lesssim \alpha^{-s}+\alpha^{-\frac{1}{5}},
\end{align*}
where $H^\mathrm{Pek}$ is defined in Eq.~(\ref{Equation-Definition_Hessian}). Consequently  Eq.~(\ref{Equation-Pre final}) yields
\begin{align}
\label{Equation-FinalEstimate}
E_\alpha\! -\! e^\mathrm{Pek}\! +\! \frac{1}{2\alpha^2}\mathrm{Tr}\left[1\! -\! \sqrt{H^\mathrm{Pek}}\, \right]\gtrsim \! - \alpha^{-2(1+\sigma)}\!  -\! \alpha^{-(2+s)}\! -\! \alpha^{-\left(2+\frac{1}{5}\right)},
\end{align}
which concludes the proof, since all the terms on the right side are of order $\alpha^{-(2+s)}$.

\section{Approximation by Coherent States}
\label{Convergence to Coherent States}

This section is devoted to the proof of Theorem \ref{Theorem-Coherent States}, which states that asymptotically the phonon part of any low energy state is a convex combination of the coherent states $\Omega_{\varphi^\mathrm{Pek}_x}$ with $x\in \mathbb{R}^3$, where the convex combination is taken on the level of density matrices. As a central tool we will verify in Lemma \ref{Lemma-Coherent State Methode} an asymptotic formula for the expectation value $\big\langle\Psi\big|\widehat{F}\big|\Psi\big\rangle$ in terms of the lower symbol $\mathbb{P}_y$ corresponding to the state $\Psi$, see Eq.~(\ref{Equation-Definition of the y-dependent measure}). Furthermore we will make use of the inequality
\begin{align}
\label{Equation-CoercivityPekarFunctional}
\inf_{x\in \mathbb{R}^3}\|\varphi-\varphi^\mathrm{Pek}_x\|^2\lesssim \mathcal{F}^\mathrm{Pek}\! \left(\varphi\right)- e^\mathrm{Pek}
\end{align}
derived in \cite[Lemma~7]{FeRS}, which implies that the only coherent states $\Omega_\varphi$ with a low energy have their point of condensation $\varphi$ close to the manifold of Pekar minimizers $\{\varphi^\mathrm{Pek}_x:x\in \mathbb{R}^3\}$. We start with the subsequent Lemma \ref{Lemma-Fock space expression for F hat}, which provides an asymptotic formula for $\widehat{F}$ operators in terms of creation and annihilation operators.

\begin{lem}
\label{Lemma-Fock space expression for F hat}
Let $m\in \mathbb{N}$ and $C>0$ be given constants, $\{g_n:n\in \mathbb{N}\}$ an orthonormal basis of $L^2\!\left(\mathbb{R}^3\right)$ and let us denote $a_n:=a(g_n)$. Then there exists a constant $T>0$ such that for all functions $F$ of the form
\begin{align}
\label{Equation-F function}
F\left(\rho\right)=\int\dots \int f(x_1,\dots , x_m)\, \mathrm{d}\rho(x_1)\dots \mathrm{d}\rho(x_m),
\end{align}
with $f:\mathbb{R}^{3\times m}\longrightarrow \mathbb{R}$ bounded, and states $\Psi$ satisfying $\chi\left(\mathcal{N}\leq C\right)\Psi=\Psi$, we can approximate the operator $\widehat{F}$ from Definition \ref{Definition-Hat operators} by
\begin{align}
\label{Equation-Rigorous Definition}
\left|\big\langle\Psi\big|\widehat{F}\big|\Psi\big\rangle-\sum_{I,J\in \mathbb{N}^m} f_{I,J}\big\langle\Psi\big|a_{I_1}^\dagger\dots a_{I_m}^\dagger a_{J_1}\dots a_{J_m}\big|\Psi\big\rangle\right|\leq  T\|f\|_\infty\alpha^{-2},
\end{align}
where we interpret $f$ as a multiplication operator on $L^2\! \left(\mathbb{R}^3\right)^{\otimes^m}\cong L^2\! \left(\mathbb{R}^{3\times m}\right)$ and denote the matrix elements $f_{I,J}:=\braket{g_{I_1}\otimes \dots \otimes g_{I_m}|f|g_{J_1}\otimes \dots \otimes g_{J_m}}$.
\end{lem}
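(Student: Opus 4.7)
The plan is a direct Fock-space computation that reduces both sides of the claimed bound to explicit multiplication operators on each $n$-phonon sector and compares them combinatorially. As a preliminary step, note that both $F(\rho)$ and the normal-ordered polynomial $\sum_{I,J} f_{I,J}\,a^\dagger_{I_1}\cdots a^\dagger_{I_m}a_{J_1}\cdots a_{J_m}$ depend only on the symmetrization of $f$ under permutations of its $m$ arguments (the measure $\rho^{\otimes m}$ is symmetric, and the bosonic creation/annihilation operators commute among themselves), so one may assume $f$ symmetric at the cost of replacing $T$ by $m!\,T$.

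Next I would identify both sides as multiplication operators on the $n$-phonon sector $L^2_\mathrm{sym}(\mathbb{R}^{3n})$. Directly from Definition~\ref{Definition-Hat operators}, $\widehat{F}$ acts on $\Psi_n$ as multiplication by
\begin{align*}
F\Big(\alpha^{-2}\sum_{k=1}^n\delta_{x_k}\Big)=\alpha^{-2m}\sum_{k_1,\dots,k_m=1}^n f(x_{k_1},\dots,x_{k_m}).
\end{align*}
On the other hand, writing $a_n=\alpha^{-1}b_n$ with $b_n$ satisfying the standard canonical commutation relations and applying the usual Wick-ordering identity for the second quantization of an $m$-body operator, one finds that $\sum_{I,J\in\mathbb{N}^m} f_{I,J}\,a^\dagger_{I_1}\cdots a^\dagger_{I_m}a_{J_1}\cdots a_{J_m}$ acts on $\Psi_n$ as multiplication by
\begin{align*}
\alpha^{-2m}\sum_{\substack{i_1,\dots,i_m=1\\ \text{all distinct}}}^{n} f(x_{i_1},\dots,x_{i_m}).
\end{align*}
The two multiplication operators therefore differ by $\alpha^{-2m}$ times the sum of $f(x_{k_1},\dots,x_{k_m})$ over tuples $(k_1,\dots,k_m)\in\{1,\dots,n\}^m$ containing at least one repetition, and elementary counting bounds the number of such tuples by $C_m\,n^{m-1}$ for a constant $C_m$ depending only on $m$.

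Summing over the Fock decomposition of $\Psi$, the resulting error in expectation is at most $\alpha^{-2m}C_m\|f\|_\infty\braket{\Psi|N^{m-1}|\Psi}$, where $N:=\alpha^{2}\mathcal{N}$ denotes the unrescaled particle number. The hypothesis $\chi(\mathcal{N}\leq C)\Psi=\Psi$ forces $N\leq\alpha^2 C$ on the support of $\Psi$, so $\braket{\Psi|N^{m-1}|\Psi}\leq(\alpha^2 C)^{m-1}$, and the total bound collapses to $C_m C^{m-1}\|f\|_\infty\alpha^{-2}$, yielding the claim with $T=m!\,C_m C^{m-1}$. The only step needing genuine care is the identification of the normal-ordered polynomial with the multiplication operator over distinct tuples; this is the standard Wick manipulation of the CCR, adapted to the rescaled relations by collecting $2m$ factors of $\alpha^{-1}$, so the main obstacle is merely routine combinatorial bookkeeping rather than anything conceptual.
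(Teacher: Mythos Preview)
Your proof is correct and follows essentially the same route as the paper: both identify $\widehat{F}$ on the $n$-sector as multiplication by $\alpha^{-2m}\sum_{k\in\{1,\dots,n\}^m} f(x^{k_1},\dots,x^{k_m})$, identify the normal-ordered polynomial as the same sum restricted to distinct tuples, bound the number of tuples with a repetition by $C_m n^{m-1}$, and use $n\leq C\alpha^2$. Your initial symmetrization of $f$ is unnecessary (the identification over distinct tuples holds for arbitrary bounded $f$ since $\Psi_n$ is symmetric), but harmless.
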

\begin{proof}
By the assumption $\chi\left(\mathcal{N}\leq C\right)\Psi=\Psi$, we can represent the state $\Psi$ as $\Psi=\bigoplus_{n\leq C\alpha^2} \Psi_n$ where $\Psi_n(y,x^1,\dots ,x^{n})$ is a function of the electron variable $y$ and the $n$ phonon coordinates $x^k\in \mathbb{R}^3$. As in the proof of Lemma \ref{Lemma-IMS Formula}, we will suppress the dependence on the electron variable $y$ in our notation. Using the definition of $\widehat{F}$ in Definition \ref{Definition-Hat operators}, as well as the notation $X=(x^1,\dots,x^n)$, we can write
\begin{align*}
\big\langle\Psi\big|\widehat{F}\big|\Psi\big\rangle&=\sum_{n\leq C\alpha^2}\int_{\mathbb{R}^{3n}}F\left(\alpha^{-2}\sum_{k=1}^n\delta_{x^k}\right)|\Psi_n(X)|^2\mathrm{d}X\\
&=\alpha^{-2m}\sum_{n\leq C\alpha^2}\sum_{k\in \{1,\dots,n\}^m}\int_{\mathbb{R}^{3n}}f(x^{k_1},\dots,x^{k_m})|\Psi_n(X)|^2\mathrm{d}X.
\end{align*}
Defining $\mathcal{K}$ as the set of all $k\in \{1,\dots,n\}^m$ satisfying $k_i\neq k_j$ for all $i\neq j$, we can further express the operator $\sum_{I,J\in \mathbb{N}^m} f_{I,J}\, a_{I_1}^\dagger\dots a_{I_m}^\dagger a_{J_1}\dots a_{J_m}$ as
\begin{align*}
\sum_{I,J\in \mathbb{N}^m}\! \! \!  f_{I,J}\big\langle\Psi\big|a_{I_1}^\dagger\dots a_{I_m}^\dagger a_{J_1}\dots a_{J_m}\big|\Psi\big\rangle= \alpha^{-2m}\! \! \sum_{n\leq C\alpha^2}\sum_{k\in \mathcal{K}}\int_{\mathbb{R}^{3n}}\! \! f(x^{k_1},\dots,x^{k_m})|\Psi_n(X)|^2\mathrm{d}X.
\end{align*}
Consequently we can identify the left hand side of Eq.~(\ref{Equation-Rigorous Definition}) as 
\begin{align*}
&\left|\alpha^{-2m}\sum_{n\leq C\alpha^2}\sum_{k\in \{1,\dots,n\}^m\setminus \mathcal{K}}\int_{\mathbb{R}^{3n}}f(x^{k_1},\dots,x^{k_m})|\Psi_n(X)|^2\mathrm{d}X\right| \\ &\leq 
\|f\|_\infty \sum_{n\leq C\alpha^2}\left(\sum_{k\in \{1,\dots,n\}^m\setminus \mathcal{K}}\alpha^{-2m}\right)\int_{\mathbb{R}^{3n}}|\Psi_n(X)|^2\mathrm{d}X.
\end{align*}
Since $\sum_{k\in \{1,\dots,n\}^m\setminus \mathcal{K}}\alpha^{-2m}=\left(n^m-\frac{n!}{(n-m)!}\right)\alpha^{-2m}\leq m 2^m n^{m-1}\alpha^{-2m}\lesssim \alpha^{-2}$ for $n\leq C\alpha^2$ and since $\sum_{n\leq C\alpha^2}\int_{\mathbb{R}^{3n}}|\Psi_n(X)|^2\mathrm{d}X=\|\Psi\|^2=1$, this concludes the proof.
\end{proof}

In the following we are going to define the lower symbol $\mathbb{P}_y$ corresponding to a state $\Psi\in L^2\Big(\mathbb{R}^3,\mathcal{F}\big(L^2\! \left(\mathbb{R}^3\right)\big)\Big)$. Since we consider the Fock space over the infinite dimensional Hilbert space $L^2\! \left(\mathbb{R}^3\right)$, we need to modify the usual definition of the lower symbol by introducing suitable localizations. For $0<s\leq \frac{4}{27}$ and $y\in \mathbb{R}^3$, let us define $\ell_*:=\alpha^{-\frac{5}{2}s}$ and $\Lambda_*:=\alpha^{2s}$, and the projection 
\begin{align}
\label{Equation-Definition of weak projection}
\Pi_{y}:=\Pi^y_{\Lambda_*,\ell_*},
\end{align}
see Definition \ref{Definition-Pi}. We have $N_*:=\mathrm{dim}\,\Pi_{y}L^2\! \left(\mathbb{R}^3\right)\lesssim \left({\Lambda_*}/{\ell_*}\right)^3\leq \alpha^2$ by our assumption $s\leq \frac{4}{27}$. Using the notation $\{e_{y,1},\dots ,e_{y,{N_*}}\}$ for the orthonormal basis of $\Pi_{y}L^2\! \left(\mathbb{R}^3\right)$ from Definition \ref{Definition-Pi}, we introduce for $\xi\in \mathbb{C}^{N_*}$ the coherent states $\Omega_{y,\xi}:=e^{\alpha^2 a^\dagger\left(\varphi_{y,\xi}\right)- \alpha^2 a\left(\varphi_{y,\xi}\right)}\Omega$, where $\Omega$ is the vacuum in $\mathcal{F}\left(\Pi_y L^2\! \left(\mathbb{R}^3\right)\right)$ and $\varphi_{y,\xi}:=\sum_{n=1}^{N_*} \xi_n e_{y,n}\in \Pi_{y}L^2\! \left(\mathbb{R}^3\right)$. Furthermore we define wave-functions $\Psi_y$ localized in the electron coordinates $x$ as 
\begin{align}
\label{Definition-Psi_y}
\Psi_{y}(x):=L_*^{-\frac{3}{2}}\chi\left(\frac{x-y}{L_*}\right)\Psi(x),
\end{align}
where $y\in \mathbb{R}^3$ and $L_*:=\alpha^{\frac{s}{2}}$, and $\chi$ is a smooth non-negative function with $\mathrm{supp} \left(\chi\right)\subset B_1(0)$ and $\int \chi(y)^2\, \mathrm{d}y=1$. For the following construction, note that we can identify $L^2\Big(\mathbb{R}^3,\mathcal{F}\big(L^2\! \left(\mathbb{R}^3\right)\big)\Big)\cong\mathcal{F}\left(\Pi_y L^2\! \left(\mathbb{R}^3\right)\right)\otimes L^2\! \Big(\mathbb{R}^3,\mathcal{F}\Big(\Pi_y L^2\! \left(\mathbb{R}^3\right)^\perp\Big)\Big)$. Let us now define measures $\mathbb{P}_y$ on $\mathbb{C}^{N_*}\cong \mathbb{R}^{2{N_*}}$ corresponding to the state $\Psi_y$ as 
\begin{align}
\label{Equation-Definition of the y-dependent measure}
\frac{\mathrm{d}\mathbb{P}_y}{\mathrm{d}\xi}:=\frac{1}{\pi^{N_*}}\left\|\Theta_{y,\xi}\Psi_{y}\right\|^2,
\end{align}
where $\Theta_{y,\xi}$ is the orthogonal projection onto the set spanned by elements of the form $\Omega_{y,\xi}\otimes \widetilde{\Psi}$ with $\widetilde{\Psi}\in L^2\Big(\mathbb{R}^3,\mathcal{F}\Big(\Pi_y L^2\! \left(\mathbb{R}^3\right)^\perp\Big)\Big)$. Note that the coherent states $\Omega_{y,\xi}$ provide a resolution of the identity $\frac{1}{\pi^{N_*}}\int_{\mathbb{C}^{N_*}}\ket{\Omega_{y,\xi}}\bra{\Omega_{y,\xi}}\mathrm{d}\xi=1_{\mathcal{F}\left(\Pi_y L^2\! \left(\mathbb{R}^3\right)\right)}$, see for example \cite{LT}, and consequently the projections $\Theta_{y,\xi}$ satisfy $\frac{1}{\pi^{N_*}}\int_{\mathbb{C}^{N_*}}\Theta_{y,\xi}\, \mathrm{d}\xi=1$. In particular we see that the total mass of the measure $\mathbb{P}_y$ is $\int \mathrm{d}\mathbb{P}_y=\|\Psi_y\|^2$ and therefore
\begin{align*}
\iint \mathrm{d}\mathbb{P}_y\mathrm{d}y=\int \|\Psi_y\|^2\mathrm{d}y=\|\Psi\|^2=1.
\end{align*}
 In the following Lemma \ref{Lemma-Coherent State Methode} and Corollary \ref{Corollary-Coherent State Methode} we will provide an asymptotic formula for the expectation value $\big\langle\Psi_y\big|\widehat{F}\big|\Psi_y\big\rangle$, respectively $\big\langle\Psi\big|\widehat{F}\big|\Psi\big\rangle$, in terms of the measures $\mathbb{P}_y$.

\begin{lem}
\label{Lemma-Coherent State Methode}
Given $m\in \mathbb{N}$, $C>0$ and $g\in L^2\! \left(\mathbb{R}^3\right)$, there exists a $T>0$ such that for all $F$ of the form~(\ref{Equation-F function}), $y\in \mathbb{R}^3$ and $\epsilon>0$, and states $\Psi$ satisfying $\chi\left(\mathcal{N}\leq C\right)\Psi=\Psi$
\begin{align}
\label{Equation-Coherent State Methode}
\frac{1}{T\|f\|_{\infty}}\! \left| \big\langle \Psi_y\big|\widehat{F}\,\big|\Psi_y \big\rangle \! -\! \int F\left(|\varphi_{y,\xi}|^2\right) \mathrm{d}\mathbb{P}_y\! \left(\xi\right) \right|\! \leq \!  \left(\! \frac{{N_*}}{\alpha^2}+\epsilon\! \right)\!  \|\Psi_y\|^2\! +\! \epsilon^{-1}\big\langle \Psi_y\big|\mathcal{N}^y_{>{N_*}}\big|\Psi_y\big\rangle,
\end{align}
with $\mathcal{N}^y_{>{N_*}}:=\mathcal{N}-\sum_{n=1}^{N_*} a_{y,n}^\dagger a_{y,n}$ and $a_{y,n}:=a\left(e_{y,n}\right)$, and furthermore
\begin{align}
\label{Equation-Coherent State Methode_Second Part}
\frac{1}{T}\left| \big\langle \Psi_y\big|W_{g}^{-1} \mathcal{N}W_{g}\big|\Psi_y \big\rangle\!  - \! \! \int \! \! \|\varphi_{y,\xi}\! -\! g\|^2 \mathrm{d}\mathbb{P}_y\! \left(\xi\right)\right|\! \leq \! \left(\! \frac{{N_*}}{\alpha^2}\! +\! \epsilon\! \right)\!  \|\Psi_y\|^2\! +\! \epsilon^{-1}\big\langle \Psi_y\big|\mathcal{N}^y_{>{N_*}}\big|\Psi_y\big\rangle,
\end{align}
where $W_{g}$ is the corresponding Weyl transformation.
\end{lem}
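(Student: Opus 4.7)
The plan is to approximate $\langle\Psi_y|\widehat F|\Psi_y\rangle$ by a double sum of normal-ordered expectations via Lemma~\ref{Lemma-Fock space expression for F hat}, choosing the orthonormal basis $\{g_n\}$ of $L^2\!\left(\mathbb{R}^3\right)$ so that $g_n = e_{y,n}$ for $1\leq n\leq N_*$. That lemma would produce, up to an error $O(\|f\|_\infty\alpha^{-2}\|\Psi_y\|^2)$, the sum $\sum_{I,J\in\mathbb{N}^m} f_{I,J}\,\langle\Psi_y|a^\dagger_{I_1}\cdots a^\dagger_{I_m} a_{J_1}\cdots a_{J_m}|\Psi_y\rangle$, which I would split into an inside piece $\Sigma_{\mathrm{in}}$ where all $2m$ indices lie in $\{1,\dots,N_*\}$ and a remainder $\Sigma_{\mathrm{out}}$.

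For $\Sigma_{\mathrm{in}}$, the corresponding operator $A$ acts only on $\mathcal{F}\left(\Pi_y L^2\!\left(\mathbb{R}^3\right)\right)$, so $\Sigma_{\mathrm{in}} = \mathrm{Tr}[A\gamma_y^\Pi]$ for the partial trace $\gamma_y^\Pi$ of $|\Psi_y\rangle\langle\Psi_y|$ onto $\mathcal{F}\left(\Pi_y L^2\!\left(\mathbb{R}^3\right)\right)$. Using $a_{y,n}\Omega_{y,\xi}=\xi_n\Omega_{y,\xi}$, a direct computation identifies the lower symbol $\langle\Omega_{y,\xi}|A|\Omega_{y,\xi}\rangle=\langle\varphi_{y,\xi}^{\otimes m}|f|\varphi_{y,\xi}^{\otimes m}\rangle=F(|\varphi_{y,\xi}|^2)$. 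The target integral can then be written as $\int F(|\varphi_{y,\xi}|^2)\,\mathrm{d}\mathbb{P}_y(\xi)=\mathrm{Tr}[\widetilde A\,\gamma_y^\Pi]$, where $\widetilde A:=\int\langle\Omega_{y,\xi}|A|\Omega_{y,\xi}\rangle|\Omega_{y,\xi}\rangle\langle\Omega_{y,\xi}|\,\mathrm{d}\xi/\pi^{N_*}$ is the anti-Wick quantization of the lower symbol, and the crux would be the estimate $|\mathrm{Tr}[(A-\widetilde A)\gamma_y^\Pi]|\lesssim \|f\|_\infty\, N_*\,\alpha^{-2}\,\|\Psi_y\|^2$. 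I would prove this via the standard conversion of anti-normal into normal ordering in the rescaled CCR $[a,a^\dagger]=\alpha^{-2}$: the $k$-th-order correction is a Wick polynomial of degree $2(m-k)$ whose coefficients are obtained by contracting $k$ index-pairs of $f_{I,J}$, each contraction contributing at most $N_*\|f\|_\infty$ and being weighted by $\alpha^{-2k}$. Since $m$ is fixed, only the $k=1$ correction will dominate, yielding the $N_*\alpha^{-2}$ bound.

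For $\Sigma_{\mathrm{out}}$ every surviving monomial contains some $a_n$ or $a_n^\dagger$ with $n>N_*$. Pulling such a factor out of the Fock expectation and applying Cauchy--Schwarz in the form $|\langle u|v\rangle|\leq\tfrac{\epsilon}{2}\|u\|^2+\tfrac{1}{2\epsilon}\|v\|^2$ would produce one side of the form $\sum_{n>N_*}\|a_n\Psi'\|^2$ for an auxiliary vector $\Psi'$ obtained by applying the remaining lower-order monomial to $\Psi_y$, which is controlled by $\langle\Psi_y|\mathcal{N}^y_{>N_*}|\Psi_y\rangle$ times a polynomial in $C$ thanks to the cutoff $\chi\left(\mathcal{N}\leq C\right)\Psi=\Psi$, and another side containing only $2m-1$ inside creators/annihilators whose expectation is bounded by a constant times $\|\Psi_y\|^2$ by the same cutoff and the CCR. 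Combining yields the claimed $\epsilon\|\Psi_y\|^2+\epsilon^{-1}\langle\Psi_y|\mathcal{N}^y_{>N_*}|\Psi_y\rangle$ contribution in~(\ref{Equation-Coherent State Methode}). For~(\ref{Equation-Coherent State Methode_Second Part}) I would apply the same decomposition to the already normal-ordered quadratic operator $W_g^{-1}\mathcal{N}W_g=\sum_n(a_n^\dagger-\overline{\langle\varphi_n|g\rangle})(a_n-\langle\varphi_n|g\rangle)$: its inside lower symbol is $\sum_{n\leq N_*}|\xi_n-\langle e_{y,n}|g\rangle|^2=\|\Pi_y g-\varphi_{y,\xi}\|^2$, the orthogonal modes contribute the constant $\|(1-\Pi_y)g\|^2$, and their sum reconstructs $\|\varphi_{y,\xi}-g\|^2$; the cross terms between inside and outside are again controlled by the $\epsilon/\epsilon^{-1}$ argument.

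The hard part will be the sharp $N_*\alpha^{-2}$ bound for the inside contribution, which requires truncating the Wick/anti-Wick expansion after the first correction and verifying that each further contraction adds only a bounded factor $N_*\alpha^{-2}\lesssim 1$, so that the full series is dominated by its leading term. The identification of the lower symbol and the Cauchy--Schwarz estimates for the outside contribution are by comparison essentially algebraic and pose no additional difficulty.
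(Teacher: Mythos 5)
Your proposal follows essentially the same route as the paper's proof: apply Lemma~\ref{Lemma-Fock space expression for F hat} with a basis extending $\{e_{y,n}\}$, split the Wick-ordered double sum into an "inside" part supported on $\Pi_y^{\otimes m}$ (called $\widetilde F$ in the paper) and an "outside" remainder; handle the remainder by an $\epsilon$-weighted Cauchy--Schwarz producing the $\epsilon\|\Psi_y\|^2+\epsilon^{-1}\langle\mathcal{N}^y_{>N_*}\rangle$ terms; and control the inside part by relating Wick and anti-Wick ordering via the resolution of identity and bounding each of the $n$-fold contraction corrections by $\|f\|_\infty(N_*/\alpha^2)^n\langle\mathcal{N}^{m-n}\rangle$, then summing using $N_*\lesssim\alpha^2$ and the particle-number cutoff. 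The treatment of the second inequality via the normal-ordered form of $W_g^{-1}\mathcal{N}W_g$, separating $\Pi_y g$ from $(1-\Pi_y)g$, also mirrors the paper's argument, which works with $W_g^{-1}\mathcal{N}W_g=\widehat G - a(g)-a^\dagger(g)+\|g\|^2$ and applies~\eqref{Equation-Coherent State Methode} to $G(\rho)=\int\mathrm{d}\rho$.
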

\begin{proof}
Let $\{g_n:n\in \mathbb{N}\}$ be a completion of $\{e_{y,1},\dots ,e_{y,{N_*}}\}$ to an orthonormal basis of $L^2\! \left(\mathbb{R}^3\right)$ and let us define $a_n:=a\left(g_n\right)$. We further introduce an operator $\widetilde{F}$ 
as
\begin{align}
\label{Equation-Tilde F}
\widetilde{F}:=\! \! \! \! \! \! \!  \sum_{I,J\in \{1,\dots ,{N_*}\}^m}\! \! \! \! \! \! \! \! f_{I,J}\, a_{I_1}^\dagger\! \dots a_{I_m}^\dagger a_{J_1}\! \dots  a_{J_m}\! =\! \! \! \! \!  \sum_{I,J\in \mathbb{{N}}^m}\!  \! \! \! \! \left(\Pi_y^{\otimes^m}f\Pi_y^{\otimes^m}\right)_{I,J} a_{I_1}^\dagger \! \dots a_{I_m}^\dagger a_{J_1}\! \dots  a_{J_m}.
\end{align}
In the following we want to verify that both $\|f\|_\infty^{-1} \left|\big\langle \Psi_y\big|\widehat{F}\,\big|\Psi_y \big\rangle -\big\langle \Psi_y\big|\widetilde{F}\,\big|\Psi_y \big\rangle\right|$ and $\|f\|_\infty^{-1} \left|\big\langle \Psi_y\big|\widetilde{F}\,\big|\Psi_y \big\rangle -\int F\left(|\varphi_{y,\xi}|^2\right)\, \mathrm{d}\mathbb{P}_y\left(\xi\right)\right|$ are, up to a multiplicative constant, bounded by the right hand side of Eq.~(\ref{Equation-Coherent State Methode}). Applying the Cauchy--Schwarz inequality, we obtain for all $\epsilon>0$ 
\begin{align*}
\pm \! \left(f\! -\!\Pi_y^{\otimes^m}f\, \Pi_y^{\otimes^m}\right)&=\pm f\left(1\! -\! \Pi_y^{\otimes^m}\right)\! \pm \! \left(1\! -\! \Pi_y^{\otimes^m}\right)f\, \Pi_y^{\otimes^m}\leq \epsilon\|f\|_{\infty}\!+\!\epsilon^{-1}\|f\|_{\infty}\left(1\!-\!\Pi_y^{\otimes^m}\right)\\
&\leq \epsilon\|f\|_{\infty}\!+\!\epsilon^{-1}\|f\|_{\infty}\big((1\!-\!\Pi_y)_1\!+\!\dots \!+\!(1\!-\!\Pi_y)_m\big),
\end{align*}
where $(1-\Pi_y)_j$ means that the operator $1-\Pi_y$ acts on the $j$-th factor in the tensor product. Consequently we have the operator inequality
\begin{align*}
\pm \left(\sum_{I,J\in \mathbb{N}^m} f_{I,J}\, a_{I_1}^\dagger\dots a_{I_m}^\dagger a_{J_1}\dots a_{J_m}-\widetilde{F}\right)\leq \epsilon \|f\|_{\infty}\mathcal{N}^m+\epsilon^{-1}\|f\|_{\infty}m\, \mathcal{N}^y_{>{N_*}}\mathcal{N}^{m-1}.
\end{align*}
Making use of Eq.~(\ref{Equation-Rigorous Definition}) and the fact that $\chi\left(\mathcal{N}\leq C\right)\Psi_y=\Psi_y$ further yields
\begin{align*}
\left|\braket{\Psi_y|\widehat{F}|\Psi_y}-\sum_{I,J\in \mathbb{N}^m} f_{I,J}\,\braket{\Psi_y| a_{I_1}^\dagger\dots a_{I_m}^\dagger a_{J_1}\dots a_{J_m}|\Psi_y}\right|\leq d\alpha^{-2}\|f\|_\infty \|\Psi_y\|^2
\end{align*}
for a suitable constant $d>0$. We have thus shown the bound
\begin{align}
\label{Equation-Difference Hat and Tilde}
\frac{1}{\|f\|_{\infty}}\left|\! \braket{\Psi_y|\widehat{F}\, |\Psi_y}\! -\! \braket{\Psi_y|\widetilde{F}\, |\Psi_y }\! \right|\! \leq \! \left(d\alpha^{-2}\! +\! \epsilon C^m\right) \! \|\Psi_y\|^2\! +\! \epsilon^{-1}m C^{m-1}\! \big\langle \Psi_y\big|\mathcal{N}^y_{>{N_*}}\big|\Psi_y\big\rangle
\end{align}
which is of the desired form.
%

In order to verify that $\frac{1}{\|f\|_{\infty}}\left|\big\langle \Psi_y\big|\widetilde{F}\,\big|\Psi_y \big\rangle -\int F\left(|\varphi_{y,\xi}|^2\right)\, \mathrm{d}\mathbb{P}_y\left(\xi\right)\right|$ is of the same order as the right hand side of Eq.~(\ref{Equation-Coherent State Methode}) as well, we will first compute $\widetilde{F}$ with reversed operator ordering, i.e. we compute
\begin{align}
\label{Equation-Reordering}
&\sum_{I,J\in \{1,\dots ,{N_*}\}^m} f_{I,J}\, a_{J_1}\dots  a_{J_m} a_{I_1}^\dagger\dots a_{I_m}^\dagger=    \sum_{I,J\in \{1,\dots ,{N_*}\}^m}\! \! \! \! \! \! \! \! f_{I,J}\, a_{I_1}^\dagger\! \dots a_{I_m}^\dagger a_{J_1}\! \dots  a_{J_m}\\
\nonumber
&\ \ \ \ \ \ \  \ \ \ \ \  \  \ \ \ \ \ \ \  +\sum_{n=1}^m \frac{1}{\alpha^{2n}n!}\sum_{\sigma,\tau\in \mathcal{S}^{m,n}} \! \! \left(\sum_{I',J'} f^{\sigma,\tau}_{I',J'}\! \prod_{k\notin  \{\sigma_1,\dots,\sigma_n\}}a_{I'_k}^\dagger \prod_{\ell \notin  \{\tau_1,\dots,\tau_n\}}a_{J'_\ell}\right)
\end{align}
where $\mathcal{S}^{m,n}$ is the set of all sequences $\sigma=(\sigma_1,\dots ,\sigma_n)$ without repetitions having values  $\sigma_k\in \{1,\dots,m\}$ and the coordinate matrices $f^{\sigma,\tau}$ are defined as
\begin{align*}
f^{\sigma,\tau}_{I',J'}:=\sum_{I,J\in \{1,\dots,N_*\}^m} f_{I,J}\, \delta_{I_{\sigma_1},J_{\tau_1}}\dots \delta_{I_{\sigma_n},J_{\tau_n}}\prod_{k\notin  \{\sigma_1,\dots,\sigma_n\}}\delta_{I_k,I'_k}\prod_{\ell \notin  \{\tau_1,\dots,\tau_n\}}\delta_{J_\ell,J'_\ell}
\end{align*}
for $I'\in \{1,\dots,N_*\}^{\{1,\dots,m\}\setminus \{\sigma_1,\dots,\sigma_n\}}$ and $J'\in \{1,\dots,N_*\}^{\{1,\dots,m\}\setminus\{\tau_1,\dots,\tau_n\}}$. One can verify Eq.~(\ref{Equation-Reordering}) either by iteratively applying the (rescaled) canonical commutation relations $[a_i,a_j^\dagger]=\alpha^{-2}\delta_{i,j}$, or by using the fact that the operator $e^{\alpha^{-2}\nabla_{\bar{\xi}}\nabla_\xi}$, which is well defined on polynomials in $\xi$ and $\bar{\xi}$, transforms the upper symbol into the lower symbol (see e.g. \cite{perel}), and computing its action on $P(\xi):=\sum_{I,J\in \{1,\dots ,{N_*}\}^{m}}  f_{I,J} \overline{\xi_{I_1}\dots \xi_{I_m}}\xi_{J_1}\dots \xi_{J_m} $ as
\begin{align*}
e^{\alpha^{-2}\nabla_{\bar{\xi}}\nabla_\xi}\left(P\right)(\xi)=   P(\xi) +\sum_{n=1}^m \frac{1}{\alpha^{2n}n!}\sum_{\sigma,\tau\in \mathcal{S}^{m,n}} \! \! \left(\sum_{I',J'} f^{\sigma,\tau}_{I',J'}\! \prod_{k\notin  \{\sigma_1,\dots,\sigma_n\}}\overline{\xi_{I'_k}} \prod_{\ell \notin  \{\tau_1,\dots,\tau_n\}}\xi_{J'_\ell}\right).
\end{align*}
In order to identify the left hand side of Eq.~(\ref{Equation-Reordering}), we will make use of the resolution of identity $\frac{1}{\pi^{N_*}}\int_{\mathbb{C}^{N_*}}\Theta_{y,\xi}\, \mathrm{d}\xi=1$, where $\Theta_{y,\xi}$ is defined below Eq.~(\ref{Equation-Definition of the y-dependent measure}), which allows us to rewrite the anti-wick ordered term $a_{J_1}\dots  a_{J_m}a_{I_1}^\dagger\dots a_{I_m}^\dagger$ as
\begin{align*}
\frac{1}{\pi^{N_*}}\int_{\mathbb{C}^{N_*}}a_{J_1}\dots  a_{J_m}\Theta_{y,\xi}a_{I_1}^\dagger\dots a_{I_m}^\dagger\, \mathrm{d}\xi=\frac{1}{\pi^{N_*}}\int_{\mathbb{C}^{N_*}}\xi_{J_1}\dots \xi_{J_m} \overline{\xi_{I_1}\dots \xi_{I_m}}\Theta_{y,\xi}\, \mathrm{d}\xi.
\end{align*}
Here we have used that $a_i\Theta_{y,\xi}=\xi_i\Theta_{y,\xi}$ for all $i\in \{1,\dots,N_*\}$. By the definition of $\mathbb{P}_y$ in Eq.~(\ref{Equation-Definition of the y-dependent measure}) we can therefore rewrite the expectation value of the first term on the left hand side of Eq.~(\ref{Equation-Reordering}) with respect to the state $\Psi_y$ as
\begin{align}
\nonumber
&\sum_{I,J\in \{1,\dots ,{N_*}\}^{m}}\! \! \! \!  \! \! \! \! \!\!  \! \! \! f_{I,J}\,  \big\langle \Psi_y\big|  a_{J_1}\dots  a_{J_m} a_{I_1}^\dagger\dots a_{I_m}^\dagger\big| \Psi_y\big\rangle\!  = \! \!  \! \! \! \! \! \sum_{I,J\in \{1,\dots ,{N_*}\}^{m}}  \! \! \! \! \! \! \! \! \! \! \!  f_{I,J}\int   \xi_{J_1}\dots \xi_{J_m} \overline{\xi_{I_1}\dots \xi_{I_m}}\mathrm{d}\mathbb{P}_y(\xi)\\
\label{Equation-IdentifyingAntiWick}
&\ \ \ \ \ \ \ \ \ =\int \big\langle \varphi_{y,\xi}^{\otimes^m}\big|f\big|\varphi_{y,\xi}^{\otimes^m}\big\rangle\, \mathrm{d}\mathbb{P}_y(\xi)=\int F\left(\left|\varphi_{y,\xi}\right|^2\right)\, \mathrm{d}\mathbb{P}_y\left(\xi\right).
\end{align}
In order to control the terms in the second line of Eq.~(\ref{Equation-Reordering}),  we can estimate the norm $\left\|f^{\sigma,\tau}\right\|_{\mathrm{op}}\leq \|f\|_\infty N_*^{n}$ for all $\sigma,\tau\in \mathcal{S}^{m,n}$, which follow from
\begin{align*}
\braket{v|f^{\sigma,\tau}|w}&=\sum_{I,J\in \{1,\dots,N_*\}^m}f_{I,J}\delta_{I_{\sigma_1},J_{\tau_1}}\dots \delta_{I_{\sigma_n},J_{\tau_n}}\overline{v_{I'}}w_{J'}=\sum_{k\in \{1,\dots,N_*\}^n}\big\langle v^k\big|f\big|w^k\big\rangle\\
&\leq \|f\|_\infty \sum_{k\in \{1,\dots,N_*\}^n}\|v^k\| \|w^k\|\leq \|f\|_\infty N_*^n \|v\|\|w\|,
\end{align*} 
where $I'$ denotes the restriction of $I$ to $\{1,\dots,m\}\setminus \{\sigma_1,\dots,\sigma_n\}$ and $v^k$ is defined as $\left(v^k\right)_I:=\delta_{I_{\sigma_1},k_1}\dots \delta_{I_{\sigma_n},k_n}v_{I'}$, and $J'$ and $w^k$ are defined analogue. Hence we obtain 
\begin{align*}
\left|\frac{1}{\alpha^{2n}}\sum_{I',J'}f^{\sigma,\tau}_{I',J'}\big\langle  \Psi_y\big| \prod_{k\notin  \{\sigma_1,\dots,\sigma_n\}}a_{I'_k}^\dagger \prod_{\ell \notin  \{\tau_1,\dots,\tau_n\}}a_{J'_\ell}\big| \Psi_y\big\rangle\right| \leq \|f\|_\infty \left(\frac{N_*}{\alpha^2}\right)^n\braket{\Psi_y|\mathcal{{N}}^{m-n}|\Psi_y}
\end{align*}
for $n\geq 1$. Since $\chi\left(\mathcal{N}\leq C\right)\Psi_y=\Psi_y$ and $N_*\lesssim \alpha^2$, see the comment below Eq.~(\ref{Equation-Definition of weak projection}), this is a quantity of order $\|f\|_{\infty}\frac{N_*}{\alpha^2}\|\Psi_y\|^2$. Combing this estimate with Eq.~(\ref{Equation-Reordering}) and Eq.~(\ref{Equation-IdentifyingAntiWick}) yields that $\frac{1}{\|f\|_{\infty}}\left|\big\langle \Psi_y\big|\widetilde{F}\,\big|\Psi_y \big\rangle -\int F\left(|\varphi_{y,\xi}|^2\right)\, \mathrm{d}\mathbb{P}_y\left(\xi\right)\right|$ is, up to a multiplicative factor, bounded by the right hand side of Eq.~(\ref{Equation-Coherent State Methode}). Together with Eq.~(\ref {Equation-Difference Hat and Tilde}), this concludes the proof of Eq.~(\ref{Equation-Coherent State Methode}).

 In order to verify Eq.~(\ref{Equation-Coherent State Methode_Second Part}), let us  define $G(\rho):=\int \mathrm{d}\rho$. Note that $W_{g}^{-1} \mathcal{N}W_{g}=\mathcal{N}-a(g)-a^\dagger(g)+\|g\|^2=\widehat{G\, }-a(g)-a^\dagger(g)+\|g\|^2$. Furthermore we have $\big\langle \Psi_{y}\big|a(\Pi_{y}g)+a^\dagger(\Pi_{y}g)\,\big|\Psi_{y} \big\rangle =\int \left(\braket{g|\varphi_{y,\xi}}+\braket{\varphi_{y,\xi}|g}\right)\mathrm{d}\mathbb{P}_{y}\left(\xi\right)$, where we used that $a(g)+a^\dagger(g)$ is anti-Wick ordered, and
\begin{align*}
\left|\Big\langle\Psi_{y}\Big| a(g)+a^\dagger(g)- a(\Pi_{y} g)-a^\dagger(\Pi_{y} g)\,\Big|\Psi_{y} \Big\rangle \right|\leq \epsilon^{-1}\braket{\Psi_{y}|\mathcal{N}^y_{>{N_*}}|\Psi_{y}}+\epsilon\|g\|^2 \|\Psi_y\|^2.
\end{align*}
Hence, applying Eq.~(\ref{Equation-Coherent State Methode}) with respect to the function $G$ and using that $\int \! \! \|\varphi_{y,\xi} - g\|^2 \mathrm{d}\mathbb{P}_y=\int \left(G\left(|\varphi_{y,\xi}|^2\right)+\braket{g|\varphi_{y,\xi}}+\braket{\varphi_{y,\xi}|g}\right)\mathrm{d}\mathbb{P}_{y}\left(\xi\right)+\|g\|^2\|\Psi_y\|^2$ concludes the proof of Eq.~(\ref{Equation-Coherent State Methode_Second Part}).
\end{proof}

\begin{cor}
\label{Corollary-Coherent State Methode}
Given constants $m\in \mathbb{N},C>0$ and $g\in L^2\! \left(\mathbb{R}^3\right)$, there exists a constant $T>0$ such that for all $F$ of the form~(\ref{Equation-F function}) and states $\Psi$ satisfying $\chi\left(\mathcal{N}\leq C\right)\Psi=\Psi$ and $\braket{\Psi|\mathbb{H}_K|\Psi}\leq e^\mathrm{Pek}+\delta e$, with $\delta e\geq 0$ and $K\geq \Lambda_*=\alpha^{2s}$,
\begin{align}
\label{Equation-Corollary_Coherent State Methode}
\frac{1}{T\|f\|_{\infty}}\left|\big\langle\Psi\big|\widehat{F}\,\big|\Psi \big\rangle-\iint F\left(\left|\varphi_{y,\xi}\right|^2\right)\mathrm{d}\mathbb{P}_{y}\left(\xi\right)\mathrm{d}y\right|\leq \sqrt{\delta e}+  \alpha^{-\frac{s}{2}}+  \alpha^{\frac{27}{2}s-2},
\end{align}
and furthermore 
\begin{align}
\label{Equation-Corollary_Coherent State Methode_Second Part}
\frac{1}{T}\left|\Big\langle \Psi\Big|W_g^{-1}\mathcal{N}W_g\Big|\Psi\Big\rangle-\iint \|\varphi_{y,\xi}-g\|^2\mathrm{d}\mathbb{P}_{y}\left(\xi\right)\mathrm{d}y\right|\leq \sqrt{\delta e}+  \alpha^{-\frac{s}{2}}+  \alpha^{\frac{27}{2}s-2}.
\end{align}
\end{cor}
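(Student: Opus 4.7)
The plan is to apply Lemma~\ref{Lemma-Coherent State Methode} pointwise in $y$, integrate the resulting estimate over $\mathbb{R}^3$, and then optimize the free parameter $\epsilon$. Setting $\chi_y(x):=L_*^{-3/2}\chi((x-y)/L_*)$ so that $\Psi_y=\chi_y\Psi$ and $\int\chi_y^2\,dy=1$, and using that $\widehat{F}$ and $W_g^{-1}\mathcal{N}W_g$ act only on the phonon Fock space (and are $x$-independent), one immediately gets the exact identities
\begin{align*}
\int_{\mathbb{R}^3}\big\langle\Psi_y\big|\widehat{F}\big|\Psi_y\big\rangle\,dy=\big\langle\Psi\big|\widehat{F}\big|\Psi\big\rangle,\qquad \int_{\mathbb{R}^3}\Big\langle\Psi_y\Big|W_g^{-1}\mathcal{N}W_g\Big|\Psi_y\Big\rangle\,dy=\Big\langle\Psi\Big|W_g^{-1}\mathcal{N}W_g\Big|\Psi\Big\rangle,
\end{align*}
while the usual IMS correction for $-\Delta_x$ gives $\int_{\mathbb{R}^3}\langle\Psi_y|\mathbb{H}_K|\Psi_y\rangle\,dy=\langle\Psi|\mathbb{H}_K|\Psi\rangle-L_*^{-2}\|\nabla\chi\|^2\leq e^\mathrm{Pek}+\delta e$.

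The central task will be to bound the averaged high-mode number $\int_{\mathbb{R}^3}\langle\Psi_y|\mathcal{N}^y_{>N_*}|\Psi_y\rangle\,dy$. Since $\mathbb{H}^y_{\Lambda_*,\ell_*}$ interacts only with modes in $\Pi^y_{\Lambda_*,\ell_*}L^2(\mathbb{R}^3)$, the Fock space factorizes as $\mathcal{F}(\Pi^y_{\Lambda_*,\ell_*}L^2)\otimes\mathcal{F}((\Pi^y_{\Lambda_*,\ell_*}L^2)^\perp)$ and $\mathbb{H}^y_{\Lambda_*,\ell_*}$ acts as the sum of a Hamiltonian on the first factor and $\mathcal{N}^y_{>N_*}$ on the second. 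Combined with $\inf\sigma(\mathbb{H}^y_{\Lambda_*,\ell_*})\geq E_\alpha$ (see the remark after Definition~\ref{Definition-Cut off Hamiltonian general}), this yields the key decoupling inequality
\begin{align*}
\mathbb{H}^y_{\Lambda_*,\ell_*}\;\geq\; E_\alpha+\mathcal{N}^y_{>N_*}.
\end{align*}
For $\Psi_y$ supported in $B_{L_*}(y)$, Lemma~\ref{Lemma-Lieb-Yamazaki bound} applied with $L=L_*$, $\ell=\ell_*$, $\Lambda=\Lambda_*$ and any $K\geq\Lambda_*$ gives
\begin{align*}
\bigl|\big\langle\Psi_y\big|\mathbb{H}_K-\mathbb{H}^y_{\Lambda_*,\ell_*}\big|\Psi_y\big\rangle\bigr|\lesssim\alpha^{-s}\big\langle\Psi_y\big|-\Delta_x+\mathcal{N}+1\big|\Psi_y\big\rangle,
\end{align*}
since the choices $L_*=\alpha^{s/2}$, $\ell_*=\alpha^{-5s/2}$, $\Lambda_*=\alpha^{2s}$ make each of $L_*\ell_*\sqrt{\Lambda_*}$, $\sqrt{\ell_*}$, $\Lambda_*^{-1/2}$ of order $\alpha^{-s}$. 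Integrating, using the lower bound~(\ref{Equation-Boundedness from below}) together with the energy hypothesis to control $\langle\Psi|-\Delta_x+\mathcal{N}|\Psi\rangle\lesssim 1+\delta e$, and using $e^\mathrm{Pek}-E_\alpha\lesssim\alpha^{-1/5}\lesssim\alpha^{-s}$ (valid because $s\leq 4/27<1/5$), we will arrive at
\begin{align*}
\int_{\mathbb{R}^3}\big\langle\Psi_y\big|\mathcal{N}^y_{>N_*}\big|\Psi_y\big\rangle\,dy\;\lesssim\;\delta e+\alpha^{-s}.
\end{align*}

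To finish, integrate (\ref{Equation-Coherent State Methode}) of Lemma~\ref{Lemma-Coherent State Methode} over $y$ and insert this bound to get
\begin{align*}
\bigl|\big\langle\Psi\big|\widehat{F}\big|\Psi\big\rangle-\iint F(|\varphi_{y,\xi}|^2)\,d\mathbb{P}_y(\xi)\,dy\bigr|\lesssim\|f\|_\infty\Bigl(\tfrac{N_*}{\alpha^2}+\epsilon+\epsilon^{-1}(\delta e+\alpha^{-s})\Bigr),
\end{align*}
and choose $\epsilon=\sqrt{\delta e+\alpha^{-s}}$; using $N_*\lesssim(\Lambda_*/\ell_*)^3=\alpha^{27s/2}$, this gives the advertised error $\|f\|_\infty(\alpha^{27s/2-2}+\sqrt{\delta e}+\alpha^{-s/2})$, proving~(\ref{Equation-Corollary_Coherent State Methode}). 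The second statement~(\ref{Equation-Corollary_Coherent State Methode_Second Part}) will follow by the identical argument using~(\ref{Equation-Coherent State Methode_Second Part}) in place of~(\ref{Equation-Coherent State Methode}). I expect the only real obstacle to be the control of $\int\langle\Psi_y|\mathcal{N}^y_{>N_*}|\Psi_y\rangle\,dy$: the decoupling identity combined with the Lieb--Yamazaki estimate forces the balance $L_*\ell_*\sqrt{\Lambda_*}\sim\sqrt{\ell_*}\sim\Lambda_*^{-1/2}\sim\alpha^{-s}$, which is exactly the optimization dictating the choice of $L_*$, $\ell_*$, $\Lambda_*$ and hence the exponents appearing in the statement.
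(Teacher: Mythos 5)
Your proposal is correct and follows essentially the same route as the paper: decompose via the localized $\Psi_y$, integrate the estimate of Lemma~\ref{Lemma-Coherent State Methode} over $y$, bound $\int\langle\Psi_y|\mathcal{N}^y_{>N_*}|\Psi_y\rangle\,dy$ using the decoupling $\mathbb{H}^y_{\Lambda_*,\ell_*}\geq E_\alpha+\mathcal{N}^y_{>N_*}$ together with the Lieb--Yamazaki comparison with $\mathbb{H}_K$ and the IMS identity, and finally set $\epsilon=\sqrt{\delta e+\alpha^{-s}}$. One small slip: the IMS localization formula has a \emph{plus} sign, $\int\langle\Psi_y|\mathbb{H}_K|\Psi_y\rangle\,dy=\langle\Psi|\mathbb{H}_K|\Psi\rangle+L_*^{-2}\|\nabla\chi\|^2$, so the integrated energy is only bounded by $e^\mathrm{Pek}+\delta e+O(\alpha^{-s})$ rather than $e^\mathrm{Pek}+\delta e$; this extra $\alpha^{-s}$ is harmless, being absorbed into the final error.
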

\begin{proof}
Using the fact that we have $\big\langle\Psi\big|\widehat{F}\,\big|\Psi \big\rangle=\int \big\langle\Psi_y\big|\widehat{F}\,\big|\Psi_y \big\rangle\mathrm{d}y$ and $\big\langle\Psi\big|W_g^{-1}\mathcal{N}W_g\big|\Psi \big\rangle=\int \big\langle\Psi_y\big|W_g^{-1}\mathcal{N}W_g\big|\Psi_y \big\rangle\mathrm{d}y$, and applying Eq.~(\ref{Equation-Coherent State Methode}), respectively Eq.~(\ref{Equation-Coherent State Methode_Second Part}), immediately yields that the left hand sides of Eqs.~(\ref{Equation-Corollary_Coherent State Methode}) and~(\ref{Equation-Corollary_Coherent State Methode_Second Part}) are bounded by
\begin{align}
\label{Equation: Integral Particle Number Bound}
\! \frac{{N_*}}{\alpha^2}+\epsilon+\epsilon^{-1}\! \!\int\! \braket{\Psi_{y}|\mathcal{N}^y_{>{N_*}}|\Psi_{y}}\mathrm{d}y
\end{align}
 for any $\epsilon>0$. In order to bound $\int\! \braket{\Psi_{y}|\mathcal{N}^y_{>{N_*}}|\Psi_{y}}\mathrm{d}y$ from above, let us first apply Eq.~(\ref{Equation-General estimate for cut-off's}) together with Eq.~(\ref{Equation-Boundedness from below}), which provides the auxiliary estimate
\begin{align*}
& \int|\braket{\Psi_{y}|\mathbb{H}^{y}_{\Lambda_*,\ell_*}|\Psi_{y}}-\braket{\Psi_{y}|\mathbb{H}_K|\Psi_{y}}|\mathrm{d}y\lesssim  \alpha^{-s}\int \braket{\Psi_{y}|-\Delta_x+\mathcal{N}+1|\Psi_{y}}\mathrm{d}y\\
&\ \ \ \ \ \ \ \ \ \ \ \leq \alpha^{-s}\int \braket{\Psi_{y}|2\mathbb{H}_K+d+1|\Psi_{y}}\mathrm{d}y.
\end{align*} 
Note that the assumptions of Eq.~(\ref{Equation-General estimate for cut-off's}) are indeed satisfied, since $K\geq  \Lambda_*$ and $\mathrm{supp}\left(\Psi_{y}\right)\subset B_{L_*}(y)$. In combination with the IMS identity $\int \braket{\Psi_{y}|\mathbb{H}_K|\Psi_{y}}\mathrm{d}y=\braket{\Psi|\mathbb{H}_K|\Psi}+L_*^{-2}\|\nabla \chi\|^2$, where $\chi$ is the function from Eq.~(\ref{Definition-Psi_y}), this furthermore yields
\begin{align}
\label{Equation-IntegralEnergyEstimate}
& \left|\int\braket{\Psi_{y}|\mathbb{H}^{y}_{\Lambda_*,\ell_*}|\Psi_{y}}\mathrm{d}y-\braket{\Psi|\mathbb{H}_K|\Psi}\right|\lesssim \alpha^{-s}\left(\braket{\Psi|\mathbb{H}_K|\Psi}+d+1\right),
\end{align}
where we have used $L_*^{-2}=\alpha^{-s}$.  Furthermore $\braket{\Psi|\mathbb{H}_K|\Psi}\leq e^\mathrm{Pek}+\delta e$ by assumption, and consequently $|\int\braket{\Psi_{y}|\mathbb{H}^{y}_{\Lambda_*,\ell_*}|\Psi_{y}}\mathrm{d}y-\braket{\Psi|\mathbb{H}_K|\Psi}|\leq D \alpha^{-s}(\delta e+1)$ for a suitable $D$. Consequently
\begin{align}
\nonumber
\braket{\Psi|\mathbb{H}_K|\Psi} &\geq \int\braket{\Psi_{y}|\mathbb{H}^{y}_{\Lambda_*,\ell_*}|\Psi_{y}}\mathrm{d}y-D \alpha^{-s}(\delta e+1)\\
\label{Equation-Excitation number control}
&  \geq E_\alpha+\int\braket{\Psi_{y}|\mathcal{N}^y_{>{N_*}}|\Psi_{y}}\mathrm{d}y-D \alpha^{-s}(\delta e+1).
\end{align}
where we have used that $\mathbb{H}^{y}_{\Lambda_*,\ell_*}\geq E_\alpha+\mathcal{N}^y_{>{N_*}}$ in the second inequality. Using Eq.~(\ref{Equation-Excitation number control}) as well as the fact that $E_\alpha-e^\mathrm{Pek}\gtrsim -\alpha^{-\frac{1}{5}}\geq  -\alpha^{-s}$, see \cite{LT}, we obtain the upper bound
\begin{align}
\label{Equation-Useful upper bound on excitation number}
\int \braket{\Psi_{y}|\mathcal{N}^y_{>{N_*}}|\Psi_{y}}\mathrm{d}y\lesssim \braket{ \Psi|\mathbb{H}_K|\Psi }- e^\mathrm{Pek}+\alpha^{-s}(\delta e+1)\lesssim \delta e+\alpha^{-s}.
\end{align}
Choosing $\epsilon:=\sqrt{\delta e+\alpha^{-s}}$ in Eq.~(\ref{Equation: Integral Particle Number Bound}) therefore concludes the proof together with the observation that $\frac{N_*}{ \alpha^2}\lesssim \alpha^{\frac{27}{2}s-2}$.
\end{proof}

In the following Lemma \ref{Lemma-Support property of the measure} we are investigating the support properties of the lower symbol $\mathbb{P}_y$. In particular we derive bounds on the associated moments and verify that $\varphi_{y,\xi}$ is typically close to the manifold of minimizers $\{\varphi_x^\mathrm{Pek}:x\in \mathbb{R}^3\}$. 

\begin{lem}
\label{Lemma-Support property of the measure}
Given constants $m\in \mathbb{N}$ and $C>0$, there exists a $T>0$, such that $\iint |\xi|^{2m}\mathrm{d}\mathbb{P}_{y}\left(\xi\right)\mathrm{d}y\leq T$ for all $\Psi$ satisfying $\chi\left(\mathcal{N}\leq C\right)\Psi=\Psi$, and furthermore we have for all $K\geq \Lambda_*$, where $\Lambda_*$ is as in the definition of $\Pi^{y}$ in Eq.~(\ref{Equation-Definition of weak projection}),
\begin{align}
\label{Equation-Support property of the measure}
&\frac{1}{T}\iint \inf_{x\in \mathbb{R}^3}\|\varphi_{y,\xi}-\varphi^\mathrm{Pek}_x\|^2\mathrm{d}\mathbb{P}_{y}\left(\xi\right) \mathrm{d}y\leq \braket{\Psi|\mathbb{H}_K|\Psi}-e^\mathrm{Pek}+\alpha^{-s}+\alpha^{\frac{27}{2}s-2}.
\end{align}
\end{lem}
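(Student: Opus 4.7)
The first assertion—the uniform moment bound on $\mathbb{P}_y$—is a direct application of Lemma~\ref{Lemma-Coherent State Methode} to the function $F(\rho):=\bigl(\int\mathrm{d}\rho\bigr)^m$, which corresponds to the bounded kernel $f\equiv 1$ on $\mathbb{R}^{3m}$. One has $F\bigl(|\varphi_{y,\xi}|^2\bigr)=\|\varphi_{y,\xi}\|^{2m}=|\xi|^{2m}$, while by Definition~\ref{Definition-Hat operators} the operator $\widehat{F}$ acts on the $n$-phonon sector as $(\alpha^{-2}n)^m$, i.e.\ $\widehat{F}=\mathcal{N}^m$. Since the hypothesis $\chi(\mathcal{N}\le C)\Psi=\Psi$ gives $\langle\Psi_y|\widehat{F}|\Psi_y\rangle\le C^m\|\Psi_y\|^2$, integrating~(\ref{Equation-Coherent State Methode}) in $y$—together with $\int\|\Psi_y\|^2\,\mathrm{d}y=1$ and $\int\langle\Psi_y|\mathcal{N}^y_{>N_*}|\Psi_y\rangle\,\mathrm{d}y\le\langle\Psi|\mathcal{N}|\Psi\rangle\le C$—yields a bound uniform in $\alpha$.

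For the second assertion I would first invoke the coercivity inequality $\inf_{x\in\mathbb{R}^3}\|\varphi-\varphi^\mathrm{Pek}_x\|^2\lesssim\mathcal{F}^\mathrm{Pek}(\varphi)-e^\mathrm{Pek}$ from~\cite{FeRS}, which reduces the claim to proving
\begin{align*}
\iint\mathcal{F}^\mathrm{Pek}(\varphi_{y,\xi})\,\mathrm{d}\mathbb{P}_y(\xi)\,\mathrm{d}y\;\le\;\langle\Psi|\mathbb{H}_K|\Psi\rangle+O_{\alpha\to\infty}\!\bigl(\alpha^{-s}+\alpha^{\frac{27}{2}s-2}\bigr).
\end{align*}
The key intermediate step is a Berezin--Lieb-type operator inequality
\begin{align*}
\mathbb{H}^y_{\Lambda_*,\ell_*}\;\ge\;\int\mathcal{F}^\mathrm{Pek}(\varphi_{y,\xi})\,\Theta_{y,\xi}\,\frac{\mathrm{d}\xi}{\pi^{N_*}}-\alpha^{-2}N_*,
\end{align*}
obtained from the Glauber--Sudarshan upper-symbol decomposition on the finite-dimensional Fock factor $\mathcal{F}(\Pi_y L^2(\mathbb{R}^3))$. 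Using $a_{y,n}\Omega_{y,\xi}=\xi_n\Omega_{y,\xi}$ together with the resolution of identity $\frac{1}{\pi^{N_*}}\int\Theta_{y,\xi}\,\mathrm{d}\xi=1$, one rewrites the coupling $-a(\Pi_y w_x)-a^\dagger(\Pi_y w_x)$ through its upper symbol $V_{\varphi_{y,\xi}}(x)$, and $\mathcal{N}_{\Pi_y}=\sum_n a_{y,n}a_{y,n}^\dagger-\alpha^{-2}N_*$ through its upper symbol $|\xi|^2-\alpha^{-2}N_*$, while $-\Delta_x$ integrates trivially. Bounding $-\Delta_x+V_{\varphi_{y,\xi}}\ge\inf\sigma(-\Delta_x+V_{\varphi_{y,\xi}})=\mathcal{F}^\mathrm{Pek}(\varphi_{y,\xi})-|\xi|^2$ and dropping the non-negative $\mathcal{N}^y_{>N_*}$ then produces the displayed operator inequality.

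Taking the expectation of that inequality against $\Psi_y$ yields $\langle\Psi_y|\mathbb{H}^y_{\Lambda_*,\ell_*}|\Psi_y\rangle\ge\int\mathcal{F}^\mathrm{Pek}(\varphi_{y,\xi})\,\mathrm{d}\mathbb{P}_y-\alpha^{-2}N_*\|\Psi_y\|^2$. Integrating in $y$ and inserting the IMS-type estimate~(\ref{Equation-IntegralEnergyEstimate}), together with $N_*\lesssim(\Lambda_*/\ell_*)^3=\alpha^{\frac{27}{2}s}$, gives the reduced inequality above; subtracting $e^\mathrm{Pek}$ and absorbing the error $O(\alpha^{-s})\langle\Psi|\mathbb{H}_K|\Psi\rangle$ through the splitting $\langle\Psi|\mathbb{H}_K|\Psi\rangle=(\langle\Psi|\mathbb{H}_K|\Psi\rangle-e^\mathrm{Pek})+e^\mathrm{Pek}$ for $\alpha$ large concludes the proof of~(\ref{Equation-Support property of the measure}) via coercivity.

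\textbf{Main obstacle.} The only real technical subtlety is justifying the upper-symbol decomposition carefully on the tensor space $L^2(\mathbb{R}^3_x)\otimes\mathcal{F}(\Pi_y L^2(\mathbb{R}^3))\otimes\mathcal{F}((\Pi_y L^2(\mathbb{R}^3))^\perp)$, handling the $x$-dependence of the linear coupling through $\Pi_y w_x$, and keeping careful track of the constant shift $-\alpha^{-2}N_*$, which is responsible for the $\alpha^{\frac{27}{2}s-2}$ term in the final error estimate.
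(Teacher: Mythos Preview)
Your proposal is correct and follows essentially the same route as the paper. For the moment bound you apply Lemma~\ref{Lemma-Coherent State Methode} to $F(\rho)=(\int\mathrm{d}\rho)^m$ exactly as the paper does; for~(\ref{Equation-Support property of the measure}) the paper also passes to anti-Wick order on $\mathcal{F}(\Pi_y L^2)$, bounds the electron part by $\inf\sigma(-\Delta_x+V_{\varphi_{y,\xi}})$, invokes the coercivity inequality~\eqref{Equation-CoercivityPekarFunctional}, and then combines with~(\ref{Equation-IntegralEnergyEstimate}) and $N_*\lesssim\alpha^{\frac{27}{2}s}$. The only cosmetic difference is that the paper works directly at the level of expectation values in $\Psi_y$ (writing $\langle\psi_y^\xi|-\Delta_x+V_{\varphi_{y,\xi}}|\psi_y^\xi\rangle$ with $\psi_y^\xi=\Theta_{y,\xi}\Psi_y/\|\Theta_{y,\xi}\Psi_y\|$), whereas you phrase the same computation as an operator inequality before taking the expectation; since $(-\Delta_x+V_{\varphi_{y,\xi}})\otimes\Theta_{y,\xi}\ge\inf\sigma(-\Delta_x+V_{\varphi_{y,\xi}})\,\Theta_{y,\xi}$ holds as an operator bound, the two presentations are equivalent.
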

\begin{proof}
For $m\in \mathbb{N}$, let us define the function $G(\rho):=\left(\int \mathrm{d}\rho(x)\right)^m=\int\dots \int  \mathrm{d}\rho(x_1)\dots \mathrm{d}\rho(x_m)$, which is clearly of the form given in Eq.~(\ref{Equation-F function}). Consequently by Lemma \ref{Lemma-Coherent State Methode}
\begin{align*}
&\int\! |\xi|^{2m}\mathrm{d}\mathbb{P}_{y}\left(\xi\right)=\! \int\!  G\left(|\varphi_{y,\xi}|^{2}\right)\mathrm{d}\mathbb{P}_{y}\left(\xi\right)\lesssim \big\langle\Psi_{y}\big|\widehat{G}\,\big|\Psi_{y} \big\rangle+\left(\frac{{N_*}}{\alpha^2}+1\right)\|\Psi_y\|^2+\braket{\Psi_{y}|\mathcal{N}^y_{>{N_*}}|\Psi_{y}}\\
&=\big\langle \Psi_{y}\big| \mathcal{N}^{\, 2m}\big|\Psi_{y}\big\rangle+\left(\frac{{N_*}}{\alpha^2}+1\right)\|\Psi_y\|^2+\braket{\Psi_{y}|\mathcal{N}^y_{>{N_*}}|\Psi_{y}}\leq \left(C^{2m}+\frac{{N_*}}{\alpha^2}+1+C\right)\|\Psi_y\|^2,
\end{align*}
which concludes the proof of the first part, since $N_*\lesssim \alpha^2$ and $\int \|\Psi_y\|^2\mathrm{d}y=\|\Psi\|^2=1$.

Regarding the proof of Eq.~(\ref{Equation-Support property of the measure}),  we have the simple bound
\begin{align}
\nonumber
\mathbb{H}^{y}_{\Lambda_*,\ell_*}&\!=\! -\!\Delta_x\!-\!a\left(\Pi_{y} w_x\right)\!-\!a^\dagger\left(\Pi_{y} w_x\right)\!+\!\mathcal{N}\!\geq\! -\!\Delta_x\!-\!a\left(\Pi_{y} w_x\right)\!-\!a^\dagger\left(\Pi_{y} w_x\right)\!+\!\sum_{n=1}^{N_*} a_{y,n}^\dagger a_{y,n}\\
\label{Equation-Anti Wick order}
&=-\Delta_x-a\left(\Pi_{y} w_{x}\right)-a^\dagger\left(\Pi_{y} w_{x}\right)+\sum_{n=1}^{N_*} a_{y,n} a_{y,n}^\dagger-\frac{{N_*}}{\alpha^2}.
\end{align}
Since all terms in Eq.~(\ref{Equation-Anti Wick order}) are represented in anti-Wick ordering, we can follow \cite{LT} and express, similar as in the proof of Lemma \ref{Lemma-Coherent State Methode}, their expectation value as
\begin{align}
\nonumber
&\Big \langle\! \Psi_{y}\! \Big|\!\! -\! \!\Delta_x\! -\! a\left(\Pi_{y} w_{x}\right)\! -\! a^\dagger\left(\Pi_{y} w_{x}\right)\! +\! \! \sum_{n=1}^{N_*} a_{y,n} a_{y,n}^\dagger\Big|\!  \Psi_{y}\! \Big \rangle\! \!=\!  \! \! \int\! \! \! \left( \! \braket{ \psi^\xi_{y}|\! -\! \! \Delta_x\! +\! V_{\varphi_{y,\xi}}|\psi^\xi_{y}}  \! +\! \|\varphi_{y,\xi}\|^2\! \right)\! \mathrm{d}\mathbb{P}_{y}\! \left(\xi\right)\\
\label{Equation-CoherentState_Energy}
&\  \ \  \ \  \ \ \geq \int \left(\inf \sigma\left(-\Delta_x+V_{\varphi_{y,\xi}}\right)+\|\varphi_{y,\xi}\|^2\right)\mathrm{d}\mathbb{P}_{y}\left(\xi\right)=\int \mathcal{F}^\mathrm{Pek}\! \left(\varphi_{y,\xi}\right)\mathrm{d}\mathbb{P}_{y}\left(\xi\right),
\end{align}
with $\psi^\xi_{y}:=\frac{\Theta_{y,\xi} \Psi_{y}}{\|\Theta_{y,\xi} \Psi_{y}\|}$ where $\Theta_{y,\xi}$ is defined below Eq.~(\ref{Equation-Definition of the y-dependent measure}), $\mathcal{F}^\mathrm{Pek}$ is the Pekar functional and $V_{\varphi}$ is defined in Eq.~(\ref{Equation-Potential}). Making use of Eq.~(\ref{Equation-CoercivityPekarFunctional}) 
we obtain together with Eqs.~(\ref{Equation-IntegralEnergyEstimate}),~\eqref{Equation-Anti Wick order} and~(\ref{Equation-CoherentState_Energy})
\begin{align*}
&\int \int \inf_{x\in \mathbb{R}^3}\|\varphi_{y,\xi}-\varphi^\mathrm{Pek}_x\|^2\mathrm{d}\mathbb{P}_{y}\left(\xi\right)\mathrm{d}y\lesssim \int \braket{ \Psi_{y}|\mathbb{H}^{y}_{\Lambda_*,\ell_*}| \Psi_{y}}\mathrm{d}y-e^\mathrm{Pek}+\frac{{N_*}}{\alpha^2}\\
&\ \ \lesssim \braket{ \Psi|\mathbb{H}_K| \Psi}-e^\mathrm{Pek}+\frac{{N_*}}{\alpha^2}+D\alpha^{-s}\left(\braket{\Psi|\mathbb{H}_K|\Psi}+d+1\right),
\end{align*}
for a suitable $D>0$. This concludes the proof, since we have ${N_*}\lesssim  \alpha^{\frac{27}{2}s}$.
\end{proof}

The bound in Eq.~(\ref{Equation-Support property of the measure}) suggests that  $\varphi_{y,\xi}$ is close to $\varphi^\mathrm{Pek}_{x^{y,\xi}}$ with a high probability, where $x^{y,\xi}$ is the minimizer of $x\mapsto \|\varphi_{y,\xi}-\varphi^\mathrm{Pek}_{x}\|$. Motivated by this observation we expect $\iint F\left(\left|\varphi_{y,\xi}\right|^2\right)\mathrm{d}\mathbb{P}_y\mathrm{d}y\approx \iint F\left(\left|\varphi^\mathrm{Pek}_{x^{y,\xi}}\right|^2\right)\mathrm{d}\mathbb{P}_y\mathrm{d}y$ for measures $\mathbb{P}_y$ for low energy states $\Psi$, and therefore it seems natural to  define the measure $\mu$ in Theorem \ref{Theorem-Coherent States} as $\int f\mathrm{d}\mu:=\iint f\left(x^{y,\xi}\right)\mathrm{d}\mathbb{P}_y\mathrm{d}y$, allowing us to identify $\iint F\left(\left|\varphi^\mathrm{Pek}_{x^{y,\xi}}\right|^2\right)\mathrm{d}\mathbb{P}_y\mathrm{d}y=\int F\left(\left|\varphi^\mathrm{Pek}_x\right|^2\right)\mathrm{d}\mu$. This expression is however ill-defined, since the infimum $\inf_{x\in \mathbb{R}^3}\|\varphi_{y,\xi}-\varphi^\mathrm{Pek}_{x}\|$ is not necessarily attained and it is not necessarily unique. In order to avoid these difficulties, we will slightly modify the definition of the measure $\mu$ in the proof of Lemma \ref{Lemma-Measurable Function}.

\begin{lem}
\label{Lemma-Measurable Function}
Given $m\in \mathbb{N},C>0$ and $g \in L^2\! \left(\mathbb{R}^3\right)$ we can find a constant $T>0$, such that for all states $\Psi$ satisfying $\chi\left(\mathcal{N}\leq C\right)\Psi=\Psi$ and $\braket{\Psi|\mathbb{H}_K|\Psi}\leq e^\mathrm{Pek}+\delta e$, with $\delta e\geq 0$ and $K\geq \Lambda_*$, there exists a probability measure $\mu$ on $\mathbb{R}^3$ with the property
\begin{align}
\label{Equation-Approximation of double integral}
\frac{1}{T\|f\|_{\infty}}\left|\iint \!  F\left(\left|\varphi_{y,\xi}\right|^2\right)\!  \mathrm{d}\mathbb{P}_{y}\!  \left(\xi\right)\mathrm{d}y-\int\!  \!  F\!  \left(\left|\varphi^\mathrm{Pek}_{x}\right|^2\right)\!  \mathrm{d}\mu\!  \left(x\right)\right|\leq  \sqrt{\delta e}+  \alpha^{-\frac{s}{2}}+  \alpha^{\frac{27}{4}s-1},
\end{align}
for all $F$ of the form~(\ref{Equation-F function}), and furthermore
\begin{align}
\label{Equation-Approximation of double integral_second part}
\frac{1}{T}\left|\iint  \|\varphi_{y,\xi}-g\|^2 \mathrm{d}\mathbb{P}_{y}\!  \left(\xi\right)\mathrm{d}y-\int\!  \!  \|\varphi^\mathrm{Pek}_x-g\|^2  \mathrm{d}\mu\!  \left(x\right)\right|\leq \sqrt{\delta e}+  \alpha^{-\frac{s}{2}}+  \alpha^{\frac{27}{4}s-1}.
\end{align}
\end{lem}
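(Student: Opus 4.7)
The plan is to construct $\mu$ as the pushforward of $d\mathbb{P}_y(\xi)dy$ under a measurable near-projection $(y,\xi)\mapsto x^{y,\xi}$ onto the manifold of Pekar minimizers, and to control the approximation error using a Lipschitz estimate on $F(|\varphi|^2)$ combined with Lemma~\ref{Lemma-Support property of the measure} and Cauchy--Schwarz.

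First I would establish a uniform tubular neighborhood: there is $\delta_0>0$ such that whenever $\inf_x \|\varphi-\varphi^{\mathrm{Pek}}_x\|<\delta_0$, the minimizer $\pi(\varphi)\in \mathbb{R}^3$ is unique and depends continuously on $\varphi$. By translation invariance it suffices to verify this near $\varphi^{\mathrm{Pek}}$, where the implicit function theorem applied to the equations $\langle \varphi-\varphi^{\mathrm{Pek}}_x\mid \partial_{x_i}\varphi^{\mathrm{Pek}}_x\rangle=0$ yields a smooth local inverse, using that the Gram matrix of $\{\partial_{x_i}\varphi^{\mathrm{Pek}}\}_{i=1,2,3}$ is non-degenerate. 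Setting $x^{y,\xi}:=\pi(\varphi_{y,\xi})$ on the open set $A:=\{(y,\xi):\inf_x\|\varphi_{y,\xi}-\varphi^{\mathrm{Pek}}_x\|<\delta_0\}$ and $x^{y,\xi}:=0$ elsewhere gives a measurable selection, since $(y,\xi)\mapsto \varphi_{y,\xi}$ is continuous. Defining $\int g\,d\mu:=\iint g(x^{y,\xi})\,d\mathbb{P}_y(\xi)\,dy$ then yields a probability measure on $\mathbb{R}^3$, using $\iint d\mathbb{P}_y(\xi)\,dy=\|\Psi\|^2=1$.

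Next I would establish the Lipschitz-type bound
\[
\big|F(|\varphi|^2)-F(|\varphi'|^2)\big|\leq 2m\|f\|_\infty \max(\|\varphi\|,\|\varphi'\|)^{2m-1}\|\varphi-\varphi'\|,
\]
by writing $F(|\varphi|^2)=\langle \varphi^{\otimes m}|f|\varphi^{\otimes m}\rangle$ and using the telescoping estimate $\|\varphi^{\otimes m}-\varphi'^{\otimes m}\|\leq m\max(\|\varphi\|,\|\varphi'\|)^{m-1}\|\varphi-\varphi'\|$. On $A$ we have $\|\varphi_{y,\xi}-\varphi^{\mathrm{Pek}}_{x^{y,\xi}}\|=\inf_x\|\varphi_{y,\xi}-\varphi^{\mathrm{Pek}}_x\|$, so combining this with Cauchy--Schwarz, the moment bound $\iint(1+\|\varphi_{y,\xi}\|)^{2(2m-1)}\,d\mathbb{P}_y\,dy\lesssim 1$ from the first part of Lemma~\ref{Lemma-Support property of the measure}, and the estimate~(\ref{Equation-Support property of the measure}) gives
\[
\iint_A \big|F(|\varphi_{y,\xi}|^2)-F(|\varphi^{\mathrm{Pek}}_{x^{y,\xi}}|^2)\big|\,d\mathbb{P}_y\,dy\lesssim \|f\|_\infty\sqrt{\delta e+\alpha^{-s}+\alpha^{\frac{27}{2}s-2}}.
\]
On $A^c$, Chebyshev yields $\iint_{A^c} d\mathbb{P}_y\,dy\leq \delta_0^{-2}\iint \inf_x\|\varphi_{y,\xi}-\varphi^{\mathrm{Pek}}_x\|^2\,d\mathbb{P}_y\,dy$, controlled by the same right-hand side; a further Cauchy--Schwarz using the moment bound on $|F(|\varphi_{y,\xi}|^2)|+|F(|\varphi^{\mathrm{Pek}}|^2)|\lesssim \|f\|_\infty(1+\|\varphi_{y,\xi}\|^{2m})$ bounds the $A^c$ contribution by the same quantity. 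Taking the square root yields the desired error $\sqrt{\delta e}+\alpha^{-s/2}+\alpha^{\frac{27}{4}s-1}$.

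Equation~(\ref{Equation-Approximation of double integral_second part}) follows by the same split-and-Cauchy--Schwarz strategy, starting from the Lipschitz bound $|\|\varphi-g\|^2-\|\varphi'-g\|^2|\leq (\|\varphi\|+\|\varphi'\|+2\|g\|)\|\varphi-\varphi'\|$ and using the $m=1$ moment bound from Lemma~\ref{Lemma-Support property of the measure} together with $\|g\|$ fixed. The main obstacle I anticipate is verifying the uniform tubular neighborhood and the resulting measurability of $\pi$: this requires the smoothness and decay of $x\mapsto \varphi^{\mathrm{Pek}}_x$ and the non-degeneracy of the tangent directions $\{\partial_{x_i}\varphi^{\mathrm{Pek}}\}$, which are standard consequences of the properties of the Pekar minimizer recorded in Appendix~\ref{Properties of the Pekar Minimizer}.
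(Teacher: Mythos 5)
Your construction of $\mu$ takes a genuinely different route from the paper's. You push forward $d\mathbb{P}_y(\xi)\,dy$ under a measurable near-projection $(y,\xi)\mapsto x^{y,\xi}$ obtained from a uniform tubular neighborhood of the Pekar manifold $\{\varphi^{\mathrm{Pek}}_x\}_{x\in\mathbb{R}^3}$. The paper deliberately avoids any argmin/tubular-neighborhood argument -- it explicitly remarks that the infimum need not be attained or unique -- and instead partitions $\mathbb{C}^{N_*}$ into cells $A_{\epsilon,n}$ of diameter $\leq\epsilon$, picks one near-minimizer $x_{\epsilon,n}$ per cell at a fixed reference point $\xi_{\epsilon,n}$, and shows via the isometry $\|\varphi_{0,\xi}-\varphi_{0,\xi'}\|=|\xi-\xi'|$ that $x_{\epsilon,n}$ is within $3\epsilon$ of optimal for every $\xi$ in the cell; $\mu$ is then a purely combinatorial object. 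The final estimate is the same in both: a Lipschitz bound on $\varphi\mapsto F(|\varphi|^2)$, Cauchy--Schwarz against the moments of $\mathbb{P}_y$, and Eq.~\eqref{Equation-Support property of the measure}. What your approach buys is a cleaner endgame (no auxiliary discretization scale $\epsilon$ to optimize, and you get the exact infimum on $A$ rather than a $3\epsilon$-approximation); what it costs is the geometric work you flag: proving a \emph{uniform} reach for the non-compact manifold $\{\varphi^{\mathrm{Pek}}_x\}$ in $L^2$, which needs both the implicit function theorem near $\varphi^{\mathrm{Pek}}$ and a global separation estimate of the type $\|\varphi^{\mathrm{Pek}}_x-\varphi^{\mathrm{Pek}}_{x'}\|\gtrsim\min\{|x-x'|,1\}$ to rule out distant competitors and guarantee attainment. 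Translation invariance does make this uniform in $x$, so the argument should close, but the paper's cell decomposition sidesteps all of it and is more elementary. Both are valid.
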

\begin{proof}
For $\epsilon>0$, let $\bigcup_{n=1}^\infty A_{\epsilon,n}=\mathbb{C}^{N_*}$ be a partition of $\mathbb{C}^{N_*}$ consisting of non-empty measurable sets $A_{\epsilon,n}$ having a diameter bounded by $d(A_{\epsilon,n})\leq \epsilon$. Furthermore choose $\xi_{\epsilon,n}\in A_{\epsilon,n}$ and $x_{\epsilon,n}\in \mathbb{R}^3$ satisfying $\|\varphi_{0,\xi_{\epsilon,n}}-\varphi^\mathrm{Pek}_{x_{\epsilon,n}}\|\leq \inf_{x\in \mathbb{R}^3}\|\varphi_{0,\xi_{\epsilon,n}}-\varphi^\mathrm{Pek}_{x}\|+\epsilon$. 
Then
\begin{align}
\nonumber
&\|\varphi_{y,\xi}\! -\! \varphi^\mathrm{Pek}_{y+x_{\epsilon,n}}\| = \|\varphi_{0,\xi}\! -\! \varphi^\mathrm{Pek}_{x_{\epsilon,n}}\| \leq  \|\varphi_{0,\xi_{\epsilon,n}}\! -\! \varphi^\mathrm{Pek}_{x_{\epsilon,n}}\|\! +\! \|\varphi_{0,\xi}\! -\! \varphi_{0,\xi_{\epsilon,n}}\| \leq  \|\varphi_{0,\xi_{\epsilon,n}}\! -\! \varphi^\mathrm{Pek}_{x_{\epsilon,n}}\|\! +\! \epsilon\\
\label{Equation-Infimum estimate}
&\ \ \ \ \ \ \ \leq \inf_{x\in \mathbb{R}^3}\! \|\varphi_{0,\xi_{\epsilon,n}}\! -\! \varphi^\mathrm{Pek}_{x}\|\! +\! 2\epsilon\leq \inf_{x\in \mathbb{R}^3}\! \|\varphi_{0,\xi}\! -\! \varphi^\mathrm{Pek}_{x}\|\! +\! 3\epsilon=\inf_{x\in \mathbb{R}^3}\! \|\varphi_{y,\xi}\! -\! \varphi^\mathrm{Pek}_{x}\|\! +\! 3\epsilon.
\end{align}
Let us now define the probability measure $\mu$ on $\mathbb{R}^3$ by specifying its action on functions $f\in C\left(\mathbb{R}^3\right)$ as
\begin{align*}
\int f\mathrm{d}\mu:=\sum_{n=1}^\infty \int \!  f\left(y+x_{\epsilon,n}\right)\! \mathbb{P}_{y}\! \left(A_{\epsilon,n}\right) \mathrm{d}y=\sum_{n=1}^\infty \int \!  \int_{A_{\epsilon,n}}\!  f\left(y+x_{\epsilon,n}\right) \mathrm{d}\mathbb{P}_{y} \mathrm{d}y.
\end{align*}
 Since $\int \!  F\left(\left|\varphi_{y,\xi}\right|^2\right)\!  \mathrm{d}\mathbb{P}_{y}\!  \left(\xi\right)=\sum_{n=1}^\infty \int_{A_{\epsilon,n}} \!  F\left(\left|\varphi_{y,\xi}\right|^2\right)\!  \mathrm{d}\mathbb{P}_{y}\!  \left(\xi\right)$, we can estimate the left hand side of Eq.~(\ref{Equation-Approximation of double integral}) with the aid of the triangle inequality by
\begin{align}
\label{Equation-Polished version of the double integral}
\sum_{n=1}^\infty \int \!  \int_{A_{\epsilon,n}}\!  \left|F\left(\left|\varphi_{y,\xi}\right|^2\right)-F\left(\left|\varphi^\mathrm{Pek}_{y+x_{\epsilon,n}}\right|^2\right)\right| \mathrm{d}\mathbb{P}_{y}\! \left(\xi\right) \mathrm{d}y.
\end{align}
From the concrete form of the function $F$ given in Eq.~(\ref{Equation-F function}), as well as the facts that $\|\varphi^\mathrm{Pek}_{y+x_{\epsilon,n}}\|=\|\varphi^\mathrm{Pek}_{0}\|$ is finite and $\|\varphi_{y,\xi}\|=|\xi|$, one readily concludes that
  $$\left|F\left(\left|\varphi_{y,\xi}\right|^2\right)-F\left(\left|\varphi^\mathrm{Pek}_{y+x_{\epsilon,n}}\right|^2\right)\right|\lesssim \|f\|_\infty\left\|\varphi_{y,\xi}-\varphi^\mathrm{Pek}_{y+x_{\epsilon,n}}\right\|\left(1+|\xi|\right)^{2m-1}.$$
   Using Eq.~(\ref{Equation-Infimum estimate}) we further obtain for any $\kappa>0$ and $\xi\in A_{\epsilon,n}$
\begin{align*}
&\left\|\varphi_{y,\xi}-\varphi^\mathrm{Pek}_{y+x_{\epsilon,n}}\right\|\left(1+|\xi|\right)^{2m-1}\leq \left(\inf_{x\in \mathbb{R}^3}\! \|\varphi_{y,\xi}\! -\! \varphi^\mathrm{Pek}_{x}\|+3\epsilon\right)\left(1+|\xi|\right)^{2m-1}\\
&\ \ \ \ \ \ \ \ \ \ \ \ \ \ \leq \kappa^{-1}\inf_{x\in \mathbb{R}^3}\! \|\varphi_{y,\xi}\! -\! \varphi^\mathrm{Pek}_{x}\|^2+\frac \kappa 4\left(1+|\xi|\right)^{4m-2}+3\epsilon\left(1+|\xi|\right)^{2m-1},
\end{align*}
and therefore the expression in Eq.~(\ref{Equation-Polished version of the double integral}) can be bounded from above by
\begin{align*}
&\|f\|_\infty\bigg(\kappa^{-1}\iint \inf_{x\in \mathbb{R}^3}\! \|\varphi_{y,\xi}\! -\! \varphi^\mathrm{Pek}_{x}\|^2 \mathrm{d}\mathbb{P}_{y}\! \left(\xi\right) \mathrm{d}y + \frac \kappa 4\! \iint \left(1+|\xi|\right)^{4m-2}\mathrm{d}\mathbb{P}_{y}\! \left(\xi\right) \mathrm{d}y\\
& \ \ \ \ \ \ \ \ \ + 3\epsilon\! \iint \left(1+|\xi|\right)^{2m-1}\mathrm{d}\mathbb{P}_{y}\! \left(\xi\right) \mathrm{d}y\bigg).
\end{align*}
By Lemma \ref{Lemma-Support property of the measure} this concludes the proof of \eqref{Equation-Approximation of double integral}  with $\epsilon:=\kappa:=\sqrt{\delta e+\alpha^{-s}+\alpha^{\frac{27}{2}s-2}}$. Eq.~(\ref{Equation-Approximation of double integral_second part}) can be proven analogously, using the estimate
\begin{align*}
\left| \|\varphi_{y,\xi}-g\|^2-\|\varphi^\mathrm{Pek}_{y+x_{\epsilon,n}}-g\|^2\right| \lesssim \left\|\varphi_{y,\xi}-\varphi^\mathrm{Pek}_{y+x_{\epsilon,n}}\right\|\left(1+|\xi|\right)
\end{align*}
for $\xi \in A_{\epsilon,n}$. 
\end{proof}

Combining Eq.~(\ref{Equation-Corollary_Coherent State Methode}), respectively Eq.~(\ref{Equation-Corollary_Coherent State Methode_Second Part}), with Eq.~(\ref{Equation-Approximation of double integral}), respectively Eq.~(\ref{Equation-Approximation of double integral_second part}), immediately yields that the left hand side of Eq.~(\ref{Equation-Coherent state formula for F}), respectively Eq.~(\ref{Equation-Translated F version}), is of the order $\sqrt{\delta e}+  \alpha^{-\frac{s}{2}}+  \alpha^{\frac{27}{4}s-1}$. Optimizing in the parameter $0<s\leq \frac{4}{27}$ concludes the proof of Theorem \ref{Theorem-Coherent States} with the concrete choice $s:=\frac{4}{29}$.

\appendix

\section{Properties of the Pekar Minimizer}
\label{Properties of the Pekar Minimizer}
In the following section we derive certain useful properties concerning the minimizer $\varphi^\mathrm{Pek}$ of the Pekar functional $\mathcal{F}^\mathrm{Pek}$ in \eqref{def:FP}. We start with Lemma \ref{Lemma-Cut-off residuum estimate}, where we quantify the error of applying the cut-off $\Pi $ to a minimizer, where $\Pi$ is the projection defined in Eq.~(\ref{Equation-Definition of strong projection}) for a given parameter $0<\sigma<\frac{1}{4}$. The subsequent Lemmas \ref{Lemma-Boundedness of the density} and \ref{Lemma-Results on the Pekar minimizer} then concern the concentration of the density $\left|\varphi^\mathrm{Pek}\right|^2$ around the origin.

\begin{lem}
\label{Lemma-Cut-off residuum estimate}
For all $r>0$ we have the estimates $\sup_{|x|\leq r}\left\|\left(1-\Pi \right)\varphi^\mathrm{Pek}_{x}\right\|\lesssim \alpha^{-\frac{6}{5}(1+\sigma)}$. Moreover,  $\left\|\left(1-\Pi \right)\partial_{x_n} \varphi^\mathrm{Pek}\right\|\lesssim \alpha^{-\frac{2}{5}(1+\sigma)}$ for $n\in \{1,2,3\}$.
\end{lem}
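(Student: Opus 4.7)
The strategy is to use the orthogonal decomposition $1-\Pi = (1-\Pi_\Lambda)+(\Pi_\Lambda - \Pi)$, which is valid since the range of $\Pi$ is contained in the range of $\Pi_\Lambda$ by construction (so that $\Pi\,\Pi_\Lambda = \Pi_\Lambda\Pi = \Pi$), and to handle the two summands separately. The key input from outside Section \ref{Section-The cut-off Model} is that $\psi^{\mathrm{Pek}}$ is smooth with exponential decay (cf.\ Appendix \ref{Properties of the Pekar Minimizer}), so in particular $\rho := |\psi^{\mathrm{Pek}}|^2$ is Schwartz-class. Starting from the Euler--Lagrange representation $\varphi^{\mathrm{Pek}}(x') = \int w_y(x')\,\rho(y)\,\mathrm{d}y$, the Fourier transform takes the form $\widehat{\varphi^{\mathrm{Pek}}}(k) = (\sqrt{2\pi^2})^{-1}|k|^{-1}\widehat{\rho}(k)$, and translation gives $\widehat{\varphi^{\mathrm{Pek}}_x}(k) = e^{-ik\cdot x}\widehat{\varphi^{\mathrm{Pek}}}(k)$.

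For the ultraviolet contribution, I would simply compute
$\|(1-\Pi_\Lambda)\varphi^{\mathrm{Pek}}_x\|^2 = (2\pi^2)^{-1}\int_{|k|>\Lambda}|k|^{-2}|\widehat{\rho}(k)|^2\,\mathrm{d}k$,
which, by the super-polynomial decay of $\widehat{\rho}$, is smaller than any negative power of $\alpha$ and hence negligible compared with both $\alpha^{-\frac{6}{5}(1+\sigma)}$ and $\alpha^{-\frac{2}{5}(1+\sigma)}$. For the mid-range part, I would push the bounded operator $\Pi_\Lambda - \Pi$ inside the convolution, i.e.\ write $\varphi^{\mathrm{Pek}}_x = \int w_{x+y}\,\rho(y)\,\mathrm{d}y$ and exchange the operator with the $y$-integral, to obtain
$\|(\Pi_\Lambda - \Pi)\varphi^{\mathrm{Pek}}_x\| \leq \int \|(\Pi_\Lambda - \Pi) w_{x+y}\|\,\rho(y)\,\mathrm{d}y$.
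Lemma \ref{Lemma-Norm Estimate} applied with source point $x+y$ and center $y=0$ gives $\|(\Pi_\Lambda - \Pi) w_{x+y}\| \lesssim |x+y|\,\ell\sqrt{\Lambda} + \sqrt{\ell}$. Integrating against $\rho$, which has a finite first moment thanks to the exponential decay, and using $|x|\le r$, one obtains $\|(\Pi_\Lambda - \Pi)\varphi^{\mathrm{Pek}}_x\| \lesssim_r \ell\sqrt{\Lambda} + \sqrt{\ell}$. With $\ell = \alpha^{-4(1+\sigma)}$ and $\Lambda = \alpha^{\frac{4}{5}(1+\sigma)}$, this is of order $\alpha^{-2(1+\sigma)}$, which is a fortiori $\lesssim \alpha^{-\frac{6}{5}(1+\sigma)}$.

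The bound on the derivative follows from the same template after integrating by parts once: $\partial_{x_n}\varphi^{\mathrm{Pek}}(x') = -\int\partial_{y_n} w_y(x')\,\rho(y)\,\mathrm{d}y = \int w_y(x')\,\partial_{y_n}\rho(y)\,\mathrm{d}y$. Since $\psi^{\mathrm{Pek}}$ is smooth and exponentially decaying, $\partial_{y_n}\rho \in L^1(\mathbb{R}^3,(1+|y|)\mathrm{d}y)$, so the same manipulation together with Lemma \ref{Lemma-Norm Estimate} produces $\|(\Pi_\Lambda - \Pi)\partial_{x_n}\varphi^{\mathrm{Pek}}\| \lesssim \ell\sqrt{\Lambda} + \sqrt{\ell} = O(\alpha^{-2(1+\sigma)})$, well within $\alpha^{-\frac{2}{5}(1+\sigma)}$; the ultraviolet contribution $\|(1-\Pi_\Lambda)\partial_{x_n}\varphi^{\mathrm{Pek}}\|^2 = \int_{|k|>\Lambda}|k_n|^2|k|^{-2}|\widehat{\rho}(k)|^2\,\mathrm{d}k$ is again super-polynomially small. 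The main (minor) technical point is justifying the interchange of the operator $\Pi_\Lambda - \Pi$ with the convolution integral, because individual $w_{x+y}$ are not in $L^2$; the cleanest workaround is to truncate $\rho$ to a ball of radius $R$ so that the integrand becomes a bona fide $L^2$-valued Bochner integrand, apply the bounded operator, and pass to the limit $R\to\infty$ using the exponential decay of $\psi^{\mathrm{Pek}}$ together with the fact that $\|(\Pi_\Lambda - \Pi)w_{x+y}\|$ is locally bounded in $y$.
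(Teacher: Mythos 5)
Your proposal is correct and follows essentially the same route as the paper: the operator decomposition $1-\Pi=(1-\Pi_\Lambda)+(\Pi_\Lambda-\Pi)$ is the same as the paper's splitting $\varphi^{\mathrm{Pek}}_x=4\sqrt{\pi}(f_x+g_x)$ into a low-$|k|$ part $f_x$ and a high-$|k|$ tail $g_x$ (since $(\Pi_\Lambda-\Pi)\varphi^{\mathrm{Pek}}_x=(1-\Pi)f_x$ and $(1-\Pi_\Lambda)\varphi^{\mathrm{Pek}}_x=g_x$), and the mid-range estimate via writing $\varphi^{\mathrm{Pek}}_x$ as a convolution of $w_{x+y}$ against $\rho=|\psi^{\mathrm{Pek}}|^2$ and applying Lemma \ref{Lemma-Norm Estimate} under the integral is exactly what the paper does. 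The one substantive difference is the ultraviolet tail: you invoke smoothness and exponential decay of $\psi^{\mathrm{Pek}}$ (hence super-polynomial Fourier decay of $\rho$), whereas the paper only uses that $\psi^{\mathrm{Pek}}\in H^2$ (together with the finite-moment property from \cite{MS}), which gives the weaker but sufficient polynomial bound $\|g_x\|^2\lesssim\|\,|k|^2\widehat{\rho}\,\|_\infty^2\,\Lambda^{-3}=\alpha^{-\frac{12}{5}(1+\sigma)}$. Note also that the paper's Appendix~\ref{Properties of the Pekar Minimizer} does not actually record smoothness or exponential decay of $\psi^{\mathrm{Pek}}$; if you want to use those stronger facts you should cite \cite{MS} directly rather than the appendix, though the milder $H^2$ input already suffices.
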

\begin{proof}
We can write $\varphi^\mathrm{Pek}=4\sqrt{\pi}\left(-\Delta\right)^{-\frac{1}{2}}\left|\psi^\mathrm{Pek}\right|^2$ where $\psi^\mathrm{Pek}$ is the ground state of the operator $H_{V^\mathrm{Pek}}$. Consequently $\varphi^\mathrm{Pek}_x=4\sqrt{\pi}\left(f_x+g_x\right)$ with the definitions $\widehat{f_x}(k)=\mathds{1}_{B_{\Lambda}}(k)\frac{\widehat{\left|\psi^\mathrm{Pek}\right|^2}(k)}{|k|}e^{ik\cdot x}$ and $\widehat{g_x}(k)=\mathds{1}_{\mathbb{R}^3\setminus B_{\Lambda}}(k)\frac{\widehat{\left|\psi^\mathrm{Pek}\right|^2}(k)}{|k|}e^{ik\cdot x}$, where $\widehat{\cdot}$ denotes the Fourier transform. In the first step we are going to estimate $\|\left(1-\Pi \right)g_x\|=\|g_x\|$ by
\begin{align} 
\label{Equation-gEstimate}
\|g_x\|^2\! =\! \int_{|k|\geq \Lambda}\frac{\Big|\widehat{\left|\psi^\mathrm{Pek}\right|^2}(k)\Big|^2}{|k|^2}\mathrm{d}k\! \leq\!  \left\||k|^2\widehat{\left|\psi^\mathrm{Pek}\right|^2}(k)\right\|^2_\infty \int_{|k|\geq \Lambda}\frac{1}{|k|^6}\mathrm{d}k\lesssim \frac{1}{\Lambda^3}\! =\!  \alpha^{-\frac{12}{5}(1+\sigma)},
\end{align}
where we have used that $\psi^\mathrm{Pek}\in H^2\! \left(\mathbb{R}^3\right)$, see \cite{Li,MS}, and therefore $\left\||k|^2\widehat{\left|\psi^\mathrm{Pek}\right|^2}(k)\right\|_\infty<\infty$. In order to estimate the remaining part $\|\left(1-\Pi \right)f_x\|$, let us first compute
\begin{align*}
f_x(y)\! &=\! \frac{1}{\sqrt{(2\pi)^3}}\! \int_{|k|\leq \Lambda}\! \! \frac{\widehat{\left|\psi^\mathrm{Pek}\right|^2}(k)}{|k|}e^{ik\cdot (x-y)} \mathrm{d}k=\frac{1}{(2\pi)^3}\int_{|k|\leq \Lambda}\!  \! \frac{e^{ik\cdot (x-y)}}{|k|}\! \int_{\mathbb{R}^3}\!  \left|\psi^\mathrm{Pek}(z)\right|^2\! e^{ik\cdot z} \mathrm{d}z \mathrm{d}k\\
&=\frac{1}{(2\pi)^3}\int_{\mathbb{R}^3}\left|\psi^\mathrm{Pek}(z)\right|^2 \int_{|k|\leq \Lambda}\! \! \frac{e^{ik\cdot (x+z-y)}}{|k|}\mathrm{d}k\,  \mathrm{d}z=\frac{1}{\sqrt{4\pi}}\int_{\mathbb{R}^3}\left|\psi^\mathrm{Pek}(z)\right|^2 \Pi_\Lambda w_{x+z}(y)\,  \mathrm{d}z
\end{align*}
using the projection $\Pi_\Lambda$ from Definition \ref{Definition-Pi} and the function $w_{x}$ from Lemma \ref{Lemma-Norm Estimate}. Consequently we obtain by Lemma \ref{Lemma-Norm Estimate}
\begin{align*}
\|\left(1-\Pi \right)f_x\|&\leq \frac{1}{\sqrt{4\pi}}\int_{\mathbb{R}^3}\left|\psi^\mathrm{Pek}(z)\right|^2 \|\Pi_\Lambda w_{x+z}-\Pi w_{x+z}\|\,  \mathrm{d}z\\\\
&\lesssim \ell \sqrt{\Lambda}\int_{\mathbb{R}^3} |z|\left|\psi^\mathrm{Pek}(z)\right|^2\,  \mathrm{d}z+\ell \sqrt{\Lambda}|x|+\sqrt{\ell},
\end{align*}
where we have used $\left(1-\Pi \right)\Pi_\Lambda=\Pi_\Lambda-\Pi$ and $\int_{\mathbb{R}^3}\left|\psi^\mathrm{Pek}(z)\right|^2\,  \mathrm{d}z=1$. This concludes the proof of the first part, since the terms $\ell \sqrt{\Lambda}$ and $\sqrt{\ell}$ are all bounded by $\alpha^{-\frac{6}{5}(1+\sigma)}$, and the state $\psi^\mathrm{Pek}$ satisfies $\int_{\mathbb{R}^3} |z|^p \left|\psi^\mathrm{Pek}(z)\right|^2\,  \mathrm{d}z<\infty$ for any $p\geq 0$, see \cite{MS}. 

In order to verify the second part, we write again $\partial_{x_n} \varphi^\mathrm{Pek}=4\sqrt{\pi}\left(\partial_{x_n}f_0+\partial_{x_n}g_0\right)$. In analogy to Eq.~(\ref{Equation-gEstimate}) we have $\|\partial_{x_n}g_0\|^2\lesssim \frac{1}{\Lambda}=\alpha^{-\frac{4}{5}(1+\sigma)}$. Furthermore $\partial_{x_n}f_0(x)=-\frac{1}{\sqrt{4\pi}}\int_{\mathbb{R}^3}\partial_{z_n}\left(\left|\psi^\mathrm{Pek}(z)\right|^2\right) \Pi_\Lambda w_{z}(x)\,  \mathrm{d}z$, hence proceeding as above yields 
\begin{align*}
\|\left(1-\Pi \right)\partial_{x_n}f_0\| & \lesssim \ell \sqrt{\Lambda}\int_{\mathbb{R}^3} |z|\big|\partial_{z_n}\left(\left|\psi^\mathrm{Pek}(z)\right|^2\right)\! \big|\,  \mathrm{d}z \\ & \quad + \left( \ell \sqrt{\Lambda}|x|+\sqrt{\ell}\right) \int_{\mathbb{R}^3} \big|\partial_{z_n}\left(\left|\psi^\mathrm{Pek}(z)\right|^2\right)\! \big|\,  \mathrm{d}z.
\end{align*}
This concludes the proof, since
\begin{align*}
\int_{\mathbb{R}^3} |z|\Big|\partial_{x_n}\left(\left|\psi^\mathrm{Pek}(z)\right|^2\right)\! \Big|\,  \mathrm{d}z&=2\int_{\mathbb{R}^3} |z||\psi^\mathrm{Pek}(z)||\partial_{z_n}\psi^\mathrm{Pek}(z)|\,  \mathrm{d}z\\
&\leq \int_{\mathbb{R}^3} |z|^2|\psi^\mathrm{Pek}(z)|^2\,  \mathrm{d}z+\int_{\mathbb{R}^3} |\nabla \psi^\mathrm{Pek}(z)|^2\,  \mathrm{d}z<\infty
\end{align*}
and similarly with $|z|$ replaced by $1$.
\end{proof}

\begin{lem}
\label{Lemma-Boundedness of the density}
There exists a constant $C$ such that $\int\limits_{t\leq x_i\leq t+\epsilon}\left|\varphi^\mathrm{Pek}(x)\right|^2\, \mathrm{d}x\leq C\, \epsilon$ for all $t\in \mathbb{R}$, $\epsilon>0$ and $i\in \{1,2,3\}$.
\end{lem}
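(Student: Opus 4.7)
The plan is to extract a pointwise decay estimate $|\varphi^\mathrm{Pek}(x)|^2 \lesssim (1+|x|)^{-4}$ and then to integrate out the two transverse directions explicitly. Since the claim is really just that the marginal density $\rho_i(s) := \int_{x_i = s} |\varphi^\mathrm{Pek}(x)|^2 \, \mathrm{d}x_{j \ne i}$ is uniformly bounded in $s$, such a pointwise bound combined with a direct two-dimensional integration is sufficient.

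To obtain the pointwise bound, I would use the Euler--Lagrange representation $\varphi^\mathrm{Pek} = 4\sqrt{\pi}\,(-\Delta)^{-1/2}\bigl|\psi^\mathrm{Pek}\bigr|^2$ already exploited in the proof of Lemma~\ref{Lemma-Cut-off residuum estimate}, which via the Riesz-potential kernel reads
\begin{align*}
\varphi^\mathrm{Pek}(x) = c \int_{\mathbb{R}^3} \frac{|\psi^\mathrm{Pek}(y)|^2}{|x-y|^2}\,\mathrm{d}y.
\end{align*}
Boundedness of $\varphi^\mathrm{Pek}$ near the origin is immediate from the standard mapping properties of the Riesz potential together with $|\psi^\mathrm{Pek}|^2 \in L^p(\mathbb{R}^3)$ for every $p \ge 1$. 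For $|x|$ large I would split the integration at $|y| = |x|/2$: on the inner region $|x-y| \ge |x|/2$ yields a contribution $\lesssim |x|^{-2}$ using $\|\psi^\mathrm{Pek}\|_2 = 1$, while on the outer region the exponential decay of $\psi^\mathrm{Pek}$ (cf.\ \cite{MS}) beats any inverse polynomial in $|x|$. Combining these gives $|\varphi^\mathrm{Pek}(x)| \lesssim (1+|x|)^{-2}$.

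With this decay in hand the lemma follows by a direct calculation: without loss of generality $i = 3$, and
\begin{align*}
\int_{t \le x_3 \le t+\epsilon}\! |\varphi^\mathrm{Pek}(x)|^2\,\mathrm{d}x
\lesssim \int_t^{t+\epsilon}\!\int_{\mathbb{R}^2}\frac{\mathrm{d}x_1\,\mathrm{d}x_2}{(1+x_1^2+x_2^2+s^2)^2}\,\mathrm{d}s
\lesssim \int_t^{t+\epsilon}\!\frac{\mathrm{d}s}{1+s^2} \le \pi\epsilon
\end{align*}
uniformly in $t \in \mathbb{R}$. There is no real obstacle in this argument; the only mildly delicate ingredient is the $(1+|x|)^{-2}$ pointwise bound on $\varphi^\mathrm{Pek}$, which in turn rests on the well-documented exponential decay of the Pekar electron wavefunction $\psi^\mathrm{Pek}$.
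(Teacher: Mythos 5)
Your proof is correct but follows a genuinely different route from the paper's. You derive the pointwise decay $|\varphi^\mathrm{Pek}(x)| \lesssim (1+|x|)^{-2}$ from the Riesz-potential representation $\varphi^\mathrm{Pek} = 4\sqrt{\pi}\,(-\Delta)^{-1/2}|\psi^\mathrm{Pek}|^2$ together with the exponential decay of $\psi^\mathrm{Pek}$ from \cite{MS}, and then integrate the resulting bound $(1+|x|^2)^{-2}$ over the slab directly. The paper instead works with the marginal density $D(t) := \int_{\mathbb{R}^2}|\varphi^\mathrm{Pek}(t,x_2,x_3)|^2\,\mathrm{d}x_2\mathrm{d}x_3$, observes that $D\to 0$ at $\pm\infty$ (since $D\in W^{1,1}(\mathbb{R})$), and so $\|D\|_\infty \le \int|D'|$, which it bounds by $\|\varphi^\mathrm{Pek}\|^2 + \|\nabla\varphi^\mathrm{Pek}\|^2$ via Cauchy--Schwarz. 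The key difference is what input is needed: the paper's argument uses only $\varphi^\mathrm{Pek}\in H^1(\mathbb{R}^3)$, which is essentially free from the variational structure, whereas yours invokes the Euler--Lagrange equation and the exponential decay of the electron wavefunction — well-established facts, but strictly more structure. In exchange your argument yields the sharp pointwise asymptotics $\varphi^\mathrm{Pek}(x)\sim c|x|^{-2}$ as a byproduct, which the paper's $H^1$-based argument does not. Both proofs are valid; the paper's is the more economical for the stated lemma.
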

\begin{proof}
By the reflection symmetry of the Pekar minimizer, it is enough to prove the statement for $i=1$. For this purpose, let us define the function $D:\mathbb{R}\rightarrow \mathbb{R}$ as
\begin{align*}
D(t):=\int_{\mathbb{R}^2}\left|\varphi^\mathrm{Pek}(t,x_2,x_3)\right|^2\, \mathrm{d}x_2\mathrm{d}x_3
\end{align*}
In order to prove the Lemma, we are going to show that $D$ is a bounded function. Since $D(t)\underset{t\rightarrow \pm \infty}{\longrightarrow}0$, we have $\|D\|_\infty\leq \int |D'(t)|\mathrm{d}t$ and furthermore
\begin{align*}
\int |D'(t)|\mathrm{d}t&\leq \int \int_{\mathbb{R}^2}\left|\partial_t\left|\varphi^\mathrm{Pek}(t,x_2,x_3)\right|^2\right|\, \mathrm{d}x_2\mathrm{d}x_3\mathrm{d}t\leq \int_{\mathbb{R}^3}\left|\nabla_x \left|\varphi^\mathrm{Pek}\right|^2\right|\, \mathrm{d}x\\
&=2\int_{\mathbb{R}^3}\varphi^\mathrm{Pek}(x)\left|\nabla_x\varphi^\mathrm{Pek}\right|\, \mathrm{d}x\leq \|\varphi^\mathrm{Pek}\|^2+\|\nabla\varphi^\mathrm{Pek}\|^2<\infty,
\end{align*}
where we have used that $\varphi^\mathrm{Pek}\in H^1\! \left(\mathbb{R}^3\right)$.
\end{proof}

\begin{lem}
\label{Lemma-Results on the Pekar minimizer}
The Pekar minimizers $\varphi^\mathrm{Pek}_x$ satisfy $\left\|\varphi^\mathrm{Pek}_x-\varphi^\mathrm{Pek}\right\|^2\lesssim \sum_{i=1}^3 P^\epsilon_i\left(\left|\varphi^\mathrm{Pek}_x\right|^2\right)+\alpha^{-u}$, where $P^\epsilon_i$ is defined in Eq.~(\ref{Equation-Definition H_i}). 
\end{lem}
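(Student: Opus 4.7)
The plan is to establish the claim by passing through an intermediate quantity built from the one-dimensional marginal density of $|\varphi^\mathrm{Pek}|^2$. Let $M := \|\varphi^\mathrm{Pek}\|^2$, let $\phi(t) := \int_{\mathbb{R}^2} |\varphi^\mathrm{Pek}(t,y_2,y_3)|^2 \,\mathrm{d}y_2\,\mathrm{d}y_3$ denote the (common, by radial symmetry) marginal density of $|\varphi^\mathrm{Pek}|^2$ in any coordinate direction, let $\Phi(t) := \int_{-\infty}^t \phi(s)\,\mathrm{d}s$ be its CDF (so $\Phi(0) = M/2$ and $\Phi(-t) = M - \Phi(t)$ by reflection symmetry of $\varphi^\mathrm{Pek}$), and set $G(t) := \Phi(t) - M/2 = \int_0^t \phi(s)\,\mathrm{d}s$. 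I will show two bounds: (i) $P^\epsilon_i(|\varphi^\mathrm{Pek}_x|^2) \geq G(x_i)^2 - C\epsilon$ and (ii) $\|\varphi^\mathrm{Pek}_x - \varphi^\mathrm{Pek}\|^2 \lesssim \sum_{i=1}^3 G(x_i)^2$, which combine to give the result.

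For (i), the change of variables $y \mapsto y+x$ gives $\int_{y_i\leq \epsilon}|\varphi^\mathrm{Pek}_x|^2 = \Phi(\epsilon - x_i)$ and $\int_{y_i \geq -\epsilon}|\varphi^\mathrm{Pek}_x|^2 = M - \Phi(-\epsilon - x_i)$. Setting $a := \Phi(\epsilon - x_i)$, $b := \Phi(-\epsilon - x_i)$, $c := (a+b)/2$, and $d := (a-b)/2$, a direct algebraic manipulation yields
\begin{align*}
P^\epsilon_i(|\varphi^\mathrm{Pek}_x|^2) = (c-M/2)^2 - Md - d^2.
\end{align*}
The reflection identity rewrites $c - M/2 = \tfrac{1}{2}(\Phi(\epsilon - x_i) - \Phi(\epsilon + x_i))$, and Lemma~\ref{Lemma-Boundedness of the density} supplies the uniform bound $|\Phi(t+\epsilon) - \Phi(t)| \lesssim \epsilon$, so that $|c - M/2 + G(x_i)| \lesssim \epsilon$ and $|d| \lesssim \epsilon$. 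Since $|G(x_i)| \leq M/2$, these combine to the claimed lower bound on $P^\epsilon_i$.

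For (ii), I run a continuity/compactness argument on $\mathbb{R}^3$. Both sides are continuous in $x$ and vanish exactly at $x=0$: the left side obviously, and the right side because $\phi(0) > 0$ (by continuity and positivity of $\varphi^\mathrm{Pek}$), which makes $G$ strictly monotone. The ratio $\|\varphi^\mathrm{Pek}_x - \varphi^\mathrm{Pek}\|^2/\sum_i G(x_i)^2$ extends continuously to $x=0$ via the Taylor expansions $\|\varphi^\mathrm{Pek}_x - \varphi^\mathrm{Pek}\|^2 = \tfrac{1}{3}\|\nabla \varphi^\mathrm{Pek}\|^2 |x|^2 + o(|x|^2)$ (using $\varphi^\mathrm{Pek}\in H^1$ and radial symmetry to kill the cross terms $\langle \partial_i \varphi^\mathrm{Pek}, \partial_j \varphi^\mathrm{Pek}\rangle$ for $i\neq j$) and $G(x_i) = \phi(0)x_i + o(x_i)$. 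For $|x|\to\infty$ the numerator is bounded by $4\|\varphi^\mathrm{Pek}\|^2$, while $\max_i |x_i| \geq |x|/\sqrt{3}$ together with $G(t)\to \pm M/2$ for $t\to \pm\infty$ gives $\sum_i G(x_i)^2 \geq G(|x|/\sqrt{3})^2 \to M^2/4$, so the ratio is bounded for large $|x|$. On any compact annulus $\{r_1 \leq |x| \leq r_2\}$ the continuous denominator has a positive infimum, so the ratio is bounded there too.

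Combining (i) and (ii) gives $\|\varphi^\mathrm{Pek}_x - \varphi^\mathrm{Pek}\|^2 \lesssim \sum_i G(x_i)^2 \leq \sum_i P^\epsilon_i(|\varphi^\mathrm{Pek}_x|^2) + C\epsilon$, as desired. The principal technical point is step (ii), which requires stitching together three regimes (small, moderate, large $|x|$) and relies on the positivity, radial symmetry, and regularity of the Pekar minimizer (as recorded in \cite{Li,MS}) both to obtain the correct infinitesimal scaling at $x=0$ and to guarantee that $\sum_i G(x_i)^2$ is nonvanishing away from the origin.
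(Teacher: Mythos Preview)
Your proof is correct and follows essentially the same two-step scheme as the paper: reduce $P^\epsilon_i(|\varphi^\mathrm{Pek}_x|^2)$ algebraically to $G(x_i)^2-O(\epsilon)$ via the one-dimensional CDF and Lemma~\ref{Lemma-Boundedness of the density}, and then control $\|\varphi^\mathrm{Pek}_x-\varphi^\mathrm{Pek}\|^2$ by $\sum_i G(x_i)^2$. Your algebraic step~(i) is the same as the paper's (the paper writes $F=\Phi/M$ and arrives at $(\tfrac12-F(x_1))^2-2C\epsilon$, which is your $G(x_1)^2/M^2-O(\epsilon)$).

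The genuine difference is in step~(ii). The paper exploits that $\varphi^\mathrm{Pek}$ is radially \emph{decreasing} (from \cite{Li}) to get an explicit pointwise lower bound $|\varphi^\mathrm{Pek}|^2\geq c$ on a cube $[-\delta,\delta]^3$, which immediately yields $|G(x_i)|\gtrsim \min\{|x_i|,1\}$; combined with the elementary bound $\|\varphi^\mathrm{Pek}_x-\varphi^\mathrm{Pek}\|^2\lesssim\min\{|x|^2,1\}$ this finishes the proof in two lines. You replace this by a continuity/compactness argument that stitches together the small-$|x|$ regime (via the Taylor expansion of the autocorrelation, using $\varphi^\mathrm{Pek}\in H^2$), the large-$|x|$ regime (using $G(t)\to\pm M/2$), and the intermediate regime (positive infimum on an annulus, using strict monotonicity of $G$ from $\varphi^\mathrm{Pek}>0$). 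Your route is a bit longer but uses only positivity and regularity of $\varphi^\mathrm{Pek}$, not the radial monotonicity; the paper's route is shorter and more explicit because it leverages the extra structural information.
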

\begin{proof}
Since $\left\|\varphi^\mathrm{Pek}_x-\varphi^\mathrm{Pek}\right\|\leq \left\|\varphi^\mathrm{Pek}_x\right\|+\left\|\varphi^\mathrm{Pek}\right\|=2\left\|\varphi^\mathrm{Pek}\right\|$ and $\left\|\varphi^\mathrm{Pek}_x-\varphi^\mathrm{Pek}\right\|^2\leq |x|^2 \left\|\nabla \varphi^\mathrm{Pek}\right\|^2$, we have $\left\|\varphi^\mathrm{Pek}_x-\varphi^\mathrm{Pek}\right\|^2\lesssim \min\{|x|^2,1\}$. Therefore it is enough to show that we have $\min\{x_i^2,1\}\lesssim P^\epsilon_i\left(\left|\varphi^\mathrm{Pek}_x\right|^2\right)+\epsilon$. By the reflection symmetry of $\varphi^\mathrm{Pek}$, we can assume w.l.o.g. that $i=1$. We identify $\frac{1}{\left\|\varphi^\mathrm{Pek}\right\|^4}P^\epsilon_1\left(\left|\varphi^\mathrm{Pek}_x\right|^2\right)$
\begin{align*}
&\frac{1}{4}- \! \frac{1}{\left\|\varphi^\mathrm{Pek}\right\|^2}\underset{y_1\leq x_1+\epsilon}{\int}\left|\varphi^\mathrm{Pek}(y)\right|^2\mathrm{d}y\ \bigg(1- \! \frac{1}{\left\|\varphi^\mathrm{Pek}\right\|^2}\underset{y_1\leq  x_1-\epsilon}{\int}\left|\varphi^\mathrm{Pek}(y)\right|^2\mathrm{d}y\bigg)\\
&\ \ =\left(\frac{1}{2}\!-\!F(x_1)\right)^2+F(x_1)\big(F(x_1\!-\!\epsilon)\!-\!F(x_1)\big)\!+\!\big(F(x_1)\!-\!F(x_1\!+\!\epsilon)\big)\big(1\!-\!F(x_1-\epsilon)\big)\\
&\ \ \geq \left(\frac{1}{2}\!-\!F(x_1)\right)^2+\big(F(x_1\!-\!\epsilon)\!-\!F(x_1)\big)+\big(F(x_1)\!-\!F(x_1\!+\!\epsilon)\big)\geq \left(\frac{1}{2}\!-\!F(x_1)\right)^2-2C\, \epsilon
\end{align*}
with $F(t):=\frac{1}{\left\|\varphi^\mathrm{Pek}\right\|^2}\int_{y_1\leq t}\left|\varphi^\mathrm{Pek}(y)\right|^2\mathrm{d}y$, where $C$ is the constant from Lemma \ref{Lemma-Boundedness of the density}. Since $\varphi^\mathrm{Pek}$ is radially decreasing, see \cite{Li}, it is clear that $|\varphi^\mathrm{Pek}(x)|^2\geq c>0$ for all $x\in [-\delta,\delta]^3$ where $\delta,c>0$ are suitable constants. Assuming $x_1>0$ w.l.o.g. we conclude that $\left\|\varphi^\mathrm{Pek}\right\|^2\left(F(x_1)-\frac{1}{2}\right) \geq c\int_{0\leq y_1\leq x_1}\mathds{1}_{[-\delta,\delta]^3}(y)\, \mathrm{d}y =4c\delta^2\mathrm{min}\{x_1,\delta\}\gtrsim \min\{x_1,1\}$.
\end{proof}

\section{Properties of the Projection $\Pi $}
\label{Appendix-Estimates on Operator Norms}
In the following section we discuss properties of the Projections $\Pi $ defined in Eq.~(\ref{Equation-Definition of strong projection}) and $\Pi_K$ defined in Definition~\ref{Definition-Pi}. The first two results in Lemma \ref{Lemma-BasisElements} and Corollary \ref{Corollary-BasisElements} concern the space confinement of elements in the range of $\Pi $, to be precise we show that the associated potentials $V_\varphi$ defined in Eq.~(\ref{Equation-Potential}) are concentrated in a ball of radius $\alpha^q$ for a suitable $q>0$. While Lemma \ref{Lemma-Integral estimates} is an auxiliary result, we will show in the subsequent Lemmas \ref{Lemma-Auxillary Convergence of the trace} and \ref{Lemma-Convergence of the trace} that the operator $ J_{t,\epsilon} $ is an approximation of the Hessian $\mathrm{Hess}|_{\varphi^\mathrm{Pek}}\mathcal{F}^\mathrm{Pek}$, where $J_{t,\epsilon}$ is the operator defined in Eq.~(\ref{Equation-Pekar Functional}). Finally, we will show in Lemma \ref{Lemma-Uniform_singularity_control} that the functions $\Pi_K w_x$, which appear in the definition of $\mathbb{H}_K$ in Eq.~(\ref{Equation-Momentum Cut off Hamiltonian}), are confined in space around the origin. We will then use this result in order to quantify the energy cost of having the electron and the phonon field localized in different regions of space, see Corollary \ref{Corollary-Outside_Mass}.

The proof of the following auxiliary Lemma \ref{Lemma-BasisElements} is an easy analysis exercise and is left to the reader.
\begin{lem}
\label{Lemma-BasisElements}
There exists a constant $C>0$ such that for $f\in C^3\left(\mathbb{R}^3\right)$ and $K:=(k_1,k'_1)\times (k_2,k'_2)\times (k_3,k'_3)\subset \mathbb{R}^3$ with $k_i<k_i'< k_i+2$
\begin{align*} 
\left|\widehat{\left(\mathds{1}_K f\right)}(x)\right|\leq C\frac{\|f\|_{C^3\left(K\right)}}{\left(1+|x_1|\right)\left(1+|x_2|\right)\left(1+|x_3|\right)}
\end{align*}
for all $x=(x_1,x_2,x_3)\in \mathbb{R}^3$, where $\|f\|_{C^3\left(K\right)}:=\max_{|\alpha|\leq 3}\sup_{x\in K}|\partial^{\alpha} f(x)|$ and $\widehat{\cdot}$ denotes the Fourier transform.
\end{lem}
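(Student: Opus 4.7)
The plan is to combine the trivial volume bound with iterated integration by parts, handling the three Cartesian directions separately based on the size of each $|x_i|$. Observe first that the trivial estimate
\[
\left|\widehat{\left(\mathds{1}_K f\right)}(x)\right|\lesssim \int_K|f(k)|\,\mathrm{d}k\lesssim \|f\|_\infty
\]
(using that the three edges of $K$ each have length at most $2$) already yields the desired bound whenever all three $|x_i|$ stay bounded. The content of the lemma lies in extracting one factor of $|x_i|^{-1}$ for each direction in which $|x_i|$ is large, without losing more than three derivatives of $f$.

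For this, introduce $I:=\{i\in\{1,2,3\}:|x_i|\geq 1\}$, and for each $i\in I$ perform one integration by parts in $k_i$, using the identity $e^{-ik_ix_i}=\tfrac{1}{-ix_i}\partial_{k_i}e^{-ik_ix_i}$. A single such IBP splits the original integral over $K=\prod_j(a_j,b_j)$ into three pieces, each carrying a prefactor of the form $(\pm ix_i)^{-1}$: one interior piece with integrand $\mathds{1}_K\,\partial_{k_i}f(k)\,e^{-ik\cdot x}$, and two boundary pieces obtained by evaluating $f$ at $k_i=a_i$ respectively $k_i=b_i$ and integrating only over the corresponding two-dimensional face of $K$. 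Applying this procedure successively in every direction $i\in I$ (IBPs in distinct variables commute trivially) writes $\widehat{\mathds{1}_K f}(x)$ as a sum of at most $3^{|I|}\leq 27$ terms, each of the form
\[
\frac{c}{\prod_{i\in I} x_i}\int_F \partial^\alpha f(k)\,e^{-ik\cdot x}\,\mathrm{d}k',
\]
where $F$ is a face of $K$ of dimension at most $3$, $\mathrm{d}k'$ is the associated Lebesgue measure, $c\in\{\pm 1,\pm i,\dots\}$, and $|\alpha|\leq |I|\leq 3$.

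Since each face $F$ has measure at most $8$ and the integrand is bounded by $\|f\|_{C^3(K)}$, summing the finitely many contributions gives
\[
\left|\widehat{\left(\mathds{1}_K f\right)}(x)\right|\lesssim \|f\|_{C^3(K)}\prod_{i\in I}|x_i|^{-1}.
\]
Using $(1+|x_i|)^{-1}\geq\tfrac12$ for $i\notin I$ and $(1+|x_i|)^{-1}\geq(2|x_i|)^{-1}$ for $i\in I$, this is readily converted into the stated estimate. There is no genuine obstacle here; the only book-keeping point is that one must expand the iterated IBP into its $3^{|I|}$ boundary/interior contributions cleanly, which is precisely the reason the regularity $C^3(K)$ (rather than $C^1$) is required --- namely to absorb the worst case $|\alpha|=3$ arising when $I=\{1,2,3\}$.
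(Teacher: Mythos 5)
Your proof is correct, and since the paper explicitly leaves this lemma as "an easy analysis exercise" with no written proof, your argument is exactly what the authors intended the reader to supply: one integration by parts per direction with $|x_i|\geq 1$, tracking the $3^{|I|}$ interior/boundary contributions, bounding each face by its (uniformly bounded) measure, and converting $\prod_{i\in I}|x_i|^{-1}$ into $\prod_{i}(1+|x_i|)^{-1}$ at a cost of a fixed constant. The bookkeeping is clean and the appeal to $C^3$ regularity for the worst case $|I|=3$ is the right observation.
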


\begin{cor}
\label{Corollary-BasisElements}
There exists a constant $v>0$, such that for all $r>0$ and $\varphi\in \Pi  L^2\! \left(\mathbb{R}^3\right)$
\begin{align}
\label{Equation-Decay on subspace}
\left\|\mathds{1}_{\mathbb{R}^3 \setminus B_r(0)}V_\varphi\right\|\lesssim \frac{\alpha^v \|\varphi\|}{\sqrt{r}},
\end{align}
where $\Pi $ is defined in Eq.~(\ref{Equation-Definition of strong projection}) and $V_\varphi$ is defined in Eq.~(\ref{Equation-Potential}).
\end{cor}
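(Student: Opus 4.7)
The plan is to expand an arbitrary $\varphi\in \Pi L^2(\mathbb{R}^3)$ in the complex orthonormal basis $\{e_n\}$ of Definition~\ref{Definition-Pi}, reduce the $L^2$ tail bound on $V_\varphi$ to a corresponding bound on each $(-\Delta)^{-1/2}e_n$, and then exploit the fact that each $e_n$ has Fourier support in a small cube to obtain pointwise decay via Lemma~\ref{Lemma-BasisElements}. Writing $\varphi=\sum_{n=1}^N \lambda_n e_n$ with $\sum|\lambda_n|^2=\|\varphi\|^2$ and using $|V_\varphi(x)|\leq 2|((-\Delta)^{-1/2}\varphi)(x)|$ together with Cauchy--Schwarz will yield
\begin{equation*}
|V_\varphi(x)|^2\leq 4\|\varphi\|^2\sum_{n=1}^{N}\left|\left((-\Delta)^{-1/2}e_n\right)(x)\right|^2,
\end{equation*}
so the task reduces to estimating $\sum_n \int_{|x|\geq r}|((-\Delta)^{-1/2}e_n)(x)|^2\,\mathrm{d}x$ with the desired decay in $r$.

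For the pointwise bound on each summand, a direct Fourier computation gives $((-\Delta)^{-1/2}e_n)(x)=c_n\int_{C_{z^n}}\frac{e^{ik\cdot x}}{|k|^2}\mathrm{d}k$, where $c_n\propto \left(\int_{C_{z^n}}|k|^{-2}\mathrm{d}k\right)^{-1/2}$. The substitution $k=z^n+\ell k'$ rewrites the integral as $\ell^3 e^{iz^n\cdot x}\widehat{F_n}(-\ell x)$ (up to a $(2\pi)^{3/2}$ factor) for $F_n(k'):=\mathds{1}_{[-1,1]^3}(k')|z^n+\ell k'|^{-2}$. Since $z^n\in 2\ell\mathbb{Z}^3\setminus\{0\}$ forces $|z^n|\geq 2\ell$, the elementary bound $|z^n+\ell k'|\gtrsim |z^n|$ uniformly on $[-1,1]^3$ gives $\|F_n\|_{C^3([-1,1]^3)}\lesssim |z^n|^{-2}$, while the same lower bound on $|z^n+\ell k'|$ implies $\int_{C_{z^n}}|k|^{-2}\mathrm{d}k\asymp \ell^3/|z^n|^2$ and hence $c_n\asymp |z^n|/\ell^{3/2}$. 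Applying Lemma~\ref{Lemma-BasisElements} to $F_n$ will therefore produce
\begin{equation*}
\left|\left((-\Delta)^{-1/2}e_n\right)(x)\right|\lesssim \frac{\ell^{3/2}/|z^n|}{\prod_{j=1}^{3}(1+\ell|x_j|)}.
\end{equation*}

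The remaining step is integration and summation. Squaring this bound and substituting $y=\ell x$ transforms each integral into
\begin{equation*}
\int_{|x|\geq r}\left|\left((-\Delta)^{-1/2}e_n\right)(x)\right|^2\mathrm{d}x\lesssim \frac{1}{|z^n|^2}\int_{|y|\geq \ell r}\prod_{j=1}^{3}\frac{\mathrm{d}y}{(1+|y_j|)^2}\lesssim \frac{1}{|z^n|^2(1+\ell r)},
\end{equation*}
where the last inequality uses that $|y|\geq \ell r$ forces $|y_j|\geq \ell r/\sqrt{3}$ for at least one index $j$. A Riemann-sum comparison on the lattice $2\ell\mathbb{Z}^3$ yields $\sum_{n}|z^n|^{-2}\lesssim \Lambda/\ell^3$, since the lattice has density $(2\ell)^{-3}$ and $|z^n|\leq \Lambda+O(\ell)$. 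Combining everything gives
\begin{equation*}
\left\|\mathds{1}_{\mathbb{R}^3\setminus B_r(0)}V_\varphi\right\|^2\lesssim \frac{\Lambda\,\|\varphi\|^2}{\ell^3(1+\ell r)}\lesssim \frac{\Lambda\,\|\varphi\|^2}{\ell^4\,r},
\end{equation*}
the last step valid uniformly in $r>0$ (both regimes $\ell r\leq 1$ and $\ell r>1$ yield the same bound). With $\Lambda=\alpha^{\frac{4}{5}(1+\sigma)}$ and $\ell=\alpha^{-4(1+\sigma)}$ one has $\Lambda/\ell^4=\alpha^{\frac{84}{5}(1+\sigma)}$, so the estimate~\eqref{Equation-Decay on subspace} holds with any $v>\frac{42}{5}(1+\sigma)$; using $\sigma\leq 1/4$ a universal constant $v$ (e.g.\ $v=11$) can be chosen.

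The main technical obstacle lies in the clean bookkeeping of the rescaling: one must simultaneously extract the sharp asymptotics $c_n\asymp |z^n|/\ell^{3/2}$ and the $C^3$-bound $\|F_n\|_{C^3}\lesssim |z^n|^{-2}$ uniformly in the lattice point $z^n$, which both rely on the lower bound $|z^n+\ell k'|\gtrsim |z^n|$ coming from the minimum-distance property of $2\ell\mathbb{Z}^3\setminus\{0\}$. Once these two factors have been combined into the clean product bound above, the remaining one-dimensional integral and the Riemann-sum comparison are entirely routine.
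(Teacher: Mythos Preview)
Your argument is correct and follows essentially the same route as the paper: expand $\varphi$ in the basis $\{e_n\}$, apply Lemma~\ref{Lemma-BasisElements} to obtain pointwise decay of each $(-\Delta)^{-1/2}e_n$, and integrate over $\mathbb{R}^3\setminus B_r(0)$. The paper applies Lemma~\ref{Lemma-BasisElements} directly on the small cube $C_{z^n}$ and uses the cruder reduction $\|\mathds{1}_{\mathbb{R}^3\setminus B_r}V_\varphi\|\leq \sqrt{N}\,\|\varphi\|\sup_n\|\mathds{1}_{\mathbb{R}^3\setminus B_r}V_{e_n}\|$ together with a worst-case bound $\|f\|_{C^3(C_{z^n})}\lesssim \ell^{-3/2}\Lambda\,\ell^{-5}$, whereas your rescaling to $[-1,1]^3$ and $|z^n|$-dependent bookkeeping combined with the $\ell^2$ Cauchy--Schwarz yields the sharper explicit exponent $v=\tfrac{42}{5}(1+\sigma)$; since the statement only asserts the existence of some $v>0$, both versions suffice.
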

\begin{proof}
Let $e_n$ be the basis from Definition \ref{Definition-Pi} corresponding to concrete choices of $\Lambda$ and $\ell$ defined above Eq.~(\ref{Equation-Definition of strong projection}). Given $\varphi=\sum_{n=1}^{N} \lambda_n e_n\in \Pi  L^2\! \left(\mathbb{R}^3\right)$, $\lambda_n\in  \mathbb{C}$, we have the rough estimate 
\begin{align*}
\left\|\mathds{1}_{\mathbb{R}^3 \setminus B_r(0)}V_\varphi\right\|\leq \sum_{n=1}^{N} |\lambda_n| \left\|\mathds{1}_{\mathbb{R}^3 \setminus B_r(0)}V_{e_n}\right\|\leq \sqrt{{N}}\|\varphi\| \sup_{n\in \{1,\dots , {N}\}}\left\|\mathds{1}_{\mathbb{R}^3 \setminus B_r(0)}V_{e_n}\right\|.
\end{align*}
Since ${N}\leq \alpha^p$ for a suitable constant $p$, it is enough to verify Eq.~(\ref{Equation-Decay on subspace}) for $\varphi=e_n$. Making use of $V_{e_n}=\widehat{\mathds{1}_{K_n} f}$ with $K_n:=\left(z^n_1-\ell,z^n_1+\ell\right)\times \left(z^n_2-\ell,z^n_2+\ell\right)\times \left(z^n_3-\ell,z^n_3+\ell\right)$ and $f(k)=\frac{-2}{\sqrt{(2\pi)^3 \int_{K_n}\frac{1}{|k|^2}\,\mathrm{d}k}}\frac{1}{|k|^2}$, and the fact that $(z_k^n+\ell)-(z_k^n-\ell)=2\ell\leq 2$, we obtain by Lemma \ref{Lemma-BasisElements}
\begin{align*}
\left\|\mathds{1}_{\mathbb{R}^3 \setminus B_r(0)}V_{e_n}\right\|^2\lesssim \alpha^{2p'}\int_{|x|> r}\frac{1}{(1+|x_1|)^2(1+|x_3|)^2(1+|x_3|)^2}\, \mathrm{d}x\lesssim \alpha^{2p'}\, \frac{1}{r},
\end{align*}
where we have used $K_n\subset \mathbb{R}^3 \setminus B_{2\ell}(0)$ and therefore $\|f\|_{C^3(K)}\lesssim \ell^{-\frac{3}{2}}\Lambda \left(\ell\right)^{-5}=\alpha^{p'}$ for a suitable $p'>0$.
\end{proof}

\begin{lem}
\label{Lemma-Integral estimates}
For $\psi\in L^2\! \left(\mathbb{R}^3\right)$ 
and $T>0$, 
\begin{align}
\label{Equation-Small momentum part of the trace}
\int\int_{|k'|\leq T}\frac{|\widehat{\psi}(k-k')|^2}{(1+|k|^2)\, |k'|^2}\, \mathrm{d}k'\mathrm{d}k&\lesssim \|\psi\|^2 T,\\
\label{Equation-Large momentum part of the trace}
\int\int_{|k'|> T}\frac{|\widehat{\psi}(k-k')|^2}{(1+|k|^2)\, |k'|^2}\, \mathrm{d}k'\mathrm{d}k&\lesssim \frac{\|\psi\|^2}{\sqrt{T}}.
\end{align}
Furthermore, interpreting $\psi$ as a multiplication operator we have 
\begin{align}
\label{Equation-First HS norm}
&\left\|\left(1-\Delta\right)^{-\frac{1}{2}}\psi \left(-\Delta\right)^{-\frac{1}{2}}\right\|_{\mathrm{HS}}\lesssim \|\psi\|,\\
\label{Equation-Second HS norm}
&\left\|\left(1-\Delta\right)^{-\frac{1}{2}} \left(-\Delta\right)^{-\frac{1}{2}}\psi\right\|_{\mathrm{HS}}=\sqrt{2}\pi\|\psi\|.
\end{align}
\end{lem}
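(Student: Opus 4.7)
The general strategy is Fubini plus Plancherel. For \eqref{Equation-Small momentum part of the trace} and \eqref{Equation-Large momentum part of the trace}, I would swap the order of integration (justified by nonnegativity) and substitute $q := k-k'$, which decouples $\widehat{\psi}$ from the integration in $k'$. Both left-hand sides take the form $\int|\widehat{\psi}(q)|^2\, I(q)\,dq$ with
\[
I(q) \;:=\; \int_{R}\frac{dk'}{(1+|k'+q|^2)\,|k'|^2},
\]
where $R = \{|k'|\leq T\}$ in the first case and $R = \{|k'|>T\}$ in the second. Plancherel reduces both statements to a uniform-in-$q$ bound on $I(q)$, after which the outer integral contributes the factor $\|\widehat{\psi}\|^2 = \|\psi\|^2$.

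The low-momentum case is immediate: one drops $1+|k'+q|^2\geq 1$ and evaluates $\int_{|k'|\leq T}|k'|^{-2}\,dk' = 4\pi T$ in spherical coordinates. The high-momentum case is the only delicate step, and the main obstacle of the lemma: there one cannot simply discard the $q$-dependent factor, since $|k'|^{-2}$ is not integrable at infinity in $\mathbb{R}^3$. I would apply Cauchy--Schwarz on $L^2(dk')$ to split
\[
I(q) \;\leq\; \bigl\| \chi_{\{|k'|>T\}}\,|k'|^{-2}\bigr\|_{L^2}\;\bigl\| (1+|k'+q|^2)^{-1}\bigr\|_{L^2}.
\]
The second factor is translation-invariant, hence a finite constant independent of $q$; the first evaluates to $(4\pi/T)^{1/2}$ by direct computation ($\int_T^\infty 4\pi r^{-2}\,dr = 4\pi/T$). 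Together they give $I(q)\lesssim T^{-1/2}$ uniformly, yielding \eqref{Equation-Large momentum part of the trace}.

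The two Hilbert--Schmidt bounds then follow as direct applications. For \eqref{Equation-First HS norm}, the operator $(1-\Delta)^{-\frac{1}{2}}\psi(-\Delta)^{-\frac{1}{2}}$ has momentum-space integral kernel proportional to $\widehat{\psi}(k-k')/(\sqrt{1+|k|^2}\,|k'|)$, so $\|\cdot\|_{\mathrm{HS}}^{2}$ is, up to a convention-dependent constant, exactly the double integral controlled by the sum of \eqref{Equation-Small momentum part of the trace} and \eqref{Equation-Large momentum part of the trace} at $T=1$, giving $\lesssim \|\psi\|^2$. For \eqref{Equation-Second HS norm}, the integral kernel of $(1-\Delta)^{-\frac{1}{2}}(-\Delta)^{-\frac{1}{2}}\psi$ factorizes in physical space as $K(x-y)\,\psi(y)$, where $K$ is the convolution kernel of the Fourier multiplier $m(k) := (1+|k|^2)^{-1/2}|k|^{-1}$. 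Fubini gives $\|\cdot\|_{\mathrm{HS}}^{2} = \|\psi\|^2\,\|K\|_{L^2}^2$, and by Plancherel (with the paper's normalization) $\|K\|_{L^2}^2 = \|m\|_{L^2}^2$. An explicit spherical-coordinate computation yields
\[
\|m\|_{L^2}^2 \;=\; 4\pi\int_0^\infty \frac{dr}{1+r^2} \;=\; 2\pi^2,
\]
producing the exact constant $\sqrt{2}\pi$.
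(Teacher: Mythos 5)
Your argument is correct, and the overall skeleton (Fubini and Plancherel) matches the paper's, but the way you handle the only delicate step -- the high-momentum estimate \eqref{Equation-Large momentum part of the trace} -- is genuinely different. After your change of variables $q = k-k'$, you split the inner integral $I(q)$ by Cauchy--Schwarz over $L^2(dk')$ into $\|\chi_{\{|k'|>T\}}|k'|^{-2}\|_{L^2}\cdot\|(1+|\cdot|^2)^{-1}\|_{L^2}\lesssim T^{-1/2}$. The paper instead uses a \emph{pointwise} Young-type estimate $\frac{1}{(1+|k|^2)|k'|^2}\leq\frac{1}{2}\big(\frac{1}{\sqrt{T}(1+|k|^2)^2}+\frac{\sqrt{T}}{|k'|^4}\big)$, then integrates term by term without ever decoupling $\widehat\psi$ from the weights. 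These are two standard ways of interpolating the $T$-dependence between the $1/T$ tail of $|k'|^{-4}$ and the $T$-independent integral of $(1+|k|^2)^{-2}$; yours is arguably more modular since it isolates a uniform-in-$q$ bound on $I(q)$, while the paper's is more elementary (no change of variables, no Cauchy--Schwarz). For \eqref{Equation-Second HS norm} you compute the kernel norm in physical space via the convolution kernel $K$ of the multiplier $m(k)=(1+|k|^2)^{-1/2}|k|^{-1}$, whereas the paper computes the same quantity directly from the momentum-space kernel $\widehat\psi(k-k')/(\sqrt{1+|k|^2}|k|)$ -- the same Plancherel/Fubini content, expressed in the dual representation. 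One minor caveat: your step $\|K\|_{L^2}=\|m\|_{L^2}$ as literal Plancherel requires $\widehat K=m$, which under the unitary convention would pick up a factor $(2\pi)^{-3/2}$; the paper's own computation makes the same omission and the resulting constant $\sqrt2\pi$ is not used quantitatively elsewhere, so this is a cosmetic point and your derivation is consistent with the paper's bookkeeping.
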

\begin{proof}
Eq.~(\ref{Equation-Small momentum part of the trace}) and Eq.~(\ref{Equation-Large momentum part of the trace}) immediately follow from the estimates
\begin{align*}
\int\int_{|k'|\leq T}\frac{|\widehat{\psi}(k-k')|^2}{(1+|k|^2)\, |k'|^2}\, \mathrm{d}k'\mathrm{d}k&\leq \int\int_{|k'|\leq T}\frac{|\widehat{\psi}(k-k')|^2}{|k'|^2}\, \mathrm{d}k'\mathrm{d}k=\|\psi\|^2 4\pi T,\\
\int\! \int_{|k'|> T}\frac{|\widehat{\psi}(k-k')|^2}{(1+|k|^2)\, |k'|^2}\, \mathrm{d}k'\mathrm{d}k&\leq \frac{1}{2}\int\! \int_{|k'|> T}\! \left(\frac{1}{\sqrt{T}(1+|k|^2)^2}\! +\! \frac{\sqrt{T}}{|k'|^4}\right)\! |\widehat{\psi}(k-k')|^2\, \mathrm{d}k'\mathrm{d}k\\
&\leq \frac{1}{2}\left(\int\frac{1}{(1+|k|^2)^2}\, \mathrm{d}k+4\pi\right)\frac{\|\psi\|^2}{\sqrt{T}}.
\end{align*}
By making use of the fact that the integral kernel of $\left(1-\Delta\right)^{-\frac{1}{2}}\psi \left(-\Delta\right)^{-\frac{1}{2}}$ in Fourier space is given as $\frac{\widehat{\psi}(k-k')}{\sqrt{1+|k|^2}|k'|}$,  Eq.~(\ref{Equation-First HS norm}) immediately follows from Eq.~(\ref{Equation-Large momentum part of the trace}) and Eq.~(\ref{Equation-Small momentum part of the trace}) with the concrete choice $T=1$. Finally Eq.~(\ref{Equation-Second HS norm}) follows from the fact that the corresponding integral kernel is given by $\frac{\widehat{\psi}(k-k')}{\sqrt{1+|k|^2}|k|}$ and the identity $\int \int \frac{|\widehat{\psi}(k-k')|^2}{|k|^2(1+|k|^2)}\, \mathrm{d}k'\mathrm{d}k=\int \frac{1}{|k|^2(1+|k|^2)}\, \mathrm{d}k\, \|\psi\|^2=2\pi^2 \|\psi\|^2$.
\end{proof}

\begin{lem}
\label{Lemma-Auxillary Convergence of the trace}
We have $\mathrm{Tr}\left[\left(1-\Pi\right)L^\mathrm{Pek}_x\left(1-\Pi\right)\right]\lesssim  \alpha^{-\frac{2}{5}}$ for $|x|\lesssim 1$, where $L^\mathrm{Pek}_x$ is the operator defined above Eq.~(\ref{Equation-Definition of J}).
\end{lem}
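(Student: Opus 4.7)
The plan is to factor $L^\mathrm{Pek}_x = 4 A^*A$ with $A := (1-\Delta)^{-1/2}\psi^\mathrm{Pek}_x(-\Delta)^{-1/2}$, where $\psi^\mathrm{Pek}_x(y) := \psi^\mathrm{Pek}(y-x)$, so that by cyclicity of the trace $\mathrm{Tr}\bigl[(1-\Pi)L^\mathrm{Pek}_x(1-\Pi)\bigr] = 4\|A(1-\Pi)\|_{\mathrm{HS}}^2$. I would then decompose $1-\Pi$ as the sum of three pairwise orthogonal projections,
\begin{align*}
1-\Pi \;=\; (1-\Pi_\Lambda) \,+\, (\Pi_\Lambda - \widetilde{\Pi}) \,+\, (\widetilde{\Pi} - \Pi),
\end{align*}
where $\widetilde{\Pi}$ denotes multiplication in momentum space by $\mathds{1}_{\bigcup_n C_{z^n}}$; the three ranges lie respectively in $\{|k|>\Lambda\}$, in $B_\Lambda\setminus\bigcup_n C_{z^n}$, and inside $\bigcup_n C_{z^n}$ (orthogonal to $\mathrm{span}\{e_n\}$), so $\|A(1-\Pi)\|_{\mathrm{HS}}^2$ splits as the sum of three separate Hilbert--Schmidt contributions.

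The first two summands are controlled directly via the momentum-space kernel of $A$, which is proportional to $\hat{\psi}^\mathrm{Pek}_x(k-k')/(\sqrt{1+|k|^2}\,|k'|)$. The high-momentum contribution is bounded by $\Lambda^{-1/2}$ through Eq.~(\ref{Equation-Large momentum part of the trace}) with $T=\Lambda$. The ``grid gap'' piece becomes $\int_{B_\Lambda\setminus\bigcup C_{z^n}} U(k')\,|k'|^{-2}\,\mathrm{d}k'$ with $U(k') := (2\pi)^{-3}\int|\hat{\psi}^\mathrm{Pek}_x(k-p)|^2(1+|p|^2)^{-1}\mathrm{d}p$ bounded uniformly; because the constraint $z^n\in 2\ell\mathbb{Z}^3\setminus\{0\}$ forces the cubes to avoid the origin, the region of integration is contained in the union of a ball of radius $O(\ell)$ around $0$ and a spherical shell of width $O(\ell)$ near $|k'|=\Lambda$, each contributing $O(\ell)$.

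The crux of the proof is controlling the intra-cube remainder. Since $(\widetilde{\Pi}-\Pi)$ restricted to each cube-subspace $P_nL^2$ (with $P_n$ the Fourier multiplier by $\mathds{1}_{C_{z^n}}$) equals the infinite-rank projection $P_n-|e_n\rangle\langle e_n|$, orthogonality across $n$ gives
\begin{align*}
\|A(\widetilde{\Pi}-\Pi)\|_{\mathrm{HS}}^2 \;=\; \sum_n \int \frac{\mathrm{d}p}{(2\pi)^3(1+|p|^2)}\, I_n\,\mathrm{Var}_{\nu_n}\!\bigl(\hat{\psi}^\mathrm{Pek}_x(\,\cdot\,-p)\bigr),
\end{align*}
where $I_n := \int_{C_{z^n}}|k|^{-2}\mathrm{d}k$ and $\nu_n$ is the normalized measure $I_n^{-1}|k|^{-2}\mathds{1}_{C_{z^n}}\mathrm{d}k$. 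A weighted Poincar\'e inequality on the cube of side $2\ell$ then yields $\mathrm{Var}_{\nu_n}(f)\lesssim \ell^2\langle|\nabla f|^2\rangle_{\nu_n}$ with a constant uniform in $n$, because $|z^n|\geq 2\ell$ keeps the oscillation ratio $\max_{C_{z^n}}|k|^{-2}/\min_{C_{z^n}}|k|^{-2}$ bounded. Using $\nabla_k\hat{\psi}^\mathrm{Pek}_x(q)=\widehat{-iy\psi^\mathrm{Pek}_x}(q)$, summing over $n$, and applying Eq.~(\ref{Equation-First HS norm}) to $y\mapsto y\psi^\mathrm{Pek}_x(y)$, I would conclude $\|A(\widetilde{\Pi}-\Pi)\|_{\mathrm{HS}}^2 \lesssim \ell^2\bigl(\||y|\psi^\mathrm{Pek}\|^2+|x|^2\bigr)\lesssim \ell^2$ for $|x|\lesssim 1$, where finiteness of all moments of $|\psi^\mathrm{Pek}|^2$ is noted after Lemma~\ref{Lemma-Cut-off residuum estimate}.

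Summing the three bounds yields $\mathrm{Tr}[(1-\Pi)L^\mathrm{Pek}_x(1-\Pi)] \lesssim \Lambda^{-1/2}+\ell+\ell^2 \lesssim \alpha^{-\frac{2}{5}(1+\sigma)} \leq \alpha^{-\frac{2}{5}}$, as claimed. The principal obstacle is the third step: recognizing the intra-cube remainder as a weighted variance, establishing the uniform Poincar\'e constant across all cubes (which crucially exploits $|z^n|\geq 2\ell$), and handling the $x$-dependence so as to reduce matters cleanly to Eq.~(\ref{Equation-First HS norm}); the first two steps are then essentially elementary applications of Lemma~\ref{Lemma-Integral estimates}.
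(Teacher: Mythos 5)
Your proposal is correct and follows essentially the same route as the paper's proof: factor $L^\mathrm{Pek}_x$ as a square of $A=(1-\Delta)^{-1/2}\psi^\mathrm{Pek}_x(-\Delta)^{-1/2}$, split $1-\Pi$ at the level of Fourier supports, handle the complement of $\bigcup_n C_{z^n}$ via Lemma~\ref{Lemma-Integral estimates}, and control the intra-cube remainder by a mean-value estimate that gains a factor $\ell^2$ at the cost of a derivative, closing with Eq.~\eqref{Equation-First HS norm} applied to $y\psi^\mathrm{Pek}_x$. Your weighted-Poincar\'e packaging of the intra-cube piece (with uniform constant from the observation $\sup_{C_{z^n}}|k|^{-2}/\inf_{C_{z^n}}|k|^{-2}\lesssim 1$ thanks to $|z^n|\ge 2\ell$) is the same estimate the paper obtains by writing $\widehat\psi_x(k-k')-\widehat\psi_x(k-q)=|\xi|\int_0^1\widehat\psi_{x,s,\xi}(k-k')\,\mathrm{d}s$ and bounding $\frac{|q|^{-2}}{I_n}\lesssim\ell^{-3}$; and your finer three-way split of $1-\Pi$ is a cosmetic refinement of the paper's two-way split, since the paper handles $|k'|>\Lambda$ and the thin shell near $|k'|=\Lambda$ together via the inclusion $\mathbb{R}^3\setminus\bigcup_n C_{z^n}\subset B_{2\ell}\cup(\mathbb{R}^3\setminus B_{\Lambda-4\ell})$.
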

\begin{proof}
With the definition $\psi^\mathrm{Pek}_x(y):=\psi^\mathrm{Pek}(y-x)$, we can express the operator $L^\mathrm{Pek}_x$ as $L^\mathrm{Pek}_x=2\left|\left(1-\Delta\right)^{-\frac{1}{2}}\psi^\mathrm{Pek}_x\left(-\Delta\right)^{-\frac{1}{2}}\right|^2$. Since the integral kernel of $\left(1-\Delta\right)^{-\frac{1}{2}}\psi^\mathrm{Pek}_x\left(-\Delta\right)^{-\frac{1}{2}}$ is given by $\frac{\widehat{\psi}^\mathrm{Pek}_x(k-k')}{\sqrt{1+|k|^2} |k'|}$ in Fourier space and since the one of $\Pi$ reads $\sum_{n=1}^{N}\frac{\mathds{1}_{C_{z^n}}(k)\mathds{1}_{C_{z^n}}(k')}{\int_{C_{z^n}}\frac{1}{|q|^2}\mathrm{d}q\, |k|\, |k'|}$, where $C_{z^n}$ is as in Definition \ref{Definition-Pi}, we can further express the integral kernel of the operator $\left(1-\Delta\right)^{-\frac{1}{2}}\psi^\mathrm{Pek}_x\left(-\Delta\right)^{-\frac{1}{2}}(1-\Pi )$ as
\begin{align}
\label{Equation-Integral kernel}
\sum_{n=1}^{N}\frac{\int_{C_{z^n}} \frac{\widehat{\psi}^\mathrm{Pek}_x(k-k')-\widehat{\psi}^\mathrm{Pek}_x(k-q)}{\sqrt{1+|k|^2} |k'|}\frac{1}{|q|^2}\mathrm{d}q}{\int_{C_{z^n}}\frac{1}{|q'|^2}\mathrm{d}q} \mathds{1}_{C_{z^n}}(k')+\frac{\widehat{\psi}^\mathrm{Pek}_x(k-k')}{\sqrt{1+|k|^2} |k'|}\mathds{1}_{\mathbb{R}^3\setminus \left(\bigcup_n C_{z^n}\right)}(k').
\end{align}
In the following we need to show that the $L^2\! \left(\mathbb{R}^3\times \mathbb{R}^3\right)$ norm of the expression in Eq.~(\ref{Equation-Integral kernel}) is of order $\alpha^{-\frac{1}{5}}$. As in the proof of Lemma \ref{Lemma-Norm Estimate}, we will use $\mathbb{R}^3\setminus \left(\bigcup_n C_{z^n}\right)\subset B_{2\ell}\cup \left(\mathbb{R}^3\setminus B_{\Lambda-4\ell}\right)$, where $\Lambda$ and $\ell$ are defined above Eq.~(\ref{Equation-Definition of strong projection}). Applying Eq.~(\ref{Equation-Small momentum part of the trace}) with $T=2\ell$ and Eq.~(\ref{Equation-Large momentum part of the trace}) with $T=\Lambda-4\ell$ yields
\begin{align*}
\int\int_{\mathbb{R}^3\setminus \left(\bigcup_n C_{z^n}\right)}\frac{|\widehat{\psi}^\mathrm{Pek}_x(k-k')|^2}{(1+|k|^2) |k'|^2}\, \mathrm{d}k'\mathrm{d}k\lesssim 2\ell+\frac{1}{\sqrt{\Lambda-4\ell}}\lesssim \alpha^{-\frac{2}{5}}.
\end{align*}
In order to estimate the $L^2$ norm of $f(k,k'):=\sum_{n=1}^{N}\frac{\int_{C_{z^n}} \frac{\widehat{\psi}^\mathrm{Pek}_x(k-k')-\widehat{\psi}^\mathrm{Pek}_x(k-q)}{\sqrt{1+|k|^2} |k'|}\frac{1}{|q|^2}\mathrm{d}q}{\int_{C_{z^n}}\frac{1}{|q|^2}\mathrm{d}q} \mathds{1}_{C_{z^n}}(k')$, let us define $\psi_{x,s,\eta}(y):=\frac{\eta}{|\eta|}\cdot y e^{is\eta \cdot y}\psi^\mathrm{Pek}_x(y)$ for $s\in \mathbb{R},\eta\in \mathbb{R}^3$ and $\xi:=q-k'$, and  compute
\begin{align*}
\widehat{\psi}^\mathrm{Pek}_x(k\! -\! k')\! -\! \widehat{\psi}^\mathrm{Pek}_x(k\! -\! q)= \! \int_0^1 \xi\cdot \nabla \widehat{\psi}^\mathrm{Pek}_x(k\! -\! k'+s\xi)\mathrm{d}s=|\xi|\! \int_0^1  \widehat{\psi}_{x,s,\xi}(k\! -\! k')\mathrm{d}s.
\end{align*}
Making use of the inequality $\frac{\frac{1}{|q|^2}}{\int_{C_{z^n}}\frac{1}{|q'|^2}\, \mathrm{d}q'}\lesssim \ell^{-3}$ for $q\in C_{z^n}$ and the fact that $\xi=q-k'\in K:=(-2\ell,2\ell)^3$ for all $k',q\in C_{z^n}$, yields
\begin{align*}
&\left|f(k,k')\right|^2\lesssim \sum_{n=1}^{N}\mathds{1}_{C_{z^n}}(k')\ell^{-4}\left|\int_{K}\int_{0}^1 \frac{\big|\widehat{\psi}_{x,s,\xi}(k - k')\big|}{\sqrt{1+|k|^2} |k'|}\mathrm{d}s\mathrm{d}\xi\right|^2\\
&\ \ \leq \sum_{n=1}^{N}\mathds{1}_{C_{z^n}}(k')8\ell^{-1}\int_{K}\int_{0}^1 \frac{\big|\widehat{\psi}_{x,s,\xi}(k - k')\big|^2}{(1+|k|^2) |k'|^2}\mathrm{d}s \mathrm{d}\xi\leq 8\ell^{-1}\int_{K}\int_{0}^1 \frac{\big|\widehat{\psi}_{x,s,\xi}(k - k')\big|^2}{(1+|k|^2) |k'|^2}\mathrm{d}s \mathrm{d}\xi,
\end{align*}
where we have applied the Cauchy--Schwarz inequality. An application of Lemma \ref{Lemma-Integral estimates} with $T=1$ then yields
\begin{align*}
\iint \left|f(k,k')\right|^2\, \mathrm{d}k'\mathrm{d}k\lesssim \ell^{-1}\int_{K}\int_{0}^1 \|\psi_{x,s,\xi}\|^2\mathrm{d}s\mathrm{d}\xi\leq C\ell^2\lesssim \alpha^{-8},
\end{align*}
where we used that $\|\psi_{x,s,\eta}\|\leq C$ for all $|x|\lesssim 1$ and a suitable constant $C<\infty$.
\end{proof}

\begin{lem}
\label{Lemma-Convergence of the trace}
Recall the operator $H^\mathrm{Pek}$ from Eq.~(\ref{Equation-Definition_Hessian}). Then there exists a constant $c>0$ such that $J_{t,\epsilon}\geq c\, \pi$ for $\epsilon$ small enough and $\alpha$ large enough. Furthermore
\begin{align}
\label{Equation-Convergence of the trace}
\left|\mathrm{Tr}_{\Pi L^2\! (\mathbb{R}^3)}\! \!\left[1 -\sqrt{  J_{t,\epsilon} }\, \right]-\mathrm{Tr}\left[1-\sqrt{H^\mathrm{Pek}}\, \right]\right|\lesssim \epsilon+\alpha^{-\frac{1}{5}}
\end{align}
for $|t|<\epsilon$, $\epsilon$ small enough and $\alpha$ large enough. 
\end{lem}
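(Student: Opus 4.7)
The strategy is to first establish the uniform coercivity $J_{t,\epsilon}\geq c\pi$ and then control the trace via a chain of intermediate operators, carefully handling the three-dimensional kernel of $H^\mathrm{Pek}$ and the cut-off $\Pi$. The starting point is that $H^\mathrm{Pek}=1-K^\mathrm{Pek}$ is non-negative with kernel $V_0:=\mathrm{span}\{\partial_{y_n}\varphi^\mathrm{Pek}\}_{n=1}^3$, and that $K^\mathrm{Pek}$ is trace class (this follows from the factorization in Eq.~(\ref{Equation-K}) together with the Hilbert--Schmidt bound \eqref{Equation-First HS norm} applied to $(-\Delta)^{-1/2}\psi^\mathrm{Pek}(1-\Delta)^{-1/2}$). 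Hence $H^\mathrm{Pek}$ has purely discrete spectrum below $1$ and a spectral gap $\delta>0$ above its three-dimensional kernel, i.e.\ $H^\mathrm{Pek}\geq \delta Q_0$ with $Q_0$ the projection onto $V_0^\perp$.

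For the lower bound $J_{t,\epsilon}\geq c\pi$ I would translate this coercivity: $1-K^\mathrm{Pek}_{x_t}\geq \delta Q_{x_t}$ with kernel $V_{x_t}:=\mathrm{span}\{\partial_{y_n}\varphi^\mathrm{Pek}_{x_t}\}$. The subspace $\pi^\perp\cap \Pi L^2$ is the span of $\{\Pi\partial_{y_n}\varphi^\mathrm{Pek}\}_{n=1}^3$, which differs from $V_{x_t}$ by an $O(\epsilon+\alpha^{-\frac{2}{5}(1+\sigma)})$ subspace perturbation by Lemma~\ref{Lemma-Cut-off residuum estimate} and $|x_t|\lesssim |t|<\epsilon$. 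A standard perturbative argument then gives $\pi(1-K^\mathrm{Pek}_{x_t})\pi\geq (\delta-O(\epsilon+\alpha^{-\frac{2}{5}(1+\sigma)}))\pi$, and since $L^\mathrm{Pek}_{x_t}$ is uniformly bounded, the claim $J_{t,\epsilon}\geq c\pi$ follows for small $\epsilon$ and large $\alpha$.

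For the trace convergence, I use translation invariance to replace $H^\mathrm{Pek}$ by $H^\mathrm{Pek}_{x_t}$ and then proceed through three operators. First, let $\tilde J_t:=\pi(1-K^\mathrm{Pek}_{x_t})\pi$; using that $\sqrt{\cdot}$ is operator-Lipschitz on $[c,\infty)$ and that $\|\epsilon (1+\epsilon)L^\mathrm{Pek}_{x_t}+\epsilon K^\mathrm{Pek}_{x_t}\|_1\lesssim \epsilon$ (both $K^\mathrm{Pek}$ and $L^\mathrm{Pek}$ being trace class by \eqref{Equation-First HS norm}), I obtain
\[
\Big|\mathrm{Tr}_{\Pi L^2}\!\left[1-\sqrt{J_{t,\epsilon}}\right] - \mathrm{Tr}_{\Pi L^2}\!\left[1-\sqrt{\tilde J_t+(1-\pi)}\right]\Big|\lesssim \epsilon,
\]
after extending both operators by $(1-\pi)$ on $\pi^\perp\cap\Pi L^2$ to preserve the lower bound. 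Second, extend $\tilde J_t$ (resp.\ $H^\mathrm{Pek}_{x_t}$ restricted to $V_{x_t}^\perp$) trivially by $1$ on the three-dimensional orthogonal complement, obtaining operators $\hat J_t$ and $\hat H_t$ on $L^2$ whose traces of $1-\sqrt{\cdot}$ differ from the compressed ones exactly by the shared factor $3$ (which cancels). The difference $\hat J_t-\hat H_t$ splits into (i) $\pi K^\mathrm{Pek}_{x_t}\pi-K^\mathrm{Pek}_{x_t}$ and (ii) a finite rank correction comparing the kernel projections onto $V_0\cap\Pi L^2$ and $V_{x_t}$; both are $O(\alpha^{-1/5}+\epsilon)$ in trace norm by Lemma~\ref{Lemma-Cut-off residuum estimate} and the Hilbert--Schmidt estimate from Lemma~\ref{Lemma-Auxillary Convergence of the trace}. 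A final application of operator Lipschitz continuity of $\sqrt{\cdot}$ yields the desired bound \eqref{Equation-Convergence of the trace}.

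The main obstacle will be step two, i.e.\ the trace-norm control of $\pi K^\mathrm{Pek}_{x_t}\pi-K^\mathrm{Pek}_{x_t}$ at the rate $\alpha^{-1/5}$. A naive estimate using $\|1-\Pi\|_\mathrm{op}=1$ is useless, so one must expand the difference as $(1-\Pi) K^\mathrm{Pek}_{x_t}+K^\mathrm{Pek}_{x_t}(1-\Pi)-(1-\Pi)K^\mathrm{Pek}_{x_t}(1-\Pi)$ plus the finite-rank contribution of the three missing directions, and then use the factorization \eqref{Equation-K} to reduce everything to estimating $\|(1-\Pi)(-\Delta)^{-1/2}\psi^\mathrm{Pek}_{x_t}\|_\mathrm{HS}$, which is precisely the quantity controlled by Lemma~\ref{Lemma-Auxillary Convergence of the trace}. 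The finite-rank correction is handled directly using the translation rate $|x_t|\lesssim \epsilon$ and the cut-off rate for derivatives in Lemma~\ref{Lemma-Cut-off residuum estimate}.
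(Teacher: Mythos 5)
Your proposal is correct and follows essentially the same strategy as the paper: establish $\|\pi_0 K^\mathrm{Pek}\pi_0\|_\mathrm{op}<1$ and propagate this to $J_{t,\epsilon}$ by tracking perturbations of the kernel projections (via Eq.~\eqref{Equation-pi differences}) and of the cut-off $\Pi$; then reduce the trace comparison to a trace-norm estimate of the operator difference and invoke operator-Lipschitz continuity of the relevant function (the paper works with a smooth compactly supported $f$ with $f(x)=1-\sqrt{1-x}$ on $[0,1-c]$ and bounds $\|f(A)-f(B)\|_1$ via $\int|t\hat f(t)|\,\mathrm{d}t\,\|A-B\|_1$, which is your operator-Lipschitz step made explicit). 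Your decomposition into the $O(\epsilon)$ term, the $\Pi$-versus-$1$ term (controlled by Lemma~\ref{Lemma-Auxillary Convergence of the trace} via Cauchy--Schwarz for the trace norm, using $K^\mathrm{Pek}\lesssim L^\mathrm{Pek}$), and the kernel-projection term is precisely the paper's $T_3+T_2+T_1$ split, and the cancellation of the rank-$3$ contributions is handled identically.
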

\begin{proof}
Recall the definition of $\pi$ and $J_{t,\epsilon}$ in, respectively below, Eq.~(\ref{Equation-Definition of J}) for $|t|<\epsilon<\delta_*$, where $\delta_*$ is defined before Definition~\ref{Definition-tau}. In the following we are going to verify that  $\|(1+\epsilon)\pi  \left(K^\mathrm{Pek}_{x_t}+\epsilon L^\mathrm{Pek}_{x_t}\right) \pi\|_{\mathrm{op}}\leq 1-c$ for a suitable constant $c>0$, small $\epsilon$ and $|t|<\epsilon$, which immediately implies $J_{t,\epsilon}\geq c\, \pi$. Let $\pi_{x}$ be the orthogonal projection onto $\{\partial_{x_1} \varphi^\mathrm{Pek}_x,\partial_{x_2} \varphi^\mathrm{Pek}_x,\partial_{x_3} \varphi^\mathrm{Pek}_x\}^\perp$ and let $\varphi_n$ be defined in Eq.~(\ref{Equation-Orthonormal Basis}). Then we estimate
\begin{align}
\label{Equation-pi differences}
\mathrm{Tr}\left[\left|\pi_0-\pi_{ x}\right|\right]&\leq 2\sum_{n=1}^3 \left\|\varphi_n-\frac{\partial_{x_n} \varphi^\mathrm{Pek}_x}{\|\partial_{x_n} \varphi^\mathrm{Pek}_x\|}\right\|\\
\nonumber 
&\leq 2\sum_{n=1}^3 \left\|\frac{\partial_{x_n} \varphi^\mathrm{Pek}}{\|\partial_{x_n} \varphi^\mathrm{Pek}\|}-\frac{\partial_{x_n} \varphi^\mathrm{Pek}_x}{\|\partial_{x_n} \varphi^\mathrm{Pek}_x\|}\right\|+2\sum_{n=1}^3 \left\|\varphi_n-\frac{\partial_{x_n} \varphi^\mathrm{Pek}}{\|\partial_{x_n} \varphi^\mathrm{Pek}\|}\right\|\lesssim |x|+\alpha^{-\frac{2}{5}},
\end{align}
where we have used Lemma \ref{Lemma-Cut-off residuum estimate} in order to obtain $\|\partial_{x_n} \varphi^\mathrm{Pek}-\Pi\partial_{x_n} \varphi^\mathrm{Pek}\|\lesssim \alpha^{-\frac{2}{5}}$ and the fact that $ \varphi^\mathrm{Pek}\in H^2\! \left(\mathbb{R}^3\right)$, which yields $\|\partial_{x_n} \varphi^\mathrm{Pek}_x-\partial_{x_n} \varphi^\mathrm{Pek}\|\leq |x|\left\|\nabla \partial_{x_n} \varphi^\mathrm{Pek}\right\|\lesssim |x|$. Hence $\mathrm{Tr}\left[\left|\pi_0-\pi_{\pm x_t}\right|\right]\lesssim |t|+\alpha^{-\frac{2}{5}}$ for $t$ small enough. It is a straightforward consequence of \eqref{Equation-CoercivityPekarFunctional}  that the operator norm of $\pi_0 K^\mathrm{Pek}\pi_0$ is bounded by $\|\pi_0 K^\mathrm{Pek}\pi_0\|_\mathrm{op}<1$ (see also \cite[Lemma~1.1]{MMS}). Therefore we obtain, using $\pi=\Pi \pi_0=\pi_0\Pi$,
\begin{align}
\nonumber
&\left\|(1\! +\! \epsilon)\pi\!  \left(\! K^\mathrm{Pek}_{x_t}\! +\! \epsilon  L^\mathrm{Pek}_{x_t}\! \right)\!  \pi\right\|_\mathrm{op}\! \leq \left\|(1\! +\! \epsilon)\pi_0 \! \left(\! K^\mathrm{Pek}_{x_t}\! +\! \epsilon  L^\mathrm{Pek}_{x_t}\! \right)\!  \pi_0\right\|_\mathrm{op}\! = \left\|\pi_0 K^\mathrm{Pek}_{x_t}\pi_0\right\|_\mathrm{op}\! +O\left(\epsilon\right) \\\label{Equation: Strict Operator Bound}
&\ \ \ \ \ = \left\|\pi_{-x_t} K^\mathrm{Pek}\pi_{-x_t}\right\|_\mathrm{op}\! \! \! +O\left(\epsilon\right)
= \left\|\pi_0 K^\mathrm{Pek}\pi_0\right\|_\mathrm{op}\! \! \! +O\left(\epsilon\right) + O\big(\alpha^{-2/5}\big) \leq 1\! -\! c
\end{align}
for a suitable constant $c>0$, $\epsilon$ small enough, $|t|< \epsilon$ and $\alpha$ large enough. 

 In order to verify Eq.~(\ref{Equation-Convergence of the trace}), let $|t|< \epsilon$ and $\epsilon$ be small enough such that $J_{t,\epsilon}\geq 0$, and let us compute
\begin{align*}
\mathrm{Tr}_{\Pi L^2\! (\mathbb{R}^3)}\! \!\left[1 -\sqrt{  J_{t,\epsilon} }\, \right]=\mathrm{Tr}\left[1+\pi_0^\perp-\sqrt{1-(1+\epsilon)\pi \left(K^\mathrm{Pek}_{x_t}+\epsilon L^\mathrm{Pek}_{x_t}\right) \pi}\right],
\end{align*}
Furthermore we have the identity $\mathrm{Tr}\left[1-\sqrt{1-K^\mathrm{Pek}}\right]=\mathrm{Tr}\left[1+\pi_0^\perp-\sqrt{1- \pi_0 K^\mathrm{Pek}\pi_0 }\right]=\mathrm{Tr}\left[1-\sqrt{1- \pi_{x_t} K^\mathrm{Pek}_{x_t}\pi_{x_t} }\, \right]+\mathrm{Tr}\left[\pi_0^\perp\right]$. Using the definition of $K^\mathrm{Pek}$ in Eq.~(\ref{Equation-K}), we can express $\mathrm{Tr}_{\Pi L^2\! (\mathbb{R}^3)}\! \!\left[1 -\sqrt{  J_{t,\epsilon} }\, \right]-\mathrm{Tr}\left[1-\sqrt{H^\mathrm{Pek}}\, \right]$ as
\begin{align}
\label{Equation-Trace_Difference}
\mathrm{Tr}\left[1-\sqrt{1-(1+\epsilon)\pi \left(K^\mathrm{Pek}_{x_t}+\epsilon L^\mathrm{Pek}_{x_t}\right) \pi}\right]-\mathrm{Tr}\left[1-\sqrt{1- \pi_{x_t} K^\mathrm{Pek}_{x_t}\pi_{x_t} }\right].
\end{align}
In the following let $f$ be a smooth function with compact support satisfying $f(x)=1-\sqrt{1-x}$ for $0\leq x\leq 1-c$, where $c$ is as in Eq.~(\ref{Equation: Strict Operator Bound}), and let us define the operators $A:=(1\! +\! \epsilon)\pi \left(K^\mathrm{Pek}_{x_t}\! +\! \epsilon  L^\mathrm{Pek}_{x_t}\right) \pi$ and $B:=\pi_{x_t} K^\mathrm{Pek}_{x_t}\pi_{x_t}$. Using Eq.~(\ref{Equation-Trace_Difference}) and $\left\|(1+\epsilon)\pi \left(K^\mathrm{Pek}_{x_t}+\epsilon  L^\mathrm{Pek}_{x_t}\right) \pi\right\|_\mathrm{op}\leq 1-c$ for $t$ and $\epsilon$ small enough, we obtain
\begin{align}
\nonumber
&\left|\mathrm{Tr}_{\Pi L^2\! (\mathbb{R}^3)}\! \!\left[1 -\sqrt{  J_{t,\epsilon} }\, \right] -\mathrm{Tr}\left[ 1- \sqrt{H^\mathrm{Pek}}\, \right]\right|  =\left|\mathrm{Tr}\left[f(A)-f(B)\right]\right|\\
\label{Equation-Trace_with_f}
&\ \ \ \ \leq \left\|f(A)-f(B)\right\|_1\leq \frac{1}{\sqrt{2\pi}}\int_\mathbb{R} |t \widehat{f}(t)|\, \mathrm{d}t \left\|A-B\right\|_1,
\end{align}
where $\|\cdot \|_1$ is the trace norm and $\widehat{f}$ is the Fourier transformation of $f$. In order to estimate the right hand side of Eq.~(\ref{Equation-Trace_with_f}), we  write $A-B=T_1+\pi_0 T_2\pi_0+\pi T_3\pi$ with $T_1:=(\pi_0-\pi_{x_t}) K_{x_t}^\mathrm{Pek}\pi_0+\pi_{x_t} K_{x_t}^\mathrm{Pek}(\pi_0-\pi_{x_t})$, $T_2:= (\Pi-1) K^\mathrm{Pek}_{x_t}\Pi + K_{x_t}^\mathrm{Pek}(\Pi-1)$ and $T_3:=\epsilon \left(K^\mathrm{Pek}_{x_t}+(1+\epsilon)L^\mathrm{Pek}_{x_t}\right)$. Clearly we have the estimates $\|\pi T_3\pi\|_1\leq \|T_3\|_1\lesssim \epsilon$ and $\|T_1\|_1\lesssim \|\pi_0-\pi_{x_t}\|_1\lesssim t+\alpha^{-\frac{2}{5}}$ by Eq.~(\ref{Equation-pi differences}), using the fact that $K^\mathrm{Pek}_{x_t}$ is trace-class, which follows from $K^\mathrm{Pek}_{x_t}\lesssim  L^\mathrm{Pek}_{x_t}$ and the fact that $L^\mathrm{Pek}_{x_t}$ is trace-class, see Eq.~(\ref{Equation-First HS norm}) with $\psi:=\psi^\mathrm{Pek}$. 
 Using Lemma \ref{Lemma-Auxillary Convergence of the trace} together with a Cauchy--Schwarz estimate for the trace norm, we can bound the final contribution $\pi_0 T_2\pi_0$ by
\begin{align*}
\|\pi_0 T_2\pi_0\|_1\leq \|T_2\|_1\leq 2\mathrm{Tr}\left[\Pi K^\mathrm{Pek}_{x_t}\Pi \right]^{\frac{1}{2}} \mathrm{Tr}\left[\left(1-\Pi\right)K^\mathrm{Pek}_{x_t}\left(1-\Pi\right)\right]^{\frac{1}{2}}\lesssim \alpha^{-\frac{1}{5}}.
\end{align*}
\end{proof}

The following Lemma \ref{Lemma-Uniform_singularity_control} is an auxiliary result, which we will use to quantify the energy cost of having the electron and the phonon field localized in different regions of space, see Corollary \ref{Corollary-Outside_Mass}.

\begin{lem}
\label{Lemma-Uniform_singularity_control}
Let $w_0(y)=\pi^{-\frac{3}{2}}\frac{1}{|y|^2}$ and let $\Pi_K$ be the projection defined in Definition \ref{Definition-Pi}. Then there exist a constant $D$ such that 
\begin{align*}
\|\mathds{1}_{\mathbb{R}^3\setminus B_r(0)}\Pi_K w_0\|\leq \frac{D}{\sqrt{r}}
\end{align*}
for all $K,r>0$.
\end{lem}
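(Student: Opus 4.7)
The plan is to compute $\Pi_K w_0$ explicitly in position space and verify that it admits a $K$-independent pointwise bound of the form $|(\Pi_K w_0)(y)| \lesssim |y|^{-2}$, which upon integration outside $B_r(0)$ gives the desired $L^2$ bound of order $r^{-1/2}$.

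First, I would use the Fourier-space identity already recorded in the proof of Lemma \ref{Lemma-Norm Estimate}, namely $\widehat{\Pi_K w_0}(k) = \frac{1}{\sqrt{2\pi^2}\,|k|}\mathds{1}_{B_K(0)}(k)$. Inverting the Fourier transform and passing to spherical coordinates with polar axis along $y$, the angular integral collapses to a sinc factor and a direct radial computation yields the closed form
\begin{align*}
(\Pi_K w_0)(y) \;=\; \frac{1}{\pi^{3/2}} \cdot \frac{1-\cos(K|y|)}{|y|^2}.
\end{align*}
Since $|1-\cos(K|y|)| \leq 2$ for every $K$, this immediately gives the uniform pointwise estimate $|(\Pi_K w_0)(y)| \leq \frac{2}{\pi^{3/2}\,|y|^2}$.

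From here the conclusion is a one-line spherical-coordinate computation:
\begin{align*}
\|\mathds{1}_{\mathbb{R}^3\setminus B_r(0)}\Pi_K w_0\|^2 \;\leq\; \frac{4}{\pi^3}\int_{|y|\geq r}\frac{dy}{|y|^4} \;=\; \frac{16}{\pi^2}\int_r^\infty \frac{ds}{s^2} \;=\; \frac{16}{\pi^2\,r},
\end{align*}
which yields the claim with $D=4/\pi$.

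I do not anticipate a genuine obstacle: the only mild care is to carry out the angular integral cleanly (the elementary identity $\int_0^K \sin(rs)\,dr = (1-\cos(Ks))/s$ is what produces the $K$-independent numerator bound), and to observe that the resulting pointwise bound is uniform in $K$ precisely because $1-\cos$ is bounded. The alternative approach via Plancherel and a direct Fourier-side estimate would work but would require splitting into low and high momentum regimes as in Lemma \ref{Lemma-Integral estimates}, and would not immediately give uniformity in $K$; the explicit position-space formula makes this trivial.
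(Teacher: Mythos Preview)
Your proof is correct. The explicit inverse Fourier transform you state, $(\Pi_K w_0)(y) = \pi^{-3/2}\,(1-\cos(K|y|))/|y|^2$, follows from the standard spherical-coordinate computation you outline, and the pointwise bound $|1-\cos|\leq 2$ immediately gives the claim with an explicit constant.

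The paper proceeds quite differently: rather than invert the Fourier transform, it replaces the sharp momentum cutoff $\mathds{1}_{B_K(0)}(k)/|k|$ by a smooth approximant $\widehat u(k)=\chi^{\epsilon}(2\epsilon\leq |k|\leq K)/(\sqrt{2\pi^2}\,|k|)$, bounds the $L^2$ discrepancy $\|\Pi_K w_0 - u\|^2\lesssim \epsilon$, and then uses the Plancherel-type identity $\|\,|y|u\,\|^2=\|\nabla_k\widehat u\|^2$ together with $\mathds{1}_{|y|\geq r}\leq |y|/r$ to get $\|\mathds{1}_{\mathbb{R}^3\setminus B_r(0)}u\|^2\lesssim (r^2\epsilon)^{-1}$; optimizing $\epsilon=1/r$ yields the result. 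Your approach is more elementary and produces an explicit constant because the radial integral happens to close in terms of elementary functions; the paper's mollification argument is more robust in that it would apply to any multiplier smooth away from its singular locus, without needing a closed-form inverse transform.
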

\begin{proof}
The Fourier transform of $\Pi_K w_0$ is given by $\frac{\chi\left(|k|\leq K\right)}{\sqrt{2\pi^2}|k|}$. Defining the function $u$ via its Fourier transform as $\widehat{u}(k):=\frac{\chi^\epsilon\left(2\epsilon\leq |k|\leq K\right)}{\sqrt{2\pi^2}|k|}$, where $\epsilon>0$ and $\chi^\epsilon$ is defined in Eq.~(\ref{Equation-Epsilon cut-off}), we have
\begin{align*}
\left\|\Pi_K w_0-u\right\|^2\leq \frac{1}{2\pi^2}\int_{|k|\leq 3\epsilon}\frac{1}{|k|^2}\mathrm{d}k+\frac{1}{2\pi^2}\int_{K-\epsilon\leq |k|\leq K+\epsilon}\frac{1}{|k|^2}\mathrm{d}k=\frac{6\epsilon}{\pi},
\end{align*}
and consequently $\|\mathds{1}_{\mathbb{R}^3\setminus B_r(0)}\Pi_K w_0\|\leq \sqrt{\frac{6\epsilon}{\pi}}+\|\mathds{1}_{\mathbb{R}^3\setminus B_r(0)}u\|$. Making use of the observation that $\frac{1}{|y|}\mathds{1}_{\mathbb{R}^3\setminus B_r(0)}(y)\leq \frac{1}{r}$ yields
\begin{align}
\nonumber
&\|\mathds{1}_{\mathbb{R}^3\setminus B_r(0)}u\|^2\leq \frac{1}{r^2}\int_{\mathbb{R}^3} |y|^2|u(y)|^2\mathrm{d}y=\frac{1}{r^2}\left\|\nabla_k\hat{u}\right\|^2=\frac{1}{2\pi^2 r^2}\left\|f_1- f_2\right\|^2
\end{align}
with $f_1(k):=\frac{\chi^\epsilon\left(2\epsilon\leq |k|\leq K\right)}{|k|^2}$ and $f_2(k):=\frac{\nabla_k\chi^\epsilon\left(2\epsilon\leq |k|\leq K\right)}{|k|}$.  Clearly we can bound $\left\|f_1\right\|^2\leq \int_{|k|\geq \epsilon}\frac{1}{|k|^4}\mathrm{d}k=\frac{4\pi}{\epsilon}$. Furthermore we obtain, using $\left\|\nabla_k\chi^\epsilon\left(2\epsilon\leq |k|\leq K\right)\right\|_\infty\lesssim \frac{1}{\epsilon}$, 
\begin{align*}
\left\|f_2\right\|^2\lesssim \frac{1}{\epsilon^2}\left(\int_{\epsilon\leq |k|\leq 3\epsilon}\frac{1}{|k|^2}\mathrm{d}k+\int_{K-\epsilon\leq |k|\leq K+\epsilon}\frac{1}{|k|^2}\mathrm{d}k\right)=\frac{4}{\epsilon}.
\end{align*}
In combination this yields $\|\mathds{1}_{\mathbb{R}^3\setminus B_r(0)}\Pi_K w_0\|^2\lesssim \epsilon+\frac{1}{r^2\epsilon}$, which concludes the proof with the concrete choice $\epsilon:=\frac{1}{r}$.
\end{proof}

\begin{cor}
\label{Corollary-Outside_Mass}
Given $A\subset \mathbb{R}^3$, let us define the operator $\mathcal{N}_A:=\widehat{D_A}$ with $D_A(\rho):=\int_{A}\mathrm{d}\rho(y)$, using the notation of Definition \ref{Definition-Hat operators}, i.e. $\alpha^2\mathcal{N}_A$ counts the number of particles in the region $A$. Furthermore let $A'\subset \mathbb{R}^3$. Then given a constant $C>0$, there exists a constant $D>0$ such that for all states $\Psi$ with $\mathrm{supp}\left(\Psi\right)\subset A'$ and $\chi\left(\mathcal{N}\leq C\right)\Psi=\Psi$
\begin{align*}
\braket{\Psi|\mathbb{H}_K|\Psi}\geq E_\alpha+\braket{\Psi|\mathcal{N}_A|\Psi}-\sqrt{\frac{D}{\mathrm{dist}(A,A')}},
\end{align*}
where $K>0$.
\end{cor}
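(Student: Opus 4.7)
The plan is to split the Fröhlich interaction into pieces supported inside and outside $A$ respectively, and to combine an operator lower bound for the ``$A^c$-piece'' with a Cauchy--Schwarz-type estimate for the small ``$A$-piece''. Writing $A^c:=\mathbb{R}^3\setminus A$, $h_x:=\mathds{1}_A\Pi_K w_x\in L^2\!\left(A\right)$ and $g_x:=\mathds{1}_{A^c}\Pi_K w_x\in L^2\!\left(A^c\right)$, we have the identity
\begin{align*}
\mathbb{H}_K=\mathbb{H}_K^{(A^c)}+\mathcal{N}_A-a(h_x)-a^\dagger(h_x),\qquad \mathbb{H}_K^{(A^c)}:=-\Delta_x-a(g_x)-a^\dagger(g_x)+\mathcal{N}_{A^c}.
\end{align*}

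The first step is to establish the operator inequality $\mathbb{H}_K^{(A^c)}\geq E_\alpha$. This relies on the tensor-product factorization $\mathcal{F}\!\left(L^2\!\left(\mathbb{R}^3\right)\right)\cong \mathcal{F}\!\left(L^2\!\left(A^c\right)\right)\otimes \mathcal{F}\!\left(L^2\!\left(A\right)\right)$, with respect to which $\mathbb{H}_K^{(A^c)}$ acts as the identity on the second factor. For any state of the special form $\Phi=\Psi'\otimes \Omega^A$, with $\Omega^A$ the vacuum on $L^2\!\left(A\right)$, the $A$-contributions to $\mathbb{H}_K$ vanish identically: $\mathcal{N}_A\Phi=0$, $a(h_x)\Phi=0$ for each electron position, and $a^\dagger(h_x)\Phi$ lies in the one-$A$-particle sector, hence is orthogonal to $\Phi$. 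Consequently $\braket{\Phi|\mathbb{H}_K|\Phi}=\braket{\Phi|\mathbb{H}_K^{(A^c)}|\Phi}$, and invoking $\mathbb{H}_K\geq E_\alpha$ yields the claimed inequality for the reduced operator on $L^2\!\left(\mathbb{R}^3\right)\otimes \mathcal{F}\!\left(L^2\!\left(A^c\right)\right)$, and hence for $\mathbb{H}_K^{(A^c)}$ on the full space.

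The second step is to control the cross term $\braket{\Psi|a(h_x)+a^\dagger(h_x)|\Psi}$. Since $h_x\in L^2\!\left(A\right)$, the standard bound $\|a(f)\Xi\|^2\leq \|f\|^2\braket{\Xi|\mathcal{N}_A|\Xi}$ (and its $a^\dagger$-analogue, which adds only an $\alpha^{-2}\|f\|^2$ correction from the commutator) combined with the Cauchy--Schwarz inequality in the electron variable gives
\begin{align*}
\big|\braket{\Psi|a(h_x)+a^\dagger(h_x)|\Psi}\big|\lesssim \Big(\sup_{x\in A'}\|h_x\|\Big)\bigg(\sqrt{\braket{\Psi|\mathcal{N}_A|\Psi}}+\alpha^{-1}\bigg).
\end{align*}
The crucial input here is the estimate $\sup_{x\in A'}\|h_x\|\lesssim 1/\sqrt{\mathrm{dist}(A,A')}$, which follows from Lemma \ref{Lemma-Uniform_singularity_control} via translation invariance: one has $\|\mathds{1}_A\Pi_K w_x\|=\|\mathds{1}_{A-x}\Pi_K w_0\|$, and for $x\in A'$ the set $A-x$ is disjoint from $B_{\mathrm{dist}(A,A')}(0)$. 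Finally, the assumption $\chi(\mathcal{N}\leq C)\Psi=\Psi$ gives $\braket{\Psi|\mathcal{N}_A|\Psi}\leq C$, so the cross term is bounded by a $C$-dependent constant times $1/\sqrt{\mathrm{dist}(A,A')}$, and combining with the first step yields the claim with an appropriate $D$.

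The main technical point is the operator inequality $\mathbb{H}_K^{(A^c)}\geq E_\alpha$: the delicate feature is that the phonon field coupled to the electron in region $A$ has been removed from $\mathbb{H}_K^{(A^c)}$, and one must argue via the embedding into the full Fock space where the vacuum $\Omega^A$ annihilates precisely the missing interaction terms. Once this bound is in place, the remaining estimates are a routine Cauchy--Schwarz computation.
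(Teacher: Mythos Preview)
Your proof is correct and follows essentially the same route as the paper's: the same splitting $\Pi_K w_x=\mathds{1}_{A^c}\Pi_K w_x+\mathds{1}_A\Pi_K w_x$, the same operator inequality $\mathbb{H}_K^{(A^c)}\geq E_\alpha$ via the embedding $\Psi'\otimes\Omega^A$, and the same use of Lemma~\ref{Lemma-Uniform_singularity_control} to bound $\sup_{x\in A'}\|h_x\|$. The only cosmetic difference is that the paper bounds the cross term via the cruder operator inequality $\pm\big(a(h_x)+a^\dagger(h_x)\big)\leq \|h_x\|(1+\mathcal{N})$ and then invokes $\mathcal{N}\leq C$, whereas you use the slightly sharper $\mathcal{N}_A$-based estimate before invoking $\mathcal{N}_A\leq\mathcal{N}\leq C$; both lead to the same conclusion.
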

\begin{proof}
Let us define the function $v_x:=\mathds{1}_A \Pi_K w_x$ and rewrite $\mathbb{H}_K-\mathcal{N}_A$ as
\begin{align*}
\mathbb{H}_K-\mathcal{N}_A=-\Delta_x-a\left(\Pi_K w_x-v_x\right)-a^\dagger\left(\Pi_K w_x-v_x\right)+\mathcal{N}-\mathcal{N}_A-a\left(v_x\right)-a^\dagger\left(v_x\right).
\end{align*}
Identifying $L^2\Big(\mathbb{R}^3,\mathcal{F}\big(L^2\! \left(\mathbb{R}^3\right)\big)\Big)\cong L^2\Big(\mathbb{R}^3,\mathcal{F}\Big( L^2\! \left(\mathbb{R}^3\setminus A\right)\Big)\Big)\otimes \mathcal{F}\Big(L^2\! \left(A\right)\Big)$, we observe that $-\Delta_x-a\left(\Pi_K w_x-v_x\right)-a^\dagger\left(\Pi_K w_x-v_x\right)+\mathcal{N}-\mathcal{N}_A$ is the restriction (in the sense of quadratic forms) of $\mathbb{H}_K$ to states of the form $\Psi'\otimes \Omega$, where $\Omega$ is the vacuum in $\mathcal{F}\Big(L^2\! \left(A\right)\Big)$, and therefore we have the operator inequality $-\Delta_x-a\left(\Pi_K w_x-v_x\right)-a^\dagger\left(\Pi_K w_x-v_x\right)+\mathcal{N}-\mathcal{N}_A\geq E_\alpha$. Consequently
\begin{align*}
\braket{\Psi|\mathbb{H}_K-\mathcal{N}_A|\Psi}\geq E_\alpha-\braket{\Psi|a\left(v_x\right)+a^\dagger\left(v_x\right)|\Psi}\geq E_\alpha-\sup_{x\in A'}\|v_x\|\left(1+C\right),
\end{align*}
where we have used the operator inequality $a\left(v_x\right)+a^\dagger\left(v_x\right)\geq -\|v_x\|\left(1+\mathcal{N}\right)$, as well as the assumptions $\mathrm{supp}\left(\Psi\right)\subset A'$ and $\chi\left(\mathcal{N}\leq C\right)\Psi=\Psi$, in the second inequality. This concludes the proof, since $\|v_x\|^2=\int_{A}\left|\Pi_K w_0(y-x)\right|^2\mathrm{d}y\leq \int_{|y|\geq \mathrm{dist}(A,A')}\left|\Pi_K w_0(y)\right|^2\mathrm{d}y$ for all $x\in A'$ and $\int_{|y|\geq \mathrm{dist}(A,A')}\left|\Pi_K w_0(y)\right|^2\mathrm{d}y\lesssim \frac{1}{\mathrm{dist}(A,A')}$, see Lemma \ref{Lemma-Uniform_singularity_control}.
\end{proof}

\begin{center}
\textsc{Acknowledgments}
\end{center}
Funding from the European Union’s Horizon 2020 research and innovation programme
under the ERC grant agreement No 694227 is  acknowledged.

\bibliographystyle{plain}

\begin{thebibliography}{10}

\bibitem{leschke} J. Adamowski, B. Gerlach, and H. Leschke.
 Strong-coupling limit of polaron energy revisited. 
 {\em Phys. Lett. A} {79}:249, 1980. 

\bibitem{A2} 
G.~Allcock. On the Polaron Rest Energy and Effective Mass. {\em Adv. Phys.} {5}:412--451, 1956.

\bibitem{A}
G.~Allcock.
\newblock Strong-coupling theory of the polaron, in: {\em Polarons and Excitons},  C.G. Kuper, G.D. Whitfield, eds., pp. 45--70, Plenum Press (1963).

\bibitem{BPmass} 
V. Betz and S. Polzer. Effective mass of the Polaron: a lower bound. 
{\em arXiv}: 2201.06445.

\bibitem{BS}
M.~Brooks and R.~Seiringer.
\newblock Validity of Bogoliubov's approximation for translation-invariant Bose gases.
\newblock {\em arXiv}: 2111.13864. {\em Prob. Math. Phys.} (in press).

\bibitem{BS2}
M.~Brooks and R.~Seiringer.
\newblock The Fr\"ohlich Polaron at Strong Coupling -- Part II: Energy-Momentum Relation and Effective Mass.
\newblock {\em in preparation}


\bibitem{DV}
M.~Donsker and S.~Varadhan.
\newblock Asymptotics for the polaron.
\newblock {\em Communications on Pure and Applied Mathematics} 36:505--528, 1983.

\bibitem{spohnD} W. Dybalski and H. Spohn.  
Effective mass of the polaron -- revisited. {\em  Annales Henri Poincar{\'e}} 21:1573--1594, 2020.

\bibitem{FeS}
D.~Feliciangeli and R.~Seiringer.
\newblock The Strongly Coupled Polaron on the Torus: Quantum Corrections to the Pekar Asymptotics.
\newblock {\em Archive for Rational Mechanics and Analysis} 242:1835--1906, 2021.

\bibitem{FeRS}
D.~Feliciangeli, S.~Rademacher, and R.~Seiringer.
\newblock Persistence of the spectral gap for the Landau–Pekar equations.
\newblock {\em Letters in Mathematical Physics} 111:19, 2021.


\bibitem{FS}
R.~Frank and R.~Seiringer.
\newblock Quantum Corrections to the Pekar Asymptotics of a Strongly Coupled Polaron.
\newblock {\em Communications on Pure and Applied Mathematics} 74:544--588, 2021.

\bibitem{F37} 
H. Fr\"ohlich.  Theory of electrical breakdown in ionic crystals. {\em Proc. R. Soc. Lond. A} {160}:230--241, 1937.

\bibitem{gross}
E.P.~Gross. Existence and uniqueness of physical ground states. {\em J. Funct. Anal.} 
10:52--109, 1972.

\bibitem{Lpekar} L.D. Landau and  S.I. Pekar. Effective Mass of a Polaron. {\em Zh. Eksp. Teor. Fiz.} {18}:419--423, 1948.


\bibitem{LNR}
M.~Lewin, P.~Nam, and N.~Rougerie.
\newblock Derivation of {H}artree's theory for generic mean-field {B}ose
  systems.
\newblock {\em Advances in Mathematics} 254:570--621, 2014.

\bibitem{LNSS}
M.~Lewin, P.~Nam, S.~Serfaty, and J.P. Solovej.
\newblock Bogoliubov spectrum of interacting {B}ose gases.
\newblock {\em Communications on Pure and Applied Mathematics} 68:413--471, 2015.

\bibitem{Li}
E.~Lieb.
\newblock Existence and uniqueness of the minimizing solution of {C}hoquard's
  nonlinear equation.
\newblock {\em Studies in Applied Mathematics} 57:93--105, 1977.



\bibitem{LSmass}
E.H.~Lieb and R. Seiringer. 
Divergence of the effective mass of a polaron in the strong coupling
 limit. 
{\em J. Stat. Phys.} {180}:23--33, 2020. 

\bibitem{LS}
E.~Lieb and J.~Solovej.
\newblock Ground state energy of the one-component charged {B}ose gas.
\newblock {\em Communications in Mathematical Physics} 217:127–163,
  2001.

\bibitem{LT}
E.~Lieb and L.~Thomas.
\newblock Exact ground state energy of the strong-coupling polaron.
\newblock {\em Communications in Mathematical Physics} 183:511--519, 1997.

\bibitem{LY}
E.~Lieb and K.~Yamazaki.
\newblock Ground-State Energy and Effective Mass of the Polaron.
\newblock {\em Physical Review} 111:728--733, 1958.

\bibitem{MMS}
D.~Mitrouskas, K.~My\'sliwy, and R.~Seiringer.
\newblock Optimal parabolic upper bound for the energy-momentum relation of a strongly coupled polaron.
\newblock {\em arXiv}: 2203.02454.

\bibitem{JSM}
 J.S. M\o ller. The polaron revisited. {\em  Rev. Math. Phys.} 18:485--517, 2006.

\bibitem{MS}
V. Moroz and J. Schaftingen.
\newblock Ground states of nonlinear Choquard equations: Existence, qualitative properties and decay asymptotics.
\newblock {\em Journal of Functional Analysis} 265:153--184, 2013.


\bibitem{M}
K.~My\'sliwy.
\newblock The ground state energy of the strongly coupled
polaron in free space--lower bound, revisited.
\newblock {\em PhD Thesis}, IST Austria, 2022.

\bibitem{P54} S.I. Pekar. {\em Untersuchung \"uber die Elektronentheorie der
Kristalle.}  Berlin, Akad. Verlag (1954).

\bibitem{perel}
A. Perelomov. {\em Generalized Coherent States and Their Applications.} Springer (1986).

\bibitem{S}
R.~Seiringer.
\newblock The polaron at strong coupling.
\newblock {\em Reviews in Mathematical Physics} 33:2060012, 2021.

\bibitem{Spohnmass} 
H. Spohn. 
Effective Mass of the Polaron: A Functional Integral Approach.
{\em Ann. Phys.} {175}:278--318, 1987.

\bibitem{tja} S.W. Tjablikow. 
Adiabatische Form der St\"orungstheorie im Problem der Wechselwirkung eines Teilchens mit einem gequantelten Feld. 
{\em  Abhandl. Sowj. Phys.} {4}:54--68, 1954. 


\end{thebibliography}

\end{document}